\documentclass[
twocolumn,
superscriptaddress,
amsmath,amssymb,
aps,
noeprint,
prx,
]{revtex4-2}

\usepackage[utf8]{inputenc}
\usepackage[T1]{fontenc}

\usepackage{bbm,mathtools}
\usepackage{amsthm,amssymb,amsmath}

\usepackage{graphicx}
\usepackage{xcolor}
\usepackage{dcolumn}
\usepackage{booktabs}
\usepackage{physics}
\usepackage{xcolor}
\usepackage{hyperref}
\hypersetup{
  colorlinks=true,
  hypertexnames=false,
  linktocpage,
  colorlinks=true, 
  urlcolor=magenta!90!black,    
  linkcolor=blue!60!black, 
  citecolor=black!60 
}

\usepackage{enumitem}
\usepackage[capitalize,compress]{cleveref}
\usepackage{tikz}

\usepackage{complexity}

\usepackage{pgfplots}
\usepackage{tikz}
\usepackage{pgffor}
\usepackage{algorithm}
\usepackage{algorithmicx}
\usepackage{algpseudocode}
\algrenewcommand\algorithmicrequire{\textbf{Input:}}
\algrenewcommand\algorithmicensure{\textbf{Output:}}


\newclass{\stoqma}{StoqMA}
\newclass{\classP}{P}
\newclass{\bqp}{BQP}
\newclass{\qcam}{QCAM}
\newclass{\postbqp}{postBQP}
\newclass{\posta}{postA}
\newclass{\postiqp}{postIQP}
\newclass{\classa}{A}
\newclass{\bpp}{BPP}
\newclass{\fbpp}{FBPP}
\newclass{\pp}{PP}
\newclass{\cocp}{coC_=P}
\newclass{\ph}{PH}
\newclass{\np}{NP}
\newclass{\conp}{coNP}
\newclass{\gapp}{GapP}
\newclass{\approxclass}{Apx}
\newclass{\gapclass}{Gap}
\newclass{\sharpP}{\#P}
\newclass{\ma}{MA}
\newclass{\am}{AM}
\newclass{\qma}{QMA}

\newclass{\hog}{HOG}
\newclass{\quath}{QUATH}
\newclass{\bog}{BOG}
\newclass{\xeb}{XEB}
\newclass{\xhog}{XHOG}
\newclass{\xquath}{XQUATH}
\newclass{\maxcut}{MAXCUT}
\newclass{\sat}{SAT}
\newclass{\maxtwosat}{MAX2SAT}
\newclass{\twosat}{2SAT}
\newclass{\threesat}{3SAT}
\newclass{\sharpsat}{\#SAT}
\newclass{\se}{Sign Easing}
\newclass{\classx}{X}

\newtheorem{theorem}{Theorem}

\newtheorem{conjecture}[theorem]{Conjecture}
\newtheorem{definition}[theorem]{Definition}
\newtheorem{lemma}[theorem]{Lemma}

\newtheorem{corollary}[theorem]{Corollary}


\DeclareMathOperator{\supp}{supp}

\newcommand{\mc}{\mathcal}
\newcommand{\mb}{\mathbb}

\newcommand{\mf}{\mathfrak}

\newcommand{\hdim}{{\mf D}}
\newcommand{\hdepth}{{\mf L}}

\newcommand{\Xz}{\mb X_z}
\newcommand{\Xx}{\mb X_x}

\newcommand{\ix}{{\sf I}}
\newcommand{\sx}{{\sf S}}
\newcommand{\px}{{\sf P}}
\newcommand{\xx}{{\sf X}}

\newcommand{\notx}{{\neg \xx}}

\newcommand{\cnot}{\mathrm{CNOT}}
\newcommand{\notc}{\mathrm{NOTC}}
\newcommand{\tof}{\mathrm{TOF}}
\newcommand{\tc}{^{\otimes 2}}
\newcommand{\nc}{^{\otimes n}}
\newcommand{\ketx}[1]{\ket{#1}_x}
\newcommand{\ketz}[1]{\ket{#1}_z}
\newcommand{\brax}[1]{\bra{#1}_x}
\newcommand{\braz}[1]{\bra{#1}_z}
\newcommand{\projx}[1]{\proj{#1}_x}

\newcommand{\id}{\mathbbm{1}}

\newcommand{\pqw}{\mathrm{pqw}}


\newcommand{\bin}{\{0,1\}}


\newcommand{\proj}[1]{\ket{#1}\bra{#1}}

\newcommand{\code}[1]{[\![#1]\!]}

\newcommand{\nn}{\nonumber}


\newcommand{\T}{\mathrm{T}}

\newcommand{\HarvardPhysics}{Department of Physics, Harvard University, Cambridge, Massachusetts 02138, USA}
\newcommand{\QuICS}{
Joint Center for Quantum Information and Computer Science, NIST/University of Maryland, College Park, Maryland 20742, USA}
\newcommand{\simons}{
Simons Insitute for the Theory of Computing, University of California at Berkeley, Berkeley, California 94720, USA}
\newcommand{\CUBoulder}{JILA and Department of Physics, University of Colorado, Boulder, Colorado 80309, USA}
\newcommand{\caltech}{California Institute of Technology, Pasadena, California 91125, USA}
\newcommand{\aws}{AWS Center for Quantum Computing, Pasadena, California 91125, USA}

\bibliographystyle{apsrev4-2}

\begin{document}

\title{Fault-tolerant compiling of classically hard IQP circuits on hypercubes}

\author{Dominik Hangleiter}
\thanks{DH, MK and DB contributed equally.}
\affiliation{\QuICS}
\affiliation{\simons}

\author{Marcin Kalinowski}
\thanks{DH, MK and DB contributed equally.}
\affiliation{\HarvardPhysics}

\author{Dolev Bluvstein}
\thanks{DH, MK and DB contributed equally.}
\affiliation{\HarvardPhysics}

\author{Madelyn Cain}
\affiliation{\HarvardPhysics}
\author{Nishad Maskara}
\affiliation{\HarvardPhysics}
\author{Xun Gao}
\affiliation{\HarvardPhysics}
\affiliation{\CUBoulder}
\author{Aleksander Kubica}
\affiliation{\aws}
\affiliation{\caltech}
\author{Mikhail D. Lukin}
\affiliation{\HarvardPhysics}
\author{Michael J. Gullans}
\affiliation{\QuICS}

 \date{\today}

\begin{abstract}
Realizing computationally complex quantum circuits in the presence of noise and imperfections is a challenging task. 
While fault-tolerant quantum computing provides a route to reducing noise, it requires a large overhead for generic algorithms.
Here, we develop and analyze a hardware-efficient, fault-tolerant approach to realizing complex sampling circuits. 
We co-design the circuits with the appropriate quantum error correcting codes for efficient implementation in a reconfigurable neutral atom array architecture, constituting what we call a \emph{fault-tolerant compilation} of the sampling algorithm. 
Specifically, we consider a family of $\code{2^D,D,2}$ quantum error detecting codes whose transversal and permutation gate set can realize arbitrary degree-$D$ instantaneous quantum polynomial (IQP) circuits. 
Using native operations of the code and the atom array hardware, we compile a fault-tolerant and fast-scrambling family of such IQP 
circuits in a hypercube geometry, realized recently in the experiments by Bluvstein \textit{et al.} [Nature 626, 7997 (2024)]. 
We develop a theory of second-moment properties of degree-$D$ IQP circuits for analyzing hardness and verification of random sampling by mapping to a statistical mechanics model.  
We provide strong evidence that sampling from these hypercube IQP circuits is classically hard to simulate even at relatively low depths.
We analyze the linear cross-entropy benchmark (XEB) in comparison to the average fidelity and, depending on the local noise rate, find two different asymptotic regimes.
To realize a fully scalable approach, we first show that Bell sampling from degree-$4$ IQP circuits is classically intractable and can be efficiently validated.
We further devise new families of $\code{O(d^D),D,d}$ color codes of increasing distance $d$, permitting exponential error suppression for transversal IQP sampling. 
Our results highlight fault-tolerant compiling as a powerful tool in co-designing algorithms with specific error-correcting codes and realistic hardware.

\end{abstract}

\maketitle

\let\oldaddcontentsline\addcontentsline
\renewcommand{\addcontentsline}[3]{}

\section{Introduction}
\label{sec:intro}

Quantum computers hold a promise to significantly outperform classical computers at various tasks. 
However, for many envisioned applications, very low error rates  below ${\sim}\,10^{-10}$ are required \cite{reiher_elucidating_2017,gidney_how_2021,clinton_hamiltonian_2021,litinski_how_2023,watson_quantum_2023}, in stark contrast to the state of the art 
experimental physical error rates of ${\sim}\,10^{-3}$. 
Quantum error correction (QEC) provides a potential solution to this challenge by encoding error-corrected ``logical'' qubits across many redundant physical qubits \cite{shor_fault-tolerant_1996,steane_error_1996,aharonov_fault-tolerant_1997}.
In principle, QEC can exponentially suppress the logical error rate by increasing the code distance $d$,  thereby promising a realistic route to low error rates required for large-scale algorithms. 
However, implementing QEC in practice is challenging. 
In addition to the large physical qubit overheads, QEC codes typically realize a discrete gate set using native operations \footnote{When restricting to transversal operations, it can be proven that the implemented gate set in fact has to be discrete \cite{Eastin-Knill.2009}. }. 
Although universal computation can be realized through various techniques such as magic state distillation \cite{knill_fault-tolerant_2004,knill_fault-tolerant_2004-1,bravyi_universal_2005} and code switching \cite{paetznick_universal_2013,anderson_fault-tolerant_2014}, these are generally very resource intensive. 
Thus devising hardware-efficient and fault-tolerant realizations of quantum algorithms is a non-trivial task, requiring co-design of error correcting code and physical implementation with the algorithm. 
We call this task \emph{fault-tolerant compiling}.

Realizing computationally hard sampling algorithms  
is an interesting goal for logical qubit processors for a number of reasons.
First, using QEC the logical noise rate can be exponentially suppressed and high circuit fidelities can be maintained while system size is increased. 
In contrast, the non-corrected signal decays exponentially with increasing circuit depth and system size \cite{arute_quantum_2019,Morvan.2023}. 
Second, computationally complex sampling circuits can be implemented using significantly fewer resources compared to universal quantum computation. 
In particular, a quantum computation does not have to be universal in order to be classically hard to simulate \cite{bremner_classical_2010,aaronson_computational_2013}. 
This opens up intriguing opportunities for co-designing an algorithmic implementation with a QEC code.
Specifically, complex quantum 
sampling 
circuits can be based on a variety of non-universal gate sets~\cite{bremner_classical_2010,aaronson_computational_2013,hangleiter_computational_2023}, which allows us to restrict the circuits to native gate sets of QEC codes.
Moreover, they profit from even a limited amount of error detection and correction to improve the sample quality. 
These features dramatically reduce the overhead in fault-tolerant compiling. 
Finally, such circuits which are also fast scrambling can be used to benchmark the performance of a quantum processor \cite{boixo_characterizing_2018,arute_quantum_2019,choi_preparing_2023,Ware.2023,Morvan.2023}.

In this work, we propose a viable path to systematically improve experimental implementations of complex sampling circuits 
with encoded qubits on the near-term quantum processors.  
Our approach is based on $D$-dimensional color codes with parameters $\code{2^D,D,2}$ (distance-$2$ codes with $D$ logical qubits encoded in $2^D$ physical qubits). 
For $D\ge 3$, these codes support transversally implemented $D$-qubit non-Clifford gates, as well as CNOT and SWAP gates realized by qubit permutations.  
We show that along with transversal CNOT gates these native operations  allow us to realize arbitrary 
degree-$D$ instantaneous quantum polynomial (IQP) circuits \cite{bremner_classical_2010,shepherd_temporally_2009};
sampling from such circuits is believed to be a hard task for classical computers based on complexity-theoretic arguments \cite{Bremner.2016,Bremner.2017}. 
The distance-2 codes allow us to detect errors directly from the classical samples, yielding an improvement over bare circuits even without intermediate measurements. 
Nonetheless, an approach based on error detection is not scalable. 
To achieve scalability, we devise new families of $\code{O(d^D), D,d}$ color codes based on the $\code{2^D,D,2}$ family, allowing for repeated rounds of error correction throughout the circuit execution.
For these code families and transversal IQP sampling, there is a noise threshold~\cite{gottesman_faulttolerant_2014} below which the noisy output distribution converges exponentially fast towards the ideal output distribution as the code size is increased.

In order to maximize hardware efficiency, we focus on the capabilities 
of the recently realized logical quantum processor \cite{bluvstein_logical_2024} that is based on reconfigurable arrays of neutral atoms in optical tweezers \cite{beugnon_two-dimensional_2007,schlosser_scalable_2011,evered_high-fidelity_2023}. 
In this setting, many physical quantum operations can be naturally parallelized, including single-qubit gates on  blocks of  physical qubits and transversal entangling gates between large blocks.
We design a hardware-efficient family of degree-$D$ IQP circuits with a connectivity graph given by a $\hdim$-dimensional hypercube which we call \emph{hypercube IQP (hIQP)} circuits. We show that this family rapidly converges to uniform IQP circuits and can therefore be thought of as a fault-tolerant compilation of the uniform IQP family.
In an hIQP circuit, transversal degree-$D$ and permutation CNOT circuits are performed in each block and the blocks are coupled by transversal CNOT gates. 
We analyze the conditions under which we expect random degree-$D$ hIQP circuits to be sufficiently scrambling for quantum advantage as well as benchmarking applications. 
Finally, we address the issue of \emph{efficiently} verifying quantum advantage in this model, by showing that degree-$D$ IQP sampling can be efficiently validated by measuring two copies of a logical degree-$(D+1)$ circuit in the Bell basis.

\begin{figure*}
\centering
    \includegraphics[width=\linewidth]{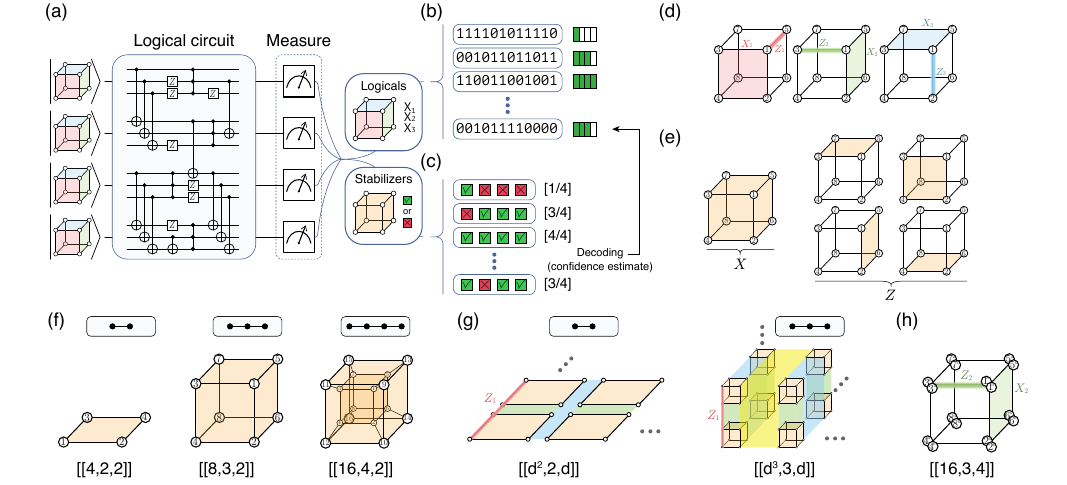}
    \caption{\textbf{Sampling from IQP circuits encoded in hypercube codes.} (a) The $\code{2^D,D,2}$ code family features a transversal gate set that enables classically hard computation. 
    The logical in-block operations are performed with high-fidelity single-qubit rotations and permutations on physical qubits, and inter-block operations are realized via transversal CNOT gates. 
    (b,c) The measured physical qubit bitstring provides a logical bitstring as well as stabilizer check outcomes. Classical post-processing (decoding) can be used to correct errors or post-select (error detection) based on the confidence for each measured logical bitstring; both procedures improve sampling performance. 
    (d) The $\code{8,3,2}$ code is an instance of the family ($D\,{=}\,3$, $d\,{=}\,2$). It can be visualized as a cube with logical $X$ and $Z$ operators corresponding to the three orientations of faces and edges, respectively. 
    (e) The single $X$-basis stabilizer is a product of Pauli-$X$ on all eight qubits, and the $Z$-basis stabilizers consists of the Pauli-$Z$ products on four independent faces of the cube. (f) A $D$-dimensional code in the $\code{O(d^D),D,d}$ family supports a transversal C$^k$Z operation within a block~\cite{Kubica.2015}, where $ k \in \{1, \ldots, D\,{-}\,1\}$. For distance $d\,{=}\,2$, a $D$-dimensional code is represented by a hypercube of dimension $D$. 
    The $X$ stabilizer is the product of all qubits, and the $Z$ stabilizers are formed from all independent faces. 
    (g) A distance-$d$ code can be constructed by placing distance-2 codes on a $D$-dimensional lattice (for even $d$).
    This construction preserves the single-qubit realization of transversal in-block gates, and we argue that only codes with even distance can have this property. (h) An alternative approach is to decorate the four vertices with small repetition codes. This results in smaller codes with a higher rate of encoding, but it comes at the expense of introducing multi-qubit gates for transversal in-block operations.}
    \label{fig:fig1}
\end{figure*}

\subsection{Summary of Results}
\label{ssec:implications}

Our approach to fault-tolerant implementations of computationally hard algorithms is based on fault-tolerant compiling of IQP circuit sampling using high-dimensional color codes. 
Our goal is to devise a scalable path towards the quantum advantage regime 
starting from experiments that are possible today, as demonstrated in Ref.~\cite{bluvstein_logical_2024}.
Thus, we address the finite-size and asymptotic regimes of three distinct properties of our proposed algorithm---fault tolerance, complexity, and verification. 
Concretely, we start from co-designing the quantum code and logical circuits with the reconfigurable atom array hardware as in Refs.~\cite{Bluvstein.2022,bluvstein_logical_2024}, 
giving rise to the hIQP circuits implemented natively on hypercube color codes, see \cref{fig:fig1}(a-e). 
In what follows we summarize our key results. 

First, in \cref{sec:architecture}, we analyze efficiently simulatable instances of hIQP circuits in which error detection is performed at the end of the circuit (\cref{fig:fig1}(b)) in terms of the achievable error reduction. 
To understand the behaviour of the logical errors, we introduce a notion of the average gate fidelity for logical quantum gates and study the performance of the transversal and permutation gates of the $\code{8,3,2}$ code using classical simulation. 
We analyze the behaviour of the average logical fidelity as a function of the amount of error detection and explain its power-law behaviour using a simple noise model. 

Second, in \cref{sec:hiqp}, we study the properties of hIQP circuits that are relevant to quantum advantage demonstrations: complexity and verification. 
We give a complexity-theoretic argument that sampling from hIQP circuits is classically intractable (\cref{ssec:complexity}). 
Our argument is based on an analytical and numerical study of the scrambling properties of the hIQP circuits. 
We find that already after two rounds of gates on all hypercube edges the output states are close to maximally scrambled. 
We also show that the runtime of existing classical simulation methods, in particular the recently developed near-Clifford simulator for degree-$D$ circuits \cite{maslov_fast_2024}, roughly scales as $\Omega(2^n)$. 
The increase in complexity compared to the $O(2^{n/2})$ scaling from Ref.~\cite{bluvstein_logical_2024} stems from additional gate layers; while the difference to the $O(2^{n/3})$ scaling from Ref.~\cite{maslov_fast_2024} is due to additional random in-block permutation CNOT gates.

We also analytically and numerically study the behaviour of the linear cross-entropy benchmark (XEB) which has been used to benchmark global circuit behaviour~\cite{arute_quantum_2019,zhu_quantum_2022,Morvan.2023,mark_benchmarking_2023} (\cref{ssec:xeb and noise}). 
We show that---similar to random quantum circuits \cite{gao_limitations_2024,Ware.2023,Morvan.2023}---the XEB in sparse degree-$D$ IQP circuits (a toy model of hIQP circuits) undergoes a transition as a function of the local noise rate between being a good proxy for the global state fidelity and being much larger than the fidelity. 
At the same time, we argue that the relation between fidelity and XEB is much tighter in the case of IQP circuits compared to Haar-random circuits, making it a good measure for quantum advantage. 
In particular, we show that the transition can be arbitrarily shifted by adding fixed gates to an otherwise random circuit. 
For small instances, we numerically confirm a tight relation between the average fidelity and average linear XEB at low noise rates 
for both physical and logical noise. 
In this context, we discuss intriguing aspects of how to think about logical fidelities in encoded sampling circuits (\cref{ssec:noisy encoded xeb}). 

To analyze the scrambling properties as well as the linear XEB of the hIQP circuits, we develop a broadly applicable theory for second-moment quantities of degree-$D$ IQP circuits (\cref{ssec:stat mech}). 
This theory is based on a mapping of second-moment quantities to a classical statistical-mechanics model, analogous to a similar model for Haar-random circuits~\cite{zhou_emergent_2019,hunter-jones_unitary_2019,barak_spoofing_2021,dalzell_random_2024,Ware.2023}. 
It can handle CNOT entangling gates as well as noise on one or two copies of the circuit. 
As a first application of the statistical-mechanics model we prove that sparse degree-$2$ IQP circuits anticoncentrate in logarithmic depth. 
The understanding of the dynamics of the model gleaned from the proof of this result then helps us analyze the asymptotic properties of the hIQP circuits as well as the dynamics of noisy IQP circuits. 

To go beyond the inefficient verification of the hIQP circuits via XEB, we propose a natural two-copy hIQP protocol which involves a measurement in the transversal Bell basis at the end of the circuit \cite{hangleiter_bell_2024} (\cref{sec:bell sampling}). 
We show that sampling from the corresponding distribution is classically intractable for degree-$4$ and higher circuits. 
Interestingly, the simulation cost of these circuits halves to roughly $\Omega(2^{n/2})$ while the number of logical qubits needs to be doubled. 
At the same time, the samples can be efficiently classically validated using properties of the two-copy measurement \cite{hangleiter_bell_2024}. 
hIQP Bell sampling thus constitutes a near-term achievable means of performing efficiently validated quantum advantage in a fault-tolerant setting. 

Finally, in \cref{sec:scaling codes}, we consider the question of scalable fault tolerance of hIQP circuits beyond implementations in the $\code{8,3,2}$ color codes. 
To this end, we devise two code families with the same transversal gate set. 
The first one is a family of 3D color codes based on the $\code{8,3,2}$ code and has parameters $\code{O(d^3),3,d}$, see \cref{fig:fig1}(g). 
The other family is a 3D toric/color code and has parameters $\code{O(d^3),3,d}$.
Both constructions result in families of topological quantum codes and have a fault-tolerance threshold under a local stochastic noise model~\cite{gottesman_faulttolerant_2014}, proving the scalability of this approach.
We also quantitatively compare the performance of error detection in the hIQP circuits using the $\code{8,3,2}$ code with the same circuits implemented in slightly larger codes that support error correction---the $\code{15,1,3}$ code and a $\code{16,3,4}$ code (depicted in \cref{fig:fig1}(h)). 
Interestingly, we find that the $\code{8,3,2}$ code with error detection outperforms the other small codes.

Overall, our work  studies computational complexity in the context of early fault-tolerant quantum computing architectures, using the $\code{8,3,2}$ code as its model system.
Our results highlight that the design of fault-tolerant quantum algorithms requires careful analysis of the algorithm as well as the fault-tolerance properties of the employed error correcting codes for the particular circuits. 
It analyzes differences to non-fault-tolerant architectures, where hardware-efficiency obeys quite different constraints \cite{arute_quantum_2019,zhu_quantum_2022}, and noise behaves in very different ways.
The study of noise becomes crucial to assessing the prospects of quantum advantage in a twofold way---in terms of success on the chosen benchmark (the linear XEB in our case) but also in terms of the error reduction capabilities of a chosen code and error correction/detection strategy. 
Most pointedly, this is highlighted by our comparisons of the performance of the 3D hypercube code with error detection with similar small codes where error correction is possible. 
Our study shows that, just like for unencoded quantum computations, random quantum circuits remain an important  setting for understanding the capabilities of a given architecture in terms of its noise resilience and potential for quantum speedups.

\subsection{Relation to prior work}
\label{ssec:prior work}

Our work builds on prior work from a range of different fields. We now briefly compare our results to the most related prior works. 

Our results on the complexity of noiseless IQP circuits build on prior work by \textcite{bremner_average-case_2016,Bremner.2017}, who showed quantum advantage for IQP circuits. 
In particular, our result on anticoncentration of sparse IQP circuits reproduces the scaling and improves its constant factors compared to the anticoncentration result of Ref.~\cite{Bremner.2017}. 
Our hardness result of hIQP sampling that follows from this follows a similar line of thought compared to the hardness of sparse IQP sampling shown by Ref.~\cite{Bremner.2017}.

Our statistical-mechanics mapping for IQP circuits follows the ideas of Refs.~
\cite{zhou_emergent_2019,hunter-jones_unitary_2019,dalzell_random_2024}, developing stat-mech mappings for circuits with Haar-random single-qubit gates. 
The IQP-circuits mapping differs from this, however. 
We use the mapping for our anticoncentration results, as well as to for the first time show the existence of a phase transition in the noisy XEB for IQP circuits, a phenomenon which has also been observed in Haar-random circuits \cite{dalzell_random_2024,deshpande_tight_2022}. 

Refs.~\cite{mezher_fault-tolerant_2020-2,paletta_robust_2023} show a constant-depth fault-tolerant quantum advantage using IQP circuits with a single round of intermediate measurements and feed-forward. 
Compared to these results, our fault-tolerant sampling architecture is codesigned for a different family of IQP circuits using on high-degree C$^k$Z gates for efficient implementation in the reconfigurable-atom architecture.
While those works focus on removing as much intermediate measurement and feed-forward as possible and give asymptotic results, our work focuses on the finite-scale performance of a scalable architecture. 
Compared to Ref.~\cite{Wang.2023} which demonstrates one-bit addition using $\code{8,3,2}$ codes on the other hand, our work gives a scalable algorithmic advantage for a simple task and analyzes the performance at scale. 

Finally, our results on noisy encoded computations are the first to define a composable measure of logical  average-gate fidelity using the notion of $r$-filters. 
Previously, logical state fidelities have been reported using logical Pauli measurements or by projecting a reconstructed physical density matrix onto the logical subspace~\cite{andersen_repeated_2020,gupta_encoding_2024}, and Ref.~\cite{postler_demonstration_2022} reported a logical process matrix. 
In contrast to these measures of logical state and process fidelity, our $r$-filtered average gate fidelity takes into account the fact that the space of potential errors decomposes into different syndrome sectores which can accumulate if uncorrected. 


\subsection{Implications}

Our analysis indicates that sampling from the output distribution of low-depth hIQP circuits 
is a viable path towards a fault-tolerant quantum advantage over classical computation. 
hIQP sampling is hardware-efficient on the recently realized logical processor using reconfigurable atom arrays \cite{bluvstein_logical_2024}. 
For intermediate-scale implementations with error detection and/or correction at the end of the computation, we find that the $\code{8,3,2}$ 3D color code has not only the highest rate but also the best fault-tolerance properties when compared to similar small codes, see \cref{fig:code_comparison}. 
In the experiment of \textcite{bluvstein_logical_2024}, hIQP sampling without in-block permutation CNOT gates was performed on 48 logical qubits encoded in $2^4$ blocks of the $\code{8,3,2}$ code. 
These specific circuits exhibit asymptotic quantum advantage but can be simulated in time $\sim 2^{n/3}$ \cite{maslov_fast_2024}.  This classical scaling is considerably  more favorable than what we expect for hIQP circuits studied in this work that include in-block CNOT gates. 
For these circuits, which can be directly implemented using the techniques demonstrated in Ref.~\cite{bluvstein_logical_2024}, we expect the best classical simulation algorithms to run in time at least roughly $\sim 2^n$.

Our results showing the existence of a transition in the relation between XEB and fidelity imply that experiments should require a careful analysis of the noise regime in order to use XEB as a reliable benchmark. 
Our analysis suggest that the relation between XEB and fidelity may be tighter for IQP circuits compared to random circuits \cite{gao_limitations_2024,Ware.2023,Morvan.2023}, but exactly how so remains an open question. 
The existence of a transition highlights the need for fault tolerance in experiments benchmarked via XEB since the local error rate needs to be suppressed to $\sim 1/n$ for the XEB to be a good estimator of fidelity. 
For the error rates and circuit parameters of Ref.~\cite{bluvstein_logical_2024}, we find that the experiments are in the ``healthy'' noise regime in which the average logical XEB score is a good measure of an appropriately defined logical fidelity.  

Our results also suggest a path to scale up hIQP sampling experiments to achieve fault-tolerant quantum advantage. 
Two natural improvements can be achieved using the $\code{2^D,D,2}$ hypercube color codes: 
First, one can perform quantum advantage experiments that can be efficiently validated from the classical samples by using transversal Bell measurements between two copies of similar hIQP circuits based on the $\code{16,4,2}$ code. Note that using the $\code{8,3,2}$ code, a classically simulatable Bell sampling experiment has already been performed for 24 logical qubits by \textcite[][Fig.~6]{bluvstein_logical_2024}.
Second, one can improve the fault-tolerance properties by performing intermediate measurements to detect more errors and improve the final state fidelity. 
Going beyond error detection will require scaling up the code sizes.  
We construct some candidate families of 3D color codes  with an error correction threshold that support the classically hard hIQP circuits.
If the experimental error rate is below the threshold, the quantum output distribution converges exponentially to the target distribution as the code size is increased.
However, the rate of these codes decreases with the code distance. 
Finding high-performing quantum codes similar to~\cite{bravyi_high-threshold_2024,xu_constant-overhead_2023}, that also support native non-Clifford gates remains an important open question to further scale up quantum advantage demonstrations of the type proposed here.  For example, one interesting possibility we leave for future work is to use a $\code{64,15,4}$  CSS code where a transversal $T$-gate implements a logical circuit with 15 CCZ gates \cite{Rengaswamy20}.

\subsection{Outlook}
\label{sec:outlook}

There are a range of interesting directions that arise from the new experimental capabilities of the reconfigurable atom platform in general, and our study of its transversal degree-$D$ IQP circuits in particular.
An immediate technical open questions raised by our study of noise in degree-$D$ IQP circuits  is what type of transition the XEB undergoes as a function of the noise strength---whether it is a sharp phase transition as for Haar random circuits, or a different type of transition.
While we have found that the same two regimes do exist for noisy IQP circuits, we have also presented evidence that the XEB behaves more benignly for random IQP circuits compared to Haar-random circuits. 
Furthermore, it is an intriguing question how the transition behaves for logical XEB and fidelity in the presence of physical noise in an encoded circuit. 

An important question raised by the advent of encoded quantum computations in experiments is how to best make use of limited degree of error detection and correction. 
In this work, we have focused on a minimal amount of error detection at the very end of the computation, allowing us to eliminate certain errors. 
 Next steps include performing multiple rounds of error detection throughout the circuit execution via stabilizer measurements without feed-forward. 
In addition, it is important to explore  the most efficient ways of using a limited number of measure-and-correct rounds. 
The availability of the additional information gleaned from mid-circuit stabilizer measurements brings with it the question of how to best make use of that data. 
While one can perform postselection to improve the overall sample quality, in addition to that, the stabilizer data can yield diagnostic data about circuit performance and even error mechanisms \cite{fujiwara_instantaneous_2014,fowler_scalable_2014,huo_learning_2017}. 
In particular, it has been shown that stabilizer data can be used to learn correctable Pauli errors \cite{wagner_pauli_2022,iyer_efficient_2022,wagner_learning_2023} since the stabilizer measurement lets one measure the Fourier transform of the Pauli noise distribution \cite{flammia_efficient_2020}.
Stabilizer data might also be used to infer the average fidelity of transversal gate layers preceding a stabilizer measurement, analogous to cycle benchmarking which includes noise tailoring \cite{wallman_noise_2016,erhard_characterizing_2019}. 
This is particularly intriguing in the context of verifying quantum advantage demonstrations, since it may allow us to efficiently estimate the average fidelity of the output state, bypassing the need to resort to the inefficient XEB, or verification schemes that come with a significant resource overhead~\cite{ringbauer_verifiable_2024,hangleiter_bell_2024}. 
Random quantum circuits constitute a well-understood model in which the relation between different benchmarks can be studied. 
As our work shows, they also  provide an ideal playground for questions regarding the nature of early fault tolerance in general.

Beyond sampling circuits, our work can be understood as a case study of circuit compilation in co-design with code and hardware constraints. 
Optimized towards fast scrambling and simulation complexity, the family of hIQP circuits is highly resource efficient in terms of the chosen code, since it only uses transversal and permutation operations, and in terms of the reconfigurable atom hardware, which allows for parallel long-range gates. 
One can ask how to compile algorithms with different use cases to the same or similar code and hardware requirements. 
At the same time, it is interesting to explore which algorithms or sub-routines are most suited towards mostly transversal implementations in high-dimensional color codes with non-Clifford transversal gate sets. 
For example, it might be possible to encode arithmetic operations using only phase states \cite{Wang.2023}. 
From a different perspective, we can think of the transversally prepared states as a resource for more complex algorithms, since these states can be prepared with comparably low errors. 
In the near term, one can also envision logical-physical quantum circuits. 
In this scenario high-fidelity transversally prepared states could supply resources that are expensive for physical circuit implementations such as non-local magic and entanglement. 
These resources could then be exploited by a low-depth universal, physical circuit.  
Although not a viable path in the long-run, such an approach might allow novel applications on currently available hardware.
We hope that the present work will inspire the community to address these and other interesting questions regarding the physics and computational properties of early fault-tolerant quantum computation.

\subsection{Overview}

\cref{sec:architecture} outlines the fault-tolerant logical sampling architecture based on $\code{2^D,D,2}$ codes and degree-$D$ hIQP circuits with error detection. 
\cref{ssec:complexity} describes the hardness properties of our circuits, \cref{ssec:xeb and noise} the behaviour of the XEB under noise, 
\cref{ssec:noisy encoded xeb} the relation between the logical XEB and fidelity in encoded circuits,
and \cref{ssec:stat mech} our statistical mechanics mapping.
\cref{sec:bell sampling} addresses efficient verification by introducing and showing hardness of degree-$D$ Bell sampling.
\cref{sec:scaling codes} compares the performance of different small codes that allow limited error correction to the $\code{8,3,2}$ code and describes the new family of $\code{O(d^D),D,d}$ color codes that allow scalable transversal IQP sampling in the presence of noise.
\cref{sec:outlook} concludes with an outlook. 
\cref{app:prelim,app:noisy encoded,app:powerlaw,app:statmech,app:second moment behaviour,app:xebproof,app:bell sampling,app:codes} contain various technical details and proofs.

\section{Logical Sampling Architecture}
\label{sec:architecture}

In this section, we detail the concrete architecture for IQP circuit sampling using logical circuits that employ only transversal and permutation gates. 
The architecture is co-designed with the capabilities of a neutral-atom logical quantum processor~\cite{bluvstein_logical_2024,beugnon_two-dimensional_2007,schlosser_scalable_2011,Bluvstein.2022} in order to generate highly scrambled quantum states that are classically intractable to simulate and at the same time  minimize errors incurred during the computation. 
To this end, we exploit parallel transversal control of logical qubits by moving atoms in arrays of tweezers, which can naturally realize variable grid-like entangling patterns~\cite{bluvstein_logical_2024}.
These patterns will be reflected in a nested-hypercube structure of the physical qubit quantum circuits we study.

As described in the introduction, our proposed architecture uses a family of distance-$2$ error detecting codes, which have a family of non-Clifford transversal gates. This circumvents the need for magic state distillation, since the transversal gates efficiently realize classically intractable operations.
The computational task we consider is sampling from the distribution of \textit{logical qubit} states. The redundancy afforded by the error-detecting code is used to improve the quality of the sampled distribution, thereby yielding an
advantage compared to unencoded computations. 
Moreover, several complex many-qubit operations can be executed much more cheaply than in a comparable bare physical circuit.  The associated cost of the relatively complex encoding step is mitigated by the fact that this state preparation process can be made fault-tolerant across the qubit array in a scalable manner through parallelized postselection and re-preparation.

Before proceeding, let us clarify  our definition of fault-tolerance that follows standard approaches \cite{gottesman_surviving_2024}. 
For a physical quantum circuit, we define locations in the circuit to be the positions of circuit elements, i.e., gates, state-preparations and measurements. 
Each of those elements affects $k$ physical qubits which in our case will be $k=1$ for state preparation, measurements and single-qubit gates, and $k=2$ for entangling gates. 
We allow each circuit location to be faulty, which means that after its application an arbitrary $k$-local quantum channel acts on the qubits or classical bits directly affected by the circuit element. 
For error detection to be fault-tolerant to distance $d$, it means that any fault path---i.e., collection of error locations---of weight $d-1$ in the circuit is detectable by the syndrome data. For error correction to be fault-tolerant to distance $d$, it means that any fault path of weight-$\lfloor(d-1)/2\rfloor$ is correctable.  
Note that if fault-path involves a $k$-body circuit element then a weight-$1$ fault at this circuit location allows for errors on all of the~$k$ qubits the circuit element acts on.
In this work, unless otherwise specified, we consider fault-tolerance with respect to $k$-qubit Pauli error channels at all error locations, and just the $X$-type syndromes extracted at the end of the circuit. 
The latter corresponds to the experimental setting of Ref.~\cite{bluvstein_logical_2024}.
Where specified, we also consider perfect syndrome extraction or Bell measurements at the end of the circuit.

\begin{figure*}
    \centering
    \includegraphics[width=\textwidth]{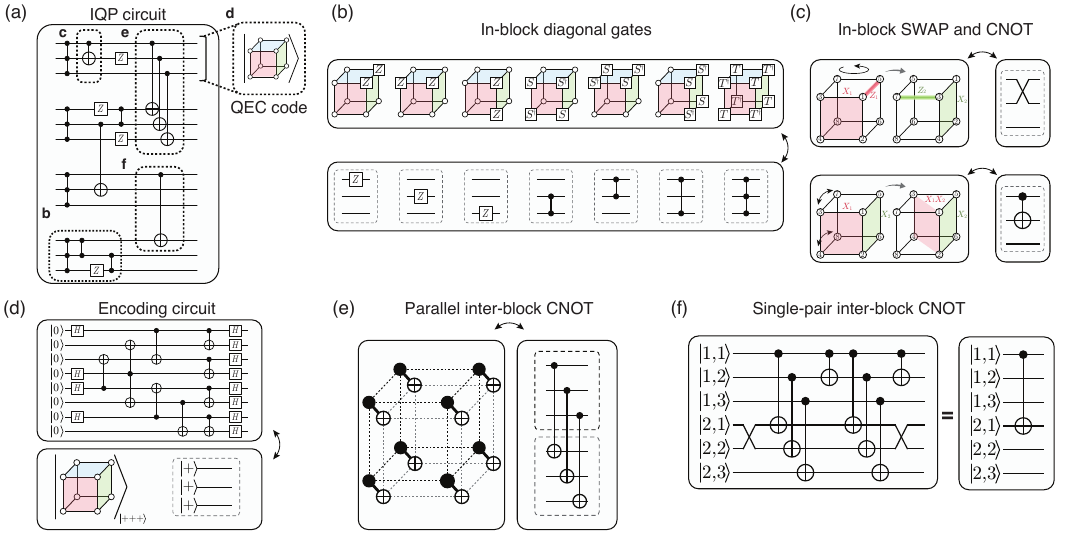}
    \caption{\textbf{IQP circuits as fault-tolerant operations on the  $\code{8,3,2}$ code.} (a) Arbitrary degree-3 IQP circuits can be decomposed into the transversal gate-set of the $\code{8,3,2}$ code and its extensions. The logical qubits are grouped into sets of three and encoded into code blocks consisting of eight physical qubits. (b) The diagonal gates between logical qubits within a single block consist of all $Z$, CZ, and CCZ gates. These operations and are implemented with parallel single-qubit rotations on the physical qubits. 
    (c) Additional transversal in-block operations include the CNOT and SWAP gates, which are realized by simple re-labeling of the qubits. In hardware, re-labeling can be realized by exchanging the qubits' positions.
    (d) State preparation circuit for preparing a symmetric logical state $\ket{{+}{+}{+}}$. In this preparation circuit and its generalizations, the method can be made fully fault-tolerant by utilizing flag protocols or Steane error detection~\cite{Wang.2023} at each stage of the protocol in a recursive manner. (e) Two encoded blocks can be entangled via a transversal CNOT, which is guaranteed by the CSS nature of the code. (f) Two layers of transversal out-block CNOTs interspersed with in-block CNOTs can realize a CNOT between individual logical qubits from different blocks, which enables transversal realization of arbitrary IQP circuits.}
    \label{fig:832code}
\end{figure*}

\subsection{A family of distance-$2$ codes and their native operations}
\label{ssec:gadgets}

Our architecture is based on the $D$-hypercubic code with code parameters $\code{2^D, D, 2}$ for $D \geq 2$ \cite[Example~3]{Vasmer2022}.
The hypercubic code is a color code defined on a $D$-dimensional hypercube, that is, its stabilizers are given by products of $X$ and $Z$ operators on the faces of the hypercube. 
The colorability of the $D$-dimensional hypercube determines the number of redundant stabilizers and hence the number of encoded logical qubits, which for the hypercube is just given by $D$. 

The $\code{2^D,D,2}$ code supports a logical gate-set comprising all Pauli gates, and logical
\begin{multline}
    \text{C}^{k}\text {Z} = \left(\id_{k} - \proj {1^{k}}\right)\otimes \id + \proj {1^{k}} \otimes Z
\end{multline}
gates for $k = 1, \ldots, D-1$ on any $k+1$ of the $D$ encoded qubits. 
Importantly, for $k \geq 2$, the C$^k$Z gate is a non-Clifford gate.
All of these operations are physically implemented with single-qubit rotations on a subset of the physical qubits. 
These single-qubit operations can be implemented in parallel with very high fidelities of ${\sim}0.9999$~\cite{sheng_high-fidelity_2018,levine_dispersive_2022}. 

The three-dimensional instance of the code with parameters $\code{8,3,2}$ is considered ``the smallest interesting color code'' \cite{campbell_blog}. 
It has a high encoding rate and allows us to perform error-detected non-Clifford computations. 
In the following, we will use the $\code{8,3,2}$ code as the main focus of our study to illustrate concepts that apply more generally to the $\code{2^D, D, 2}$ code. 
The logical operators and stabilizers of the $\code{8,3,2}$ code are illustrated in \cref{fig:fig1}(d,e), the transversal in-block operations in \cref{fig:832code}(b).

The Hadamard gate completes the universal gate set for the hypercubic codes, as for other high-dimensional codes, so it cannot have a transversal implementation~\cite{Eastin-Knill.2009}. 
Nonetheless, since the $\code{2^D,D,2}$ code is a CSS code, the logical $\ket{+^D}$-state can be prepared fault-tolerantly \cite{calderbank_good_1996,steane_error_1996}.
In \cref{fig:832code}(d), we show a circuit that prepares the $\ket{{+}{+}{+}}$ state of a single $\code{8,3,2}$ code block in a non-fault-tolerant way. 
This circuit can be understood as preparing a GHZ state of two $\code{4,2,2}$ blocks in opposite bases and then applying a transversal CNOT between these blocks, and directly generalizes to higher $D$. 
A fault-tolerant preparation can be achieved with three additional physical flag qubits~\cite{Wang.2023}. 
Moreover, measurements in both the logical $X$ and the logical $Z$-basis are transversal. 

Since the $\code{2^D,D,2}$ is a CSS code~\cite{calderbank_good_1996,steane_error_1996} the transversal CNOT gate implements a transversal CNOT gate on the~$D$ encoded qubit pairs, see \cref{fig:832code}(e). 
We entangle code blocks by applying the transversal CNOT gate in parallel on all physical qubit pairs.
We can further implement in-block CNOT and SWAP gates between arbitrary pairs of logical qubits within each block and, using this capability, also between any inter-block pair. 
Both of those gates can be realized just by permutations/relabelling of the physical qubits, see \cref{fig:832code}(c), which comes at almost no extra cost.
While permutations on their own do not grow the weight of errors, when combined with transversal entangling gates they are no longer fault-tolerant.
Generically, in order to maintain fault tolerance with both transversal and permutation gates, a round of error detection should  be applied after each permutation gate \cite{gottesman_surviving_2024}.  However, it is also important to emphasize that we can check for the fault-tolerance up to distance $d$ of specific physical circuits by enumerating the effects of all weight-$(d-1)$ Pauli errors at the possible error locations throughout the circuit on the circuit output.  
We find that for all circuits  studied in this work, fault tolerance is actually maintained even in the absence of this additional round of error detection for all $Z$ errors. 
In fact, a specific sequence of instructions needs to occur to break it for $X$ and $Y$ errors. 
In practice, we find that the specific circuits we consider are fault-tolerant even without these additional rounds of error detection, see the results in \cref{sec:code comparison} \footnote{This is implied by the quadratic decay of the XEB of the simulation of the $\code{8,3,2}$-code with error detection at the end of the circuit in \cref{fig:code_comparison}.  }.
Moreover, since one can efficiently check if fault tolerance is preserved for any random circuit instance, additional rounds of error detection/correction can be inserted into the circuit to recover fault-tolerance.
The CNOT gate between two specific qubits in two distinct blocks can be realized using two transversal CNOTs interlaced with in-block CNOT and SWAP gates, see \cref{fig:832code}(f) and the inter-block SWAP gate by compiling it from inter-block CNOT gates. 
Combined with in-block permutation gates,  arbitrary configurations of controls and targets can be realized in this way. 

Despite the blocked structure of the gates, with these gadgets we can realize any-to-any logical qubit connectivity by implementing SWAP gates between two specific logical qubits in different blocks. 
This capability enables us to extend the in-block operations to any three qubits in the system---we can simply swap targeted qubits into a block, perform the desired diagonal operation, and then swap them back to their initial locations. Together with the transversal CNOT gates, the capability to measure in the logical $X$ and $Z$ basis also makes the Bell measurement transversal.

\subsection{Random logical IQP sampling}
\label{ssec:random iqp}

Our proposal is based on sampling from encoded degree-$D$ IQP circuits. 
These circuits comprise $Z$, and C$^k$Z gates for $k =1, \ldots, D-1$ between arbitrary subsets of qubits with state preparation and measurement in the $X$ basis. 
We can implement those circuits as follows. 
Consider a system comprising $b$ blocks of the $\code{2^D, D, 2}$ code. 
We can then prepare the entire system in the $\ket{+^{bD}}$ state, and implement arbitrary in-block degree-$D$ IQP circuits using the transversal gate set.
In order to apply an IQP gate to an arbitrary subset of the qubits, we can use the SWAP gadget discussed in the previous section (\cref{ssec:gadgets}) to swap the involved logical qubits into a single block, implement the transversally realized IQP gate in that block, and then swap the logical qubits back (or to a different block). 
Finally, we can measure all blocks in the logical $X$ basis. 
Thus, we can implement sampling from arbitrary encoded degree-$D$ IQP circuits. 

We will now consider random degree-$D$ IQP circuits. 
In a \emph{uniformly} random degree-$D$ IQP circuit $C$ every gate from the gate set \{$Z$, CZ, $\ldots$,  C$^{D-1}$Z\} is applied with probability one half to every subset of qubits with the corresponding size, i.e., for every $k$-subset of qubits, a C$^{k-1}$Z gate is applied with uniform probability. 
\textcite{Bremner.2016} have shown that approximately sampling from the output distribution $p_C$
of an $n$-qubit uniformly random degree-$D$ IQP circuit $C$
with probabilities
\begin{equation}\label{eq:IQP}
    p_{C}(x) = |\bra{x}H^{\otimes n} C \ket{+^n}|^2,
\end{equation}
is classically intractable under reasonable complexity-theoretic assumptions for any $D \geq 3$.

In fact, the same is believed to hold true for a \emph{sparse} ensemble of IQP circuits comprising only $ O (n \log n) $ gates from a slightly different gate set comprising the control-phase gate CS and the $T$ gate~\cite{Bremner.2017}. 
This ensemble is significantly more hardware-efficient to implement than uniformly random IQP circuits since it can be implemented using only $O(\log n)$ parallel gate layers. 
Recently, \citet{paletta_robust_2023} have demonstrated that this ensemble of IQP circuits can be implemented fault-tolerantly in unit depth using a certain family of quantum codes they call \emph{tetrahelix codes}.
Tetrahelix codes are constructed from another three-dimensional color code, namely, a tetrahedral code whose smallest instance is the $\code{15,1,3}$ Reed-Muller code with a transversal $T$ gate.
Similar constructions make use of fault-tolerant measurement-based quantum computing~\cite{bravyi_quantum_2020,mezher_fault-tolerant_2020-2,raussendorf_fault-tolerant_2006}.

A plausible analogous and resource-efficient approach that we could take would be to define a sparse ensemble of uniform degree-$D$ IQP circuits, argue that it is hard to simulate, and then implement this ensemble using the gadgets from the previous section.
This would yield a possible path towards demonstrating quantum advantage using only native operations of the $\code{2^D, D, 2}$ code.
However, this approach has the disadvantage that the SWAP gadgets required to implement C$^k$Z gates on arbitrary subsets of qubits require an overhead compared to a single gate: 
three CNOT gates are required to  implement a single SWAP gate between an arbitrary pair of qubits, each of which requires two transversal inter-block CNOT gates, in-block qubit permutations, and additional error correction gadgets.  As detailed in the next subsection, we develop for a more hardware-efficient approach to demonstrating quantum advantage that we show converges asymptotically to the behavior of random IQP circuits.
In this sense, it constitutes a (fault-tolerant) compilation of random IQP circuits. 

We close this subsection with some general remarks on the relation between logical circuits realizable via transversal gates and IQP circuits.  In the case of qubit stabilizer codes, a rigorous proof exists that all transversal gates must be within a finite level of the Clifford hierarchy \cite{jochym-oconnor_disjointness_2018,newman_limitations_2018}. 
 The set of unitaries in the hierarchy has also been partially classified and it is conjectured  (as well as proven for the 3rd level of the hierarchy) that they all take the form of so-called generalized semi-Clifford operations~\cite{beigi_c3_2010,Zeng08}.
 Moreover, as noted above, the set of transversal gates forms a discrete group on an error detecting code~\cite{eastin_restrictions_2009}.  The finite groups formed out of generalized semi-Clifford unitaries in the $k$th level of the hierarchy have generators of the form (up to some technical caveats) \cite{anderson_groups_2022}
 \begin{equation}
     U = C^\dag P D C,
 \end{equation}
 where $C$ is a fixed Clifford unitary that is the same for all generators, $D$ is a diagonal circuit in the $k$th level (classified in \cite{Cui17}), and $P$ is a permutation circuit in the $k$th level.  Up to the basis change by the Clifford circuit $C$, these unitaries are exactly of the IQP type.  As a result, there seems to be a fundamental connection between transversal computation on stabilizer codes and IQP-like circuit dynamics.  Interestingly, if one considers non-stabilizer codes, the family of transversal operations that is achievable on qubit error detecting codes includes gates that lie completely outside the Clifford hierarchy \cite{Kubischta23,kubischta_not-so-secret_2023}.

\begin{figure*}
    \includegraphics{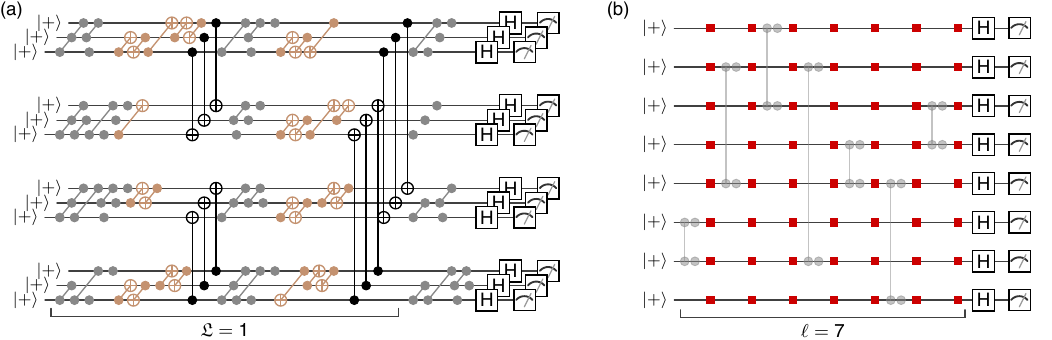}
    \caption{\label{fig:iqp circuits} \textbf{IQP circuit families.} (a) Random $\hdim=2$ hypercube IQP (hIQP) circuit on $4$ blocks of the $\code{8,3,2}$ code with hyperdepth $\hdepth=1$. 
    In-block transversal single-qubit gates realize a degree-$3$ in-block circuit (gray). 
    In-block CNOT gates are realized through permutations of the physical qubits (ochre). 
    Transversal CNOT gates couple the code blocks (black).
    The measurement is performed transversally in the $X$ basis.
    (b) In a sparse degree-$2$ IQP (sIQP) circuit as we consider it here, in every one of the $\ell$  circuit layers, a random degree-$2$ circuit consisting of a random $CZ$ and $Z$ gates (vertices) is applied to a random pair of qubits, which in the noisy setting is followed by a parallel layer of Pauli noise (red squares). 
     }
\end{figure*}

\subsection{Hypercube IQP circuits}
\label{ssec:hiqp intro}

In order to optimize the hardware efficiency, we focus on a gate set which can be fully parallelized, namely, inter-block transversal CNOT gates as well as in-block degree-$D$ IQP and CNOT circuits. 
Any in-block degree-$D$ circuit can be compiled in a single layer of physical single-qubit gates which are powers of the $T$ gate, and an in-block CNOT circuit just requires physically permuting the atoms constituting the code block. 
In order to achieve fast scrambling of the quantum circuits while exploiting the long-range parallelization possible in the reconfigurable atom arrays, we will consider an interaction graph between code blocks given by a $\hdim$-dimensional hypercube.
%
In general, quantum dynamics in hypercube geometry exhibit fast (Page) scrambling~\cite{Page.1993,Hashizume.2021} due to the expansion properties of hypercube graphs.

When compiling the hypercube IQP circuits, our first observation is that CNOT gates leave the family of degree-$D$ IQP circuits invariant under conjugation. 
In other words, a degree-$D$ IQP circuit with CNOT gates is equivalent to some other degree-$D$ IQP circuit. 
To see this, we exploit the fact that a circuit comprised of CNOT gates just acts as a linear map on bit strings,
see \cref{ssec:equivalence iqp cnot}. 
Given this observation, we define the ensemble of \emph{hypercube IQP (hIQP)} circuits (with degree $D$, dimension $\hdim$, and hyperdepth $\hdepth$) as follows, see \cref{fig:iqp circuits}(a):
\begin{enumerate}
     \item Prepare the logical $\ket{+^D}^{\otimes 2^\hdim}$ state. 
    \item Perform $\hdepth$ \emph{hypercube layers} each comprising 
    \begin{enumerate}[label=\roman*.]
        \item uniformly random logical degree-$D$ circuits and uniformly random logical CNOT gates in every block alternated with 
        \item transversal CNOT gates with random orientation between blocks 
    \end{enumerate}
        on all $\hdim$ sets of parallel edges of the hypercube.
     \item Perform a last layer of uniformly random logical degree-$D$ in-block circuits. 
     \item Measure all logical qubits transversally in the logical $X$-basis.
\end{enumerate}

After encoding of the code blocks, a hyperdepth-$\hdepth$ hIQP circuit (i.e., a circuit with $\hdepth$ hypercube layers) thus requires $\hdim \cdot \hdepth + 1 = \hdepth \cdot \log n +1$ physical gate layers each comprising a layer of in-block degree-$D$ circuits, a layer of in-block CNOT gates, and (except the last) a layer of transversal CNOT gates.
Finally, the transversal CNOT gates along parallel edges of the hypercube can be physically realized in the reconfigurable processor by interlacing a pair of two-dimensional grids of qubits, see ED Fig.~6 of Ref.~\cite{bluvstein_logical_2024}

We can also consider variants of random hyperdepth-$\hdepth$ hIQP or combinations thereof that are motivated by experimental feasibility, or fault-tolerant properties.
In one variant, random IQP gates are applied to a block if and only if it was the target of a CNOT gate in the previous layer. 
This is because diagonal gates commute through the control of a CNOT gate so that the two random in-block IQP circuits before and after the gate give rise to a new random IQP circuit.
In this variant, C$^{D-1}$Z gates are also applied deterministically, and single-qubit $Z$ gates are only applied in the last circuit layer, since they commute through the circuit. 
This variant of hIQP circuits, which furthermore did not involve the in-block CNOT gates as realized by physical atom permutations, was implemented in Ref.~\cite{bluvstein_logical_2024}.
This has the advantage that of requiring less error detection to maintain fault tolerance compared to the circuits considered here.  
As was shown recently, however, the resulting symmetry in the circuit can be exploited to reduce the exponent in the time required to simulate these circuits \cite{maslov_fast_2024}. 
This motivates including the in-block random CNOT layers.

The logical hIQP circuit has several interesting interpretations in terms of nested hypercubes. 
Consider for concreteness the $\code{8,3,2}$ code. 
First, observe that the interaction graph of the entire physical circuit is a $(\hdim\,{+}\,3)$-dimensional hypercube of individual physical qubits, since the $\code{8,3,2}$ code blocks themselves are defined on a three-dimensional cube. 
Then, consider the encoding circuit of the $\ket{{+}{+}{+}}$ state. 
This state is created by transversally entangling two logical $\ket {++}$ states of the $\code{4,2,2}$ code, see \cref{fig:fig1}(d). 
Thus, the $\hdim$-dimensional hypercube of $\code{8,3,2}$ codes can be reinterpreted as a $(\hdim\,{+}\,1)$-dimensional hypercube of $\code{4,2,2}$ codes. 
This idea generalizes to encoding the logical $\ket{+^D}$ state of the $\code{2^{D-1},D-1,2}$ code using two sets of logical $\ket{+^{D-1}}$ states of the $\code{2^{D-1},D-1,2}$ code. 
If the single-qubit rotations are altered, we can thus understand the physical hIQP circuit as state preparation of a more general $\code{2^{\hdim+D},\hdim+D,2}$ code.

Finally, we also note that the output states of degree-$D$ IQP circuits before the last layer of Hadamard gates are also known as hypergraph states with hyperedges comprising up to $D$ qubits \cite{rossi_quantum_2013}. 
Hypergraph states have been studied in the quantum information literature \cite{guhne_entanglement_2014,lyons_local_2015} and can serve as a resource for measurement-based quantum computing \cite{miller_hierarchy_2016,takeuchi_quantum_2019,gachechiladze_changing_2019}.

\subsection{Fault-tolerant sampling with error detection}
\label{ssec:error detection}

The main advantage of sampling using logical encodings is the resilience to errors. Typically, quantum error correction consists of many rounds of syndrome extraction repeated throughout the circuit, in order to prevent errors from spreading and, thus, suppress the probability of logical faults. While this approach will be eventually necessary to realize large-scale quantum algorithms, our fault-tolerant compilation of IQP circuits allows us to benefit from using encoded qubits with limited rounds of mid-circuit measurement.  In the extreme case, we can perform the syndrome measurement only at the end which significantly simplifies experimental implementation. We focus primarily on the $\code{8,3,2}$ color code, which is the smallest code with the desired properties, due to its high rate and small size.

In \cref{fig:encoded sampling}(a), we show the linear XEB score for hIQP circuits with $\hdim \,{=} \,4$ and $\hdepth\, {=}\, 2$ using $\code{8,3,2}$ codes, with various amounts of postselection based on evaluating the $X$-type stabilizer from the measurement data. 
For each measurement, we set a threshold on the number of violated stabilizers of which there are one per block
and discard the samples that fail this test. 
The fraction of the remaining samples is the \emph{acceptance ratio}. 
We observe a significant increase in the XEB score as we increase the amount of postselection in the system. The XEB curve approximately obeys a power law $\propto x^{7/8}$, which is consistent with a simple independent and identically distributed (i.i.d.) error model description; see \cref{app:powerlaw}.
Roughly speaking, the scaling arises from the fact that one out of the eight physical qubits does not participate in any logical operator. 
Using this approach, Ref.~\cite{bluvstein_logical_2024} realized IQP circuits on 48 logical qubits with logical fidelities above those achievable with physical qubit implementations.

Importantly, the in-block CZ and CCZ gates (single-qubit rotations), as well as the in-block CNOT (atom transport) many-body logical operations in the $\code{8,3,2}$ code are implemented with single-qubit operations on the physical level. Since these single-qubit operations can be realized with high fidelity, a logical implementation of degree-$D$ IQP circuits can be advantageous compared to a physical one even in the absence of error detection. 
In fact, we expect that errors in the circuits studied here are dominated by the inter-block CNOT gates, rather than the in-block CZ and CCZ gates.

\begin{figure}
    \centering
    \includegraphics{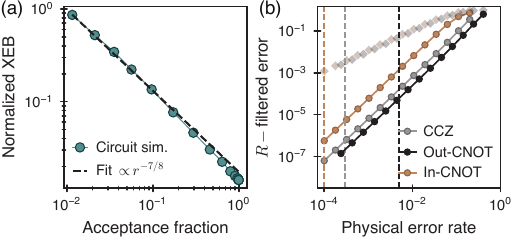}
    \caption{\textbf{Performance of encoded sampling in hIQP circuits.} (a) The logical XEB score can be improved with increased postselection, at the cost of fewer samples. The relationship between the XEB and acceptance fraction follows an approximate power law, which for the $\code{8,3,2}$ code has an exponent close to $-7/8$. The simulation was performed on a single random instance of ($\hdim\,{=}\,4$, $\hdepth\,{=}\,2$) hIQP circuit (128 physical, 48 logical qubits) with single-qubit depolarizing noise corresponding to 99.9\%  and 99.5\% fidelities for one- and two-qubit gates, respectively. The circuit includes only the Clifford subset of available operations, and the data points are averaged over ${\sim}10^6$ samples. The XEB \eqref{eq:def xeb} is evaluated relative for the single random circuit and normalized by the ideal value for this circuit $\chi_{\rm ideal}\,{=}\,3$. (b) Relative performance of logical operations within a circuit can be compared using the notion of $R$-filtered infidelities or errors with $R\,{=}\,0$ (opaque) and $R\,{=}\,2$ (semi-transparent). The three main operations in our circuits are the in-block diagonal gates, out-block CNOT gates, and in-block CNOT gates which are realized on the physical level by single-qubit gates, atom transport, and two-qubit gates, respectively. The vertical lines correspond to characteristic error rates for the relevant operations on the physical level. See \cref{app:noisy encoded} for more details.}
    \label{fig:encoded sampling}
\end{figure}

To quantify the relative performance of our circuit elements we introduce and use the concept of $R$-filtered fidelity, which measures how a given (many-body) operation affects an input state with up to weight-$R$ physical errors, see \cref{app:noisy encoded} for details.  
Refining the fidelity in this way is helpful in analyzing encoded computations. 
There, in contrast to bare computations, entanglement builds up between syndrome and logical degrees of freedom in a context-dependent way, giving rise to an intrinsically non-Markovian error model. 
For example, for a computation encoded in a distance-$2$ fault-tolerant circuit, the $0$-filtered fidelity of any single-qubit operation is always~$1$. This is because no single-qubit error can propagate to a logical one. 
In contrast, the $1$-filtered fidelity is far from unity, since a ``background'' physical error can combine with the single-qubit operation to cause a logical fault.
In \cref{fig:encoded sampling}(b), we plot the $R$-filtered infidelity and see that the inter-block CNOT indeed introduces the most errors out of our basic operations. 
The capability to realize the non-Clifford CCZ gates with high fidelity is therefore a central element of our proposal, since in most scenarios it is the most costly gate: 
physical implementations of the CCZ often have much lower fidelity than two-qubit gates, and typical QEC codes require expensive magic state distillation protocols to realize them. 
Thus, we expect that this fault-tolerant approach to classically hard circuits can result in very high XEB scores even in early logical quantum processors, as demonstrated in Ref.~\cite{bluvstein_logical_2024}.

\section{Complexity and verification of (h)IQP circuits}
\label{sec:hiqp}

Let us now turn to a detailed investigation of random hIQP circuits as defined in \cref{ssec:hiqp intro} and their properties relevant to quantum advantage demonstrations. 
We will address two questions in detail. 
The first question regards the classical complexity of sampling from the output distribution of the circuits and, second, how to verify the correctness of the samples. 
We do so both in the finite-size and the asymptotic regimes. 
As mentioned above, the complexity of various ensembles of IQP circuits has been studied by \textcite{Bremner.2016,Bremner.2017} using a complexity-theoretic argument for the hardness of sampling from random quantum circuits based on Stockmeyer's algorithm; see Ref.~\cite{hangleiter_computational_2023} for a review of that argument.
Here, we will argue that this argument applies also to random hyperdepth-$2$ hIQP circuits.
To do so, we make use of a measure of scrambling called \emph{anticoncentration} that features in this  argument and serves as an indicator of classical hardness~\cite{aaronson_computational_2013,Bremner.2016,hangleiter_computational_2023}.
We study anticoncentration of random hIQP circuits in \cref{ssec:complexity}.

The second question regards verification. 
Verifying classical samples from random quantum circuits is a challenging problem, one that in fact requires exponentially many samples if the goal is unconditional verification \cite{hangleiter_sample_2019}. 
What has become the standard approach to verification in this scenario is to make use of the linear \emph{cross-entropy benchmark (XEB)} \cite{Boixo.2018,arute_quantum_2019}. 
This benchmark can be evaluated using only a small (polynomial) number of samples but involves computing ideal output probabilities, which can require exponential time. 
It is appealing, however, because it has been argued that it can be used as a proxy for the fidelity of the quantum pre-measurement states averaged over instances of the random circuit \cite{arute_quantum_2019,dalzell_random_2024}.
It may also witness the achievement of a computationally complex task \cite{aaronson_complexity-theoretic_2017,aaronson_classical_2020,gao_limitations_2024}.
The XEB has been studied in detail for random universal circuits, revealing the value of the XEB that we would obtain for ideal circuits, as well as the behaviour of the XEB and fidelity under local circuit noise~\cite{barak_spoofing_2021,gao_limitations_2024,Ware.2023,Morvan.2023}.
We study the linear XEB for hIQP circuits in \cref{ssec:xeb and noise}.

It turns out that both the anticoncentration property and the linear XEB can be written as a second moment of potentially noisy hIQP circuits. 
The tool of our analytical study of both complexity and verification of hIQP circuits will be a mapping of the behvaiour of such second-moment quantities for general degree-$D$ IQP circuits of varying circuit depth to the dynamics of four classical states. 
We introduce this mapping and discuss its most important properties in \cref{ssec:stat mech}, deferring details to \cref{app:statmech,app:xebproof}.

\subsection{Complexity of hIQP circuits}
\label{ssec:complexity}

To set the stage for our results on the scrambling and complexity properties of random hIQP circuits, let us recap the results on uniform IQP circuits by 
\textcite{Bremner.2016}. 
They show that output distributions of circuits comprising \{$Z$,CZ,CCZ\} in uniformly random locations, i.e. uniform degree-$3$ circuits, are classically intractable to simulate under certain complexity-theoretic assumptions. 
Specifically, there is no efficient classical sampling algorithm for uniformly random IQP circuits, assuming that the output probabilities \eqref{eq:IQP} are \#P-hard to approximate on average over random IQP circuits, unless the widely believed conjecture that the polynomial hierarchy does not collapse is false.
The key assumption on which the result hinges is therefore the approximate average-case hardness of computing the outcome probabilities. 
To give evidence for this conjecture, \textcite{Bremner.2016} show approximation hardness for degree-$3$ IQP circuits \emph{in the worst case} and a strong version of the so-called \emph{anticoncentration} property. 

At a high level, the anticoncentration property ensures that the output distribution of random IQP circuits has barely any structure that might be exploited by an approximate classical sampling algorithm to speed up the computation compared to the worst case. 
Technically, the (strong) anticoncentration property is the statement that the average second moment of the output distribution of a random circuit family $\mc C$ as measured by the ideal linear XEB score is constant as 
\begin{align}
\label{eq:anticoncentration}
    \overline \chi_{\mc C} \coloneqq 2^{2n} \cdot \mb E_{C \leftarrow \mc C }\left[p_C(x)^2\right] -1 \in O(1). 
\end{align}
The anticoncentration property ensures that in order to sample from the correct output distribution it is not merely sufficient to identify the dominant outcomes in the distribution.
Rather, almost all probabilities need to be computed to exponential precision in order to perform the sampling correctly.
In the following, we will use anticoncentration as a key---but not the sole indicator---that sampling from non-uniform IQP circuits remains intractable. 
For uniform degree-$D$ (uIQP) circuits with any $D\geq 2$, \textcite{Bremner.2016} show that 
\begin{align}
    \overline \chi_{\mathrm{uIQP}} = 2 -2^{-n+1}. 
\end{align}

\begin{figure}
    \centering
    \includegraphics[width=\linewidth]{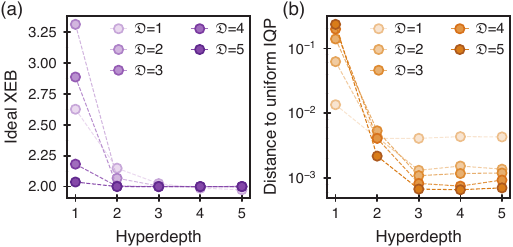}
    \caption{\textbf{Convergence of hypercube IQP to uniform IQP.} 
    The effective IQP circuits implemented by random hIQP circuits converge to uniform IQP circuits as a function of hyperdepth. For each data point, we average ${\sim}\,10^6$ random degree-$2$ hIQP circuits.
    (a) Convergence of the ideal XEB score $\overline \chi_{\mathrm{hIQP}}(\hdepth)$ \eqref{eq:anticoncentration} of hIQP circuits to the uniform value $\overline \chi_{\mathrm{uIQP}} = 2$ as a function of hyperdepth $\hdepth$.
    (b) Total-variation distance between the distribution over IQP circuits defined by hIQP circuits  and uniform IQP. 
    }
    \label{fig:random hiqp}
\end{figure}

How do hIQP circuits fare in terms of their complexity? 
We expect degree-$3$ and higher hIQP circuits to be classically hard to simulate already at constant hyperdepth because of their large number of non-Clifford gates and the expander properties of the hypercube. 
Therefore, we expect that the quantum dynamics scramble very quickly in this geometry. Fast scrambling on hypercubes has in fact been observed in Ref.~\cite{Hashizume.2021}. 

We provide analytical evidence for this in two steps. 
We first show that degree-$D$ circuits anticoncentrate if they have $\Omega(n \log n)$ gates, but not for any constant number of gates. 
To this end, we consider a model of sparse random degree-$2$ IQP (sIQP) circuits in which a CZ gate and $Z$ gates are applied with probability $1/2$ to $\ell$ random qubit pairs, see \cref{fig:iqp circuits}(b).
For us, sparse IQP serves as a toy model of low-depth hIQP circuits, which have significantly more structure and are therefore more difficult to analyze. 

\begin{theorem}[Anticoncentration of sparse IQP]
\label{thm:sparse iqp anticoncentration main}
The ideal linear XEB score of sparse degree-$2$ IQP circuits acting on $n$ qubits with uniformly random $Z$ gates and random CZ gates acting on $\ell$ random pairs is given by 
\begin{align}
    \overline \chi_{\mathrm{sIQP}}(\ell) &= \overline \chi_{\mathrm{uIQP}}+ 2^{-\Omega(\log n)} & \text{ if } \ell \geq \frac {11} 4 n \log n \\
    \overline \chi_{\mathrm{sIQP}}(\ell) & \in \Omega (2^n) & \text{ if } \ell \in O(n). 
\end{align}
\end{theorem}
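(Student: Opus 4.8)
The plan is to compute $\overline\chi_{\mathrm{sIQP}}(\ell)$ exactly as a four-replica second moment and then extract its asymptotics by a counting argument, in the spirit of the IQP computations of \textcite{Bremner.2016,Bremner.2017} but tracking the constants. Writing $p_C(x)=2^{-2n}\sum_{z,z'}(-1)^{f_C(z)+f_C(z')+x\cdot(z+z')}$ for the quadratic form $f_C$ fixed by the random $Z$ and CZ gates, $\mb E_C[p_C(x)^2]$ becomes a sum over four strings $z_1,\dots,z_4$. The first step is to average over the uniformly random $Z$ gates: this enforces the parity constraint $z_1+z_2+z_3+z_4=0$, kills the $x$-dependence, and leaves three free strings. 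Grouping coordinates by their \emph{type} $t_i=(z_{1,i},z_{2,i},z_{3,i})\in\{0,1\}^3$ (so $z_{4,i}$ is determined), the average over one uniformly random pair carrying a CZ factorizes over the $\ell$ gates, yielding
\begin{equation}
\overline\chi_{\mathrm{sIQP}}(\ell)+1=\frac{1}{2^{2n}}\sum_{z_1,z_2,z_3}\lambda(t)^\ell,
\end{equation}
with single-gate transfer weight $\lambda(t)=\tfrac12\big(1+\tfrac{1}{\binom n2}\sum_{i<j}(-1)^{B(t_i,t_j)}\big)$, where $B(a,b)=a^\T G b \bmod 2$ is the symmetric bilinear form whose Gram matrix $G$ is the adjacency matrix of the triangle $K_3$ over $\mathrm{GF}(2)$ (rank $2$, radical $\langle(1,1,1)\rangle$).

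The second step is to read off the leading term. One checks $\lambda(t)=1$ exactly when the set of types present is totally isotropic for $B$, i.e. contained in one of the three maximal isotropic subspaces $V_1,V_2,V_3$ (each a $2$-dimensional space containing the radical, with $V_i\cap V_j=\{0,(1,1,1)\}$). By inclusion–exclusion the number of colorings with $\lambda=1$ is $3\cdot 4^n-2\cdot 2^n$, so these configurations contribute $3-2^{1-n}=\overline\chi_{\mathrm{uIQP}}+1$ \emph{exactly}. The first line of the theorem is therefore the statement that all remaining colorings together contribute only $2^{-\Omega(\log n)}$.

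The third step bounds this remainder. For a non-isotropic coloring, $\lambda(t)=1-2E_{\text{bad}}/(n(n-1))$, where $E_{\text{bad}}\ge 1$ counts frustrated pairs of coordinates — those whose types form an edge of the octahedron $K_{2,2,2}$ on the six non-radical types. Using $\lambda^\ell\le e^{-2\ell E_{\text{bad}}/(n(n-1))}$ and organizing colorings by the number $m$ of ``defect'' coordinates lying outside the nearest $V_k$, each defect is frustrated with $\Theta(n)$ bulk coordinates, so the $m$-defect family contributes on the order of $(C\,n\,e^{-\Theta(\ell/n)})^m/m!$. Summing this geometric-type series and optimizing the constant over the admissible defect structures (single defects off a fixed $V_k$ being dominant) makes the total remainder $2^{-\Omega(\log n)}$ once $\ell\ge\tfrac{11}{4}n\log n$. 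For the second line I would instead lower-bound the positive sum by a single near-uniform coloring (type fractions $\approx 1/8$, for which $\lambda\approx 5/8$): its multinomial weight $\sim 8^n$ gives $\overline\chi_{\mathrm{sIQP}}(\ell)\gtrsim 2^{n(1-O(\ell/n))}=2^{\Omega(n)}$ for $\ell\in O(n)$, which is super-polynomial and so violates anticoncentration.

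The main obstacle is the rigorous control of the full non-isotropic sum in the first line: one must rule out that some entropically favorable family of many-defect colorings beats the single-defect contribution, and verify that the fluctuation/geometric-series bookkeeping closes with the stated constant $\tfrac{11}{4}$ rather than a larger one. This is precisely where the statistical-mechanics description of \cref{ssec:stat mech} does the work, recasting $\sum_t\lambda(t)^\ell$ as a partition function on the complete graph whose low-energy excitations above the three isotropic ground states are exactly the defect colorings, so that the large-deviation rate controlling the remainder can be computed cleanly.
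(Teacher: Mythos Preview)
Your approach is essentially the paper's, in different notation. Your eight types $t\in\{0,1\}^3$ with the bilinear form $B$ (radical $\langle(1,1,1)\rangle$) become exactly the paper's four statistical-mechanics states once you pass to the quotient by the radical: the four cosets are $\xx,\ix,\sx,\px$ of \cref{ssec:stat mech}, your three maximal isotropic subspaces $V_1,V_2,V_3$ are the three immortal families $\{\xx,\ix\}^n,\{\xx,\sx\}^n,\{\xx,\px\}^n$, and your frustrated-pair count $E_{\text{bad}}$ is the $PQ$-weight $\pqw(S)=N(n_\ix,n_\sx,n_\px)$. The paper's proof (\cref{lem:ac}) groups configurations by $(n_\ix,n_\sx,n_\px)$ and bounds the remainder term-by-term via Stirling, isolating $(n/2,1,1)$ as the dominant non-isotropic contribution---precisely your ``single defect off a fixed $V_k$'' case, and the source of the constant $11/4$. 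The ``obstacle'' you flag at the end is resolved by exactly the case analysis you sketch; your ${\rm GF}(2)$ framing is a clean way to see why only three ground-state families exist, but it is the same computation.

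One genuine gap is your lower bound for $\ell\in O(n)$. Fixing the uniform type distribution with $\lambda\approx 5/8$ gives a contribution of order $2^{n(1-C\log_2(8/5))}$ for $\ell=Cn$, which stops diverging once $C\gtrsim 1.47$ and so does not cover arbitrary $\ell\in O(n)$. The paper instead argues that for \emph{any} constant $C$ one can pick two non-radical types at small fractions $\epsilon,\delta$ so that $2C\epsilon\delta<H(\epsilon)+H(\delta)$, i.e.\ the entropy of the configuration still beats its $\lambda$-decay. You need this adaptive choice to establish divergence across all of $\ell\in O(n)$.
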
 

This result is analogue to the sparse anticoncentration result of \textcite[Lemma 6]{bremner_achieving_2017} for degree-$2$ circuits. 
Next, we show that the ideal XEB score of hIQP circuits approaches the uniform score as $\hdepth \rightarrow \infty$. 
This complements the sparse IQP result by showing that the structure of hIQP circuits is asymptotically irrelevant to the XEB.

\begin{theorem}[Anticoncentration of deep hIQP]
\label{thm:anticoncentration deep hiqp}
    The ideal linear XEB score of degree-$2$, hyperdepth-$\hdepth$ hIQP circuits acting on $2 ^\hdim$ blocks of $D$ qubits satisfies
   \begin{align}
    \label{eq:xeb score hiqp main}
        \overline \chi_{\mathrm{hIQP}}(\hdepth) \xrightarrow{\hdepth \rightarrow \infty} \overline \chi_{\mathrm{uIQP}}
    \end{align}
\end{theorem}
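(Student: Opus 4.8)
The plan is to reduce the statement to a purely combinatorial question about the subspace of effective degree-$2$ phase polynomials generated by the random CNOT dynamics. First I would use the CNOT-invariance of degree-$D$ IQP circuits (\cref{ssec:equivalence iqp cnot}) together with the fact that CNOT gates fix $\ket{+^n}$: pushing every in- and inter-block CNOT layer past the diagonal gates toward $\ket{+^n}$, where they act trivially, leaves a single diagonal degree-$2$ IQP unitary, so that $p_C$ depends only on the effective $\mb{F}_2$-quadratic form $\tilde f = \sum_{j} f_j\circ A_j$, where $f_j$ is the uniformly random in-block quadratic form applied in layer $j$ and $A_j\in GL_n(\mb{F}_2)$ is the accumulated linear map of the CNOT layers surrounding layer $j$. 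Since $p_C$ depends only on $\tilde f$, so does the quantity $2^{2n}\,\mb{E}_x[p_C(x)^2]=2^n\sum_x p_C(x)^2$, which is moreover a bounded functional of $\tilde f$: $2^n\sum_x p_C(x)^2\le 2^n\max_x p_C(x)\le 2^n$, a constant at fixed $n$. Writing $\mc{D}_\hdepth$ for the law of $\tilde f$ and $V$ for the full space of quadratic forms (so that uniform IQP is $\mathrm{Unif}(V)$), linearity of the XEB in the circuit distribution gives
\begin{equation}
    \left|\overline\chi_{\mathrm{hIQP}}(\hdepth)-\overline\chi_{\mathrm{uIQP}}\right|\le 2^{n+1}\, d_{\mathrm{TV}}\!\left(\mc{D}_\hdepth,\mathrm{Unif}(V)\right),
\end{equation}
so it suffices to show $d_{\mathrm{TV}}(\mc{D}_\hdepth,\mathrm{Unif}(V))\to 0$ as $\hdepth\to\infty$ at fixed $n$.

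The second step is to identify $\mc{D}_\hdepth$ as a mixture of uniform distributions on subspaces. Conditioned on all CNOT layers the maps $A_j$ are fixed and the $f_j$ are independent and uniform on the subspace $W\subset V$ of in-block quadratic forms; each $f_j\circ A_j$ is then uniform on the subspace $W\circ A_j:=\{g\circ A_j:g\in W\}$, and a standard $\mb{F}_2$-Fourier computation shows that a sum of independent subspace-uniform variables is uniform on the sum of the subspaces. Hence, conditioned on the CNOTs, $\tilde f$ is \emph{exactly} uniform on $U:=\sum_{j}W\circ A_j$, and $\mc{D}_\hdepth$ is the corresponding mixture over CNOT configurations. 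Since a mixture that equals $\mathrm{Unif}(V)$ on the event $U=V$ differs from $\mathrm{Unif}(V)$ only through the complementary event, this yields $d_{\mathrm{TV}}(\mc{D}_\hdepth,\mathrm{Unif}(V))\le \Pr[U\ne V]$, and the theorem reduces to showing that the random CNOT dynamics generate the full space of quadratic forms with probability tending to $1$.

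Finally I would prove this covering statement. The linear part of $V$ (the $Z$ gates) is covered immediately, since $W$ already contains every single-qubit $x_i$ and each $A_j$ is invertible. For the quadratic part one must show that the monomials $(A_jx)_i(A_jx)_{i'}$ with $i,i'$ in a common block eventually span all pairs $x_k x_{k'}$. Reachability in principle follows from the any-to-any routing established in \cref{ssec:gadgets}: arbitrary in-block CNOTs combined with the hypercube-edge transversal CNOTs can bring any two logical qubits into a common block, whence the corresponding CZ monomial appears in some $W\circ A_j$. Upgrading reachability to a high-probability covering is the main obstacle, because the transversal CNOTs are \emph{rigid}---they act with the same orientation on all $D$ logical indices of a block and never mix indices---so one must verify that the in-block layers supply the missing index-mixing and that the random orientations along the $\hdim$ parallel edge-classes, iterated over growing hyperdepth, realize every pairing with probability $\to1$. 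I expect to control this with a coupon-collector argument built on the connectivity and expansion of the hypercube graph---the same fast-scrambling property invoked in \cref{ssec:hiqp intro}---showing that each pair $\{k,k'\}$ is routed into a common block within $O(\hdim)$ layers with constant probability, so that $\Pr[U\ne V]$ decays and $\mc{D}_\hdepth\to\mathrm{Unif}(V)$. A quantitative version of this covering could alternatively be extracted from the second-moment stat-mech model of \cref{ssec:stat mech} used for \cref{thm:sparse iqp anticoncentration main}, but for the qualitative $\hdepth\to\infty$ limit the subspace-covering argument suffices.
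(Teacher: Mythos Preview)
Your approach is genuinely different from the paper's. The paper proves this via the second-moment statistical-mechanics model of \cref{ssec:stat mech} (specifically \cref{thm:block iqp xeb}): it tracks the four-letter strings in $\{\xx,\ix,\sx,\px\}^{n}$, shows that with uniform in-block CNOT layers the only strings surviving arbitrarily many rounds are the globally immortal ones $\mc S_{\text{imm}}=\{\xx,\ix\}^n\cup\{\xx,\sx\}^n\cup\{\xx,\px\}^n$, and reads off $\overline\chi_{\mathrm{uIQP}}$ from $|\mc S_{\text{imm}}|=3\cdot 2^{n}-2$. The killing mechanism is concrete: a uniform in-block CNOT acts as a random $GL_D(\mb F_2)$ map on each block's $\xx$-type, so two adjacent blocks with distinct $s$-type can always be driven to distinct $\xx$-types, after which the next transversal CNOT mixes letters inside one block and the subsequent IQP average annihilates the string; a finite-state absorbing-chain argument then gives survival probability $\to 0$. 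Your route instead upgrades the target to $d_{\mathrm{TV}}(\mc D_\hdepth,\mathrm{Unif}(V))\to 0$---the distributional convergence the paper only establishes numerically in \cref{fig:random hiqp}(b)---and reduces it to the subspace-covering claim $\Pr[U\ne V]\to 0$. The reduction (steps 1--2) is clean and correct, and the payoff is a more elementary argument that avoids the stat-mech bookkeeping while proving something strictly stronger; the paper's route, by contrast, reuses machinery it needs anyway for the noisy-XEB and sparse-anticoncentration results.

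The one soft spot is step 3. The coupon-collector heuristic is the right intuition, but the ``coupons'' are not independent: the $A_j$ form a correlated (and time-inhomogeneous, cycling through edge classes) walk on $GL_n(\mb F_2)$, so one must actually verify that this walk a.s.\ produces enough group elements for $\sum_j W\circ A_j$ to exhaust $V$. A cleaner closure than coupon-collecting is to argue irreducibility: the support of a full hyperlayer's CNOT product, iterated, generates all of $GL_n(\mb F_2)$ (this is exactly what the inter-block CNOT gadget of \cref{ssec:gadgets} establishes---two transversal CNOTs on the same edge plus in-block permutations realize an arbitrary single CNOT, and the hypercube is connected), so the time-homogeneous walk at hyperlayer increments is irreducible on a finite group and visits every element a.s.; since already $\{W\circ g:g\text{ a permutation}\}$ spans $V$, this forces $U_\infty=V$ almost surely. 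With that in place your argument is complete.
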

We defer the proofs of \cref{thm:sparse iqp anticoncentration main,thm:anticoncentration deep hiqp} to  \cref{app:second moment behaviour}. 
We note that \cref{thm:sparse iqp anticoncentration main,thm:anticoncentration deep hiqp}, while formulated for degree-$2$ circuits, provide upper bounds for the XEB value of IQP circuits with additional higher-degree gates which may be random or fixed.  
This is because all gates commute and therefore additional gates cannot increase the XEB score. 
In particular, \cref{thm:sparse iqp anticoncentration main} implies anticoncentration for $\log(n)$-sparse random degree-$D$ circuits, in which uniformly random degree-$D$ circuits are applied to $\ell$ random subsets of $D$ qubits.

The two anticoncentration results complement each other. 
While \cref{thm:sparse iqp anticoncentration main} implies that uniformly sparse degree-$D$ circuits can anticoncentrate in logarithmic circuit depth, 
\cref{thm:anticoncentration deep hiqp} implies anticoncentration of hIQP circuits as $\hdepth \rightarrow \infty$. 
Since IQP and hIQP share the same gate set, this suggests that the ensemble of degree-$D$ hIQP circuits approaches degree-$D$ uniform IQP in this limit and, thus, that very deep hIQP circuits are also classically hard to simulate.
Together, they lend credence to our claim that hIQP circuits anticoncentrate in constant hyperdepth,  meaning that the total circuit comprises $\Theta(n \log n )$ (nonuniform) IQP gates.

To better understand the behaviour of the hypercube circuits at low depths, we numerically compute the ideal XEB value for random hIQP circuits as a function of hyperdepth $\hdepth$ and hypercube dimension $\hdim$. 
We do so by classically sampling random degree-$2$ circuits, which are efficiently simulatable. 
The results, shown in \cref{fig:random hiqp}(a), demonstrate that the XEB of hIQP circuits quickly converges to the uniform value as a function of hyperdepth, with hyperdepth-$2$ circuits being basically converged. 
We also compare the full ensemble of degree-$2$ hIQP circuits to uniform IQP and find a very similar behaviour. 
Indeed, for any $\hdepth \geq 2$ the hIQP distribution quickly converges to uniform IQP, see \cref{fig:random hiqp}(b).

Together, these observations provide evidence that after hyperdepth $2$ any structure which might be exploited by a classical simulation algorithm has essentially vanished, rendering classical simulation inefficient, and we have the following conjecture. 

\begin{conjecture}[hIQP approximate average-case hardness]
    \label{conj:average case hiqp}
    Approximating the output probabilities of hyperdepth-$2$, degree-$3$ hIQP circuits up to error $O(2^{-n})$ is \#P hard for a constant fraction of the instances. 
\end{conjecture}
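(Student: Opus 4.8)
The plan is to follow the worst-case-to-average-case framework of \textcite{Bremner.2016,Bremner.2017}, specialized to the structured hyperdepth-$2$ ensemble, and to use the convergence results \cref{thm:anticoncentration deep hiqp} and \cref{fig:random hiqp} to import expressiveness from the uniform family. For a degree-$3$ hIQP circuit $C$ the target quantity is the probability $p_C(0^n)$ from \eqref{eq:IQP}: since $C\ket{+^n} = 2^{-n/2}\sum_x (-1)^{f_C(x)}\ket{x}$ for a degree-$3$ polynomial $f_C$ over $\mb F_2$ (one monomial per gate), one has $p_C(0^n) = \big(\mathrm{gap}(f_C)/2^{n}\big)^2$, where $\mathrm{gap}(f_C)=\#\{f_C=0\}-\#\{f_C=1\}$. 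Exact evaluation of this gap is a \#P computation, so the proof reduces to showing that \emph{approximate} evaluation remains \#P-hard on a constant fraction of the ensemble.

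The first step is to establish worst-case \#P-hardness within the ensemble itself. This is delicate because a hyperdepth-$2$ hIQP circuit contains only $\Theta(n\log n)$ effective gates, so one cannot directly invoke the dense worst-case hardness of \cite{Bremner.2016}; instead one must argue, as in the sparse setting of \cite{Bremner.2017}, that such sparse structured degree-$3$ circuits can still encode gap computations that are \#P-hard. Here the hypercube expansion and the full-support property underlying \cref{thm:anticoncentration deep hiqp} are the main levers: they show that the effective in-block monomials generated after two hypercube layers cover all $k$-subsets, so that a worst-case sparse degree-$3$ instance can plausibly be embedded as a hyperdepth-$2$ hIQP circuit.

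The second step is the average-case reduction by polynomial interpolation (Lipton's method). I would promote each discrete gate choice to a continuous phase $\omega_g = \ee^{\ii\theta_g}$, turning the amplitude $\bra{+^n}C\ket{+^n}$ into a polynomial that is multilinear in the $\{\omega_g\}$ and $p_C(0^n)$ into a polynomial of degree $O(n\log n)$. Restricting to a random affine line $\omega_g(t)=a_g+b_g t$ makes $p$ a univariate polynomial of low degree in $t$; evaluating at enough points that are each distributed as typical instances of a continuous relaxation of the hIQP ensemble, and then interpolating to $t=0$, recovers the worst-case value. Anticoncentration, $\overline\chi_{\mathrm{hIQP}}(\hdepth)\to\overline\chi_{\mathrm{uIQP}}=2$ from \cref{thm:anticoncentration deep hiqp} via \eqref{eq:anticoncentration}, guarantees that almost all probabilities are $\Theta(2^{-n})$, so that the hypothesized additive error $O(2^{-n})$ corresponds to constant relative error on each evaluation point---the only regime in which the reduction has a chance to go through.

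The main obstacle---and the reason this is stated as a conjecture rather than a theorem---is propagating the $O(2^{-n})$ additive error through the interpolation. A degree-$O(n\log n)$ interpolation amplifies per-point errors unless one has an error-reducing self-correction at the polynomial level (e.g.\ Berlekamp--Welch recovery from a corrupted fraction of points), and no such robust reduction is known that tolerates error comparable to the exponentially small typical probability while still targeting a \#P-hard worst case; this is exactly the open barrier shared with random-circuit and uniform-IQP sampling \cite{Bremner.2017,gao_limitations_2024}. A complementary route I would pursue in parallel only localizes the difficulty rather than removing it: the total-variation convergence of the effective-circuit distribution to uniform IQP (\cref{fig:random hiqp}(b)) implies that a constant fraction of hyperdepth-$2$ instances coincide with typical uniform degree-$3$ IQP circuits, so average-case hardness for uniform IQP---itself the central conjecture of \cite{Bremner.2016}---would transfer to that fraction. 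This shows \cref{conj:average case hiqp} is no harder than its uniform counterpart, with the hypercube structure costing only a constant factor in the fraction of hard instances.
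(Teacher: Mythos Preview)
This statement is a \emph{conjecture}, and the paper does not prove it. The paper presents it as a conjecture supported by evidence---the anticoncentration results (\cref{thm:sparse iqp anticoncentration main,thm:anticoncentration deep hiqp}), the numerical convergence of the hIQP ensemble to uniform IQP (\cref{fig:random hiqp}), and the runtime scaling of known classical simulators (\cref{fig:classical})---but does not attempt a proof. There is therefore no ``paper's own proof'' to compare against.

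Your proposal is not a proof either, and you acknowledge this explicitly: you correctly identify the central barrier, namely that a degree-$\Theta(n\log n)$ polynomial interpolation amplifies the per-point additive error $O(2^{-n})$ beyond what any known robust-decoding argument can tolerate, and that this is exactly the open obstruction shared with uniform IQP and random-circuit sampling. As an account of why the conjecture is plausible and why it remains open, your outline is sound and matches the evidence the paper marshals.

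Two points where your outline overreaches. First, your ``worst-case embedding'' step is heuristic: anticoncentration and hypercube expansion do \emph{not} establish that an arbitrary sparse degree-$3$ polynomial can be realized by some hyperdepth-$2$ hIQP circuit; that would require a separate structural argument about the reachable set of effective polynomials $f_C$, which neither you nor the paper supplies. Second, your ``complementary route'' leans on \cref{fig:random hiqp}(b) to claim that a constant fraction of hIQP instances coincide with uniform IQP instances, but that figure reports the TVD between \emph{eigenvalue histograms of the adjacency matrices} (a spectral proxy), not the TVD between the circuit distributions themselves; small spectral TVD does not imply that a constant fraction of circuits are identical, so the transfer of average-case hardness from the uniform conjecture is not as immediate as you suggest.
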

Note that the output probabilities of degree-$D$ circuits are related to the \emph{normalized gap} of degree-$D$ Boolean polynomials \cite{Bremner.2016} and hence we can equivalently phrase \cref{conj:average case hiqp} in terms of approximating the gap of those polynomials, see Ref.~\cite{dalzell_how_2020} for an in-depth discussion.
\cref{conj:average case hiqp} implies hardness of sampling under a widely believed complexity-theoretic conjecture by a standard argument, see Ref.~\cite{hangleiter_computational_2023} for details. 
\begin{theorem}[Hardness of hIQP sampling]
  There is no efficient classical algorithm that samples from the output distribution of hyperdepth-$2$, degree-$3$ hIQP circuits up to constant total-variation distance error, unless \cref{conj:average case hiqp} is false or the polynomial hierarchy collapses to its third level. 
\end{theorem}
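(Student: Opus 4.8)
The plan is to run the now-standard argument of \textcite{Bremner.2016,Bremner.2017} (see the review in Ref.~\cite{hangleiter_computational_2023}), which converts approximate average-case hardness of the output probabilities together with anticoncentration into hardness of approximate sampling, using Stockmeyer's approximate counting theorem. I would argue by contradiction: suppose there exists an efficient classical randomized algorithm $\mathcal A$ that, on input a description of a hyperdepth-$2$, degree-$3$ hIQP circuit $C$, samples from a distribution $q_C$ with $\|q_C - p_C\|_1 \le \epsilon$ for some fixed constant $\epsilon$, where $p_C$ is the ideal distribution \eqref{eq:IQP}. The goal is to show that such an $\mathcal A$ would let me approximate the probabilities $p_C(x)$ to additive error $O(2^{-n})$ for a constant fraction of instances inside the third level of the polynomial hierarchy, contradicting \cref{conj:average case hiqp} unless $\ph$ collapses.

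First I would apply Stockmeyer's theorem to the sampler. Viewing $q_C(x)$ as the acceptance probability of $\mathcal A$ over its internal randomness on a fixed output string $x$, Stockmeyer's approximate-counting routine yields, in $\fbpp^{\np}$, an estimate $\tilde q_C(x)$ with $|\tilde q_C(x) - q_C(x)| \le q_C(x)/\mathrm{poly}(n)$, i.e.\ a multiplicative approximation; since $\fbpp^{\np}$ sits inside the third level of $\ph$, this is the only nonuniform power the argument needs. Next I would bound the sampler's additive deviation on a typical instance: averaging $\|q_C-p_C\|_1\le\epsilon$ over a uniformly random outcome $x$ gives $\mb E_{x}\,|q_C(x)-p_C(x)| \le \epsilon\, 2^{-n}$, and averaging this in turn over $C$ drawn from the hIQP ensemble and applying Markov's inequality shows that, for any constant $\delta$, a $(1-\delta)$-fraction of pairs $(C,x)$ satisfy $|q_C(x)-p_C(x)| \le O(2^{-n})$ with the hidden constant controlled by $\epsilon/\delta$.

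The crucial step is to upgrade this additive control into a meaningful statement about $p_C(x)$ itself, and this is where anticoncentration enters. From \cref{thm:sparse iqp anticoncentration main,thm:anticoncentration deep hiqp} together with the numerical evidence of \cref{fig:random hiqp} that hyperdepth-$2$ circuits already comprise $\Theta(n\log n)$ effective gates and reach $\overline\chi_{\mathrm{hIQP}}\approx\overline\chi_{\mathrm{uIQP}}$, the second moment $2^{2n}\mb E_{C}[p_C(x)^2]$ in \eqref{eq:anticoncentration} is $O(1)$, so by the Paley--Zygmund inequality a constant fraction of $(C,x)$ have $p_C(x) = \Omega(2^{-n})$. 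Intersecting this event with the $(1-\delta)$ additive-error event, for $\delta$ chosen small enough, leaves a constant fraction of instances on which $q_C(x)=\Theta(2^{-n})$, so that Stockmeyer's multiplicative error $q_C(x)/\mathrm{poly}(n)=o(2^{-n})$ composes with the sampler error to give $|\tilde q_C(x)-p_C(x)| \le O(2^{-n})$. Feeding this $\fbpp^{\np}$ estimator into \cref{conj:average case hiqp} would place the assumed-$\sharpP$-hard approximation task inside $\fbpp^{\np}$; by Toda's theorem $\ph \subseteq \classP^{\sharpP}$, so this collapses $\ph$ to its third level, completing the contradiction.

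The main obstacle I anticipate is matching the \emph{fractions} of instances throughout the chain: \cref{conj:average case hiqp} asserts $\sharpP$-hardness of approximation for a \emph{constant} fraction of circuits, while the sampler-to-estimator reduction only controls a constant fraction of $(C,x)$ pairs, so I must carefully track how the Markov step (governed by $\epsilon,\delta$) and the Paley--Zygmund step (governed by the anticoncentration constant $\overline\chi$) combine to ensure the surviving good fraction is a genuine constant compatible with the one in the conjecture. A secondary, more conceptual gap is that the rigorous anticoncentration bounds I can invoke, \cref{thm:sparse iqp anticoncentration main,thm:anticoncentration deep hiqp}, hold for sparse ensembles and in the $\hdepth\to\infty$ limit rather than for the exact hyperdepth-$2$ ensemble; closing this step relies on the numerically supported claim that hyperdepth-$2$ hIQP circuits already anticoncentrate, which I would state explicitly as the additional input the argument requires.
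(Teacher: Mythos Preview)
Your proposal follows the standard Stockmeyer-based reduction that the paper invokes (without spelling out). The overall structure is right; there is one technical misstep and one misplaced concern.

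The misstep is in the anticoncentration step. You apply Paley--Zygmund to obtain $p_C(x)=\Omega(2^{-n})$ on a constant fraction of instances and then claim that, intersected with the Markov event $|q_C(x)-p_C(x)|\le O(2^{-n})$, this yields $q_C(x)=\Theta(2^{-n})$. But Paley--Zygmund furnishes only a \emph{lower} bound on $p_C(x)$; on that event $p_C(x)$ may still be far larger than $2^{-n}$, so you cannot conclude $q_C(x)=O(2^{-n})$ and hence cannot bound Stockmeyer's multiplicative error $q_C(x)/\mathrm{poly}(n)$ by $o(2^{-n})$. What the argument actually needs is the \emph{upper} bound $p_C(x)=O(2^{-n})$ on a $(1-\delta')$-fraction, which follows directly from Markov's inequality on the first moment $\mb E_{C}[p_C(x)]=2^{-n}$ (or, if you insist on using the second moment, from Markov on $\overline\chi=O(1)$). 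With that, $q_C(x)\le p_C(x)+O(2^{-n})=O(2^{-n})$ and the rest goes through. The Paley--Zygmund lower bound would enter only if \cref{conj:average case hiqp} were formulated for \emph{relative} error, as in \textcite{Bremner.2016}; since it is stated for additive error $O(2^{-n})$, the lower bound plays no role.

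This also dissolves your second concern. Anticoncentration is not an ingredient of the reduction from sampling to probability estimation; only the first-moment bound and the hiding symmetry are (the latter holds because the final hIQP layer contains uniformly random logical $Z$ gates, so $C\mapsto CZ^x$ preserves the ensemble). In the paper, anticoncentration at hyperdepth $2$ serves purely as \emph{evidence} for \cref{conj:average case hiqp}, not as a hypothesis of the theorem. Once the conjecture is assumed, the reduction goes through regardless of whether anticoncentration has been rigorously established for the exact ensemble; your secondary worry thus concerns the plausibility of the conjecture, not the validity of the proof.
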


\begin{figure}

    \centering
    \includegraphics{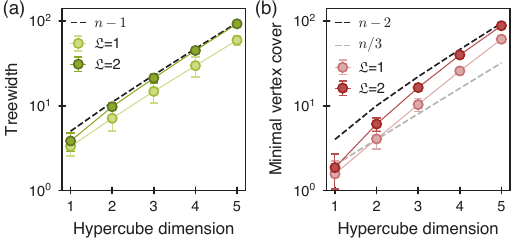}
    \caption{\textbf{Evidence for classical hardness.} To give evidence for the hardness of simulating random hIQP circuits, we analyze the properties of the effective hypergraphs defined by degree-$3$ hIQP circuits for 100 random instances at hyperdepth $\hdepth =1$ and $\hdepth = 2$. 
    Error bars are one standard deviation.
    (a) The treewidth determines the complexity of tensor-network contractions \cite{liu_tropical_2021}. (b) The minimum vertex cover of the degree-3 hyperedges determines the run-time of the near-Clifford simulator of Ref.~\cite{maslov_fast_2024}.
    \label{fig:classical}
    }
\end{figure}

How are these complexity-theoretic results reflected in the concrete runtime of classical simulation algorithms?
To answer this question, we now discuss the runtime scaling of the available classical algorithms. 

The most important classes of general-purpose simulation algorithms are tensor-network algorithms. 
The runtime of these algorithms is governed by the treewidth of the circuits' interaction graph~\cite{Markov.2008}. 
While the hyperdepth-$1$ circuits can be simulated in time $O(2^{n/2})$~\cite{bluvstein_logical_2024}, as discussed in more detail in Ref.~\cite{bluvstein_logical_2024}, we expect the treewidth of hyperdepth-$2$ circuits to (nearly) saturate $n$. 
Tensor-network techniques can also be applied to the degree-$D$ IQP circuit equivalent to the logically implemented hIQP circuit which comprises degree-$D$ and CNOT gates \cite{liu_tropical_2021,liu_computing_2023,generictensornetworks_website}.
We show the treewidth of the hypergraphs defined by the effective IQP circuits in \cref{fig:classical}(a), and find that it is significantly smaller than $n$ by a constant factor for hyperdepth $\hdepth = 1$ but nearly saturating the maximal value of $n-1$ at $\hdepth \geq 2$. 

Another family of simulation algorithms are near-Clifford simulators \cite{bravyi_improved_2016,bravyi_simulation_2019}. 
Since the number of non-Clifford gates of a random degree-$3$ hIQP circuit scales as $n \cdot \hdim \cdot \hdepth /3 $ we would expect those algorithms to have a large runtime. 
Nonetheless, IQP circuits appear to be prone to relatively low-rank stabilizer decompositions.
Notably, very recently Ref.~\cite{maslov_fast_2024} has demonstrated that in the absence of the in-block permutation gates, there is a decomposition of the output states of hIQP circuits of any depth into $2^{n/3}$ stabilizer states. 
To this end they exploited the fact that the effective IQP hypergraph has a vertex cover of size $n/3$. 
For the hIQP circuits considered here, the minimal vertex cover approaches its maximal size given by $n-D+1$, however, see \cref{fig:classical}(b).
We note that one could attempt to classically optimize the minimum vertex cover over CNOT circuits just before the $X$-basis measurement, since the addition of such a circuit can be simulated in classical post-processing, see \cref{ssec:equivalence iqp cnot}. 
A closely related approach to that of Ref.~\cite{maslov_fast_2024} has also been explored in Ref.~\cite{codsi_classically_2023} where the authors find low-rank stabilizer decompositions of 2-local classically hard IQP circuits using large independent sets of the IQP graph.

Further specific simulation algorithms for IQP circuits can exploit the fact that one can additively approximate their outcome probabilities in conjunction with sparsity of the output distribution \cite{shepherd_binary_2010,schwarz_simulating_2013,pashayan_estimation_2020}. 
But such algorithms fail due to anticoncentration. 
This is because anticoncentration implies that the output distribution has exponentially large support and hence the individual probabilities need to be estimated with exponential precision.  
The best known algorithm for exactly computing output probabilities of IQP circuits scales as $\tilde \Omega(2^{0.9965 n})$ \cite{lokshtanov_beating_2017} and exploits a clever way to find roots of degree-$3$ polynomials.
As discussed in Ref.~\cite{dalzell_how_2020} it seems unlikely that this scaling can be improved beyond $2^{n/2}$ for uniform degree-$3$ circuits.
Importantly, the best known algorithm for general IQP circuits thus has a runtime of $\Omega(2^{0.9965 n})$ per amplitude. 

The results above all apply to the computation of outcome probabilities of IQP circuits.  
It is not immediately clear that they also imply simulability in terms of sampling. 
The best available approaches to sampling are, first, the gate-by-gate simulation algorithm of \textcite{bravyi_how_2022}. 
This algorithm efficiently turns any method to compute amplitudes of degree-$D$ phase states where some qubits are measured in the $X$ basis and others in the $Z$ basis into a sampling algorithm. 
For instance, this has been done in Ref.~\cite{maslov_fast_2024} to simulate the experiments in Ref.~\cite{bluvstein_logical_2024}. 
Second, one can use rejection sampling if the output distribution is known to be very flat \cite{markov_quantum_2018}. 
It is not clear, however, that the output distributions of IQP circuits are sufficiently flat for this approach to work. 

Finally, let us note that when comparing to a noisy experiment, the question of classical simulability becomes more subtle. 
Now, we can also allow the classical algorithm to exploit ``the noise'', but what exactly this means is up for debate. 
One could either require a classical algorithm to output samples close to the physically noisy samples. 
In this scenario, under a simple noise model, any constant amount of local noise makes IQP circuits classically simulatable after constant depth \cite{rajakumar_polynomial-time_2024,bremner_achieving_2017}.
However, these algorithms are not competitive in practice because, while they are technically speaking ``polynomial-runtime'' algorithms, the exponent depends very unfavorably on the noise strength as $1/\epsilon^{O(1/\gamma)}$, where $\gamma$ is the local noise rate, and $\epsilon$ is the targeted total-variation distance error~\cite{aharonov_polynomial-time_2023}. 
Alternatively, one could merely require the samples to pass a test such as the cross-entropy benchmark with a similar score, or ``spoof the benchmark'' \cite{barak_spoofing_2021,gao_limitations_2024}. 
This approach relates closely to the question of verifying the classical samples, a topic we discuss in detail in the next section. 
For hIQP circuits Ref.~\cite[][ED Fig.~8]{bluvstein_logical_2024} finds that an approach similar to that of \textcite{gao_limitations_2024} is not successful.
Ultimately, the question is whether quantum error correction is needed for scalable quantum advantage---progress towards this goal has been made in restricted IQP settings~\cite{bremner_achieving_2017,bravyi_quantum_2020}.
While we do not study noisy simulability of IQP circuits in this work, the question is therefore highly interesting and deserves further study, in particular in the context of encoded circuits. 

\subsection{Verification of noisy IQP circuits}
\label{ssec:xeb and noise}

We now turn to the question how to verify samples from potentially noisy implementations of random  hIQP circuits. 
We will show that IQP circuits have appealing properties in this respect. 
Generally speaking, unconditionally verying samples drawn from random quantum circuits which anticoncentrate cannot be done from less than exponentially many samples \cite{hangleiter_sample_2019}. 
This is why sample-efficient benchmarks are typically used to validate the correctness of an experiment \cite{Boixo.2018,arute_quantum_2019,bouland_complexity_2019}. 
To the extent that a benchmark does not reflect all intricacies of the targeted task, `spoofing' the experiment---that is, achieving a similar score on the benchmark---can become easier compared to simulating the full distribution. Nevertheless, these benchmarks remain an important tool in the verification of sampling experiments. 
Most sampling experiments have used the average linear XEB  \cite{arute_quantum_2019,zhu_quantum_2022,Morvan.2023}
\begin{align}
\label{eq:def xeb}
    \overline \chi \coloneqq \mb E_{C}\, 2^n  \sum_x  q_C(x) p_C(x) -1 , 
\end{align}
between the ideal and noisy output distributions of a random circuit $C$ denoted by $p_C$ and $q_C$, respectively, as a benchmark of quantum advantage. 
The linear XEB has the favourable property that it can be estimated efficiently from few samples by averging the ideal probabilities $p_C(x_i)$ corresponding to samples $x_i$ from a randomly chosen quantum circuits. 
It has the unfavourable property that computing those ideal probabilities can require exponential time.

The second feature which makes the XEB appealing is that it can be used to measure the many-body average fidelity 
\begin{align}
    \label{eq:average fidelity}
    \overline F \coloneqq \mb E_C \bra C \rho_C \ket C 
\end{align}
of the pre-measurement state $\rho_C$ compared to the target state $\ket C$ for sufficiently scrambling circuits or dynamics~\cite{Boixo.2018,arute_quantum_2019,mark_benchmarking_2023,shaw_benchmarking_2024}.  
For circuits with Haar-random two-qubit gates and local noise models, this relation holds so long as the single-qubit error rate per gate layer $\epsilon < \gamma/n$ for some constant~$\gamma$~\cite{dalzell_random_2024,gao_limitations_2024}. 
For error rates $\epsilon > \gamma/n$ on the other hand, the XEB decays much slower than the fidelity, namely with the circuit depth~\cite{gao_limitations_2024,deshpande_tight_2022,Ware.2023, Morvan.2023}. 
The slower decay for large noise rates constitutes a vulnerability of XEB that can be exploited when designing classical algorithms using bespokely planted noise at specific locations of the targeted ideal circuit. 
These algorithms achieve similar XEB scores~\cite{gao_limitations_2024,pan_solving_2022} as the first experimental demonstration of quantum advantage \cite{arute_quantum_2019}. 
In fact, it has been argued by \textcite{gao_limitations_2024} that polynomial-time simulations should be able to achieve a comparably high XEB score asymptotically that decays only with the circuit depth.

Intuitively, the deviation between XEB and fidelity for locally Haar-random circuits can be understood in terms of the comparison of error events close to the beginning and the end of the circuit, respectively \cite{gao_limitations_2024}. 
Consider a noise model in which single-qubit Pauli errors occur uniformly across the entire circuit volume. 
Errors at the beginning of the circuit tend to affect XEB and fidelity in the same way. 
To see this, consider their effect on the initial state which can be gleaned from propagating them backwards into a string with weight determined by the backwards lightcone. 
Any Pauli error which propagates into a string with a single Pauli-$X$ term will flip a bit of the all-zero input state giving rise to an orthogonal state.
This will cause the fidelity as well as the linear XEB to be zero \footnote{To see the latter observe that $2^{2n} \mb E_C p_C(x) p_C(y) - 1 \,{=}\, 0 $ for $x \neq y$.}. 
Pauli errors which propagate into a string with only $Z$ errors do not affect XEB and fidelity.
In contrast, an error close to the end of the circuit will affect XEB and fidelity differently. 
For the XEB, a comparably high fraction of such errors propagate into a low-weight $Z$-type string at the end of the circuit. 
Such errors do not affect the outcome probabilities. 
But they do affect the fidelity: to compute it, we compute the overlap with an ideal state, and therefore an error towards the end of the circuit has a large lightcone. 
Thus, errors which propagate into a low-weight $Z$-type string at the end of the circuit will not affect XEB but do affect fidelity, giving rise to the deviation. 

Consider in contrast degree-$D$ IQP circuits. 
Here, all $Z$-type errors commute with the circuit and therefore yield an orthogonal state. 
When commuting a single-qubit $X$ error through a degree-$k$ gate, it will turn into an $ X \otimes Z^{\otimes (k-1)}$ error.
Unless there are cancellations we can therefore commute an $X$ error to the beginning or the end of the circuit, where it will have picked up a $Z$-string, which yields an orthogonal state. 
The only deviation between XEB and fidelity should therefore stem from $X$ errors in the very last circuit layer. 

In the following, we will rigorously assess this intuition by analyzing the relationship between XEB and fidelity for different circuit ensembles and noise regimes. 
We find that, indeed, at low noise rates $\lesssim 1/n$ there is a correspondence between XEB and fidelity for general local noise.
However, that correspondence breaks down for high noise rates and while the fidelity decays exponentially in the circuit volume, the XEB only decays exponentially in the circuit depth, as is the case for Haar-random circuits \cite{Boixo.2018,gao_limitations_2024,Ware.2023,Morvan.2023}.
We then show the transition between XEB and fidelity can be shifted to higher noise rates by adding a fixed layer of gates, which does not have an analogue for Haar-random circuits. 
Altogether, these results give evidence that the intuition discussed above gives the correct mechanism, while highlighting that error cancellations do in fact play a significant role at high noise rates.

\begin{figure}
    \centering
        \includegraphics[width=\linewidth]{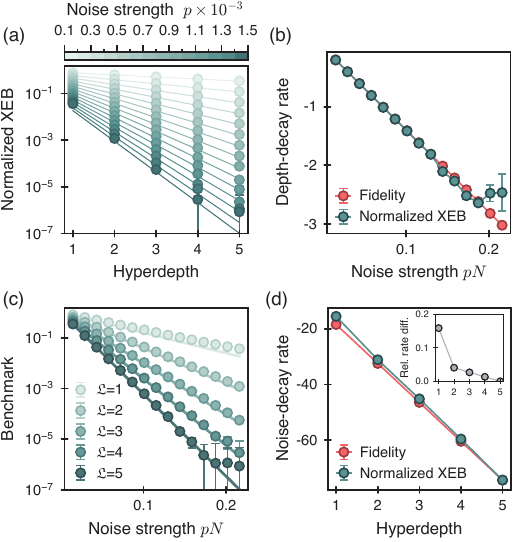}
    \caption{\textbf{XEB versus fidelity in hIQP circuits at low noise rates.} 
    Measures of quality for noisy random degree-2 hIQP circuit ensemble at $\hdim\,{=}\,4$ with circuit-level single-qubit depolarizing noise following a gate application with total error probability $p$, i.e., each $X/Y/Z$ error occurs with probability~$p/3$.
    Single-qubit and two-qubit fidelities are therefore $1-p$ and $1-2p$, respectively. 
    To obtain the data, we average~${\sim}10^4$ circuit realizations with~$10^7$ noise realizations each and measure the average fidelity $\overline F$ \eqref{eq:average fidelity} and normalized average  XEB $\overline \chi/\overline \chi_{\mathrm{hIQP}}$ with $\overline \chi,\overline \chi_{\mathrm{hIQP}} $ defined in \cref{eq:def xeb,eq:anticoncentration}, respectively. 
    (a) Decay of the normalized XEB (circles) and fidelity (lines) with hyperdepth for various noise rate $p$. 
    (b) Decay rate with hyperdepth $\alpha$ of the normalized XEB (teal) and fidelity (orange) fitted as $\propto\exp(\alpha \hdepth)$, respectively, shown as a function of noise strength $pN$. 
    (c) Decay of the normalized XEB (circles) and fidelity (lines) with noise strength $pN$ (as in (b)) for various hyperdepths. 
    (d) Decay rate with noise strength~$\beta$ of the normalized XEB (teal) and fidelity (orange) fitted as $\propto\exp(\beta \times pN)$, respectively, shown as a function of the hyperdepth $\hdepth$. 
    The inset shows the relative difference~$\beta_{\overline \chi}/\beta_{\overline F}$ between the decay rates of normalized XEB~$\beta_{\overline \chi}$ and fidelity~$\beta_{\overline F}$.
     The last two values of~$pN$ from are ommited from fitting due to their large uncertainty.
     }
    \label{fig:noisy xeb}
\end{figure}

First, we show that in a general setting, at low noise rates, the XEB of IQP circuits can be used as a proxy for the average fidelity. 
Specifically, we consider general noise occurring in encoded circuits as modeled by quantum instruments and any quantum circuit that is equivalent to an IQP circuit. 
In this setting, we show that if the noise rate $\epsilon < c/n \log n$ and the circuit depth $\ell \in \Omega(\log n)$, then 
\begin{align}
    \overline \chi = 2 \overline F - 2^{-n+1} + O(c).
\end{align}
We elaborate the setting and concrete noise model in \cref{app:low error rates}. 

We quantitatively observe this behaviour in numerical simulations of degree-$2$ IQP circuits with circuit-level noise, see \cref{fig:noisy xeb}. 
There, we compare the average fidelity with the average noisy XEB normalized by the ideal XEB score $\overline \chi/\overline \chi_{\mathrm{hIQP}}$. 
This \emph{normalized XEB} removes some finite-size differences between the average XEB and fidelity at very low noise rates and is therefore a better estimator for the average fidelity (see Refs.~\cite{choi_preparing_2023,ringbauer_verifiable_2024} for more details). 
At the same time, the normalization does not change the noisy behaviour qualitatively, i.e., in terms of the decay in terms of the noise and depth shown in \cref{fig:noisy xeb}(b,d).  
We find that for sufficiently low noise rates, the (normalized) average XEB and fidelity match extremely well. 
Together, the analytical and numerical results justify the use of the XEB as a proxy for the average fidelity at sufficiently low (Pauli) noise rates and sufficiently high depth $\Omega(\log n)$.

But how far does the correspondence between XEB and fidelity carry over to the high-noise regime? 
We find that the relationship between XEB and fidelity breaks down for high local Pauli noise rates and we qualitatively recover its behaviour in Haar-random circuits. 
Concretely, we consider the sparse degree-$D$ model with a layer of single-qubit noise following every gate, see \cref{fig:iqp circuits}(b). We analyze the most general noise model amenable to our proof techniques, which is a local $X$-$Y$ symmetric Pauli noise with $X$ and $Y$ noise rates $p_x = p_y \eqqcolon q_\perp$ and $Z$ noise rate $p_z$. 
Defining $q \coloneqq q_\perp + 2 p_z$ we have the following result. 
\begin{theorem}[XEB versus fidelity transition]
\label{thm:xeb bounds main}
    The average XEB for a local random degree-$D$ IQP circuits with $\ell $ uniformly random degree-$D$ gates alternated with $\ell$ layers of $X$-$Y$ symmetric Pauli noise is bounded as 
     \begin{align}
        \overline{\chi} \ge \,\,&  2^{1-n} [ (1-q_\perp)^\ell+(1-q)^\ell]^n \nonumber\\
        & \hspace{2cm}+ 2^{-D! \cdot \ell/n}[(1-q)(1-q_\perp)]^\ell,\\
        \overline{\chi}  \le \,\,&  2^{1-n}[1+(1-q)^\ell]^n + p(n) 2^{-2\ell/n} ,
    \end{align}
    where $p(n)= n^{11/2}/(2\pi)^{3/2}$.
\end{theorem}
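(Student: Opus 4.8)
The plan is to treat $\overline\chi$ as a second-moment quantity and evaluate it through the statistical-mechanics mapping of \cref{ssec:stat mech}. Writing $\overline\chi + 1 = 2^n\,\mb E_C \sum_x q_C(x) p_C(x)$, I would expand both the ideal probability $p_C(x) = |\bra{x} H^{\otimes n} C \ket{+^n}|^2$ and the noisy probability $q_C(x)$ in the computational basis, so that each qubit carries the four bra/ket path indices of the two copies. Averaging over the uniformly random degree-$D$ gate in each of the $\ell$ layers projects the accumulated phase onto its vanishing combinations, which is exactly the step that collapses the four path indices per qubit onto the small set of classical states of the stat-mech model; the $\ell$ interleaved Pauli-noise layers then enter as positive diagonal weights. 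The upshot is a representation of $\overline\chi$ as a sum, over classical configurations evolving through $\ell$ transfer steps, of manifestly nonnegative Boltzmann weights---positivity being the feature that makes both bounds tractable.

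Next I would classify the configurations into two families whose contributions produce the two terms in each bound. The first is the \emph{decoupled} family, in which each qubit independently sits in one of two sectors: an ``$X$-type'' sector that survives a noise layer with weight $(1-q_\perp)$ and a ``full'' sector surviving with weight $(1-q)$, with no inter-qubit coupling cost. Summing these $2^n$ assignments over $\ell$ layers gives precisely $2^{1-n}[(1-q_\perp)^\ell+(1-q)^\ell]^n$, the fidelity-tracking piece. The second family is a single \emph{correlated} configuration, ferromagnetically ordered across all qubits: here the random gate placement imposes a genuine coupling cost, and averaging it over the $\binom{n}{D}$ possible supports yields a per-layer suppression of order $2^{-D!/n}$, while the noise contributes $(1-q)(1-q_\perp)$ per layer. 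This gives the slowly decaying term $2^{-D!\ell/n}[(1-q)(1-q_\perp)]^\ell$, which decays only with depth-over-$n$ rather than with circuit volume---precisely the behaviour responsible for the XEB-versus-fidelity split.

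For the lower bound I would restrict the nonnegative sum to these two explicit families and drop everything else, which immediately yields the stated inequality. For the upper bound I would relax $(1-q_\perp)\le 1$ in the decoupled sector to obtain the clean $2^{1-n}[1+(1-q)^\ell]^n$ term, and then bound the entire remainder---all configurations containing at least one inter-qubit domain wall---by $p(n)\,2^{-2\ell/n}$. The key structural input is that any nontrivial domain wall costs at least the minimal per-layer weight $2^{-2/n}$ arising from the cheapest ($D=2$-like) coupling, so that every such configuration is suppressed by at least $2^{-2\ell/n}$; the remaining factor $p(n)=n^{11/2}/(2\pi)^{3/2}$ is then the combinatorial count of these configurations, weighted by their relative decay, extracted through a Stirling approximation of the relevant binomial sums.

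The main obstacle is exactly this last step. It requires, first, establishing that the per-layer transfer operator of the stat-mech model has a spectral gap guaranteeing that no configuration outside the decoupled family decays more slowly than $2^{-2\ell/n}$, which amounts to diagonalizing (or at least bounding the largest relevant off-diagonal eigenvalue of) the transfer operator restricted to the correlated sectors. Second, it requires performing the weighted enumeration of domain-wall patterns carefully enough that the accumulated multiplicity is only the polynomial $p(n)$ rather than something exponential in $n$; this is where the precise constant $n^{11/2}/(2\pi)^{3/2}$ emerges from a saddle-point/Stirling estimate. I expect the bookkeeping of which domain-wall patterns are compatible with the gate-averaging constraints, and the resulting tight control of their weighted count, to be the most delicate part of the argument.
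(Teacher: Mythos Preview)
Your high-level plan matches the paper's: express $\overline\chi$ through the stat-mech mapping as a nonnegative sum over strings in $\{\xx,\ix,\sx,\px\}^n$, lower-bound by keeping a few explicit strings, upper-bound by relaxing. Your first family is also right---these are the polarized strings $\{\xx,\sx\}^n\cup\{\xx,\px\}^n$, with per-site noise weights $(1-q_\perp)$ on $\xx$ and $(1-q)$ on $\sx/\px$. But the second family is mis-identified. A ferromagnetic string (all sites equal, e.g.\ $\sx^{\,n}$) is \emph{immortal} under the gate rule $P\otimes P\mapsto P\otimes P$, so it has \emph{no} circuit-induced suppression; its noise decay $(1-q)^{n\ell}$ is volume-law, not depth-law, and cannot produce the second term. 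The string that actually gives the depth-law piece is the \emph{single-excitation} state $\ix^{\,n-1}\sx$: it dies only when the random degree-$D$ gate hits the lone $\sx$ together with at least one $\ix$, giving per-layer survival $1-\pqw_D(\ix^{\,n-1}\sx)/\binom{n}{D}\ge 1-D!/n$ and hence the factor $2^{-D!\,\ell/n}$; the single $\sx$ and one fixed $\ix$ supply $(1-q)^\ell(1-q_\perp)^\ell$. Without this identification the second term of the lower bound cannot be derived.

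For the upper bound you anticipate a spectral-gap argument on the transfer operator together with a delicate Stirling enumeration as the main obstacle. The paper avoids both. It modifies the gate rule so that $\ix(\sx/\px)$ pairs no longer die---valid because $X$-noise on a site previously collapsed from $\ix+\xx$ to $\xx$ can only raise the $PQ$-weight and hence the decay, so suppressing that mechanism only increases $\overline\chi$. In the modified model the only effect of noise is to damp each $\sx/\px$ by $(1-q)$ per layer, and the remaining circuit-induced decay of all non-immortal strings is literally the quantity already controlled in the noiseless anticoncentration proof (\cref{thm:sparse iqp anticoncentration main}), which hands you $p(n)\,2^{-2\ell/n}$ with $p(n)=n^{11/2}/(2\pi)^{3/2}$ directly. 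The upper bound is therefore a corollary of the anticoncentration estimate you already have, not a fresh spectral calculation.
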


We defer the proof of \cref{thm:xeb bounds main} to \cref{app:xeb bounds,ssec:extension_to_degree_3_circuits}.

The lower bound on the XEB shows directly that it has two regimes depending on the strength of the noise.  
When $(p_x+p_y+p_z) < 2  \log 2/n^2$, then the XEB is dominated by the exponential decay that is linear in the circuit volume $n \ell$.  Whereas for larger noise rates, the XEB decay is dominated by the decay that is linear in the effective depth $\ell/n$ (that one layer of gates requires $\ell = n/2$ in this model). This sharp change in behavior of the XEB is exactly analogous to what has been found in noisy Haar random circuits \cite{gao_limitations_2024,Morvan.2023,Ware.2023}.  
In both cases, it indicates that the XEB ceases to be good proxy for the fidelity, which always decays at a rate linear in the circuit volume until it reaches a value near $1/2^n$.  
The upper bound of the XEB shows that the two scaling regimes observed in the lower bound are actually tight at low and large enough noise rates.

One way to understand the deviation between the intuition we discuss above and the lower bound for high noise rates is to consider the random gate choices in the sparse IQP ensemble. 
In the model considered in \cref{thm:xeb bounds main}, for every one of the $\ell$ randomly chosen qubit pairs a CZ gate is applied with probability $1/2$. 
Think of a circuit comprising $\ell/n$ parallel gate layers, and consider an $X$-error occurring in the $(\ell/n -k)$-th layer of the circuit. 
Then the probability that no gate is applied in the last~$k$ layers decays as $2^{-k}$, leading to a deviation between XEB and fidelity on that order of magnitude. 

This reasoning suggests that the deviation should vanish if we ensure that every $X$-noise event in the circuit will meet a degree-$2$ or higher gate when we commute it to the end of the circuit, and that its resulting attached $Z$-type string is not cancelled by any other gates. 
As a first step towards this, we can add a fixed parallel layer of CZ gates between all qubit pairs at the very end of an otherwise random degree-$2$ circuit. 
This will only leave deviations due to cancellations of $Z$ strings.
We find that a fixed gate layer indeed has the effect of lowering the bounds on the average XEB from \cref{thm:xeb bounds main}, an effect that does not have an analogue for Haar-random circuits.

\begin{lemma}[Shifting the transition]
\label{lem:shift}
     The average XEB for the local random degree-2 IQP model on an even number $n$ of qubits with $\ell$ gates and $\ell$ layers of $X$-$Y$-symmetric Pauli noise with a last layer of parallel CZ gates between all neighbouring qubits is bounded as 
    \begin{align}
        \overline{\chi} &\ge 2^{1-n} [ (1-q_\perp)^\ell+(1-q)^\ell]^n \\
        &+ 2^{-4\ell/n}[(1-q)(1-q_\perp)]^{2\ell}. \\
        \overline \chi & \le 2^{1-n}[1+(1-q)^\ell]^n + p(n) 2^{-3\ell/n} . 
    \end{align}
\end{lemma}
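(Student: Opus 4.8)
The plan is to establish \cref{lem:shift} by re-running the statistical-mechanics calculation of \cref{ssec:stat mech} that proves \cref{thm:xeb bounds main}, modifying only the boundary operator at the end of the circuit. Recall that the unnormalized XEB $\overline{\chi}+1 = 2^n\,\mb E_{C}\sum_x q_C(x)p_C(x)$ is a second moment of one ideal and one noisy copy, and therefore maps to an amplitude of the four-state model of the form $\bra{\mathrm{out}}\, T_{\ell}\cdots T_{1}\,\ket{\mathrm{in}}$, where each $T_j$ is the transfer operator obtained by averaging one random degree-$2$ gate together with its subsequent single-qubit noise layer, and $\ket{\mathrm{in}}$, $\bra{\mathrm{out}}$ encode the $\ket{+^n}$ preparation and the $X$-basis measurement. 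Appending a deterministic layer $U_{\mathrm{fix}}=\prod_{\langle i,j\rangle}\mathrm{CZ}_{ij}$ dressed by one additional noise layer simply replaces $\bra{\mathrm{out}}$ by $\bra{\mathrm{out}}\,T_{\mathrm{fix}}$, where $T_{\mathrm{fix}}$ is the non-averaged transfer operator of the fixed CZ layer together with one noise layer. The whole task reduces to re-evaluating the same configuration sum with the inserted factor $T_{\mathrm{fix}}$.

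First I would split the configuration sum exactly as in the proof of \cref{thm:xeb bounds main} into the family producing the volume-law contribution (the first summand, with its $n$-th power reflecting the circuit volume) and the family producing the depth-law contribution (the second summand, which decays only in the effective depth $\ell/n$). For the volume-law term I would argue that it is reproduced unchanged: the relevant configurations sit in the fully aligned sector of the two copies, on which both CZ gates and single-qubit Pauli channels act trivially, so $T_{\mathrm{fix}}$ contributes a factor of one. This is consistent with the first summand of \cref{lem:shift} being identical to that of \cref{thm:xeb bounds main}.

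The substance of the argument is the depth-law term, which arises from $X$-type error events near the end of the circuit whose propagated $Z$-strings are too short to orthogonalize the two copies. I would track the action of $T_{\mathrm{fix}}$ on precisely these misaligned configurations. Whereas a random degree-$2$ gate spreads such an $X$ excitation into an $X\otimes Z$ string only with probability $1/2$, the fixed CZ layer does so deterministically, so that only configurations in which the resulting $Z$-string is cancelled by later gates survive. Resumming over the layer index of the surviving excitation then sharpens the per-layer suppression, turning the factor $2^{-2\ell/n}$ of \cref{thm:xeb bounds main} into $2^{-3\ell/n}$ in the upper bound, while the configuration-counting prefactor $p(n)=n^{11/2}/(2\pi)^{3/2}$ is inherited verbatim. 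For the lower bound, which is obtained by retaining a single explicit surviving configuration, the additional dressing noise layer supplies the extra factor $[(1-q)(1-q_\perp)]^{\ell}$ and the stronger exponent $2^{-4\ell/n}$.

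I expect the main obstacle to be controlling the $Z$-string cancellations in the depth-law sector, which the discussion preceding the lemma already identifies as significant at high noise. The crux is to show that inserting $T_{\mathrm{fix}}$ does not leak misaligned configurations back into the aligned sector in a way that would reinstate an $O(2^{-2\ell/n})$ contribution and thereby defeat the shift. This demands a careful description of how the fixed CZ transfer operator mixes the four classical states, using the $X$-$Y$ symmetry of the noise (which keeps the two transverse Pauli rates equal) to keep the combinatorics tractable. Once this mixing is pinned down, both inequalities follow from the same geometric-sum and Stirling-type estimates already used for \cref{thm:xeb bounds main}, now with the modified boundary operator in place.
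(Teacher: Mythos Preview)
Your high-level framing (rerun the stat-mech calculation with a modified boundary operator) is fine, and your treatment of the volume-law term is correct: the immortal strings in $\{\xx,\ix\}^n\cup\{\xx,\sx\}^n\cup\{\xx,\px\}^n$ are indeed invariant under the final CZ layer, so the first summand is unchanged.

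However, your mechanism for the depth-law term is wrong, and this is the heart of the lemma. You describe the fixed CZ layer as ``spreading an $X$ excitation into an $X\otimes Z$ string deterministically'' with only those configurations surviving ``in which the resulting $Z$-string is cancelled by later gates.'' But the CZ layer is the \emph{last} layer---there are no later gates---and in the stat-mech language there are no $X$ excitations to spread; the relevant objects are the four classical states $\ix,\sx,\px,\xx$. The actual effect of a CZ gate in the mapping is a \emph{sign}: $\ix\,\sx \xmapsto{\mathrm{CZ}} -\,\ix\,\sx$ while $\xx\,\sx \xmapsto{\mathrm{CZ}} \xx\,\sx$ (and similarly for $\px$). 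Consequently the single-excitation strings $\{\ix,\xx\}^{n-1}\otimes\{\sx\}$, which are precisely the long-lived states responsible for the $2^{-2\ell/n}$ term in \cref{thm:xeb bounds main}, pair up with opposite signs under the final CZ layer and cancel exactly. The lower bound must therefore come from \emph{two}-excitation strings such as $\px\,\px\,\ix\,(\ix+\xx)^{n-3}$: these are CZ-invariant (since $\px\,\px\xmapsto{\mathrm{CZ}}\px\,\px$), have $PQ$-weight $\sim 2(n-2)$ (giving the $2^{-4\ell/n}$ decay), and carry two noise-decaying components (giving the square $[(1-q)(1-q_\perp)]^{2\ell}$). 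The upper bound shifts by $3/2$ for the same reason: the strings with $(n_\ix,n_\sx,n_\px)=(1,1,n/2)$ that saturate the bound in \cref{thm:xeb bounds main} again cancel in pairs under CZ, and the next surviving family has $PQ$-weight $\sim 3n/2$.

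Your attribution of the squared noise factor to ``the additional dressing noise layer'' is also incorrect on two counts: the statement does not include an extra noise layer after the CZ gates, and even if it did, one layer would contribute a single factor $(1-q)(1-q_\perp)$, not $[(1-q)(1-q_\perp)]^{\ell}$. Without the sign-cancellation observation the argument cannot be completed; once you have it, both bounds follow by the same string-counting as in \cref{thm:xeb bounds main} applied to the next-order surviving families.
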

We show \cref{lem:shift} in \cref{ssec:fixed gates}.
The new upper bound implies that the transition in the XEB has been shifted by a factor of $3/2$ to larger circuit depths. 

The bounds can be exponentially lowered further by adding additional, non-overlapping fixed gate layers to the circuit. 
More formally, this expectation can be understood via the proofs of the analytical results in this section. 
In these proofs, we make use of a mapping of second-moment quantities of degree-$2$ IQP circuits to a classical statistical model whose states are evolved deterministically by the circuit. 
In the following section, we will detail this statistical model, and sketch the proofs of our results regarding the ideal and noisy average XEB values.

Before we do so, let us briefly summarize the findings of this section. 
We have found that---as for Haar-random circuits---the linear XEB is a good estimator of the global state fidelity in the regime of relatively low noise. 
This is witnessed by our noisy simulations of hIQP circuits shown in \cref{fig:noisy xeb} as well as our tight bounds in \cref{thm:xeb bounds main}. 
That correspondence extends to very general types of noise. 
As for Haar-random circuits---the correspondence breaks down for large noise rates at which the XEB only decays with the circuit depth while the fidelity decays with the circuit volume. 
We have argued, however, that XEB should be a better estimator of fidelity in IQP circuits compared to Haar-random circuits because of the way the noise affects the circuit. 
This intuition is made rigorous in \cref{lem:shift}, which shows that fixed gates in the circuit shift the transition further towards higher noise rates until it eventually vanishes. 
Altogether, our results shine light on the differences and similarities between random IQP and Haar-random circuits, which we hope will further illuminate the underlying mechanisms.

An open question relevant to our findings is to what extent the noisy XEB score respects typicality or, in other words, how it behaves across different instances of the random circuits. 
For Haar-random circuits, it is well known that the XEB indeed concentrates rapidly around the mean and hence that typical instances are close to the mean value, since these circuits are higher designs.
Second, while we have shown that a transition exists, the precise nature of that transition remains an intriguing field for future studies. 
This is true in particular when considering the logical XEB and fidelity of encoded circuits subject to physical noise, as we will discuss it in the next section. 

Another question left open by our study is to understand the reasons for the transition between hyperdepth $\hdepth=1$ and $\hdepth\geq2$ circuits. 
We find this transition in the behaviour of the noisy XEB, as well as in measures of complexity of the underlying IQP graph. 
This matches the results of \textcite{Hashizume.2021} who find the same scrambling transition in a setting of only CZ entangling gates on the hypercube. 
A more detailed study of a similar scrambling transition in a sparse random model analogous to the random sparse IQP circuits we studied here is given in Ref.~\cite{kuriyattil_onset_2023}. 
But the exact mechanism and type of transition in the hypercube circuits remains unclear. 
From a naive interpretation of the statistical-mechanics mapping---elaborated in \cref{ssec:stat mech}---we expect an exponentially decaying deviation with every circuit layer from fully scrambled circuits rather than a sharp phase transition at a given point.

\subsection{Verification of noisy encoded IQP circuits}
\label{ssec:noisy encoded xeb}

\begin{figure}
    \centering
    \includegraphics{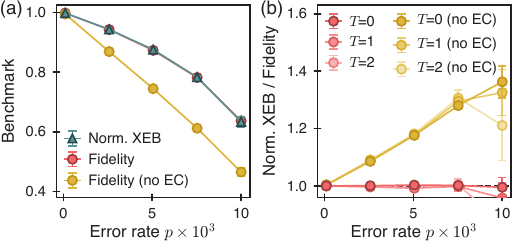}
    \caption{\textbf{Logical XEB and fidelity for an encoded circuit.}
    We simulate 30 instances of random noisy encoded degree-$2$ hIQP circuits ($\hdim\,{=}\,4,\,\hdepth\,{=}\,2$)  with the property that the ideal XEB value $\overline \chi_{\rm ideal}\,{=}\,3$, taking $10^5$ shots per circuit.  
    The circuits are subject to i.i.d.\ \emph{physical} single-qubit Pauli errors such that the single- and two-qubit gate fidelities are $1\,{-}\,p{/}10$ and $1\,{-}\,p$, respectively. Error bars represent one standard deviation. (a) The average logical fidelity \eqref{eq:average fidelity} and  logical XEB \eqref{eq:def xeb} normalized by $\overline \chi_{\rm ideal}$ postselected on perfect $X$-basis stabilizers with and without error correction (EC) performed in the internal state representation. (b) Relative value of the average normalized XEB and fidelity  for various $X$-axis postselection thresholds $T$.
    }
    \label{fig:encoded_xeb_fidelity}
\end{figure}

In the previous section, we showed that the XEB and fidelity in IQP circuits match for local logical noise with low rates. 
We now assess this correspondence for encoded IQP circuits subject to physical noise. 
For encoded qubits, there is an ambiguity in defining the logical fidelity in relation to the XEB: 
The XEB is evaluated directly from measurement outcomes in the $X$-basis and is not affected by (physical and logical) $X$ errors just before the measurement, but the fidelity can be drastically reduced. 
This reduction is determined by the physical error density and happens because we are not applying any correction in the $Z$ basis and therefore the state after $X$ measurements and postselection remains outside of the code space. 
Therefore, a genuine comparison, akin to that from the previous section, necessitates incorporation of an error-correcting procedure in the fidelity definition that leaves a valid code state.

In this procedure, we would typically assume a perfect round of stabilizer measurements and decoding. 
However, in the sampling protocol we consider, the $Z$-stabilizers are not measured and a fault-tolerant protocol using the $Z$-stabilizer information might change the sampling output. 
Therefore, a meaningful notion of logical fidelity to compare with the XEB should only consider corrections that do not change the logical $X$ basis samples.
This is why we define the reference logical fidelity as a state overlap on the level of logical operators after returning the state to the $Z$-basis code space by performing a virtual (in the simulation code) error correction step that does not modify the XEB score. 
To achieve this, we use a lookup-table decoder constructed by enumerating low-weight $X$ errors in an $\code{8,3,2}$ code.

We simulate random encoded degree-$2$ hIQP circuits (with the property that the ideal XEB value $\chi_{\rm ideal}\,{=}\,3$) at experimentally realistic two-qubit gate noise rates around $99.5\%$~\cite{evered_high-fidelity_2023}.
In \cref{fig:encoded_xeb_fidelity} we then compare the logical XEB with the logical fidelity with and without correction of the $X$ errors prior to the $X$ measurement. 
We observe that for the corrected logical fidelity the correspondence with XEB is tight and consistent with \cref{fig:noisy xeb}. 
In contrast, the deviation between XEB and fidelity increases when comparing to the state fidelity without error correction.

These results show that intermediate-size encoded circuits with experimentally relevant noise rates exhibit a close correspondence between logical XEB and an appropriate notion of logical fidelity, i.e., they are in the ``healthy regime'' of the correspondence between XEB and fidelity. They also point to some of the inherent subtleties in assessing the performance of encoded circuits where the syndrome qubits necessitate context-dependent definitions of logical process, state preparation, and measurement fidelities.

\subsection{A statistical model for second moments of random IQP circuits}
\label{ssec:stat mech}

In order to derive the results on the noisy and ideal linear XEB score, we derive in this section a statistical-mechanics mapping for second-moments of degree-$2$ IQP circuits analogous to those used in the analysis of locally Haar-random circuits \cite{zhou_emergent_2019,hunter-jones_unitary_2019,dalzell_random_2022,Ware.2023}. We provide a broad overview of the properties and behavior of our mapping in \cref{fig:statmech}.

We begin by observing that the ideal and noisy linear XEB score is a second-moment quantity of the IQP circuit. 
To be precise, we define the \emph{second-moment operator} of a family $\mc I$ of IQP circuits as the two-copy projector 
\begin{align}
    M_2 \coloneqq \mb E_{C \sim \mc I} C^{\otimes 2} \proj{+^n}^{\otimes 2} C^{\otimes 2}. 
\end{align}
We can also allow for noise in one or both copies of the circuit; 
denote by $\mc N$ a noise model specified as a sequence of quantum channels after each gate in $C$ and by $\rho_C(\mc N)$ the noisy state induced by the application of $C \in \mc I$ and the noise model. 
Then the noisy second-moment operator is just given by $M_2(\mc N, \mc N') =\mb E_{C \sim \mc I} \rho_C(\mc N) \otimes \rho_C(\mc N')$. 
With this definition at hand, we can immediately see that the (noisy) XEB is given by 
\begin{align}
    \overline \chi = 2^n \sum_x \bra x^{\otimes 2} M_2(\{\id\}, \mc N) \ket x ^{\otimes 2} -1, 
\end{align}
and hence we can derive all its properties from the (noisy) second moment operator. 
The statistical-mechanics model will describe the evolution of the second-moment operator as a function of the depth of the IQP circuit. 

The ideal XEB score of IQP circuits has been computed for uniform degree-$2$ circuits by \textcite{Bremner.2016}, while the general moment operator for degree-$n$ circuits has been found by \textcite{nechita_graphical_2021}.
The statistical-mechanics model will allow us to go beyond both settings and analyze arbitrary circuits composed of random degree-$2$ gates, as well as arbitrary IQP and CNOT gates and (potentially distinct) Pauli noise on each copy. 
To motivate our approach, we build on the derivation of the ideal XEB score of uniform degree-$2$ IQP circuits by \textcite{Bremner.2016}. 
Their argument makes use of the fact that the output states of degree-$2$ IQP circuits can be written as a \emph{polynomial phase state}
\begin{align}
    C \ket {+^n} = \frac{1}{\sqrt{2^n}} \sum_{x \in \bin^n} (-1)^{f(x)} \ket x, \label{eq:poly_phase_state}
\end{align}
where $f(x) = \sum_{ij} b_{ij} x_i x_j + \sum_i a_i x_i$ is a degree-$2$ Boolean polynomial whose coefficients $a_i$ ($b_{ij}$) are 1 if a degree-$1$ (degree-$2$) gate is applied to the qubits labeled by the indices in the circuit $C$, and 0 otherwise. 
We can then write the two-copy average over such phase states as 
\begin{align}
    M_2 & =  \frac 1 {2^{2n}}\sum_{z_1, \ldots, z_4} \mb E_{f} \left[(-1)^{\sum_i f(z_{i})}\right] \ket{z_1 z_3} \bra{z_2z_4}\\
    & = \frac 1 {2^{2n}} (\id^{\otimes n} + \mb S^{\otimes n} + \mb P^{\otimes n} - 2 \mb X^{\otimes n}), \label{eq:bremner iqp formula}
\end{align}
where $\mb S = \sum_{a,b\in \bin} \ket {ab}\bra{ba}$, $\mb P = \sum_{a,b\in \bin} \ket {aa}\bra{bb}$, and $\mb X = \sum_{a\in \bin} \ket {aa}\bra{aa}$. 
This follows from evaluating conditions under which the averages over the phase $(-1)^{\sum_i f(z_i)}$ does not vanish for any fixed configuration $z_1, z_2, z_3, z_4$, through the randomness of the coefficients~$a,b$. 

The formula \eqref{eq:bremner iqp formula} is reminiscent of the one for global Haar-random circuits, where we find $M_2 \propto (\id^{\otimes n} + \mb S^{\otimes n})$, but additional invariances appear in the case of IQP averaging. 
In fact, one can show that the action of a single-qubit Haar-random gate in the two-copy average is to project onto the space spanned by $\id$ and $\mb S$. 
If a quantum circuit contains layers of single-qubit Haar-random gates one can therefore describe the average two-copy evolution as an evolution of these two `states', which represent the local invariances of the circuit.

\begin{figure*}
  \centering
  \includegraphics{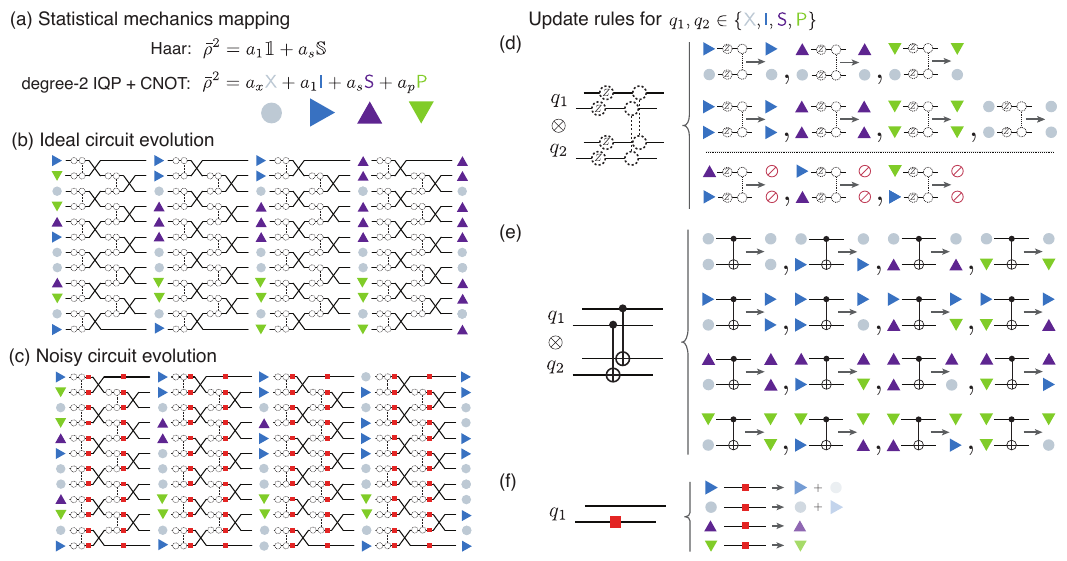}
  \caption{\label{fig:statmech} \textbf{Statistical-mechanics mapping of degree-$2$ second moments.} (a) To analyze the ideal and noisy XEB of degree-$2$ IQP circuits we map the second moment $\overline \rho^2  = \mb E_C \rho_C\otimes \rho_C$ of IQP + CNOT circuits with random degree-$2$ circuits on subsets of qubits to the evolution of four classical states $\xx, \ix, \sx, \px$ at each physical site.
  This mapping is analogous to the mapping used for Haar-random circuits, where the invariant states correspond to the identity and swap operator between the two copies, $\id$ and $\mb S $, respectively
  (b) Under a local random circuit in which random degree-$2$ circuits are applied to random qubits or alternated with layers of SWAP gates the system (depicted here), the system will percolate to states with only one $Q \in \{\ix, \sx, \px\}$ and $\xx$ states otherwise, yielding a globally invariant state in $\{Q,X\}^{n}$ for $Q \in \{ \ix, \sx, \px\} $.  The intermediate states shown here represent typical states at the respective circuit layer since under the ideal circuit evolution states only survive or die.
  (c) Under the noisy circuit evolution, all but the $\xx$ and $ \ix$ state eventually die out. 
  (d-f) Update rules under random and fixed gates. (d) Subjected to a random IQP circuit a state on two sites either survives (if it contains $\xx$ or is of the form $Q Q $ for $Q \in \{ \ix, \sx, \px\}$) or `dies' otherwise (crossed out circles). 
  (e) Under a deterministic CNOT gate a state $PQ $ is permuted to a state $P \varepsilon_P(Q)$.
  (f) To analyze the noisy value of the linear XEB, we need to apply noise to one copy of the circuit. Under $X$-$Y$ symmetric Pauli noise the amplitude of all states is reduced, while the total amplitude of $\xx + \ix $ is conserved. 
  } 
\end{figure*}

The same effect occurs for degree-$2$ IQP circuits. 
We find that a good basis for the local state space of degree-$2$ circuits is given by $\mc S = \{\xx, \ix, \sx, \px \}$ with 
\begin{align}
    \xx & \coloneqq \proj {00} + \proj {11},\\
    \ix &\coloneqq \proj {01} + \proj {10},\\
    \sx &\coloneqq \ket {01}\bra {10} + \ket {10} \bra{01},\\
    \px &\coloneqq \ket {00}\bra {11} + \ket {11} \bra{00}. 
\end{align}
Intuitively, this is because these states are invariant under swaps between the two copies and contain two $\ket 1$ states each so that a $Z$ gate on two copies leaves them invariant. 
We illustrate the statistical model in \cref{fig:statmech} and describe it in detail in \cref{app:statmech}. 

The second-moment operator expressed in the $\mc S$-basis undergoes simple evolution under random degree-$2$ gates. 
Without loss of generality, we can move a layer of random single-qubit $Z$ gates to the beginning of the circuit in all of the models we study.
Writing $Z^z = \prod_i Z_i^{z_i}$ we can then write the second moment of the input state
\begin{align}
   \mb E_{z\in \bin^n} (Z^z \proj{+^n} Z^z)^{\otimes 2} = \frac{1}{4^{n}} (\ix + \sx + \px + \xx)^{\otimes n},
\end{align}
which can be expressed as a uniform mixture of all states in~$\mc S^{\otimes n}$. 
If we apply a random degree-$2$ circuit $C$ to a pair of qubits, we find that a state $P \otimes Q \in \mc S^{\otimes 2}$ evolves as
\begin{multline}
\label{eq:iqp update main}
    \mb E_C C^{\otimes 2} (P \otimes Q) C^{\otimes 2} \\= \begin{cases}
        P\otimes Q & \text{ if }   P = \xx \text{ or } Q = \xx \text{ or } Q = P \\
        0 & \text{ else,} 
    \end{cases} 
\end{multline} 
see also \cref{fig:statmech}(d).
In other words, the states of our model can only survive or die under random IQP circuit evolution.

Let us pause here for a moment and consider the evolution of the states. 
In particular, we would like to understand under which condition the second moments of a sparse random degree-$2$ circuit converges to its uniform value \eqref{eq:bremner iqp formula} as a function of depth. 
It is apparent from \cref{eq:iqp update main} that the $\xx$-state plays a special role in the model, and that the model tends towards completely `polarized states' in which the only appearing states are $\xx$ and one of $\ix, \sx, \px$.  
These states are invariant under the circuit evolution---they are immortal.
We can understand the behaviour of the circuit evolution to the completely polarized states $\mc S_{\text{imm}} = \{\xx,\ix\}^{\otimes n}\cup \{\xx,\sx\}^{\otimes n}\cup\{\xx,\px\}^{\otimes n}$ in terms of percolation, see \cref{fig:statmech}(b). 
As soon as the degree-$2$ gates form a large connected component of the graph with high probability most states outside of $\mc S_{\text{imm}}$ will have died so that the second moment approximately recovers the form \eqref{eq:bremner iqp formula}.
It is at this point that we expect anticoncentration, since every state contributes exactly $2^{-n}$ to the overall XEB. 
Indeed, in \cref{thm:sparse iqp anticoncentration main} we find a sharp transition as the number of random degree-$2$ gates exceeds $2.75\, n \log n$ which is precisely the scaling we expect from the percolation argument \cite{broadbent_percolation_1957,bollobas_percolation_2006}.

Next, we notice that $C^kZ$ gates leave the immortal states in $\mc S_{\text{imm}}$ invariant. 
However, they can have a non-trivial effect on states outside of that set that leaves the space spanned by those states.
For instance, a CCZ gate maps $\ix \sx \xx$ state to $\ix \otimes \sx \otimes \proj {00} - \ix\otimes \sx\otimes \proj {11}$.
In our circuit ensembles, we always consider random higher-degree gates in conjunction with uniformly lower-degree gates on the same set of qubits, however.
They therefore cannot alter the invariant state. 
Indeed, even degree-$n$ circuits have the same second moment as degree-$2$ circuits~\cite{nechita_graphical_2021}. 
Generally, adding additional higher-degree random gates can only speed up anticoncentration. 

Furthermore, we find that the action of a CNOT gate on a string $P \otimes Q$
\begin{align}
    \mb E_C \cnot^{\otimes 2} (P \otimes Q)\cnot^{\otimes 2} = 
        P\otimes \varepsilon_P(Q), 
\end{align}
is just a permutation $\varepsilon_P(Q) \in \mc S$ that depends on the control of the CNOT gate, illustrated in \cref{fig:statmech}(e); see \cref{lem:cnot update} in \cref{app:stat mech mapping} for details. 
In combination with local random degree-$D$ circuits on fixed qubits, CNOT gates thus have a similar effect compared to random degree-$D$ circuits on random subsets of qubits, since they transport `excitations', i.e., non-$\xx$ states through the system.
We exploit this property when we compute the asymptotic XEB score of block-random IQP and hIQP circuits in \cref{thm:anticoncentration deep hiqp}. 

Finally, consider the effect of noise. 
Noise is reflected in an asymmetry between the two circuit copies, wherein---in the case of Pauli noise---a Pauli operator is applied somewhere in the circuit on one copy but not the other. 
We find that Pauli $X$ and $Z$ noise leave the states $\xx, \ix$ invariant, and have a nontrivial effect on $\sx, \px$: 
$X$ noise permutes $\sx \leftrightarrow \px$, while $Z$ noise adds a $-1$ phase to $\sx$ and $\px$. 
If $X$ noise occurs during the circuit evolution, it therefore maps some immortal states to states which will eventually die.
$Z$ noise can be moved to the end of the circuit and adds a phase to $\sx$ and $\px$ states. 
In the XEB, $Z$ noise thus leads to a cancellation of the contributions of states with nontrivial $\sx$ and $\px$ contribution at the end of the circuit. 
$X$ noise on the other hand leads to a decay of all states outside of the set $\{\xx, \ix\}^{\otimes n}$. 
If we therefore consider Pauli noise which is $X$-$Y$ symmetric, we find a decay of the $\sx,\px$ states and a stochastic permutation of the $\xx,\ix$ states, see \cref{fig:statmech}(f). 
Under noisy degree-$2$ circuit evolution, the system thus undergoes a damped percolation to $(\xx + \ix)^n$, while all other states die out, eventually leading to an XEB score of $0$, see \cref{fig:statmech}(c). 

In \cref{thm:xeb bounds main} we compute upper and lower bounds to the noisy XEB of sparse IQP circuits with Pauli noise during the circuit evolution as a function of the circuit depth. 
To show the theorem, we find a set of long-lived states of the form 
\begin{align}
\label{eq:long lived states}
    \{\ix, \xx\}^{\otimes(n-2)} \otimes \{\ix\} \otimes \{\sx\},
\end{align}
i.e., states with a single `excitation' which slowly decays. 
It is these states which lead to a deviation between XEB and fidelity for circuits with depth $\ell \in O(n \log n)$. 

At the same time, the contribution of those states to the XEB vanishes if we consider a last, fixed layer of CZ gates. 
To see this, observe that $\ix \sx \xmapsto{\rm CZ} -\ix \sx$ while $\xx \sx \xmapsto{\rm CZ} \xx \sx$. 
Exactly half the states of the form \eqref{eq:long lived states} will incur a negative sign at the end of the circuit since a CZ gate is guaranteed to act on the excitation $\sx$.
In order to lower-bound XEB, we now need to consider states with two excitations, e.g., $\px \px\ix^{n-2} $. This reduces the upper and lower bounds by a factor of $3/2$ and $2$ in the exponent, respectively.

To summarize, our statistical model can be used to qualitatively understand and quantitatively analyze the behaviour of second-moment properties of IQP circuits. 
We have done so here for the noisy and ideal values of the XEB score in the hIQP and sparse IQP model, but other architectures can also be considered. 
An interesting case to study in future work is, for instance, that of geometrically local IQP circuits. A convenient model to study in this case is an IQP + SWAP model wherein degree-2 IQP circuits act on fixed pairs of qubits while the qubits are coupled by SWAP gates acting in a geometrically local fashion.
We expect the behaviour in this setting to be quite similar to that of Haar-random circuits, in particular, we expect anticoncentration in logarithmic depth~\cite{barak_spoofing_2021,dalzell_random_2022}.

\section{Validated advantage via transversal Bell sampling}
\label{sec:bell sampling}

In the previous section, we have argued that sampling from random degree-$D$ hIQP and sIQP circuits for $D \geq 3$ is classically intractable, and that the linear XEB is a good measure of the output state fidelity in the regime of low noise. 
Together, these results allow us to perform verified quantum advantage experiments in realistic settings. 
The drawback of this approach, however, is the fact that estimating the XEB requires evaluating ideal outcome probabilities, which by definition are classically hard to compute in the quantum advantage regime. 

This raises the question whether there are more efficient ways to validate the quantum advantage of IQP sampling.  
In contrast to universal circuits, it is known that the output states of IQP circuits can be verified using the ability to perform (near-)perfect single-qubit measurements for certain families. 
These include IQP circuits with two-local rotations around arbitrary angles~\cite{bermejo-vega_architectures_2018,hayashi_verifying_2019} and also general degree-$3$ circuits which can be thought of as preparing hypergraph states~\cite{morimae_verification_2017,zhu_efficient_2019}. 
Such tests require very high-fidelity state preparations, however. 
Since the stabilizers of IQP circuits can be computed efficiently, it is also possible to perform direct fidelity estimation~\cite{flammia_direct_2011}, which is highly sample-efficient, but beyond geometrically local architectures~\cite{ringbauer_verifiable_2024} requires potentially long-range entangled measurements whose correctness must be guaranteed. 
All of these approaches furthermore require different measurements for sampling and verification.

An alternative, unified approach to sampling and verification is to make use of a Bell measurement on two copies of an IQP circuit \cite{hangleiter_bell_2024}. 
For universal random circuits, this gives rise to a sampling task which is classically intractable to simulate while at the same time the output distribution has certain structure that can be exploited for efficient validation of the samples as well as learning properties of the underlying quantum circuit. 
While this test can be efficiently spoofed by an adversary, it can be used to efficiently validate samples from a noisy quantum device in a `trusted' laboratory environment. 

The output probabilities of a Bell measurement on two copies of a quantum state $\ket C$ are given by 
\begin{align}
    P_C(x, z) = \frac 1 {2^n} | \bra C X_{x} Z_{z} \ket C|^2, 
\end{align}
where by $X_{x} = \prod_{i=1}^n X_i^{x_i}$ and likewise for $Z$ we denote the Pauli $X$ and $Z$ operator on the sites labeled by $x,z \in \bin^n$, respectively. 
Since a degree-$D$ IQP circuit $C$ is in the $D$-th level of the Clifford hierarchy we thus find that the Bell sampling probabilities of degree-$D$ IQP sampling are given by the output probabilities of certain degree-$(D-1)$ IQP circuit $C_{x}$ that depends on $X^x$. 
Concretely, we find 
\begin{align}
\label{eq:bell expression circuit}
    P_C(x,z) = \frac 1 {2^n} | \bra {+^n} C_x Z_z \ket{+^n} |^2.
\end{align}
The gates of $C_x$ derive from the gates in $C$, observing that for $x \in \bin^{k+1}$, $\text C^k \text Z\cdot X_1 \cdot \text C ^k \text Z = X \otimes \text C^{k-1}\text Z$. 
In other words, a degree-$k$ gate which meets one or several $X$ operators will reduce to a lower degree gate on the complement sites. 
All other gates cancel. 
Thus, the highest-degree gates in $C_x$ are those gates in $C$ which meet a single $X$ operator and are therefore reduced by one degree. 

It follows that computing the output probabilities of Bell sampling with degree-$3$ circuits, which is hard in the standard ($X$) basis, becomes easy since all amplitudes are just amplitudes of various degree-$2$ Clifford circuits. 
However, if we consider Bell sampling from degree-$4$ circuits, we recover hardness of sampling. 
\begin{theorem}[Hardness of degree-4 Bell sampling (informal)]
\label{thm:hardness bell sampling main}
    Approximately sampling from the output distribution of Bell measurements on uniformly random degree-$D$ IQP circuits for $D \geq 4$ is classically intractable under certain complexity-theoretic assumptions. 
\end{theorem}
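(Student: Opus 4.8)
The plan is to follow the same anticoncentration-plus-Stockmeyer template used for \cref{conj:average case hiqp} and the hardness of ordinary hIQP sampling, now applied to the Bell-sampling distribution $P_C(x,z)$. The structural input I would exploit is \cref{eq:bell expression circuit}, which writes $P_C(x,z) = 2^{-n}\,|\langle +^n| C_x Z_z |+^n\rangle|^2$ with $C_x$ a degree-$(D-1)$ IQP circuit determined by $X^x$. Making this explicit: writing $C$ as the degree-$D$ phase state \eqref{eq:poly_phase_state} with polynomial $f$, conjugation gives $C X_x C = X_x D_x$, where $D_x$ is the diagonal phase gate of the directional discrete derivative $\Delta_x f(y) = f(y\oplus x)\oplus f(y)$, a polynomial of degree at most $D-1$. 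Since $X_x |+^n\rangle = |+^n\rangle$, this yields $P_C(x,z) = 2^{-n} p_{\Delta_x f}(z)$, a genuine degree-$(D-1)$ IQP output probability. For $D=4$ the $C_x$ are degree-$3$ circuits, exactly the regime in which IQP output probabilities are \sharpP-hard, and this is the source of the hardness.

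The first concrete step is worst-case hardness of the Bell probabilities. Given any degree-$3$ Boolean polynomial $g$ on $n$ variables whose output probability $p_g$ is \sharpP-hard to compute (as used by \textcite{Bremner.2016}), I would introduce a single ancilla variable $w$ and set $f(y,w) = w\cdot g(y)$, a genuine degree-$4$ polynomial (each CCZ of $g$ becomes a $\mathrm{C}^3\mathrm{Z}$, each CZ a CCZ, etc.). Choosing $x = e_w$ gives $\Delta_{e_w} f = (w\oplus 1)g \oplus w g = g$, independent of $w$, so that $P_C(e_w,(z,0)) = 2^{-(n+1)} p_g(z)$. Hence any routine for the Bell probabilities computes $p_g$ and is \sharpP-hard in the worst case; the degree-$4$ Bell distribution inherits worst-case hardness from degree-$3$ IQP.

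Next I would establish anticoncentration of $P_C$. A convenient structural fact is that the $x$-marginal is exactly uniform, $\sum_z P_C(x,z) = 2^{-n}$, so only the conditional collision probability matters: $\sum_{x,z} P_C(x,z)^2 = 2^{-2n}\sum_x \sum_z p_{\Delta_x f}(z)^2$. Averaging over random degree-$4$ $C$, each inner sum is the collision probability of an induced degree-$3$ circuit, which is $O(2^{-n})$ by the second-moment analysis of \cref{ssec:stat mech}; summing over the $2^n$ values of $x$ gives total collision probability $O(2^{-2n})$, i.e.\ anticoncentration on the $2^{2n}$-outcome Bell distribution. Concretely I would verify this by computing the two-copy second moment of $P_C$ directly with the statistical-mechanics mapping, placing the two Bell copies on the same footing as the two second-moment copies.

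With worst-case hardness and anticoncentration in hand, the final step is to posit the average-case hardness of approximating $P_C(x,z)$ to additive error $O(2^{-2n})$ for a constant fraction of random degree-$4$ circuits---the direct analogue of \cref{conj:average case hiqp}---and invoke Stockmeyer's algorithm: an efficient approximate classical sampler for $P_C$ would, together with anticoncentration, allow approximating a constant fraction of the Bell probabilities to the required precision within the third level of $\ph$, and combined with the conjectured average-case \sharpP-hardness this collapses $\ph$ to its third level. I expect the main obstacle to be precisely this worst-case-to-average-case gap: while the lifting $f = wg$ reduces a single worst-case instance, the average-case conjecture requires controlling the ensemble of discrete derivatives $\Delta_x f$ induced by a \emph{uniformly random} degree-$4$ $f$ and relating it to the degree-$3$ average-case assumption. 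Showing that these $\Delta_x f$ are sufficiently generic degree-$3$ polynomials---rather than merely computing their second moment---is the delicate point and, as in all such arguments, is what forces the result to rest on a conjecture rather than an unconditional statement.
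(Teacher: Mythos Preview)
Your proposal follows the same Stockmeyer template as the paper---reduce Bell probabilities to degree-$(D-1)$ IQP probabilities via the discrete derivative, establish anticoncentration, posit average-case hardness, and invoke Stockmeyer---but it misses the central technical wrinkle. The Bell distribution is not homogeneous: the induced circuit $C_x$ has support controlled by $|x|$, so the first moment $\mathbb{E}_C P_C(x,z)$ varies across outcomes, and your claim that ``each inner sum is $O(2^{-n})$'' fails at the extremes (for $x=0$ the derivative $\Delta_0 f$ vanishes and the inner collision probability is $1$, not $2^{-n}$). A global bound on $\sum_{x,z}\mathbb{E}_C P_C(x,z)^2$ therefore does not deliver the per-outcome anticoncentration $\mathbb{E}_C P_C(x,z)^2=O\big((\mathbb{E}_C P_C(x,z))^2\big)$ that the Stockmeyer reduction actually consumes; and your suggestion to compute it ``with the statistical-mechanics mapping, placing the two Bell copies on the same footing as the two second-moment copies'' would require a four-copy mapping the paper never develops.

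The paper handles this by decomposing into sectors $S_k$ indexed by $k=n-|x|$, showing that the mass concentrates on $k\in n/2\pm O(\sqrt{n})$ (the $x$-marginal is exactly uniform), and proving hiding and anticoncentration \emph{within each such sector}, where $C_x$ is a uniform degree-$(D-1)$ circuit on $k=\Theta(n)$ qubits and the first moment is constant at $2^{-n-k}$. This reduces anticoncentration to the known two-copy result for uniform degree-$(D-1)$ IQP and lets the paper invoke the existing average-case conjecture of \textcite{Bremner.2016} directly, rather than a new conjecture on the full $2^{2n}$-outcome Bell distribution as you propose. You also omit the hiding property; the paper supplies it per sector via the $Z$-gate freedom for $z$ and qubit permutations for $x$ within a fixed Hamming weight. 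Your worst-case lift $f(y,w)=wg(y)$ is clean, but note it produces outcomes with $|x|=1$, a sector of negligible weight---it certifies worst-case hardness but does not by itself touch the dominant sectors where the argument must live.
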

We elaborate the theorem more formally as \cref{thm:bell sampling hardness} in \cref{app:bell sampling}. 

Let us mention here that in contrast to Bell sampling from universal random circuits \cite{hangleiter_bell_2024}, the fact that degree-$D$ circuits lie in the $D$-th level of the Clifford hierarchy introduces significant structure in the output distribution of degree-$D$ Bell sampling. 
In particular, the output distribution `shatters' into $2^n$ sectors determined by the $x$ outcome, and for each such sector, the probabilities are the output probabilities of the smaller IQP circuit $C_x$. 
In terms of the statistics of uniform degree-$D$ Bell sampling, we thus find $n+1$ sectors depending on the number $|x|$ of nonzero $x$ outcomes which determine the support of $C_x$.
While within each sector the statistics are the same, they differ between the sectors.
This is reminiscent of the situation in Gaussian boson sampling, where the distribution falls into distinct sectors determined by the total photon number~\cite{kruse_detailed_2019,deshpande_quantum_2022,ehrenberg_transition_2024}. 
Similarly, in boson sampling with a low number of modes the statistics of the outcomes differ depending on the `collision pattern', that is, the number and size of the collision outcomes~\cite{bouland_complexity-theoretic_2023}.  

In~\cref{app:bell sampling}, we detail an adapted hardness proof based on Stockmeyer's counting algorithm~\cite{stockmeyer_complexity_1983}for this scenario. 
A key observation in the proof is that the probability weight is concentrated on outcomes $(x,z)$ for which $|x| = n/2 \pm O(\sqrt n)$, and therefore hardness of computing outcome probabilities still holds for most outcomes. 

For verification of the Bell samples we observe that the degreee-$D$ Bell distribution is supported only on outcomes $(x,z) \in \bin^{2n}$ for which there is no index $i$ such that $1 = x_i = z_i$. 
In the terminology of Ref.~\cite{hangleiter_bell_2024}, such indices correspond to $Y$-outcomes or singlets.  
Let $\supp_Y(x,z)\coloneqq \{i \in [n]: x_i = z_i = 1\}$ be the $Y$-support of $(x,z)$ and $\pi_Y(x,z) = |\supp_Y(x,z)| \mod 2$ be its parity.
For universal Bell sampling, the output distribution is supported on outcomes with even $Y$-support $\pi_Y(x,z) = 0$ \cite{hangleiter_bell_2024}. 
Here, we find $|\supp_Y(x,z) = 0|$ for all outcome strings. 

In case the state preparation is noisy, it is discussed in Ref.~\cite{hangleiter_bell_2024} how and in which scenarios the Bell samples may be used to estimate the average fidelity of random two-copy state preparations. 
The key element of this verification protocol is to estimate the purity 
\begin{multline}
    \tr[\rho^2] \approx \frac 1 M \big(|\{(x^i,z^i): \pi_Y(x^i,z^i)=0\}| \\- |\{(x^i,z^i): \pi_Y(x^i,z^i)=1\}| \big),
\end{multline}
from a set of $M$ Bell samples $\{(x^i,z^i)\}_i$. 
Importantly, this estimate can be computed fully efficiently from the Bell samples. 
The outcomes with odd $Y$-parity thus indicate errors in the circuit. 
In degree-$D$ IQP sampling we have an additional mechanism to detect errors which is given by the even $Y$-parity outcomes. 
We leave a detailed study of the possibility to exploit this additional structure in the output distribution in order to diagnose errors in degree-$D$ Bell sampling to future work.  

Suffice it to stress here that transversal degree-$4$ Bell sampling using $\code{2^4,4,2}$ color codes provides a viable route towards demonstrating efficiently validated quantum advantage. 
Efficient validation is possible using the additional structure afforded by the Bell samples.
The price at which the efficient validation comes is twofold. 
First, as in universal Bell sampling, we need to double the number of qubits and perform a Bell measurement across the two circuit copies. 
This is feasible in the setting considered here since the Bell measurement is transversal in stabilizer codes. 
Second, there is an additional price to be paid in the specific setting of degree-$D$ Bell sampling:
As mentiond above, most outcomes $(x,z)$ have support $|x| = n/2 \pm O(\sqrt n )$. 
By \cref{eq:bell expression circuit} this implies that the naive simulation cost scales exponentially in the worst case as $n/2$ compared  to $n$. 
On the flipside, that additional price is counterbalanced by doubling the amount of error detection that is possible, since the Bell distribution is guaranteed to have zero weight on $Y$-outcomes.

To summarize, degree-$(D+1)$ Bell sampling using transversal logical circuits as discussed here is a feasible route towards efficiently validated quantum advantage that requires a linear overhead  in the number of qubits compared to standard degree-$D$ sampling. 
This overhead is given by  a factor of four due to the structure of the Bell distribution times a factor of 2 due to the larger codes. 
This approach can be compared to other means of efficiently verified quantum advantage using structured states such as cluster states \cite{bermejo-vega_architectures_2018,ringbauer_verifiable_2024}, where the quadratic overhead can be thought of as a space-time tradeoff. 
The space-time-overhead might be traded for intermediate \emph{logical} stabilizer measurements in the context of encoded computations. 
An alternative route towards error-detected or -corrected quantum advantage could therefore be to use three-dimensional cluster states~\cite{raussendorf_fault-tolerant_2006}. 

\section{Generalized codes for scalable computation}
\label{sec:scaling codes}

The $\code{8,3,2}$ code is the smallest realization of the 3D color code on a cube-like region with a logical $\overline{\rm CCZ}$ gate implemented via a transversal $T$ gate~\cite{Kubica.2015}.\footnote{For the discussion in this section, we will emphasize the fact that a gate $G$ is a logical gate using the standard notation $\overline G$ in contrast to physically implemented gates. }
Alternatively, the $\code{8,3,2}$ code can be viewed as the $D=3$ instance of the $D$-hyperoctahedron code~\cite{Vasmer2022} with parameters $\code{2^D,D,2}$ for any $D\geq 2$.
In what follows, we present four constructions that lead to codes with similar transversal implementations of a logical $\overline{\rm CCZ}$ but an increased code distance $d\geq 4$.  Before describing them in more detail, we first provide an overview of each code and their advantages and disadvantages.  

In the first code construction, we concatenate a two-qubit $Z$-repetition code into the $\code{2^D,D,2}$ code to form a $\code{2^{D+1},D,4}$ code.  The simplicity of this approach makes it suitable for near-term devices; however, this simplicity comes at the cost of performance because the transversal IQP gates now have to be implemented via two-qubit gates.  Furthermore, fault tolerance in this code requires some amount of error detection, so it is not a fully fault-tolerant error correction scheme. We compare the performance of this code to the $\code{8,3,2}$ code, as well as IQP circuits implemented via transversal gates on the $\code{15,1,3}$ code.  For a physically reasonable noise model, our conclusion is that the $\code{8,3,2}$ code has the best error detection performance, while the $\code{15,1,3}$ code can be used to reduce the postselection overhead of error detection through error correction at low noise rates.

The next family of codes we consider is a direct generalization of the $\code{2^D,D,2}$ code to a higher distance color code with parameters $\code{2^{2D},D,4}$.  The qubit overhead increases substantially compared to the $\code{2^{D+1},D,4}$ concatenated codes, but the IQP gates can now all be implemented via transversal single-qubit gates.  These are the minimal size color codes with distance $\ge 3$ we have found with transversally implemented non-Clifford IQP gates, but the downside of the construction is that scaling to higher distance leads to macroscopic stabilizers that compromise fault-tolerance performance. 

These considerations lead us to the third family of color codes we consider, which maintain the LDPC property in their check operators as the code distance is scaled.
To achieve this favorable property the qubit overhead (asymptotically) increases by a factor of 5 compared to the previous color code construction with macroscopic stabilizers.  

The final family of codes we consider are variants of the three-dimensional toric codes. In the case of the standard 3D toric code with certain boundary conditions, one can implement a logical $\overline{\text{CCZ}}$ gate via transversal physical CCZ gates on each code block.  Unfortunately, this is a lower-fidelity physical operation than single-qubit rotations. 
To overcome this constraint, we propose to concatenate $\code{8,3,2}$ codes with three copies of the 3D toric code so that the local $\overline{\text{CCZ}}$ gates can be implemented via single-qubit gates on the $\code{8,3,2}$ code blocks.
This method requires 72 qubits to realize a distance-4 code.

Implementing degree-$D$ IQP circuits in these codes can be achieved with an overhead in physical circuit depth of at most $d$. 
If these codes permit single-shot decoding along the lines of~\cite{Bombin15}, or algorithmic fault-tolerance with transversal non-Clifford gates~\cite{zhou_algorithmic_2024} it can even be reduced to constant depth overhead, since all required gates and measurements are transversal.

\subsection{The $\code{16,3,4}$ and $\code{15,1,3}$ codes}
\label{sec:code comparison}

As a CSS code the $\code{8,3,2}$ code has an unbalanced $X/Z$-distance of $4/2$.  Thus, one way to improve the code distance is to increase the $Z$ distance by replacing each qubit by a two-qubit repetition code with the stabilizer group generated by the two-site $XX$ operator.  This $\code{16,3,4}$ code remains a CSS code with transversal CNOT gates and $X/Z$ measurements; however, the in-block $\overline{\text{CZ}}$ and $\overline{\text{CCZ}}$ gates must now be implemented via transversally applied two-qubit $e^{i \theta ZZ}$ gates.  
The distance of a full transversal circuit thus remains 2, but, depending on the noise model, one may find improved performance of this code.

To test whether such an improvement is possible in our transversal IQP sampling circuits, we compare the performance of the $\code{16,3,4}$ code to the $\code{8,3,2}$ code in Fig.~\ref{fig:code_comparison}.
We use a simplified noise model with perfect state preparation of the logical code blocks, perfect single-qubit gates, and noisy physical two-qubit gates, i.e., after each two-qubit gate we apply i.i.d. depolarizing noise with rate $p$ to each qubit involved in the gate.   While the results of these simulations might vary with the noise model, we expect the qualitative conclusions of our overall analysis to remain similar.
In the last round of the $\code{16,3,4}$-code protocol, we perform error correction.
We also perform postselection depending on the measurement outcomes.  
To implement error correction in the $\code{16,3,4}$ code we use a decoder that corrects all syndromes generated by weight-1 errors with the associated weight-1 error and, using postselection, discards samples with syndromes outside this set. 
Figure~\ref{fig:code_comparison}(a) shows a comparison between the error corrected $\code{16,3,4}$ code and error detection in the deviation of the logical XEB from 1.  
Both codes have an error rate that scales at low-noise rates as $p^2$, consistent with expectations; however, we notice that the error detected $\code{8,3,2}$ code outperforms the corrected $\code{16,3,4}$ code and matches the performance of the $\code{16,3,4}$ code with full error detection.  Figure \ref{fig:code_comparison}(b) then shows the performance of the codes in terms of the postselection overhead.  In this noise model, the $\code{16,3,4}$ code with error correction still has a larger postselection overhead in our decoder than the $\code{8,3,2}$ code.  As a result, when the noise is dominated by two-qubit gate errors, we do not find evidence that the $\code{16,3,4}$ code offers improved performance  over the $\code{8,3,2}$ code, despite its improved distance.  

\begin{figure}
    \centering
    \includegraphics{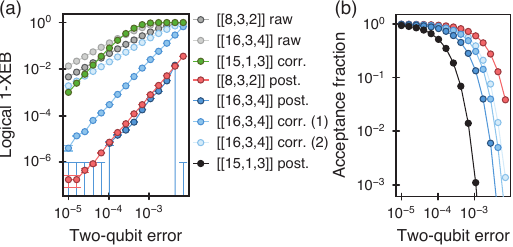}
    \caption{\textbf{Comparison with the $\code{16,3,4}$ and $\code{15,1,3}$  codes.} The same  hIQP logical circuit instance ($\hdim\,{=}\,4$, $\hdepth\,{=}\,2$) is executed using the $\code{8,3,2}$, $\code{16,3,4}$, and $\code{15,1,3}$ encodings. 
    The state preparation (encoding circuit) is noiseless, and the transversal part of the circuit is realized with perfect single-qubit gates and two-qubit gates with fidelity $1-p$, implemented as independent single-qubit depolarizing channels on the affected sites. The error bars represent standard deviation of the mean from $10^7$ samples and are within markers for most data points. (a) The deviation from 1 of the normalized XEB $\overline \chi/\overline \chi_{\text{ideal}}$ for various operation modes of the studied encodings. The \textit{raw}, \textit{corr.}, \textit{post.} denote the raw XEB (for a fixed logical operator set), XEB after a correction step, and XEB after postselection, respectively. In the case of the $\code{16,3,4}$, \textit{corr.\,(1)} corresponds to a decoder that corrects all distance-1 errors and discards all samples with syndromes that it cannot correct while \textit{corr.\,(2)} implements a sliding-scale postselection protocol that discards a sample if two or more uncorrectable syndromes appear in the entire system.  The \textit{corr.\,(1)} decoder has fidelity below one since logical errors occur that give the all-zero syndrome but reduce the fidelity.  Once the log-log-decay for the $\code{15,1,3}$ code becomes linear below the pseudo-threshold, it is parallel to the dark blue and red lines, which decay quadratically.  (b) Acceptance fraction for protocols utilizing some degree of postselection.}
    \label{fig:code_comparison}
\end{figure}

To demonstrate an improvement over the $\code{8,3,2}$ code in this noise model, we instead turn to transversal IQP circuits implemented with the punctured Reed-Mueller $\code{15,1,3}$ CSS code that has a logical $\overline T$-gate implemented by transversal $T$-gates.  
In addition, to transversal $T$-gates, this code also has transversal logical $\overline{\text{CS}}$ gates implemented via transversal CS gates between code blocks~\footnote{This follows from the fact that CNOT is transversal for CSS codes since $\text{CS} = \text{CNOT} (\id \otimes T^\dagger) \text{CNOT}(\id \otimes T)$ \cite{koutsioumpas_smallest_2022}.}.  Thus, we can realize a similar transversal IQP sampling experiment with the $\code{15,1,3}$ code using physical one and two-qubit gates by doing the same CNOT circuit on a hypercube interspersed with block-random $\overline{\text{CS}}$ gates on subsets of $\code{15,1,3}$ code blocks.   
In this way, we expect to achieve very similar hardness results and circuit families as the degree-$3$ circuits we studied using the $\code{8,3,2}$ code, see Ref.~\cite{Bremner.2017}.  
In \cref{fig:code_comparison}(a), we compare the performance of the error corrected $\code{15,1,3}$ circuit implemented in the same architecture as the $\code{8,3,2}$ code circuit, both with 48 logical qubits.  In the comparison both circuits are implemented with only Clifford gates ($\overline Z$, $\overline{\text{CZ}}$, $\overline{\text{SWAP}}$ and $\overline{\text{CNOT}}$ gates) to make classical simulations feasible.  
Although the $\code{15,1,3}$ code removes the postselection overhead, we see that its pseudothreshold behavior for this circuit is below error rates of $10^{-3}$ and fails to go beyond break-even for the uncorrected $\code{8,3,2}$ circuits until error rates below $10^{-4}$.  Applying postselection to the $\code{15,1,3}$ circuit data boosts the estimated logical fidelity to nearly~$1$ for these noise rates and samples\footnote{In the $10^7$ samples we generated to obtain \cref{fig:code_comparison}, we did not observe a single weight-$3$ error after postselection.}, but the postselection overhead is much more severe than the other codes.  
It is likely that these results could change significantly with repeated rounds of mid-circuit error detection or correction, but that comes with additional qubit overhead.  For the present model with error detection/correction implemented only in the final layer of measurements, we find the $\code{8,3,2}$ code to be the highest-performing code.

Finally we remark that there is a closely related code to the $\code{15,1,3}$ code, but with a higher code rate and distance.  
It is a $\code{64,15,4}$ code where transversal $T$ gates implement a 15-qubit logical $\overline{\text{CCZ}}$ circuit where each logical qubit is involved in 3 distinct $\overline{\text{CCZ}}$ gates \cite[][Example 6]{Rengaswamy20}.  Such a code may have promising applications for transversal IQP sampling, but its fault-tolerant properties are not well understood; 
therefore, we leave the investigation of this and related codes for future work.

\subsection{The 64-qubit code}

For any $D\geq 2$, we define the distance-$4$ $D$-hyperoctahedron code by taking $2^D$ $D$-hypercubes and arranging them to form a $D$-hypercube; see \cref{fig:fig1}(g) for an illustration of this construction for $D=2$ and $D=3$.
Qubits are placed at the vertices of the resulting lattice, with $Z$ and $X$ stabilizer generators associated with the faces and $D$-dimensional cells, respectively. 
The code has parameters $\code{2^{2D},D,4}$ and can be interpreted as a $D$-dimensional color code on a $D$-hypercube-like region, with $D$-dimensional cells colored in $D+1$ colors.
It has a logical $\overline{{\rm C}^{D-1}{\rm Z}}$ gate implemented via a transversal $R_D={\rm{diag}}\left(1,i\pi/2^{D-1}\right)$ gate.
In particular, for $D=3$ we obtain the 64-qubit code with a logical $\overline{\rm CCZ}$ gate.

\begin{figure}
\centering
\includegraphics[width=\columnwidth]{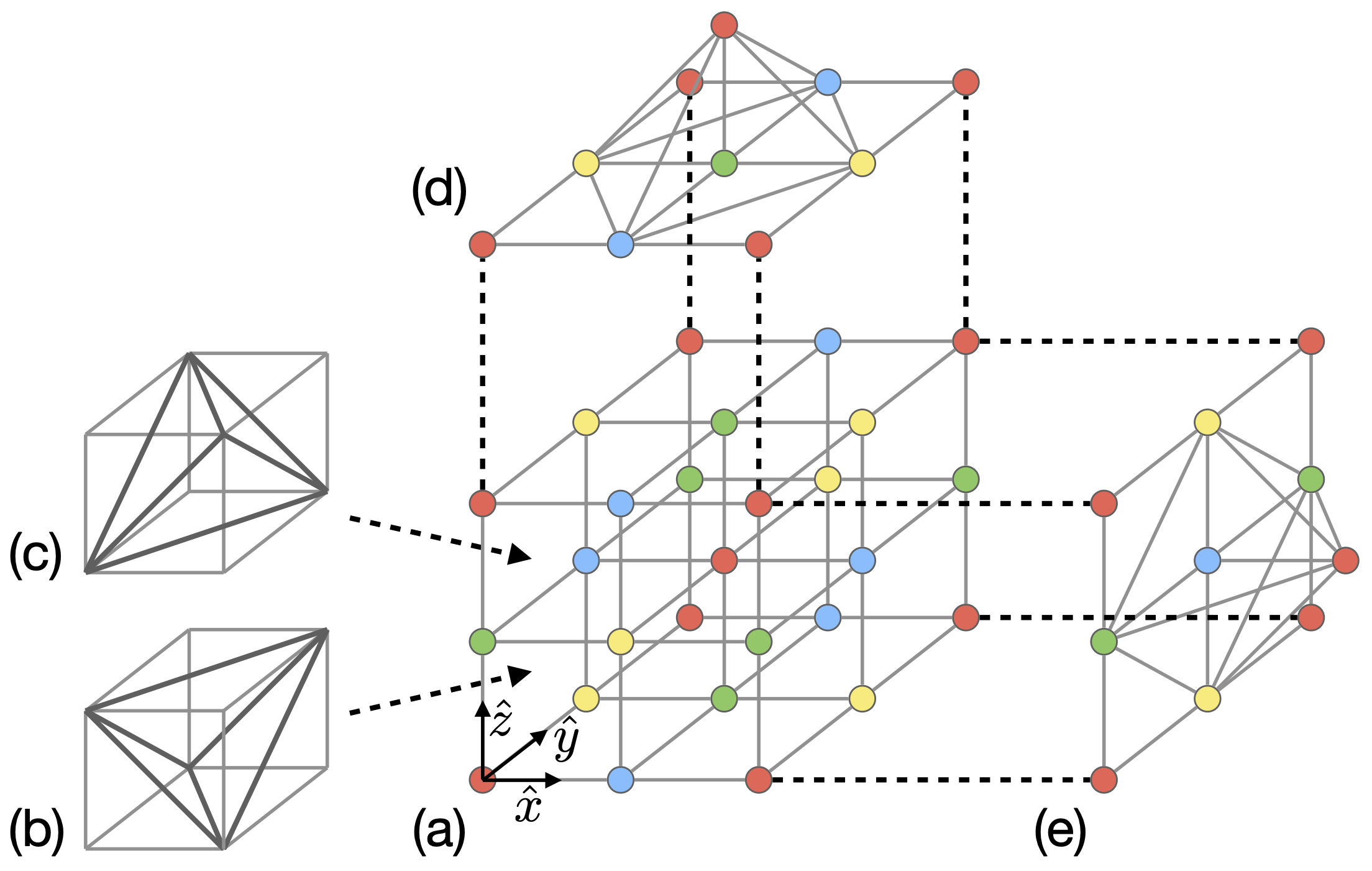}
\caption{\textbf{Large-distance codes with logical CCZ via transversal single-qubit rotations.}
Construction of a lattice $\mathcal L$ that facilitates a 3D color code with distance $d = L+2$ and a logical $\overline{\rm CCZ}$ gate for $L=2$.
Note that we use the dual-lattice picture, where qubits are identified with tetrahedra.
(a) A cubic lattice $\mathcal L'$ of size $L\times L\times L$, with its vertices colored in four colors, $R$, $G$, $B$ and $Y$.
(b)(c) Two possible ways of splitting each cube of $\mathcal L'$ into five tetrahedra.
We obtain $\mathcal L$ from $\mathcal L'$ by adding additional tetrahedra along the boundaries of $\mathcal L'$.
To avoi d clutter, we only illustrate (d) the top and (e) right boundaries.}
\label{fig_lattices_all}
\end{figure}

In order to reduce the weight of stabilizer operators, we can introduce some gauge operators.
For the 64-qubit code in \cref{fig:fig1}(g), whose volumes are colored in four colors, $R$, $G$, $B$ and $Y$, we choose the following gauge group
\begin{equation}
\mathcal G = \langle Z_f, X_c, X_{f'} | f\in F, c \in C, f'\in F_{BY}\rangle,
\end{equation}
where $C$, $F$ and $F_{BY}$ denote the set of all volumes, faces, and $BY$ faces (shared between $B$ and $Y$ volumes), respectively.
The parameters of the resulting code are the same, however we now measure operators of weight at most 8, compared to the original version with one stabilizer of weight 32 and two of weight 16 that were associated with $Y$ and $B$ volumes, respectively. 

We remark that, in an analogous way, for any even $d$ we can obtain the distance-$d$ $D$-hyperoctahedron code by taking an arrangement of $D$-hypercubes forming a $(d/2)\times\ldots\times (d/2)$ region of the $D$-hypercubic lattice.
The parameters of the code are $\code{d^{D},D,d}$.
Unfortunately, some of its stabilizers are macroscopic.
In particular, for $D=3$ there are $d/2-1$ stabilizers of weight $2d^2$, $d(d-2)/4$ stabilizers of weight $4d$, $(d+4)(d-2)/2$ stabilizers of weight $2d$; weight of other stabilizers is at most 8.
From the perspective of fault tolerance we may require that all stabilizers are of constant weight.
This motivates a definition of a family of 3D color codes, which we present in the following subsection. 

\subsection{A family of 3D color codes}

In order to define a 3D color code with even distance $d\geq 4$ and a logical $\overline{\rm CCZ}$ gate, we need to be able to construct an appropriate lattice supporting this code.
Since describing our construction seems easier in the the dual-lattice picture,
 we will use it in this subsection.

We start with a cubic lattice $\mathcal L'$ of size $L\times L\times L$, where $L=d-2$, which will constitute the bulk of the color code lattice; see \cref{fig_lattices_all}(a).
Each cube of $\mathcal L'$ is then split into five tetrahedra in one of two ways~\cite{Brown2016}; see \cref{fig_lattices_all}(b)(c).
We require that any two neighboring cubes that share a face are split differently. 
The vertices of $\mathcal L'$ are colored in one of four colors, $R$, $G$, $B$ or $Y$.
Then, we obtain $\mathcal L$ by adding additional tetrahedra to $\mathcal L'$, such that
every vertex of color $G$ (respectively, $B$ and $Y$) on the top or bottom (respectively, left or right, and front of rear) boundary of $\mathcal L'$ is covered; see \cref{fig_lattices_all}(d)(e).
Let $\partial\mathcal L_2$ denote the set of triangular faces on the boundary of $\mathcal L$.
We define $\partial\mathcal L_1$ and $\partial\mathcal L_0$ to be the sets of outermost edges and vertices bounding the initial lattice $\mathcal L'$, respectively.
By construction, $|\mathcal L| = 5L^3+6L^2$, $|\partial\mathcal L_2|=12L^2$, $|\partial\mathcal L_2|=12L^2$, $|\partial\mathcal L_1|=12L$ and $|\partial\mathcal L_0|=8$.

We define a 3D color code associated with the lattice~$\mathcal L$ as follows.
We place one qubit on every tetrahedron of~$\mathcal L$, triangular face in~$\partial\mathcal L_2$, edge in~$\partial\mathcal L_1$ and vertex in~$\partial\mathcal L_0$, and associate~$X$ and~$Z$ stabilizers with vertices and edges of~$\mathcal L$, respectively.
This construction is reminiscent of how a 3D tetrahedral color code is defined~\cite{Bombin2015,Kubica2015universal}.
The resulting code has parameters $\code{5d^3-12d^2+16,3,d}$.
Note that for $d=4$ we require $144$ qubits, which is $2.25\times$ more than for the $64$-qubit code.
While this construction is less qubit efficient (roughly by $5\times$ for large $d$) than the one from the previous subsection, asymptotically it only has stabilizers of constant weight.
By introducing gauge qubits we can further reduce the weight of the largest stabilizer, which is 32, requiring to measure operators of weight at most 6.
For more details on the lattice, see \cref{app:codes}.
Decoding these codes is possible using established decoding approaches for 3D color codes \cite{Kubica23}, including the potential for single-shot decoding \cite{Bombin15}.

\subsection{A family of 3D toric/color codes}

To construct a family of codes with growing code distance and constant-weight stabilizers we can take three copies of the 3D toric code with open boundary conditions, each encoding one logical qubit and rotated with respect to each other (via a cyclic permutation of the $x$, $y$ and $z$ axes).
The three logical qubits admit a logical $\overline{\rm CCZ}$ implemented via a transversal ${\rm CCZ}$ gate~\cite{Kubica.2015,Vasmer2019}.
By encoding each triple of qubits supporting a physical ${\rm CCZ}$ gate into the $\code{8,3,2}$ code, we obtain a 3D color code with a logical $\overline{\rm CCZ}$ implemented via a transversal $T$ gate.
The resulting code has parameters $\code{8n,3,2d}$, where $n$ and $d$ are the number of qubits and code distance of each copy of the toric code, and typically $n = O(d^3)$.

\begin{figure}
\centering
\includegraphics[width=.95\columnwidth]{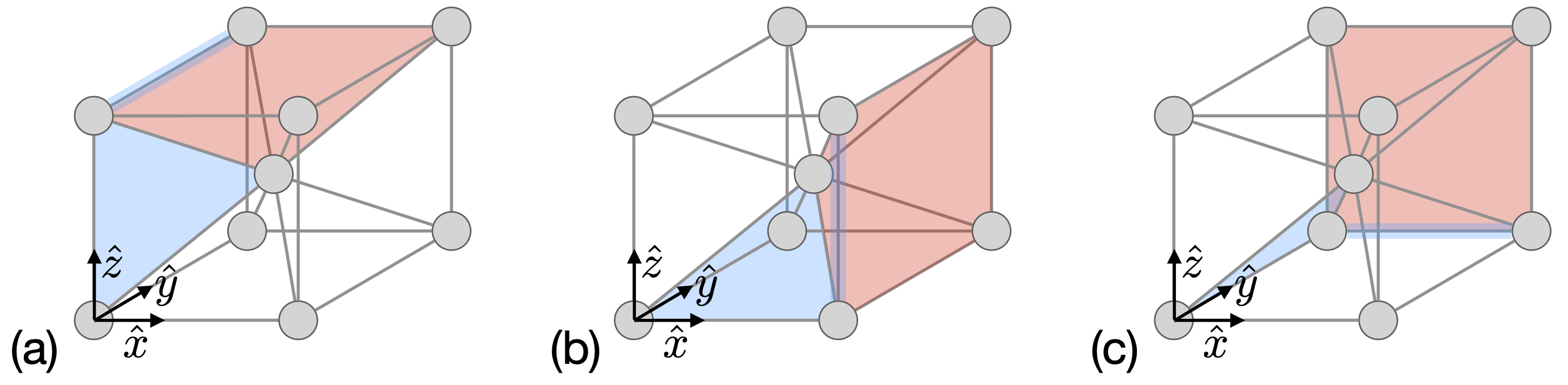}
\caption{\textbf{3D toric/color codes with transversal CCZ}.
Three copies of the 3D toric code with distance two, which admit a transversal logical $\overline{{\rm CCZ}}$ gate. 
To avoid clutter, we only depict some $X$ (red) and $Z$ (blue) stabilizers.
The copy in (a) has two $X$ stabilizers (pyramids spanned by the top/bottom faces and the cube center) and eight $Z$ stabilizers (four edges parallel to $\hat y$ and four triangular faces spanned by edges parallel to $z$ and the cube center).
The copies in (b) and (c) are obtained by rotating the copy in (a).
}
\label{fig_lattices_toric}
\end{figure}

The smallest realization of this construction uses 72 qubits and has distance four.
It is based on the 3D toric code with distance two that we depict in \cref{fig_lattices_toric}.
Analogously, by considering $D$ copies of the $D$-dimensional toric code with distance two concatenated with the $D$-hyperocatedron code we obtain the code with parameters $\code{2^D(2^D+1),D,4}$ and a logical $\overline{{\rm C}^{D-1}{\rm Z}}$ gate implemented via a transversal $R_D={\rm{diag}}\left(1,i\pi/2^{D-1}\right)$ gate.

\subsection{Threshold Theorem for Transversal IQP Sampling}

An important property of the previous two code constructions is that they are asymptotically low density parity check (LDPC) codes.  As shown by \textcite{gottesman_faulttolerant_2014}, any quantum LDPC code whose distance diverges at least logarithmically with increasing system size will have a fault-tolerance threshold under a local stochastic noise model.  
The proof includes the case of transversal gates applied to the code.  
More precisely under a minimum weight decoder, the theorem implies that for an $\code{n,k,d}$ LDPC code with $p$ below the threshold value $p_{\rm th}$ the logical error rate scales as
\begin{equation}
    O[ n (p/p_{\rm th})^{d/2}]
\end{equation}
as $n \to \infty$.  In particular, for the two families of LDPC codes introduced in the previous section, $d = O(n^{1/3})$, which leads to an exponential suppression of the error rate with increasing system size.  As a result, the output distribution of random transversal IQP circuits implemented with these codes will converge exponentially close in total variation distance to the ideal noiseless distribution.  

A similar threshold behavior in IQP circuit sampling was also studied recently in Ref.~\cite{paletta_robust_2023}.  There, the authors consider a different circuit architecture that uses ancilla qubits to compress the IQP circuit to a constant depth.  
They also consider a different family of 3D color codes with logical $\overline T$-gates instead of logical $\overline{\text{CCZ}}$ gates implemented via transversal $T$-gates.

 \begin{acknowledgements}
    We thank Abhinav Deshpande, Bill Fefferman, Daniel Grier, Jinguo Liu, Brayden Ware, Sepehr Ebadi, Simon Evered, Alexandra Geim, Sophie Li, Tom Manovitz, and Hengyun Zhou for helpful discussions.
    DH gratefully acknowledges the hospitality of the Simons Institute for the Theory of Computing during Summer 2023 and Spring 2024 supported by DOE QSA grant \#FP00010905, where part of this work was conducted.
    DH acknowledges funding from the US DoD through a QuICS Hartree fellowship and from the Simons Institute for the Theory of Computing, supported by DOE QSA.
    This research was supported in part by NSF QLCI grant OMA-2120757. 
    We acknowledge financial support from IARPA and the Army Research Office, under the Entangled Logical Qubits program (Cooperative Agreement Number W911NF-23-2-0219), the DARPA ONISQ program (grant number W911NF2010021), the DARPA IMPAQT program (grant number HR0011-23-3-0012), the Center for Ultracold Atoms (a NSF Physics Frontiers Center, PHY-1734011), the National Science Foundation (grant number PHY-2012023 and grant number CCF-2313084), the Army Research Office MURI (grant number W911NF-20-1-0082), and QuEra Computing. 
    XG acknowledges support from U.S. Department of Energy, Office of Science, National Quantum Information Science Research Centers, Quantum Systems Accelerator, NSF PFC grant No. PHYS 2317149 and start-up grants from CU Boulder.
\end{acknowledgements}

\bibliography{bibliography,doms_refs}
\let\addcontentsline\oldaddcontentsline%


\appendix
 \tableofcontents
\section{Quantum channels on phase states}
\label{app:prelim}

We begin with a brief recap of the properties of IQP computations that will be helpful in the following. 
In particular, we analyze  quantum channels which preserve phase states. 

\subsection{Some background on phase states}
\label{ssec:phase states}

In its most general form, IQP computation is about the preparation and manipulation of a class of quantum states called \emph{phase states}.  

\begin{definition}
    Given a function $\phi: \mathbb{Z}_2^n \to \mathbb{R}$, a phase state $\ket{\phi}$ on $n$ qubits is
    \begin{equation}
        \ket{\phi} = \frac{1}{\sqrt{2^n}} \sum_{x \in \{0,1\}^n} e^{ i \pi \phi(x)} \ket{x}.
    \end{equation}
\end{definition}
We denote the single-qubit $x/y/z$ Pauli operators by $X/Y/Z$ and use the notation $A^a \equiv A^{a_1} \otimes \cdots \otimes A^{a_n}$ for a bit string $a \in \{0,1\}^n$ and a single qubit operator $A$.  
A few convenient facts about phase states are
\begin{itemize}
    \item A pure quantum state $\ket{\psi}$ is a phase state \emph{iff} for every nonzero~$a \in \{0,1\}^n $, $\bra{\psi}Z^a \ket{\psi} =0 $. 
    The forward direction is clear, and the reverse holds because states satisfying $\bra{\psi}Z^a \ket{\psi} =0 $ for all nonzero $a$ are precisely those states with equal magnitude coefficients on all computational basis states, i.e., phase states.
    \item Any function $\phi:\mathbb{Z}_2^n \to \mathbb{R}$ can be expanded in terms of polynomial coefficients $\alpha_a \in \mathbb R$ as 
    \begin{equation}
    \phi(x) = \sum_{a \in \{0,1\}^n} \alpha_{a} x^ a,
    \end{equation}
    where $x^a = x_1^{a_1} \cdots x_n^{a_n}.$

    \item A phase state $\ket{\phi}$ with polynomial coefficients $\alpha_a$ can be prepared with a unitary operator composed of diagonal $Z$ interactions
    \begin{align}
    \label{eq:uphi}
        U_{\phi} = \prod_{a \in \{0,1\}^n} \exp\left[i \pi \alpha_{a} \prod_i\left( \frac{\mathbbm{1}-Z_i}{2} \right)^{a_i}\right],\hspace{-1.5em}
    \end{align}
    such that $\ket{\phi} = U_{\phi} \ket{+^n}$ where $\ket{+^n}$ is the $n$-fold tensor power of the single-qubit $+1$-eigenstate $\ket + $ of~$X$.
    \item If $\phi(x)$ is a degree-$D$ polynomial, i.e., a polynomial with maximum degree $D$, then $U_\phi$ can be implemented by a quantum circuit consisting of at most $O(2^Dn^D)$ gates: There are $O(n^D)$ terms in the polynomial and the $D$-local gates $\exp(i\pi \alpha_{a} \prod_i\big( (\mathbbm{1}-Z_i)/2)^{a_i})$ can be decomposed into at most $O(2^D)$ two-qubit gates~\cite{Bullock.2004}.
\end{itemize}

\subsection{Equivalence of degree-$D$ IQP and degree-$D$ IQP + CNOT}
\label{ssec:equivalence iqp cnot}

Let us show the equivalence between degree-$D$ IQP circuits and circuits composed of degree-$D$ IQP gates and CNOT gates. 
Let us define a binary phase state as a state $\ket \psi$ which can be written as  
\begin{align}
    \ket {\psi} = \sum_x (-1)^{f(x)} \ket x,
\end{align}
where $f: \bin^n \rightarrow \bin$ is an arbitrary Boolean function.
We now observe that a degree-$D$ IQP circuit $C_f$ acting on $n$ qubits prepares a polynomial binary phase state
\begin{align}
    C_f \ket +^{\otimes n} = \frac{1}{\sqrt{2^{n}}}\sum_x (-1)^{f(x)} \ket x , 
\end{align}
where $f(x) = \sum_{k=1}^D \sum_{j^k_1, \ldots j^k_k} c^k_{j^k}x_{j^k_1} \cdots  x_{j^k_k}$ is a Boolean degree-$D$ polynomial whose coefficients $c^k_{j^k} \in \{0,1\}$ are nonzero if a C$^{k-1}$Z gate is applied to the $k$ qubits labeled by $j^k = (j_1^k, \ldots, j_k^k) $.
The action of a CNOT gate on a two-qubit basis state $\ket {xy}$ is given by $ \text{CNOT} \ket{xy} = \ket{x, x \oplus y } $, where $\oplus$ denotes addition modulo $2$. 
This implies that the action of a circuit $\mc C $ composed of CNOT gates on a polynomial binary phase state is given by 
\begin{align}
 \mc C C_f\ket {+^n} = \frac{1}{\sqrt{2^n}}\sum_x (-1)^{f(A^{-1} x)} \ket {x}  = D_{f_A} \ket{+^n}, 
 \label{eq:iqp+cnot = iqp}
\end{align}
where $A$ is a Boolean invertible matrix defined by the CNOT circuit. 
$A^{-1}$ preserves the degree of $f$ and hence $f(A^{-1}x)$ is just a different degree-$D$ Boolean polynomial $f_A(x)$ that depends on $f$ and $A$. 
Any hIQP circuit which includes IQP and CNOT gates is therefore  equivalent to some purely IQP circuit.

\subsection{Review of quantum operations}
\label{ssec:quantum operations}

Let us now briefly review the basic formalism of quantum channel theory.  

A \emph{completely positive} (CP) map acts on density matrices $\rho$ as $\mathcal{E}(\rho) = \sum_\mu K_\mu \rho K_\mu^\dag$, where the linear operators $K_\mu$ are called \emph{Kraus operators}.
A \emph{quantum channel} (or CPT map) $\mathcal{E}$ is a completely positive map that also preserves the trace, i.e., its Kraus operators satisfy $\sum_\mu K_\mu^\dag K_\mu = \mathbbm{1}$. 
\begin{definition}
    Let $P_n$ be the Pauli group on $n$ qubits, then a Pauli channel is defined by a probability distribution $q(P)$ over $P \in P_n$ such that
\begin{equation}
    \mathcal{E}(\rho) = \sum_{P \in P_n} q(P) P \rho P^\dagger, 
\end{equation}
i.e., we can take the Kraus operators to be Pauli operators.
\end{definition}

To analyze the effects of quantum channels on phase states, we first introduce two (inequivalent) levels of conservation of the phase state property. 
First, we introduce \emph{phase-state-preserving} maps (\cref{sec:psp}), then \emph{diagonal CP maps} (\cref{sec:diagonal cp}).

\subsection{Phase-state-preserving quantum operations}
\label{sec:psp}

\begin{definition}
A CP map $\mathcal{E}$  is a \emph{phase-state-preserving (PSP)} map if, for any mixture of phase states $\rho$, $\mathcal{E}(\rho)$ can be written as a mixture $\sum_a p_a \ket{\phi_a}\bra{\phi_a}$ of phase states $\ket {\phi_a}$ with $p_a \geq 0 $.
\end{definition}
\begin{definition}
A CP map $\mathcal{E}$ is a \emph{completely-phase-state preserving (CPSP)} map if $\mathcal{E}\otimes \mathbbm{1}$ is a PSP map for any trivial extension of $\mathcal{E}$ to a larger Hilbert space.
\end{definition}
We give some examples of channels that fall into different categories according to these definitions:
\begin{itemize}
    \item The partial trace operation is a CPSP channel. Take a phase state $\ket{\phi}$ on the joint system $\mathcal{H}_A \otimes \mathcal{H}_B$ with $n_A$ and $n_B$ qubits, respectively. 
    We can expand this state in the computational basis as 
    $\ket{\phi} = \sum_{x,y}c_{x,y} \ket{x}\otimes \ket{y}. $
    Taking the partial trace over $B$, we arrive at the state
    \begin{equation}
    \begin{split}
    \mathrm{Tr}_{B}(\ket{\phi}\bra{\phi}) &= \sum_{x,x',y}c_{x,y}c^*_{x',y} \ket{x}\bra{x'} \\
    &= \sum_y \frac{1}{2^{n_B}}\ket{\phi_y}\bra{\phi_y}.
    \end{split}
    \end{equation}
    Each state $\ket {\phi_y} \coloneqq \sum_x c_{x,y} \ket x$ in this mixture is a phase state because $|c_{x,y}|^2=1/2^{n_A+n_B}$.  This argument shows that the partial trace results in a mixture of phase states, proving the PSP condition.  
    The trivial extension of the partial trace operation is also a partial trace, which shows the CPSP property.

    \item 
     Let $\mc U = U \cdot U^\dagger$ with some unitary $U$ be a unitary PSP channel on $n$ qubits, then $\mc U$ is CPSP. 
    To prove the CPSP condition we need to prove $\mc U \otimes \id$ is PSP acting on the doubled Hilbert space $\mathcal{H}\otimes \mathcal{H}$.  Taking a joint state $\ket{\phi}$ and expanding it as we did in the partial trace discussion above, we can represent the state as the superposition 
    \begin{equation}
            \ket{\phi} = \sum_z c_z \ket{\phi_z} \otimes \ket{z},
    \end{equation}
    where $\ket{z}$ is a computational basis state, $|c_z|^2 = 1/2^n$ and $\ket{\phi_z}$ is a phase state.  Applying $\mc U\otimes \mathbb{I}$ to this state will result in a similar superposition of phase states each with equal magnitude coefficients; thus, the resulting state is also a phase state, proving the CPSP condition for $\mc U$.

        \item Let $U$ be a circuit composed of CNOT gates and $X$ gates, then $U \ket{x} = \ket{A x+b}$ for some invertible $n \times n$ matrix $A$ over $\mathbb{F}_2$ and a Boolean vector $b \in \mathbb{F}_2^n$. As a result, $U \ket{\phi} = \ket{\phi'}$, where $ \phi'(x)=\phi[A^{-1} (x+b)]$, so all such $U$ give rise to CPSP channels.
        \item More generally, let $U$ be a circuit composed of Toffoli gates, CNOT and $X$ gates. 
        Then, $U$ is a CPSP map with the property that 
        $ U \ket{\phi} = \ket{\phi \circ f^{-1}}$, 
        where $f(x)$ is the invertible Boolean function implemented by the circuit on computational basis states. 
        \item Any mixture of CPSP maps is a CPSP map.
        \item Any Pauli channel is a CPSP channel.
        \item A projective measurement in the $x$-basis is a 1-qubit PSP map, but is not a CPSP map.  
        As a counterexample to the CPSP condition, consider the (unnormalized) two qubit phase state $\ket{00} + \ket{01} + \ket{10} -\ket{11}$ and measure the second qubit in the $x$-basis. 
        The state after the measurement is $\ket{0+}$ or $\ket{1-}$, neither of which are phase states.
    \item The amplitude damping channel with Kraus operators $K_0= (1+\sqrt{1-\gamma})\mathbbm{1}/2 + (1-\sqrt{1-\gamma}) Z/2$ and $K_1 = (X+i Y) \sqrt{\gamma}/2$ is not PSP for any $\gamma>0$.        
    \end{itemize}
From these examples, we can see that the set of operations that preserve the phase state property has a relatively rich structure.  

\subsection{Diagonal completely positive maps}
\label{sec:diagonal cp}

In addition to phase-state preservation, another natural class of maps that act in a simple way on phase states are \emph{diagonal} CP maps. 
\begin{definition}
    A CP map $\mathcal{E}$ is diagonal if it has a Kraus decomposition $\{K_\mu \}$ such that every Kraus operator $K_\mu = \sum_a c_a Z^a$ is diagonal in the computational basis.
\end{definition}
We denote a map as a \emph{diagonal-unitary CPSP map} if it has a representation as a mixture of diagonal maps composed with unitary CPSP maps.

\subsection{The $z$-twirl}

In the context of noisy random IQP computations, it is convenient to introduce $z$-twirling of a CP map.
\begin{definition}
    The $z$-twirl of a CP map $\mathcal{E}$  with Kraus operators $\{ K_\mu \}$ is the CP map  
    \begin{equation}
       \mathcal{E}_z(\rho)=  \frac{1}{2^n} \sum_{a \mu} Z^a K_\mu Z^a \rho Z^a K_\mu^\dag Z^a
    \end{equation}
\end{definition}

This twirling operation has the effect of converting any CP map into a diagonal-unitary CPSP map as we now show.
\begin{lemma}
\label{lemma:ztwirl}
    Let $\mathcal{E}$ be a CP map; its $z$-twirl $\mathcal{E}_z$ is a diagonal-unitary CPSP map.
\end{lemma}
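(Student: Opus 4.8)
The plan is to rewrite the $z$-twirl as an average of conjugations, $\mathcal E_z = \frac{1}{2^n}\sum_{a} \mathcal Z^a \circ \mathcal E \circ \mathcal Z^a$ with $\mathcal Z^a(\rho)=Z^a\rho Z^a$, and then exploit the fact that averaging over all $Z$-strings acts as a projector that annihilates exactly the cross terms mixing different $X$-shift sectors. The key structural input I would establish first is that any $n$-qubit operator admits a unique decomposition $K_\mu = \sum_{c\in\bin^n} X^c D^\mu_c$, where each $D^\mu_c = \sum_d \beta^\mu_{c,d} Z^d$ is diagonal in the computational basis; this is simply a regrouping of the matrix entries of $K_\mu$ according to how far the $X$ part shifts basis states.

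With this decomposition in hand, I would compute the conjugation explicitly. Since $D^\mu_c$ commutes with every $Z^a$ and $Z^a X^c Z^a = (-1)^{a\cdot c} X^c$, we obtain $Z^a K_\mu Z^a = \sum_c (-1)^{a\cdot c} X^c D^\mu_c$. Substituting into the twirl and expanding the left and right factors produces a double sum over labels $(c,c')$ weighted by $(-1)^{a\cdot(c\oplus c')}$. The only nontrivial manipulation is the $a$-average, which yields $\frac{1}{2^n}\sum_a (-1)^{a\cdot(c\oplus c')} = \delta_{c,c'}$ and collapses the double sum to a single diagonal sector:
\begin{equation}
\mathcal E_z(\rho) = \sum_{\mu,c} X^c D^\mu_c \,\rho\, (D^\mu_c)^\dagger X^c.
\end{equation}

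To finish, I would read off the advertised structure from this Kraus form. Grouping by the $X$-shift label $c$, define the diagonal CP map $\mathcal D_c(\rho) = \sum_\mu D^\mu_c \rho (D^\mu_c)^\dagger$, which is diagonal because each $D^\mu_c$ is diagonal, together with the unitary channel $\mathcal X_c(\rho) = X^c \rho X^c$. The latter is CPSP because $X^c$ is a product of Pauli-$X$ gates, one of the circuit families already shown above to give a CPSP channel. Then $\mathcal E_z = \sum_c \mathcal X_c \circ \mathcal D_c$ exhibits $\mathcal E_z$ precisely as a mixture of diagonal CP maps composed with unitary CPSP maps, which is the definition of a diagonal-unitary CPSP map.

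The main obstacle I anticipate is bookkeeping rather than conceptual: establishing the unique $X^c D^\mu_c$ decomposition and then tracking the Hermitian conjugates and Pauli signs carefully enough that the $a$-average cleanly enforces $c=c'$. Once the off-sector cross terms are shown to vanish, the identification of the surviving Kraus operators as (unitary CPSP)$\,\times\,$(diagonal) is immediate.
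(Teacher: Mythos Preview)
Your proposal is correct and takes essentially the same approach as the paper: both decompose each Kraus operator into an $X$-shift part times a diagonal part (the paper writes this via the Pauli expansion $K_\mu = \tfrac{1}{2^n}\sum_{ab} c^\mu_{ab} X^b Z^a$, you via $K_\mu = \sum_c X^c D^\mu_c$), and both observe that the $z$-twirl average enforces matching $X$-shifts on the left and right, leaving Kraus operators of the form $X^c \cdot(\text{diagonal})$. The only cosmetic difference is that the paper also pulls out explicit nonnegative weights $p_{\mu b}=\sum_a |c^\mu_{ab}|^2$ before identifying the mixture structure.
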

\begin{proof}
    Let $K_\mu$ be the Kraus operators for $\mathcal{E}$, we expand them in the Pauli basis
    \begin{equation}
        K_\mu = \frac{1 }{2^n}\sum_{ab}c_{ab}^\mu  X^b Z^a,~c_{ab}^\mu = \mathrm{Tr}[K_{\mu} Z^a X^b].
    \end{equation}
    With this representation, we can then express the $z$-twirled map as
    \begin{align}
        \mathcal{E}_z(\rho) &= \sum_{ \mu b} p_{\mu b} X^b K_{\mu b} \rho K_{\mu b}^\dagger X^b,       
    \end{align}
    where $p_{\mu b}=\sum_{a} |c_{ab}^\mu |^2$ and the Kraus operators $K_{\mu b} =  \frac{1 }{\sqrt{p_{\mu b}}}\sum_a c_{ab}^\mu Z^a$.  The effect of the twirling is to remove the terms on which the $X$ operators act with a different power on each side of the density matrix. 
    This representation of the Kraus maps proves the twirled map is a diagonal-unitary CPSP map because $K_{\mu b}$ is a diagonal map and $X^b$ is a unitary CPSP map.
 
\end{proof}

\section{Measuring the quality of noisy encoded computations}
\label{app:noisy encoded}

Measuring the quality of encoded computations in the absence of full error correction is nontrivial, because physical errors affect encoded qubits quite differently compared to physical qubits.  
In this section, we take some initial steps towards a theory of \emph{logical gate fidelities} (\cref{ssec:logical fidelities}).
Before we do so, we recap the basics of stabilizer codes and analyze the effective noise channels arising on the logical qubits (\cref{app:logical noise}).

\subsection{Stabilizer codes and logical noise channels}
\label{app:logical noise}

Let $S$ be an abelian subgroup of the Pauli group such that $-\id \notin S$.  The size of $S$ is $2^{n-k}$ for an integer~$k$.  The code space projector
\begin{equation}
    \Pi_0 = \frac{1}{2^{n-k}}\sum_{g \in S} g,
\end{equation}
defines an $\code{n,k,d}$ quantum error correcting code where~$d$ is the code distance.  Taking a minimal size generating set for $S =\langle g_1,\ldots,g_{n-k} \rangle$, we can re-express $\Pi = \prod_{i=1}^{n-k} (\id +g_i)/2$.  

For two Pauli operators $P$ and $Q$, define the scalar commutator as $[[P,Q]] = 0 \in \mathbb{F}_2$ if $[P,Q]=0$ and $[[P,Q]] = 1 \in \mathbb{F}_2$ if $PQ = - QP$.  We can define a canonical basis for the Pauli group with reference to our generating set for $S$ as follows: 
find a set of operators $\{h_1,\ldots,h_{n-k}\}$ called destabilizers \cite{aaronson_improved_2004} that satisfy $[[h_i,h_j]]=0$ and $[[g_i,h_j]]=\delta_{ij}$, as well as a set of logical operators $\{\bar{X}_1,\bar{Z}_1,\ldots,\bar{X}_k,\bar{Z}_k\}$ that satisfy $[[\bar{X}_i,h_j]]=[[\bar{X}_i,g_j]]=[[\bar{X}_i,\bar{X}_j]] = 0$, $[[\bar{Z}_i,h_j]]=[[\bar{Z}_i,g_j]]=[[\bar{Z}_i,\bar{Z}_j]] = 0$, and $[[\bar{X}_i,\bar{Z}_j]] = \delta_{ij}$.  This canonical basis set can be found efficiently by solving a set of linear equations over $\mathbb{F}_2$.

A logical state $\ket{\bar{\psi}}$ is any state that satisfies $\Pi_0 \ket{\bar{\psi}} = g \ket{\bar{\psi}} = \ket{\bar{\psi}}$ for all $g\in S$.  A \emph{syndrome} is a set of measurement outcomes $s \in \{0,1\}^{n-k}$ for the stabilizer generators~$g_i$.  
The projector onto a given syndrome outcome is
\begin{equation}
    \Pi_{s}= \prod_{i=1}^{n-k} \frac{\mathbbm{1}+(-1)^{s_i} g_i}{2}.
\end{equation}
  We also define  $\Pi_{0r}$ as the projector onto the \emph{$r$-error subspace}
\begin{equation*}
    \mathrm{span}\Big( \{ P \ket{\bar{\psi}} : \Pi_0 \ket{\bar{\psi}}=\ket{\bar{\psi}}, \, P \in P_n, \, \mathrm{wt}(P)\le r \}\Big),
\end{equation*}
where $P = \bigotimes_{i=1}^n P_i$ is a Pauli operator and $\mathrm{wt}(P) = |\{i: P_i \ne \id\}$ is the \emph{weight} of $P$, i.e., the number of non-identity Paulis in $P$.  The projector $\Pi_{0r}$ can be interpreted as a projector onto the set of states that have at most $r$ errors. Note, $\Pi_{0n} = \id$, while the smallest $m$ such that $\Pi_{0m}= \id$ is greater than or equal to the code distance, but will generally depend on the code.

In order to formalize the effect of physical noise on the logical subspace it is convenient to introduce a generalization of a quantum channel, called \emph{quantum instrument}.
\begin{definition}
    A \emph{quantum instrument} is a collection of CP maps $\{\mathcal{E}_a \}_{a \in A}$ such that $\sum_{a \in A} \mathcal{E}_a$ is a quantum channel. 
\end{definition} 
A projective measurement $\{ \Pi_a \}_{a \in M}$ is an example of a quantum instrument, where $a$ indexes the possible measurement outcomes $M$. 

For a given physical noise channel $\mathcal{E}$, the corresponding \emph{logical noise channel} is the quantum instrument $\{ \mathcal{E}_{s} \}_{s \in \{0,1\}^{n-k}}$ associated with the syndrome outcomes 
\begin{align}
    \mathcal{E}_{s}(\rho ) &= \Pi_{s} \mathcal{E}( \rho ) \Pi_{s} = \sum_\mu \Pi_{s} K_\mu  \rho  K_\mu^\dag \Pi_{s}.
    \end{align}
    For an initial state in the code space, these are effectively logical channels with the Kraus operators
    \begin{align}
     \Pi_{s} K_\mu \Pi_0 &= h^s \Pi_0 h^s K_\mu \Pi_0 =\sum_{abc}c_{abcs}^\mu h^s g^c \bar{X}^b  \bar{Z}^a  \Pi_0,
\end{align}
where $h^s \equiv \prod_{i=1}^{n-k} h_i^{s_i}$ is a product of destabilizers that corrects the syndrome and $c_{abcs}^\mu = \mathrm{Tr}[K_\mu  \bar{Z}^a \bar{X}^b g^c h^s]$.

\subsection{Logical gate fidelities}
\label{ssec:logical fidelities}

We are now ready to define notions of logical transversal gate fidelities.  
Due to entanglement that builds up between syndrome degrees of freedom and logical degrees of freedom during the computation, the  error model for logical qubits is intrinsically non-Markovian, increasing the relative importance of circuit context dependence in defining gate fidelities.  
We provide a partial resolution to this challenge through the use of $r$-error filters that restrict the input states of the logical channel to have at most weight-$r$ errors.

Given a syndrome outcome $s$, a \emph{decoder} is a map from $s$ to a logical Pauli operator $\bar{P}_s$ chosen to correct the error and a destabilizer $\bar{h}^s$ that fixes the syndrome without affecting the logical degrees of freedom.
We define the \emph{$(r,s)$-filtered average gate fidelity} of a decoded logical operation $\bar {\mc U} = \bar{U} \cdot \bar U^\dagger$ conditioned on a particular syndrome outcome $s$ acting on an $r$-error state as
\begin{align}
    F_{sr}(\mc E, \bar{\mc U}) &= \frac{1}{p_{sr}} \int d \psi_r  \bra{\psi_r} \bar{U}^\dag \bar{P}_s h^s \nn \\
    &\times\mathcal{E}_s(\bar{U} \ket{\psi_r}\bra{\psi_r}  \bar{U}^\dag) h^s \bar{P}_s U \ket{\psi_r}, \\
    p_{sr} & = \mathrm{Tr}[\mathcal{E}_s(\Pi_{0r}/d_r)],\\
    d_r &= \mathrm{Tr}[\Pi_{0r}],
\end{align}
where $\mathcal{E} \circ \bar{\mc U}(\rho)$ is the noisy implementation of the logical gate $\bar{\mc U}$, $h^s$ maps the states with syndrome $s$ back to the logical states, $d\psi_r$ is the uniform measure on the set of $r$-error states induced by the Haar measure over logical operators and $p_{sr}$ is the probability of observing syndrome $s$ starting from a maximally mixed state in the $r$-error space.
In the case of error detection, where the optimal decoding strategy is to postselect on the all-$0$ syndrome outcome and set $\bar{P}_0 = \mathbbm{1}$, the average gate fidelity is
\begin{equation}
    F_{0r}(\mc E,\bar{\mc U}) = \frac{1}{p_{0r}} \int d\psi_r \bra{\psi_r} \bar{U}^\dag  \mathcal{E}_0 (\bar{U}\ket{\psi_r}\bra{\psi_r}\bar{U}^\dag)  U \ket{\psi_r}.
\end{equation}

In addition to the average gate fidelity, we also introduce the \emph{$(r,s)$-filtered entanglement fidelity} \cite{schumacher_sending_1996,nielsen_entanglement_1996}
\begin{equation}
    F_{esr}(\mc E, \bar{\mc U}) = \frac 1{p_{sr}} \bra{\Phi_r}  (\mathcal{U}^{\dag} \circ \mathcal{R}_s \circ \mathcal{E}_s \circ \mc U) \otimes \id  (\ket{\Phi_r}\bra{\Phi_r}) \ket{\Phi_r},
\end{equation}
where $\mathcal{R}_s(\rho) = \bar{P}_s h^s \rho h^s \bar{P}_s$ and $\ket{\Phi_r} = \operatorname{vec}(\Pi_{0r}/d_r)$ is a purification of the density matrix $\Pi_{0r}/d_r$.  
As with their standard versions, we find that the $(r,s)$-filtered entanglement fidelity is related to the $(r,s)$-filtered average gate fidelity through the formula \cite{horodecki_general_1999,nielsen_simple_2002}
\begin{equation}
\label{eq:average vs entanglement fidelity}
    F_{sr} = \frac{d_r F_{esr} +1}{d_r+1}.
\end{equation}
To see this equality, one follows the simple proof of \textcite{nielsen_simple_2002}.
Similarly, we can use the $r$-filtered average gate fidelity to upper- and lower-bound an $(r,s)$-filtered version of the diamond distance (defined analogously with respect to $r$-error input states and decoding conditioned on syndrome outcome $s$) of the logical channel from the identity \cite[e.g.][Prop. 9]{wallman_randomized_2014}.

We can analytically compute the $(r,s)$-filtered entanglement fidelity as
\begin{align}
    F_{esr} & = \frac{1}{ p_{sr} d_r^2}\sum_\mu \left\lvert \mathrm{Tr}\big[\Pi_0 \bar{P}_s h^s  K_\mu \Pi_{0r}\big]\right\lvert^2,
\end{align}
where $\{K_\mu\}$ are a set of Kraus operators for $\mathcal{E}$.  
For two-qubit logical gates between code blocks (e.g., the transversal CNOT)  and two-qubit gate error channels that factorize  $\mathcal{E} = \mathcal{E}_1 \otimes\mathcal{E}_2$, we can directly extend this formula to two-qubit logical gates
\begin{equation}
    F_{e ss' rr'} = F_{esr}^{(1)} F_{es'r'}^{(2)}.
\end{equation}

These definitions of logical gate fidelities implicitly assume either that $r$ is less than the code distance or that most states in the $r$-error space are close to a logical state.  Otherwise, the $r$-filtered gate fidelity is dominated by the projection back to the code space.  In the case of near-term experiments where accessible code distances are small, which is the focus of this work, it is convenient to define an intermediate notion of logical fidelity that looks at the fidelity of the logical operation on the whole $r$-error space without error correction and decoding following the gate.  The error rate of this fidelity does not decrease with increasing code distance, but  provides a useful bound on the fidelity of logical gates in a near-term logical circuit.  
We thus define the \emph{$r$-filtered average gate fidelity} and \emph{$r$-filtered entanglement fidelity} as
\begin{align}
    F_r(\mc E, \bar{\mc U}) &= \frac{d_r F_{er}(\mc E, \bar{\mc U})+1}{d_r+1},\\
    F_{er}(\mc E, \bar{\mc U}) &= \frac{1}{p_r}\bra{\Phi_r}  ( \bar{\mc U}^\dag \circ \mc E \circ \bar{\mc U}) \otimes \id ( \ket{\Phi_r}\bra{\Phi_r}) \ket{\Phi_r},\\
    p_r & = \frac{1}{d_r} \mathrm{Tr}[\Pi_{0r} \mathcal{E}(\Pi_{0r})],
\end{align}
respectively. 
In the case of $\code{8,3,2}$ codes, for $r=1$, $\Pi_{01}$ has support on 144 out of all possible 256 states. For $r \ge 2$, $\Pi_{0r}$ has support on the whole 8-qubit Hilbert space. Consequently, in this case $F_{sr}$ is negligible for any $r>0$; however, $F_{r}$ can still be large because of the small size of the code.

In \cref{fig:encoded sampling}(b) in the main text, we present the $r$-filtered gate fidelities  $F_r$ for transversal CCZ, in-block CNOT, and transversal CNOT gates of the $\code{8,3,2}$ code.  In the calculations for the CCZ and transversal CNOT gates we took iid depolarizing noise on each qubit following the application of a physical gate.  For the in-block CNOT, which we assume to be implemented via qubit reconfiguration, we modeled the noise as pure $Z$-dephasing noise on each qubit following the gate.  To numerically compute the $r$-filtered fidelities we directly evaluated the formulas above  by expressing them as polynomials in the local noise rate parameter $p(1-p)$.  We then used direct calculations on the $2^8$ dimensional Hilbert space to evaluate the coefficients of each polynomial.  

For a native two-qubit gate error rate of $0.5\%$, the fidelity of a compiled CCZ gate would be around $3 \%$.  We see in \cref{fig:encoded sampling}(b) that the $r$-filtered fidelity can readily exceed that for $r\,{=}\,0$, while at high enough single-qubit gate error rates $10^{-3}-10^{-4}$ the transversal CCZ fidelity exceeds the compiled version for $r\,{>}\,0$.  In the case of transversal CNOTs shown in \cref{fig:encoded sampling}(b), the $r$-filtered fidelity only exceeds the native fidelity for $r\,{=}\,0$.  The pseudothreshold for two-qubit CNOT gates is observed to be at gate error rates of $4\%$.  When comparing the fidelity of physical CNOTs implemented on 3 independent pairs, which is most comparable to the logical transversal CNOT gate for the $\code{8,3,2}$ code, the pseudothreshold increases to $14\%$.  

\section{Model for fidelity versus acceptance fraction}
\label{app:powerlaw}

As we increase the amount of postselection in the encoded circuit, the XEB improves but the number of discarded samples grows as well. In \cref{fig:encoded sampling}(a) of the main text and in the experimental data (Fig. 5(e)) of Ref.~\cite{bluvstein_logical_2024}, the relationship between the XEB and the fraction of accepted samples reduces to a power law in a certain regime. 

Here, we provide a simple model based on the $\code{8,3,2}$ code that reproduces this behavior.
While the exponent will differ for a different code, we expect the power-law behaviour to generalize. 
We consider $n$ logical qubits encoded into $n/3$ blocks of the $\code{8,3,2}$ code, which experience an i.i.d.\ dephasing channel with rate $p$ on all physical qubits, right before measurement in the $X$ basis. For each block, we can evaluate the probability of a  logical $Z$ errors conditioned on whether the single $X$ stabilizer is violated or not; this is enough to recover the power law behavior mentioned above. The probability of having $k\,{\in}\,\{0,\ldots ,8\}$ errors in a block is
\begin{equation}
    P_k = \binom{8}{k} p^k (1-p)^{8-k},
\end{equation}
and the corresponding probability of violating the stabilizer is
\begin{equation}
    \epsilon = \sum_{k\in \{1,3,5,7\}} P_k,
\end{equation}
which corresponds to having an odd number of errors. The logical operators within a block are defined up to multiplication with stabilizers, but in practice we choose a particular operator realization, e.g., the three sides of a cube highlighted in \cref{fig:fig1}(d). 
Then, for every odd number of errors, we have a $7/8$ probability of having at least one logical fault, since for any choice of logical operators, a single physical qubit does not participate in any of them. 
 Similarly, the probability of having a logical fault given an even number of physical errors is $\{0,1,4/5,1,0\}$ for $\{0,2,4,6,8\}$ errors, respectively. Together, this results in probabilities  $p_l^{s}$ of logical faults conditioned on whether a stabilizer is violated ($s = -$) or not ($s = + $), which are given by
\begin{align}
    p_l^{-} &= 7/8, \\
    p_l^{+} &= \frac{1}{1-\epsilon}(P_2+\frac{4}{5}P_4 + P_6).
\end{align}
Given $n_b=n/3$ blocks, the probability of having exactly $j$ stabilizers violated is
\begin{equation}
    S_j = \binom{n_b}{j}\epsilon^j (1-\epsilon)^{n_b-j},
\end{equation}
and the probability of having no logical error in that situation is
\begin{equation}
    F_j = (1-p_l^{-})^j (1-p_l^{+})^{n_b-j}.
\end{equation}
The acceptance fraction for up to $m$ violated stabilizers is then
\begin{align}
    \bar{S}_m = \sum_{j \leq m} S_j,
    \label{eq:accept fraction}
\end{align}
and the fidelity (i.e., the probability of no logical errors) is
\begin{align}
    \label{eq:fidelity fraction}
    \bar{F}_m = \frac{1}{\bar{S}_m} \sum_{j \leq m }S_j F_j,
\end{align}
which results in an approximate power law (when evaluated numerically) for the fidelity $\bar{F}_m$ as a function of acceptance fraction $\bar{S}_m$. This is in good qualitative agreement with circuit-level simulations such as those in \cref{fig:encoded sampling}(a). 

Finally, we provide an analytical formula for the power-law exponent in the low-noise, highly post-selected regime. Assuming a power-law relationship $\bar{F}_m = a\, \bar{S}_m^\alpha$, we can estimate $\alpha$ in the post-selected regime from $\bar{F}_1/\bar{F}_0$ and $\bar{S}_1/\bar{S}_0$ using \cref{eq:accept fraction,eq:fidelity fraction}. 
To first order in $p$, these are given by
\begin{align}
    \bar{F}_1/\bar{F}_0 &= 1-7n_b\, p + O(p^2),\\
    \bar{S}_1/\bar{S}_0 &= 1+8n_b\,p + O(p^2),
\end{align}
which gives $\alpha=-7/8$ and is, to first order, independent of both $p$ and $n_b$. In Fig.~\ref{fig:encoded sampling}(a), we show a power law fit with this value of $\alpha$, and observe good agreement with simulated data.

\section{Statistical model for second moment quantities of IQP circuits}
\label{app:statmech}

To characterize random circuits in terms of their output distributions in the noiseless case and their quality of implementation in the presence of noise, it is useful to analyze their \emph{second-moment quantities}. 
Second-moment quantities include all quantites which can be expressed as average quantities that are linear in two tensor copies of the random circuits applied to a fixed initial state.
Most importantly for us, the average collision probability,  the cross-entropy benchmark, the average fidelity and the average purity are all second-moment quantities. 
In this work, we are interested in those quantities for random degree-3 IQP circuits. 

In this section, we derive a framework for analysing the second-moment properties of ideal and noisy circuits with some random IQP degree-$2$ gates whose output is a polynomial binary phase state
\begin{align}
    \ket {\psi} = \sum_x (-1)^{f(x)} \ket x,
\end{align}
for some degree-$2$ Boolean polynomial $f$.
This is simpler than analyzing degree-$3$ circuits directly, and will allow us to make statements about random degree-$3$ circuits as well, since we can fix the degree-$3$ part of a circuit and then average over its degree-$2$ part.

Importantly, our framework will \emph{not} require the IQP circuits to be \emph{globally random} but only \emph{locally random}. 
Specifically, we consider circuit architectures, in which the first layer is a layer of Hadamard gates, on certain pairs $(i,j)$ of qubits, a random degree-$2$ IQP circuit $Z_i^{u}Z_j^v CZ_{ij}^w$ for $u,v,w \in \bin$ is applied. All other components of the (ideal) circuit are arbitrary gates which map binary phase states to binary phase states. 
We also allow for noise which preserves binary phase states.

Choose an IQP circuit $D \sim \mf I$ from some IQP ensemble $\mf I$ satisfying the properties above.
Let us then write $\rho_D(\mc N)$ for the mixture of binary phase states generated by $D$ with noise specified by $\mc N$. 
The ideal state is given by $\rho_D(\{\id\}) = D \projx 0 D$.
Our central quantity of interest is now the \emph{second-moment operator }
\begin{align}
\label{eq:iqp second moment}
    M_2(\mc N, \mc N') = \mb E_{D \sim \mf I} \left[ \rho_D(\mc N'), \rho_D(\mc N) \right].
\end{align}
To be most general, we allow different types of noise on the two copies. 

The average collision probability $\overline C$, average XEB fidelity $\overline \chi$, average fidelity $\overline F$, and average purity $\overline P$ can be written in terms of \cref{eq:iqp second moment} as 
\begin{align}
    \overline C & = \tr[ \Xx^{\otimes n} M_2(\{\id\}, \{\id\})] \label{eq:collision second moment}\\
     \overline \chi & = 2^n \tr[ \Xx^{\otimes n} M_2(\{\id\}, \mc N)]-1 \label{eq:xeb second moment}\\
    \overline F & = \tr[ \mb S^{\otimes n} M_2(\{\id\}, \mc N)] \label{eq:fidelity second moment}\\
    \overline P & = \tr[\mb S^{\otimes n} M_2(\mc N, \mc N)] \label{eq:purity second moment}
\end{align}
where we have defined $\mb S = \sum_{x,y \in \bin} \ket {xy}_a \bra{yx}_a$ as the single-qubit swap operator between the two copies, and $\mb X_a = \sum_{x \in \bin} \ket {xx}_a \bra{xx}_a$ as the single-qubit `delta projector' in the $a$-basis for $a \in \{x,z\}$.
Note that while the collision probability involves no noise on either copy in $M_2$, the average fidelity and average XEB fidelity involve noise on one copy and no noise on the other copy. 
In contrast, the purity involves noise on both copies.

Let us now derive the formalism. 
To start out, we consider $D$ chosen uniformly at random from degree-2 IQP circuits. 
As discussed in \cref{ssec:phase states}, each such circuit is characterized by a degree-$2$ polynomial $f(x) = \sum_{ij}b_{ij} x_i x_j + \sum_i a_i x_i $ with uniformly random choices of $a \in \bin^n,b \in \bin^{n \times n}$ specifying the locations of the $Z$ and $CZ$ gates. 

\subsection{The global second moment}

In order to  compute $M_2(\{\id\}, \{\id\})$ we build on the proof of anticoncentration of globally random degree-$2$ IQP circuits by \textcite{Bremner.2016} (Lemma 11 in App.~F).  
Let $\mf I_n$ be the family of uniformly random degree-2 IQP circuits on $n$ qubits, and $\mb P = \sum_{xy \in \bin} \ket{xx}\bra{yy}$ the unnormalized Bell state projector.

\begin{lemma}[Global projector]
\label{lem:global iqp proj}
    Consider the moment operator $M_2$ for random degree-$2$ IQP circuits in $\mf I_n$. It holds that
    \begin{align}
        M_2(\{\id\},\{\id\}) = \frac 1{2^{2n}}  (\id^{\otimes n} + \mb S^{\otimes n} + \mb P^{\otimes n} - 2 \Xz^{\otimes n}). 
    \end{align}
\end{lemma}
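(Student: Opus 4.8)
The plan is to reproduce and streamline the argument of \textcite{Bremner.2016}. First I would use the fact recalled in \cref{ssec:phase states} that a uniformly random degree-$2$ IQP circuit $D\in\mf I_n$ prepares the polynomial phase state $D\ket{+^n} = 2^{-n/2}\sum_x (-1)^{f(x)}\ket x$ with $f(x) = \sum_{i<j} b_{ij}x_i x_j + \sum_i a_i x_i$, where the $a_i$ and $b_{ij}$ are i.i.d.\ uniform on $\bin$ encoding the presence of $Z$ and $CZ$ gates. Writing the two-copy average in the computational basis and grouping the copy indices as $\ket{z_1 z_3}\bra{z_2 z_4}$ gives
\begin{equation}
    M_2(\{\id\},\{\id\}) = \frac{1}{2^{2n}}\sum_{z_1,z_2,z_3,z_4}\mb E_f\!\left[(-1)^{\sum_{k=1}^4 f(z_k)}\right]\ket{z_1 z_3}\bra{z_2 z_4}.
\end{equation}
Because the $a_i$ and $b_{ij}$ are independent, the phase average factorizes site-by-site and pair-by-pair, and each factor $\mb E_c[(-1)^{cm}] = \tfrac12(1+(-1)^m)$ equals $1$ when $m$ is even and $0$ otherwise. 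Collecting the four bits at site $i$ into a column $v_i = (z_{1,i},z_{2,i},z_{3,i},z_{4,i})\in\bin^4$, the surviving configurations are exactly those for which every $v_i$ has even weight and every pair satisfies $v_i\cdot v_j \equiv 0 \pmod 2$.

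The heart of the argument --- and the step I expect to be the main obstacle --- is to classify these ``totally isotropic'' configurations and match them to the single-qubit operators $\id,\mb S,\mb P,\Xz$. I would observe that the even-weight vectors form the $3$-dimensional code $E = \{v : v\cdot \mathbf 1 = 0\}\subset\bin^4$, that $\mathbf 1=1111$ and $0$ are orthogonal to all of $E$, and that the form induced on the quotient $E/\{0,1111\}\cong\bin^2$ is the nondegenerate symplectic (alternating) form. Every nonzero vector of a symplectic plane is isotropic, but two distinct nonzero vectors are never orthogonal, so a pairwise-orthogonal family of columns must map into a single $1$-dimensional subspace of the quotient. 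There are exactly three such lines; pulling them back to $E$ gives the four-element supports $\{0000,1100,0011,1111\}$, $\{0000,1001,0110,1111\}$ and $\{0000,1010,0101,1111\}$, which I would verify coincide precisely with the per-site configurations appearing in $\id$, $\mb S$ and $\mb P$, respectively. Their pairwise --- and triple --- intersection is $\{0000,1111\}$, which is exactly the support of $\Xz$.

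Finally I would assemble the restricted sum by inclusion--exclusion over the three events ``all columns lie in $L_{\id}$ / $L_{\mb S}$ / $L_{\mb P}$''. Each single event contributes $\id^{\otimes n}$, $\mb S^{\otimes n}$ or $\mb P^{\otimes n}$; each of the three pairwise intersections contributes $\Xz^{\otimes n}$; and the triple intersection again contributes $\Xz^{\otimes n}$, so the surviving configurations sum to $\id^{\otimes n}+\mb S^{\otimes n}+\mb P^{\otimes n} - 3\Xz^{\otimes n}+\Xz^{\otimes n}$. Dividing by $2^{2n}$ yields the claimed $\frac{1}{2^{2n}}(\id^{\otimes n}+\mb S^{\otimes n}+\mb P^{\otimes n}-2\Xz^{\otimes n})$. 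The only real subtlety is the bookkeeping in the classification step: one must confirm that the per-site even-weight condition together with the cross-site orthogonality conditions never couples sites beyond forcing a common line, which is exactly what the symplectic-plane argument guarantees. A brute-force enumeration of the eight even-weight vectors in $\bin^4$ would also close the argument but is considerably less transparent.
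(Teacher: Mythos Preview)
Your proposal is correct and follows the same overall route as the paper: expand $M_2$ in the computational basis, use independence of the $a_i$ and $b_{ij}$ to reduce to the parity constraints ``each column $v_i$ has even weight'' and ``$v_i\cdot v_j\equiv 0$'', classify the surviving configurations, and read off the four tensor-product operators. The only difference is in how the classification is carried out. The paper, following \textcite{Bremner.2016}, substitutes the linear constraint $z_{1k}=z_{2k}+z_{3k}+z_{4k}$ into the quadratic one and then does a pigeonhole-plus-case analysis on which two of $z_{2k},z_{3k},z_{4k}$ coincide, arriving at the three global conditions $(\id),(\mb S),(\mb P)$ with common overlap $(\Xz)$. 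Your symplectic-quotient argument is a cleaner repackaging of exactly the same combinatorics: the even-weight code $E$ modulo its radical $\{0000,1111\}$ is the symplectic plane over $\mathbb F_2$, whose three isotropic lines pull back precisely to the per-site supports of $\id,\mb S,\mb P$, and the fact that distinct nonzero classes are never orthogonal is what forces all $n$ columns into a common line. Both arguments are short; yours is more structural and makes the inclusion--exclusion bookkeeping transparent, while the paper's is more elementary and needs no algebraic vocabulary.
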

Before we prove the \lcnamecref{lem:global iqp proj}, we would like to observe the analogy with the second moment operator of the Haar measure, which is given by 
\begin{align}
\label{eq:second moment haar}
    \mb E_{U \sim U(d)}U^{\otimes 2} \proj {0} (U^\dagger)^{\otimes 2} = \frac {1}{2d_{[2]}}(\id^{\otimes n} + \mb S^{\otimes n}) ,
\end{align}
 where $d_{[2]}$ is the dimension of the symmetric subspace of two $d$-dimensional systems.  
While the diagonal action of the full unitary group is only invariant under a global swap of the two copies, giving rise to the swap operator in the moment formula \eqref{eq:second moment haar}, the diagonal action of random IQP circuits on binary phase states has additional invariances, namely, precisely those captured by the $\mb P$ and $\Xz$ operators.

\begin{proof}
To show \cref{lem:global iqp proj}, we expand
\begin{align}
\label{eq:iqp second moment expanded}
    M_2(\{\id\},\{\id\}) =  \frac 1 {2^{2n}}\sum_{z_1, \ldots, z_4} \mb E_{f} \left[(-1)^{\sum_i f(z_{i})}\right] \ket{z_1 z_3} \braz{z_2z_4},
\end{align}
and observe that 
\begin{multline}
\label{eq:average phase state coefficient}
    \mb E_{f} \left[(-1)^{\sum_i f(z_i)} \right]\\\propto \sum_{a \in \bin^n} \sum_{b \in \bin^{n \times n}} (-1)^{\sum_i \left(\sum_{kl}b_{kl} z_{ik} z_{il} + \sum_k a_k z_{ik} \right)}. 
\end{multline}
Now, we find criteria on the $z_i$ that have to be satisfied for the coefficient \eqref{eq:average phase state coefficient} to be nonzero.  
Following Ref.~\cite{Bremner.2016} this average results in three constraints on the sums, 
\begin{align}
   (\id) \quad&z_1=z_2\eqqcolon x \ 
    {\rm and}\ z_3=z_4 \eqqcolon y,\quad {\rm or}\nn \\ 
   (\mb S)\quad &z_1=z_4 \eqqcolon x\ {\rm and}\ z_2=z_3 \eqqcolon y ,\quad {\rm or}\nn \\
  (\mb P)\quad  &z_1=z_3 \eqqcolon x \ {\rm and}\  z_2=z_4 \eqqcolon y . \label{seq:z_constr}
\end{align}

For the sake of completeness, we recall the derivation here.
Consider the $k$-th bits $z_{1k},z_{2k},z_{3k},z_{4k}$ of the vectors $z_i$. 
Then any nonzero term in the exponent will result in the sum to vanish. 

Hence, the sum over $a_k$ combined with the linear term $\sum_{k} a_{k} z_{ik}$ enforces $z_{1k} = z_{2k}+z_{3k}+z_{4k}$.  
Substituting this, the sum over $b_{kl}$ now enforces that
\begin{equation}
\begin{split}
    0 &= z_{1k}z_{1l}+z_{2k}z_{2l}+z_{3k}z_{3l}+z_{4k}z_{4l} \label{seq:2constr} \\
    &= (z_{2k}+z_{3k}+z_{4k})(z_{2l}+z_{3l}+z_{4l})  \\
    &+z_{2k}z_{2l}+z_{3k}z_{3l}+z_{4k}z_{4l} \\
    & = z_{2k}(z_{3l}+z_{4l}) + z_{3k}(z_{2l}+z_{4l}) + z_{4k}(z_{2l}+z_{3l}),  
\end{split}
\end{equation}
where we removed terms that appeared an even number of times ($z_{2k}z_{2l}$ etc.) since they evaluate to $0$.  From the pigeonhole principle, we know that at least two of the $\{z_{2k},z_{3k},z_{4k}\}$ must be the same.

To show constraint $(\id)$, choose $z_{3k}\,{=}\,z_{4k}\,{=}\,y_k$ and rewrite the above condition as,
\begin{align*}
    (z_{2k}+y_k)(z_{3l}+z_{4l}) = 0,
\end{align*}
where we again removed the term $2z_{2l}y_k $. 
There are now two cases.  
In the first case, if $z_{2k}\,{\neq}\,y_k$ we immediately obtain $z_{3l}\,{=}\,z_{4l}$ for all $l$ and thus $z_3\,{=}\,z_4$ and $z_{1}\,{=}\,z_{2}$. 
Otherwise, if $z_{2k}\,{=}\,y_k$, then we have $z_{1k}\,{=}\,z_{2k}\,{=}\,z_{3k}\,{=}\,z_{4k}$ which is covered by all of the conditions in \cref{seq:z_constr}. 
The equality constraint  $z_{1l}\,{=}\,z_{2l}\,{=}\,z_{3l}\,{=}\,z_{4l}$ then enforces one of the conditions in \cref{seq:z_constr} on the $l$-th bit.
If the $z_{il}$ are not all equal, we can apply the first case to obtain one of the conditions in \cref{seq:z_constr} for all bits. 
Constraints $(\mb S)$ and $(\mb P)$ follow from the same argument starting with the choice $z_{2k} = z_{3k}$ and $z_{2k} = z_{4k}$, respectively.

The only overlap between the conditions is when 
\begin{align}
\label{eq:delta condition}
    (\Xz) \quad z_1\,{=}\,z_2\,{=}\,z_3\,{=}\,z_4\,{=}\,x. 
\end{align}
Thus, we can evaluate these three conditions independently and then subtract the over-counted terms corresponding to all four bitstrings being equal. 

Each of the conditions $(\id), (\mb S), (\mb P), (\Xz)$ in Eqs.~\eqref{seq:z_constr} and \eqref{eq:delta condition} results in the averaged exponents in \cref{eq:average phase state coefficient} to be trivially even-valued, since every term will appear in pairs.
Thus, the average coefficient evaluates to $1$. 
The statement of the lemma follows from inserting the conditions in \cref{eq:iqp second moment expanded}.
\end{proof}

\cref{lem:global iqp proj} directly implies the anticoncentration lemma of \textcite{Bremner.2016}.
\begin{corollary}[Anticoncentration of random IQP]
\label{cor:bremner anticoncentration}
    The average collision probability of random degree-2 IQP circuits is given by $3 \cdot 2^{-n} - 2^{-2n+1}$. 
\end{corollary}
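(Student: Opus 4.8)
The plan is to directly evaluate the trace expression for the average collision probability using the explicit form of the second-moment operator just derived. By \cref{eq:collision second moment} we have $\overline C = \tr[\Xx^{\otimes n} M_2(\{\id\},\{\id\})]$, and \cref{lem:global iqp proj} expresses $M_2(\{\id\},\{\id\})$ as a sum of four tensor-power operators $\id^{\otimes n}$, $\mb S^{\otimes n}$, $\mb P^{\otimes n}$ and $\Xz^{\otimes n}$, each weighted by $\pm 2^{-2n}$. Substituting this expression, the whole computation reduces to evaluating four traces of the form $\tr[\Xx^{\otimes n} A^{\otimes n}]$ for $A \in \{\id, \mb S, \mb P, \Xz\}$.

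The key simplification I would exploit is that all of these operators factorize across the $n$ two-copy sites, so that $\tr[\Xx^{\otimes n} A^{\otimes n}] = (\tr[\Xx A])^n$, reducing everything to four single-site traces on the four-dimensional two-copy Hilbert space. First I would record $\tr[\Xx] = 2$, using $\Xx = \proj{++} + \proj{--}$, which handles the $\id$ term. Next, since both $\ket{++}$ and $\ket{--}$ are product states identical across the two copies, the swap acts trivially on each, giving $\tr[\Xx\mb S] = 2$. For the Bell-projector term I would check that $\mb P$ sends both $\ket{++}$ and $\ket{--}$ to $\ket{00}+\ket{11}$ with unit overlap, so $\tr[\Xx\mb P] = 2$ as well.

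The one genuinely nontrivial single-site trace---and the source of the $-2^{-2n+1}$ correction---is $\tr[\Xx\Xz]$, which pairs the $x$-basis delta projector $\Xx = \proj{++} + \proj{--}$ against the $z$-basis delta projector $\Xz = \proj{00} + \proj{11}$. Here I would use $|\langle \pm\pm | bb\rangle|^2 = 1/4$ for $b \in \{0,1\}$, so summing the four cross terms gives $\tr[\Xx\Xz] = 1$ rather than $2$. Assembling the four contributions with the signs from \cref{lem:global iqp proj} then yields $\overline C = 2^{-2n}(2^n + 2^n + 2^n - 2\cdot 1) = 3 \cdot 2^{-n} - 2^{-2n+1}$, as claimed.

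I expect the main (and essentially only) obstacle to be bookkeeping: correctly tracking the basis conventions for $\Xx$ versus $\Xz$, and for $\mb P$ which is defined in the computational basis, since a slip in any single-site overlap would alter the final constant. Everything else is a short finite-dimensional calculation with no analytic difficulty, and the factorization across sites makes the passage from single-site traces to the $n$-qubit answer immediate.
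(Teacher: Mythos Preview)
Your proposal is correct and follows essentially the same approach as the paper: substitute \cref{lem:global iqp proj} into \cref{eq:collision second moment}, factorize across sites, and evaluate the four single-site traces. The only cosmetic difference is that the paper shortcuts the first three traces by noting $\tr[\Xx\,\mb O]=\tr[\Xz\,\mb O]$ for $\mb O\in\{\id,\mb S,\mb P\}$ via Hadamard invariance, whereas you compute them directly in the $x$-basis; both routes give the same numbers and the same final answer.
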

\begin{proof}
We apply \cref{lem:global iqp proj} to \cref{eq:collision second moment} to find
\begin{align}
    \overline C & = \frac 1 {2^{2n}} \tr[\Xx^{\otimes n}  (\id^{\otimes n} + \mb S^{\otimes n} + \mb P^{\otimes n} - 2 \Xz^{\otimes n})]\\
    & = \frac 1 {2^{2n}}(2^n + 2^n + 2^n - 2 ) = 3 \cdot 2^{-n} - 2^{-2n+1},
\end{align}
where we have used that $\tr[\Xx \mb O] = \tr[\Xz \mb O]$ for \mbox{$\mb O = \id, \mb S, \mb P$} since the Hadamards cancel, and $\tr[\Xz \Xx] = 1$.
\end{proof}

\subsection{Statistical mechanics model for locally random degree-2 circuits}
\label{app:stat mech mapping}

In the following, we will use the properties of IQP circuits with locally restricted random degree-2 parts. 
To this end, we follow a line of thought developed in Refs.~\cite{dalzell_random_2024,gao_limitations_2024,Ware.2023}. 
There, the observation is, that layers of single-qubit Haar-random gates in the circuit project the corresponding second-moment operator locally onto the space spanned by the $\id$ and $\mb S$ operators. 
The effect of any gate between those layers of single-qubit Haar-random gates can thus be projected onto this space. 
As it turns out, the update rules of the $\id$ and $\mb S$ operators under such two-qubit gates are quite simple. 

We will proceed to derive an analogous model that lets us analyze the second-moment properties of noisy and noiseless IQP circuits with certain non-random gates. 
The relevant state space, in the most general case, is locally spanned by the invariant subspaces of those random IQP circuits.

\begin{lemma}[IQP invariants]
\label{lem:invariants}
Let $O = \sum_{x,y,z,w=0}^{2^n-1} q(x,y,z,w) \ket{xy}\bra{zw}$ be an operator on two copies of $n$ qubits. Then
\begin{multline}
\label{eq:random iqp general rho}
    \mb E_{D \sim \mf I_n} [ D^{\otimes 2 } O D^{\otimes 2}] \\ = \sum_{x\neq y=0}^{2^n-1} \big[ q(x,y,x,y) \ket{xy}\bra{xy} + q(x,y,y,x) \ket{xy}\bra{yx} \\+ q(x,x,y,y) \ket{xx}\bra{yy} \big]+ \sum_{x=0}^{2^n-1} q(x,x,x,x) \proj{xx}. 
\end{multline}
\end{lemma}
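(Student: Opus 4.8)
The plan is to reduce this statement directly to the combinatorial computation already carried out inside the proof of \cref{lem:global iqp proj}. First I would recall that a random degree-$2$ IQP circuit $D \sim \mf I_n$ acts diagonally in the computational basis, $D\ket{x} = (-1)^{f(x)}\ket{x}$, where $f(x) = \sum_{ij}b_{ij}x_ix_j + \sum_i a_ix_i$ is a degree-$2$ Boolean polynomial with $a \in \bin^n$ and $b \in \bin^{n\times n}$ chosen uniformly at random. Consequently $D^{\otimes 2}$ is diagonal on the two-copy basis $\{\ket{xy}\}$ and acts on a single dyad as
\begin{equation}
    D^{\otimes 2} \ket{xy}\bra{zw} D^{\otimes 2} = (-1)^{f(x)+f(y)+f(z)+f(w)}\ket{xy}\bra{zw}.
\end{equation}
Expanding $O$ in the dyadic basis and exchanging the finite sum with the expectation by linearity, the task collapses to evaluating the scalar average $\mb E_{f}[(-1)^{f(x)+f(y)+f(z)+f(w)}]$ for each quadruple $(x,y,z,w)$.

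The key observation is that this is exactly the quantity analyzed in the proof of \cref{lem:global iqp proj}: there it is shown that $\mb E_{f}[(-1)^{f(z_1)+f(z_2)+f(z_3)+f(z_4)}]$ equals $1$ when the four strings partition into two equal pairs and vanishes otherwise, the trivial phase arising because each monomial then appears an even number of times in the exponent. I would simply transcribe this result under the identification $(z_1,z_2,z_3,z_4) = (x,y,z,w)$. The three pairings translate into the three surviving dyad types: the pairing $x=z,\,y=w$ yields $\ket{xy}\bra{xy}$ with weight $q(x,y,x,y)$; the pairing $x=w,\,y=z$ yields $\ket{xy}\bra{yx}$ with weight $q(x,y,y,x)$; and the pairing $x=y,\,z=w$ yields $\ket{xx}\bra{zz}$ with weight $q(x,x,z,z)$ (relabel the dummy index $z\to y$ to match the statement).

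It then remains to reconcile the overlap of these three conditions, which occurs precisely when $x=y=z=w$. Rather than the inclusion--exclusion combination $\id + \mb S + \mb P - 2\Xz$ used in \cref{lem:global iqp proj}, here I would present the cleaner accounting of \cref{eq:random iqp general rho}: restrict each of the three sums to off-diagonal indices (i.e.\ $x\neq y$) and then add the fully-diagonal term $\sum_x q(x,x,x,x)\proj{xx}$ exactly once. A one-line check confirms the two schemes agree, since each restricted sum equals the corresponding full sum minus its all-equal contribution; adding the three restricted sums plus a single copy of the all-equal term reproduces $(\text{full}_1)+(\text{full}_2)+(\text{full}_3)-2(\text{all-equal})$, which is the inclusion--exclusion value.

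I do not expect a genuine obstacle: the entire analytic content is inherited from \cref{lem:global iqp proj}, and the only care required is the index bookkeeping---correctly matching each pairing to the arguments of $q$ and avoiding triple-counting the diagonal overlap. The one pitfall worth flagging is a transposition error in the labels, for instance confusing the swap dyad $\ket{xy}\bra{yx}$ with the Bell-type dyad $\ket{xx}\bra{yy}$; the explicit pairing-to-argument dictionary above is designed precisely to forestall it.
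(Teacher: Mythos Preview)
Your proposal is correct and follows essentially the same approach as the paper. The paper's proof consists of the single line ``We follow the same argument as in the proof of \cref{lem:global iqp proj},'' and you have simply (and correctly) unpacked what that means: use diagonality of $D$ to reduce to the scalar average $\mb E_f[(-1)^{f(x)+f(y)+f(z)+f(w)}]$, import the pairing constraints derived in that earlier proof, and then do the bookkeeping to translate the three pairings into the three surviving dyad types while avoiding triple-counting the fully diagonal overlap.
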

\begin{proof}
We follow the same argument as in the proof of \cref{lem:global iqp proj}. 
\end{proof}
\cref{lem:invariants} implies that there are $3 \cdot (2^{2n} - 2^n ) + 2^n  $ linearly independent operators $\ket{xy}\bra{xy}, \ket{xy}\bra{yx}, \ket{xx}\bra{yy}$ for all $x,y \in \bin^n$ which are invariant under the average over random IQP circuits. 
The IQP-average projects an input operator $O$ onto the space spanned by these operators. 

However, for the circuits that we will be interested in, we can significantly reduce this number. 
To this end, we consider circuits in which there are layers of minimal units of random IQP circuits on two qubits interleaved with other, non-random gates.  
This is a toy model of the hIQP circuits, where we apply layers of random IQP circuits on blocks of three qubits interleaved with layers of CNOT gates. 

From \cref{lem:invariants}, it follows that there are $3 \cdot (16 - 4) + 4 = 40$ linearly independent operators on the two qubits. 
We reduce this number to $16$ operators which are written as arbitrary products of $4$ linearly independent operators per qubit, exploiting that our circuits are quite restrictive. 

Specifically, we consider the mutually orthogonal operators on two copies of a single qubit
\begin{align}
    \ix & = \id - \Xz \label{eq:ix}\\
    & = \proj{01} + \proj{10} \nonumber \\
    \sx & = \mb S - \Xz  \label{eq:sx}\\
    &= \ket{01}\bra{10} + \ket{10}\bra{01} \nonumber \\
    \px & = \mb P - \Xz  \label{eq:px}\\
    &= \ket{00}\bra{11} + \ket{11}\bra{00}  \nonumber\\
    \xx & =\Xz  \label{eq:xx}\\
    &= \proj{00} + \proj{11} \nonumber. 
\end{align}
We define the local state space to be $\mc S = \{ \ix, \sx, \px, \xx\}$. 
It will also be useful to define $\mc S_\notx = \mc S \setminus \{\xx\}$.  
The operators $\ix, \sx, \px, \xx$ directly correspond to mutually exclusive constraints $(\ix), (\sx), (\px), (\xx)$ analogous to the (non-exclusive) constraints $(\id), (\mb S), (\mb P), (\Xz)$ defined in \cref{seq:z_constr,eq:delta condition}.

The circuit components we now need to analyze are 
\begin{enumerate}
    \item state preparation of $\proj{+^n}^{\otimes 2}$,
    \item layers of random IQP circuits $\prod_{i=1}^{n/2} D_{2i,2i+1}^{\otimes 2}$, where $D_{2i,2i+1} \in \mf I_2 $ are random degree-$2$ IQP circuits on two qubits, 
    \item $\cnot_{i,j}^{\otimes 2}$ gates between any two sites $i$ and $j$, 
    \item $X$ and $Z$ noise, that is operators $X \otimes \id$, $Z \otimes \id$ applied to the two-copy state.
\end{enumerate}

Let us first consider state preparation. We find that under a $z$-twirl the two-copy state $\proj{+^n}^{\otimes 2}$ can be expressed as
\begin{align}
   \mb E_{z\in \bin} &(Z^z)^{\otimes 2} \proj{+}^{\otimes 2} (Z^z)^{\otimes 2} \\
   & = \frac{1}{8} \sum_{u,v,w,x,z \in \bin} Z^z \ket u Z^z \ket v \bra w Z^z \bra x Z^z
   \\
   & = \frac 1 {4} (\ix + \sx + \px + \xx)^{\otimes n}.
\end{align}
In the following analysis we will therefore always assume that the first layer of the circuit is a layer of random $Z$ gates. 
This is no restriction in all cases we consider since this layer can be absorbed in the random $Z$ gates throughout the circuit.

\begin{lemma}[Projection update]
\label{lem:iqp projection}
Let $Q \in \mc S$, and $P, R \in \mc S_\notx$. Then 
\begin{align}
    \mb E_{D \in \mf I_2} \left[ D^{\otimes 2} (Q \otimes \xx) D^{\otimes 2} \right] &= Q \otimes \xx \label{eq:qx}\\
    \mb E_{D \in \mf I_2} \left[ D^{\otimes 2} (\xx \otimes Q) D^{\otimes 2} \right] &= \xx \otimes Q\label{eq:xq}\\
    \mb E_{D \in \mf I_2} \left[ D^{\otimes 2} (P \otimes R) D^{\otimes 2} \right] &= \delta_{P,R} P \otimes R \label{eq:pr}, 
\end{align}
where $\delta_{P,R} = 1$ if $P=R$ and $0$ otherwise.
\end{lemma}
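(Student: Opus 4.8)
The plan is to compute each average directly using the general invariant formula from \cref{lem:invariants} specialized to $n=1$ (i.e.\ to $\mf I_1$, or rather to the two-qubit unit $\mf I_2$ acting on a product of single-site operators). First I would recall that for a single qubit, \cref{lem:invariants} tells us exactly which rank-one operators $\ket{ab}\bra{cd}$ survive the IQP average: only the diagonal terms $\ket{ab}\bra{ab}$, the swap terms $\ket{ab}\bra{ba}$, and the ``P'' terms $\ket{aa}\bra{bb}$ are preserved, everything else is projected to zero. The key observation is that $\ix,\sx,\px,\xx$ are built precisely from these surviving operator types: $\xx$ is spanned by the fully-diagonal $(\Xz)$ terms $\proj{00},\proj{11}$; $\ix$ by the off-diagonal-diagonal terms; $\sx$ by the swap terms; and $\px$ by the $\mb P$-type terms. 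So each of the four local states lies in the invariant image of the single-site IQP average and is individually preserved when tensored against an $\xx$ on the partner qubit.

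The main work is \cref{eq:pr}, the ``no-$\xx$'' case. Here I would write $P\otimes R$ for $P,R\in\mc S_{\notx}$ as an explicit sum of rank-one operators $\ket{z_1 z_3}\bra{z_2 z_4}$ on two qubits (each of $P,R$ contributing one of the three non-diagonal types), and then apply the two-qubit IQP average exactly as in the proof of \cref{lem:global iqp proj}. The crucial point is that the surviving constraints $(\id),(\mb S),(\mb P)$ from \cref{seq:z_constr} are \emph{simultaneous} constraints on all qubit positions: a term survives only if the same one of these three conditions holds at \emph{both} sites. Since $P$ fixes the constraint type at site~1 and $R$ fixes it at site~2, the only surviving contributions require $P$ and $R$ to correspond to the \emph{same} constraint, which is exactly the statement $\delta_{P,R}$. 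When $P=R$ the constraint is globally satisfied and the term is preserved with coefficient~$1$; when $P\ne R$ no joint constraint holds (the per-site $z$-configurations are incompatible) and the average vanishes.

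For \cref{eq:qx,eq:xq} I would argue more simply: since $\xx$ already sits in the invariant subspace spanned by the $(\Xz)$ delta-projectors, and since the IQP average factorizes through the per-site constraint analysis, the $\xx$ factor forces $z_1=z_2=z_3=z_4$ on its site and thereby trivially satisfies all three constraints there, leaving the partner site free to be any of the four invariant operators $Q\in\mc S$. Hence $Q\otimes\xx$ and $\xx\otimes Q$ are fixed points. The main obstacle I anticipate is purely bookkeeping: carefully tracking, in the expansion \eqref{eq:iqp second moment expanded} restricted to two qubits, which of the $40$ a priori invariant operators of \cref{lem:invariants} the products $P\otimes R$ decompose into, and verifying that the cross terms (where site~1 satisfies one constraint and site~2 another) genuinely average to zero rather than contributing overlap terms analogous to the $(\Xz)$ over-counting in \cref{lem:global iqp proj}. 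Once the orthogonality of $\{\ix,\sx,\px,\xx\}$ and their identification with mutually exclusive constraints (as noted after \cref{eq:xx}) is invoked, this reduces to checking that distinct constraints impose incompatible conditions on the four bitstrings, which the pigeonhole argument in the proof of \cref{lem:global iqp proj} already supplies.
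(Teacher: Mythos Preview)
Your proposal is correct and takes essentially the same approach as the paper: both reduce the claim to the constraint analysis of \cref{lem:invariants} (equivalently, the per-bit pigeonhole argument of \cref{lem:global iqp proj}), observing that the four local states $\ix,\sx,\px,\xx$ correspond to the mutually exclusive constraints defined just after \cref{eq:xx}, so that $\xx$ on one site is compatible with anything while $P\neq R\in\mc S_\notx$ impose incompatible global constraints. The paper's proof is a one-sentence invocation of \cref{lem:invariants}; your bookkeeping worry about $(\Xz)$-type overcounting is already resolved by the paper's choice of the \emph{mutually exclusive} basis $\{\ix,\sx,\px,\xx\}$ rather than $\{\id,\mb S,\mb P,\Xz\}$.
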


\begin{proof}
The proof follows immediately from \cref{lem:invariants}, observing that all operators $\xx \otimes Q$ and $Q \otimes \xx$ satisfy one of the constraints $(\ix), (\sx), (\px), (\xx)$, while $P \otimes R $ for $P \neq R$ does not satisfy any. 
\end{proof}

In the next step, we analyze the behaviour of the operators under CNOT operators. 
\begin{lemma}[CNOT update]
\label{lem:cnot update}
Let $P,Q \in \mc S$. 
Then 
\begin{align}
    \cnot\tc (P \otimes Q) \cnot\tc = P \otimes \varepsilon_P(Q), 
\end{align}
where we define the permutation $\varepsilon_P: \mc S \rightarrow \mc S$ according to the following table 
\begin{center}
\begin{tabular}{c >{\ } c<{\quad}  c  >{\quad} c>{\quad}c }
    \toprule
   $Q$ & $\xx$ & $\ix$ & $\sx$ & $\px$  \\ \midrule\midrule
   $\varepsilon_{\xx}(Q)$ & $\xx$ & $\ix$ & $\sx$ & $\px$ \\
   $\varepsilon_{\ix}(Q)$ & $\ix$ & $\xx$ & $\px$ & $\sx$ \\
   $\varepsilon_{\sx}(Q)$ & $\sx$ & $\px$ & $\xx$ & $\ix$ \\
   $\varepsilon_{\px}(Q)$ & $\px$ & $\sx$ & $\ix$ & $\xx$ \\
   \bottomrule
\end{tabular}    . 
\end{center}
\end{lemma}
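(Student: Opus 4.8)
The plan is to verify the claimed permutation table by direct computation, exploiting the fact that each of the four local operators in $\mc S$ is a rank-two operator built from a pair of two-copy computational basis states, and that $\cnot^{\otimes 2}$ acts as a permutation on the four-bit strings labelling $\ket{x_1 x_2}\bra{y_1 y_2}$ across the two copies. First I would fix notation: write a generic element of $\mc S \otimes \mc S$ as a sum of terms $\ket{a b}\bra{c d}$, where $a,b$ are the two-copy ket-indices on the control and target qubit and $c,d$ the corresponding bra-indices, each in $\bin^2$. The key observation is that $\cnot^{\otimes 2}$ acts on the control-copy pair as the identity and maps the target-copy bits $t_i \mapsto t_i \oplus c_i$ for $i \in \{1,2\}$, where $c_i$ is the control bit in copy $i$. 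Since $\cnot$ is its own inverse and a permutation of basis states, conjugation sends $\ket{ab}\bra{cd}$ to $\ket{a\,b'}\bra{c\,d'}$ with $b' = b \oplus (a\text{-dependent mask})$ and $d' = d \oplus (c\text{-dependent mask})$; crucially the control indices $a,c$ are untouched, which is why the output factorizes as $P \otimes \varepsilon_P(Q)$ with the control factor $P$ unchanged.

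Next I would establish the factorization structure abstractly before grinding the table. Because each $P \in \mc S$ is supported on control-copy index pairs $(a,c)$ that are either both in $\{00,11\}$ (for $\xx,\px$) or both in $\{01,10\}$ (for $\ix,\sx$), and because the mask applied to the target depends only on the control bit pattern, the action on the target factor is well-defined and uniform across the (at most two) terms of $P$. This is the step that justifies writing $\varepsilon_P$ as a single permutation of $\mc S$ rather than something that mixes the control structure — I would state this as the content of the factorization and note that it is exactly the reason the map closes on $\mc S$. I expect this is the conceptually important step: confirming that the target-copy image lands back inside $\{\xx,\ix,\sx,\px\}$ for every $(P,Q)$ pair, i.e. that $\mc S$ is invariant under the CNOT conjugation, rather than leaking into operators outside the span.

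With the structure in hand, the remaining work is the $4 \times 4$ case check. For the control factor $P = \xx = \proj{00}+\proj{11}$, the control copies carry identical bits in both copies, so the target mask is $(0,0)$ on the $\ket{00}$ term and $(1,1)$ on the $\ket{11}$ term — either way a copy-symmetric flip that permutes target basis states consistently, and one checks it fixes every $Q$, giving the identity row. For $P = \ix$ the control has bits $01$ or $10$ (differing between copies), so the target flip acts on exactly one copy; this swaps $\xx \leftrightarrow \ix$ (single-copy bit flip turns $\proj{00}+\proj{11}$ into $\proj{01}+\proj{10}$ and vice versa) and swaps $\sx \leftrightarrow \px$ by the analogous action on the off-diagonal terms. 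The rows for $\sx$ and $\px$ follow the same accounting, with the off-diagonal bra-ket structure tracked carefully. I would present one representative row ($P = \ix$) in full and assert the others by identical computation.

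The main obstacle I anticipate is purely bookkeeping: correctly tracking how the single-copy versus double-copy control flips act on the \emph{off-diagonal} terms $\ket{00}\bra{11}$ and $\ket{01}\bra{10}$ that distinguish $\px$ and $\sx$ from $\xx$ and $\ix$, since there the ket and bra indices are flipped by different control bits if the control copies disagree. The potential for sign or index errors is real, so I would double-check the $\ix$ and $\sx$ rows by verifying consistency with $\cnot^{\otimes 2}$ being an involution (so $\varepsilon_P \circ \varepsilon_P = \mathrm{id}$ whenever $P$ is fixed, which holds since each row is its own inverse as a permutation) and with the commutation of $\cnot$ past $Z$-type structure, which pins down the $\sx/\px$ exchange independently.
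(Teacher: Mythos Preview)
Your proposal is correct and follows essentially the same approach as the paper: expand the control operator $P$ in the computational basis, observe that each term induces an $X$-mask on the target (via the CNOT action on each copy), verify that the two terms of $P$ induce the \emph{same} permutation on $Q$, and then read off the table by a case check with $P=\ix$ worked explicitly. The only organizational difference is that the paper factors the ``what does each $X$-pattern do to $Q$'' step into a separate auxiliary lemma (their $X$-updates lemma), whereas you do it inline; your involution sanity check $\varepsilon_P\circ\varepsilon_P=\mathrm{id}$ is a nice addition the paper does not mention.
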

Notice that for any input $Q \in \mc S_\notx$, the permutation $\varepsilon_Q$ swaps $Q$ with $\xx$ as well as the other two states in $\mc S$, while $\epsilon_{\xx}(Q) = Q$. 
In order to prove \cref{lem:cnot update}, and to show the behaviour under $X$ noise, the following lemma will be  convenient. 

\begin{lemma}[$X$ updates]
\label{lem:one copy x}
Let $P \in \mc S$. Then 
\begin{align}
    (X \otimes X ) P (X \otimes X) =  (\id \otimes \id ) P (X \otimes X) = \varepsilon_{\xx}(P)\label{eq:x xx}\\
    (\id \otimes X ) P (\id \otimes X) =  (X \otimes \id ) P (X \otimes \id) = \varepsilon_{\ix}(P) \label{eq:x ix}\\
    (X \otimes \id ) P (\id \otimes X) =  (\id \otimes X) P (X \otimes \id) = \varepsilon_{\sx}(P) \label{eq:x sx}\\
    (X \otimes X ) P (\id \otimes \id) =  (\id \otimes \id ) P (X \otimes X) = \varepsilon_{\px}(P) \label{eq:x px}
\end{align}
\end{lemma}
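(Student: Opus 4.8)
The plan is to prove all the identities in \cref{lem:one copy x} by a direct computation in the four-dimensional two-copy Hilbert space of a single physical qubit, organized around the observation that the relevant $X$-insertions act simply as basis permutations. First I would record the four operators of $\mc S$ explicitly in the computational basis ordered as $\ket{00},\ket{01},\ket{10},\ket{11}$: the states $\xx$ and $\ix$ are the projectors onto the two-dimensional sectors $\mathrm{span}\{\ket{00},\ket{11}\}$ and $\mathrm{span}\{\ket{01},\ket{10}\}$, while $\sx$ and $\px$ are the off-diagonal ``flip'' operators exchanging $\ket{01}\leftrightarrow\ket{10}$ and $\ket{00}\leftrightarrow\ket{11}$, respectively. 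Then I would note that $X\otimes X$, $\id\otimes X$ and $X\otimes\id$ each act on the four basis vectors as one of the three nontrivial double transpositions $(00\,11)(01\,10)$, $(00\,01)(10\,11)$ and $(00\,10)(01\,11)$; together with the identity these form a Klein four-group, under which $\mc S$ is permuted.

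With this in hand, each identity in \cref{eq:x xx,eq:x ix,eq:x sx,eq:x px} reduces to a relabeling computation, using that left multiplication by an $X$-pattern permutes the kets of $P$ and right multiplication permutes the bras. For the symmetric lines \cref{eq:x xx} and \cref{eq:x ix} the same pattern appears on both sides, so the diagonal operators $\xx,\ix$ have their support permuted while the flip operators $\sx,\px$ have both indices permuted; reading off the image shows that it lands back in $\mc S$ exactly as prescribed by the table defining $\varepsilon_Q$ in \cref{lem:cnot update}. For the mixed lines \cref{eq:x sx} and \cref{eq:x px}, where the left and right patterns differ, the same relabeling sends a diagonal operator to an off-diagonal one and vice versa, again reproducing the tabulated permutation.

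To avoid checking every case twice, I would exploit the fact that every element of $\mc S$ is invariant under the copy-swap, i.e.\ $\mb S P \mb S = P$ for $P \in \mc S$, where $\mb S=(01\,10)$. Since $\id\otimes X = \mb S\,(X\otimes\id)\,\mb S$, the two sandwich-forms listed on each line of \cref{lem:one copy x} are conjugate to one another by $\mb S$, so their equality is automatic once one form is verified: conjugating $A P B$ by $\mb S$ fixes $P$ and turns the $X$-pattern $(A,B)$ into its partner, while the resulting operator, again lying in $\mc S$, is itself $\mb S$-invariant. This halves the computation and explains structurally why the two operations on each line coincide.

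The only real care needed is the bookkeeping of which insertion acts on kets versus bras in the asymmetric cases, and tracking the off-diagonal images correctly; this is mechanical rather than conceptual, and I expect it to be the main (but minor) obstacle. The payoff I would emphasize is interpretive: the four lines correspond precisely to the four control-sectors $\xx,\ix,\sx,\px$ of a transversal $\cnot^{\otimes 2}$, since a control term $\ket{c_1 c_2}\bra{c_1' c_2'}$ conjugates the target operator by $X^{c_1}\otimes X^{c_2}$ on the left and $X^{c_1'}\otimes X^{c_2'}$ on the right. \cref{lem:one copy x} thus packages exactly the target-conjugations needed to assemble the CNOT update of \cref{lem:cnot update}, which is where I would deploy it next.
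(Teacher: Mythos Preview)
Your proposal is correct and matches the paper's proof, which is literally the single sentence ``we just expand $P$ in the computational basis.'' Your Klein-four framing and the tie-in to the control sectors of \cref{lem:cnot update} are useful glosses on that one line.

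One small correction to your shortcut: the $\mb S$-conjugacy argument does not work uniformly across all four lines. It does pair the two forms on \eqref{eq:x ix} and \eqref{eq:x sx}, but $\mb S$ commutes with both $X\otimes X$ and $\id\otimes\id$, so conjugation by $\mb S$ fixes each form on \eqref{eq:x xx} and \eqref{eq:x px} rather than interchanging them. For \eqref{eq:x px} the two listed forms are instead related by Hermitian conjugation, which suffices since every $P\in\mc S$ is self-adjoint. For \eqref{eq:x xx} as printed, note that the second form $(\id\otimes\id)P(X\otimes X)$ does not actually equal $\varepsilon_{\xx}(P)=P$ (take $P=\xx$, which gives $\px$); to match the $\proj{00}$ sector of the $\xx$ control it should read $(\id\otimes\id)P(\id\otimes\id)$, which is a typo in the statement rather than a problem with your method. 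None of this undermines the plan of direct computation.
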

To prove the lemma, we just expand $P$ in the computational basis. 

\begin{proof}[Proof of \cref{lem:cnot update}]    
Given \cref{lem:one copy x}, the \lcnamecref{lem:cnot update} follows immediately from expanding $P$ in the computational basis and observing that every $1$-entry in $P$ results in an $X$ applied to the corresponding entry of $Q$. 
For each $P$, the patterns o $1$ entries in $P$, correspond exactly to one of the cases in \cref{eq:x xx,eq:x ix,eq:x sx,eq:x px}. 
For instance, for $P \otimes Q = \ix \otimes Q$, we have
\begin{multline}
\cnot\tc (\ix \otimes Q) \cnot \tc \\
= \proj{01} (\id \otimes X ) P (\id \otimes X) + \proj{10}(X \otimes \id ) P (X \otimes \id)\\
=  \proj{01} \varepsilon_{\ix} (Q) + \proj{10} \varepsilon_{\ix}(Q)  = \ix \otimes \varepsilon_{\ix}(Q),
\end{multline}
and likewise for the other cases. 
\end{proof}

Finally, we need to analyze noise on one copy of the second moment operator. We do so for single-qubit $X$ and $Z$ noise
The case of $X$ noise follows immediately from \cref{lem:one copy x}. 
For $Z$ noise we find 
\begin{lemma}[$Z$ noise]
\label{lem:z noise}
We have that  
\begin{align}
\label{eq:z noise}
    ( Z \otimes \id ) P (Z \otimes \id) = \begin{cases}
        P & \text{if } P \in \{\ix, \xx\}\\
        - P & \text{if } P \in \{\sx, \px \}. 
    \end{cases}
\end{align}
\end{lemma}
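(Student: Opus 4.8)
The plan is to prove \cref{lem:z noise} by direct computation, expanding each operator $P \in \mathcal S$ in the computational basis and tracking how conjugation by $Z \otimes \id$ acts on each term. The key observation is that $Z$ acts diagonally, with $Z\ket 0 = \ket 0$ and $Z\ket 1 = -\ket 1$, so conjugating $\ket a \bra b$ by $Z$ on the first copy produces a sign $(-1)^{a \oplus b}$ where $a,b$ are the first-copy bits. Thus $(Z \otimes \id)\ket{a_1 a_2}\bra{b_1 b_2}(Z \otimes \id) = (-1)^{a_1 \oplus b_1}\ket{a_1 a_2}\bra{b_1 b_2}$, and everything reduces to reading off the first-copy bits of each basis term.

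Concretely, I would go through the four cases using the definitions \eqref{eq:ix}--\eqref{eq:xx}. For $\xx = \proj{00} + \proj{11}$, both terms are diagonal ($a = b$), so $a_1 \oplus b_1 = 0$ and the sign is $+1$, leaving $\xx$ invariant. For $\ix = \proj{01} + \proj{10}$, again both terms are diagonal, giving sign $+1$ and leaving $\ix$ invariant. For $\sx = \ket{01}\bra{10} + \ket{10}\bra{01}$, the first term has first-copy bits $a_1 = 0, b_1 = 1$ so the sign is $(-1)^1 = -1$, and the second term has $a_1 = 1, b_1 = 0$ so again $-1$; hence $\sx \mapsto -\sx$. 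For $\px = \ket{00}\bra{11} + \ket{11}\bra{00}$, the first term has $a_1 = 0, b_1 = 1$ (sign $-1$) and the second has $a_1 = 1, b_1 = 0$ (sign $-1$), so $\px \mapsto -\px$. This exactly reproduces the case distinction in \eqref{eq:z noise}.

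Since there is genuinely no obstacle here — the computation is a two-line check per state — I would present it compactly. The conceptual point worth stating is \emph{why} this is the relevant structure: the states $\ix, \xx$ contain only terms in which the two copies agree in each bit position (they are ``diagonal'' in the sense that the first-copy index of $\ket\cdot\bra\cdot$ matches between bra and ket), whereas $\sx, \px$ encode a mismatch between the two copies on the first qubit, and it is precisely this mismatch that $Z$ noise detects via a sign. This mirrors the role these states play elsewhere in the statistical-mechanics model: $\xx$ and $\ix$ survive $X$-$Y$-symmetric noise while $\sx, \px$ pick up cancelling signs under $Z$ noise, which is the mechanism underlying the XEB-versus-fidelity analysis. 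A symmetric remark shows that $Z$ noise on the \emph{second} copy would give the same signs, since the states are invariant under swapping the two copies; one could note this to justify that the choice of which copy carries the noise is immaterial for the $Z$ case.

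\begin{proof}[Proof of \cref{lem:z noise}]
Conjugation by $Z$ acts diagonally in the computational basis as $Z \ket a = (-1)^a \ket a$. Hence for any basis element $\ket{a_1 a_2}\bra{b_1 b_2}$ on the two copies,
\begin{equation}
    (Z \otimes \id)\, \ket{a_1 a_2}\bra{b_1 b_2}\, (Z \otimes \id) = (-1)^{a_1 \oplus b_1}\, \ket{a_1 a_2}\bra{b_1 b_2},
\end{equation}
where $a_1, b_1$ are the bits on the first copy. For $\xx = \proj{00} + \proj{11}$ and $\ix = \proj{01}+\proj{10}$ every term is diagonal, so $a_1 = b_1$ and the sign is $+1$, leaving both operators invariant. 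For $\sx = \ket{01}\bra{10} + \ket{10}\bra{01}$ and $\px = \ket{00}\bra{11} + \ket{11}\bra{00}$ each term has $a_1 \neq b_1$ on the first copy, so the sign is $-1$ for both terms, giving $\sx \mapsto -\sx$ and $\px \mapsto -\px$. This establishes \eqref{eq:z noise}.
\end{proof}
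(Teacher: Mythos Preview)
Your proof is correct and follows exactly the approach the paper indicates: the paper simply states that ``the lemma follows from direct calculation in the computational basis,'' and your explicit case-by-case expansion is precisely that calculation carried out in full.
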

The lemma follows from direct calculation in the computational basis. 

One might wonder whether the states in $\mc S$ remain invariant also under Toffoli gates, since those allow implementing arbitrary permutations. 
However, this is not the case. 
To see this, consider the behaviour of the operator $\xx \otimes P \otimes Q$ for any $P, Q \in \mc S$ under two-copy Toffoli gates        
\begin{multline}
    \tof\tc( \xx \otimes P \otimes Q)\tof\tc = \\
    \proj{00} \otimes P \otimes Q + \proj{11} \otimes P \otimes \varepsilon_{Q}(R),
\end{multline}
which cannot be decomposed into a linear combination of operators in $\mc S^{\otimes 3}$.

Given \cref{lem:iqp projection,lem:cnot update,lem:one copy x,lem:z noise}, we can now identify every operator $\sigma \in \mc S^{\otimes n}$ with a classical state $\ket{\sigma}$. 
The second moment of a circuit with the above components can then be written as 
\begin{align}
    \ket{M_2} = \sum_{\sigma \in \mc S^{\otimes n}} q(\sigma) \ket \sigma.
\end{align}
Given the second moment $\ket {M_2}$ of a certain random circuit $D$ in terms of the states $\ket S$ we can deduce the properties of the same circuit with a local gate or random IQP circuits applied to certain qubits $(i,j)$ as an update $T^E_{ij} \ket{M_2} = \sum_S q'(S) \ket S$, where $T_{ij}^E$ just acts on the tensor copies $i$ and $j$. 
We can thus analyze the second moment properties of circuits as an evolution of the distribution $q$ under the local update rules derived above.
The general update rule under a single parallel circuit layer with label $l$ is then given by 
\begin{align}
    \mc T(\sigma',\sigma) = \prod_i N^{(i)}_{\sigma_i', \sigma_i} \prod_{\langle i, j \rangle} T^{(ij)}_{\sigma_i', \sigma_j', \sigma_i, \sigma_j},
\end{align}
where the $Z$ and $X$ noise update matrices $N^Z$ and $N^X \in \bin^{4 \times 4}$ are defined by the update rules \eqref{eq:z noise} and \eqref{eq:x ix}, respectively, and the CNOT and random two-qubit IQP circuit update matrices $M^{\cnot}, M^{\mf I_2} \in \bin^{16 \times 16}$ by \cref{lem:cnot update,lem:iqp projection}, respectively. 

In order to compute our properties of interest in \cref{eq:collision second moment,eq:xeb second moment,eq:fidelity second moment,eq:purity second moment}, we need to compute the overlap of the resulting state with the $\Xx$ and $\mb S$ states. 
Defining $\braket{Q}{P} = \tr[QP]$, we observe that 
\begin{align}
\label{eq:swap trace}
    \braket{\mb S}{Q} &= \begin{cases}
        2 & \text{if } Q \in \{\sx, \xx\} \\
        0 & \text{if } Q \in \{\ix, \px\}\\
    \end{cases}\\
    \label{eq:x trace}
    \braket{\Xx}{Q} & =1  \quad\forall Q \in \mc S. 
\end{align}

\section{Anticoncentration of degree-$D$ circuits}
\label{app:second moment behaviour}

Using the rules derived in this section, we can analyze the second-moment properties of arbitrary noiseless circuits with some random IQP degree-$2$ parts and $X$, CNOT gates, as well as their behaviour under single-qubit Pauli noise. 
To start out, let us apply the model to compute some simple quantities of which we already know. 

First, we compute the fidelity of the ideal state with itself, which of course evaluates to one. 
We begin with the initial state $ \ket{+^n}$. 
We find 
\begin{align}
    \tr[\mb S\nc M_2(\{\id\},\{\id\})] = \frac 1 {4^n} \sum_{Q\in \mc S\nc} \prod_{i=1}^n\tr[Q_i \mb S]. 
\end{align}
But $\tr[Q_i \mb S] =0 $ unless $Q_i = \xx$ or $Q_i = \sx$ by \cref{eq:swap trace}. Hence, only all products of $\xx$ and $\sx$ are nonzero, and evaluate equally to $2^n$.
But there are exactly $2^n$ such terms and hence 
\begin{align}
    \overline F = \frac 1 {4^n} \sum_{Q\in \mc S\nc} \prod_{i=1}^n\tr[Q_i \mb S] = \frac 1 {4^n} \sum_{S \in \{\xx, \sx\}^{\otimes n}} 2^n = 1. 
\end{align}
Since the states in $ \{\xx, \sx\}^{\otimes n}$ are invariant under the two-qubit IQP circuit, and swap and entangling gates just permute them, the fidelity remains unchanged under the action of a circuit. 

Next, let us evaluate the XEB score. 
For XEB every term in $\mc S\nc$ contributes equally with a value $1$. 
For the state $\ket{+^n}$ we thus find 
\begin{align}
    \overline \chi &= 2^n\tr[M_2(\{\id\},\{\id\}) \xx\nc] -1 \\
    &= \frac {2^n} {4^n} \sum_{Q \in \mc S\nc} \prod_i \tr[Q_i \xx]-1\\
    & = \frac 1 {2^n} 4^n -1= 2^n -1, 
\end{align}
an exponentially large value. 

Applying a uniformly random IQP circuit on the to $\ket {+^n}$, however, should reduces the XEB score to the value found in \cref{cor:bremner anticoncentration}. 
To see how this score comes about in our model, we write a uniformly random $n$-qubit IQP circuit as $n(n-1)/2$ independent random two-qubit IQP circuits $D_{(i,j)} $ on every (unordered) pair of qubits $(i,j)$. 
This is equivalent to an $n$-qubit random IQP circuit, since the Z-gates just commute to the end, yielding a uniformly random string. 
But every two-qubit random IQP circuit $D_{(i,j)}$ acts as a constraint on the states of the model at positions $(i,j)$, namely, any state $S \in \mc S\nc$ with $S_i, S_j \in \mc S_{\neg \xx}$ and $S_i \neq S_j$ is annihilated by the constraint. 
We thus find 
\begin{multline}
    M_2(\{\id\},\{\id\})  = \mb E_{\{D_{(i,j)}\}}[ (\prod_{i,j} D_{(i,j)}) \sum_{S \in \mc S\nc} S (\prod_{i,j} D_{(i,j)})]\\
      = \sum_{S \in \{\xx, \ix\}\nc \cup \{\xx, \sx\}\nc \cup  \{\xx, \px\}\nc} S - 2 \xx\nc\\
     = \id\nc + \mb S\nc + \mb P\nc - 2\Xz\nc,
\end{multline}
which recovers the result of \cref{lem:iqp projection} since the sum over all elements in $\{\xx,\sx\}\nc$ equals $\mb S$.

\subsection{Sparse degree-2 IQP circuits}
\label{app:noiseless iqp + swap}

Let us now consider the first nontrivial example, namely, the following circuit model.
Consider a random IQP circuit with $\ell$ layers such that every layer is given by a random two-qubit IQP circuit $D_{(i,j)}$ on a random qubit pair $i,j$. 
We would now like to derive the anticoncentration properties of those circuits as a function of $d$. 

\begin{theorem}[Anticoncentration of sparse random IQP] \label{lem:ac}
Consider random degree-$2$ IQP circuits on $n$ qubits with uniformly random $Z$ gates and random CZ gates acting on $\ell$ random pairs. 
Then 
\begin{align}
    \overline \chi & \in 2 - 2^{-n +1} + p(n) 2^{-2\ell/n} & \text{if } \ell \geq \frac {11} 4 n \log(n) \\
    \label{eq:diverging}
    \overline \chi &\in \Omega(2^n) & \text{if } \ell \in O(n),
\end{align}
where $p(n) = n^{11/2}/(2\pi)^{3/2}$. 
\end{theorem}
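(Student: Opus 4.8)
The plan is to work entirely within the statistical-mechanics model of \cref{ssec:stat mech,app:statmech} and to reduce the ideal XEB to a counting problem over the four-state configurations $\sigma\in\mc S^{\otimes n}$. Starting from the $z$-twirled input $\tfrac1{4^n}\sum_{\sigma\in\mc S^{\otimes n}}\sigma$ and using $\braket{\Xx}{Q}=1$ for every $Q\in\mc S$ (\cref{eq:x trace}), I would rewrite the XEB from \cref{eq:xeb second moment} as $\overline\chi = 2^{-n}\sum_{\sigma}\Pr[\sigma\text{ survives}] - 1$, where the probability is taken over the $\ell$ uniformly random pairs. Here ``survives'' means that the update rule \eqref{eq:iqp update main} leaves $\sigma$ invariant at every layer; by \cref{lem:iqp projection} this occurs iff, for each sampled pair $(i,j)$, at least one of $\sigma_i,\sigma_j$ equals $\xx$ or $\sigma_i=\sigma_j$. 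Viewing the $\ell$ pairs as the edges of a random (multi)graph $G$ on $n$ vertices and the three non-$\xx$ values as ``colors'', $\sigma$ survives iff no edge of $G$ joins two vertices carrying distinct colors. Since the edges are i.i.d., $\Pr[\sigma\text{ survives}] = (1 - F_\sigma/\binom{n}{2})^\ell$, where $F_\sigma = \sum_{1\le a<b\le 3} c_a c_b$ counts the bichromatic pairs and $c_a$ is the number of sites of color $a$.

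The immortal configurations $\mc S_{\mathrm{imm}} = \{\xx,\ix\}^{\otimes n}\cup\{\xx,\sx\}^{\otimes n}\cup\{\xx,\px\}^{\otimes n}$ have $F_\sigma=0$ and survive with probability one; by inclusion–exclusion there are $3\cdot 2^n - 2$ of them, contributing $2^{-n}(3\cdot 2^n - 2) - 1 = 2 - 2^{-n+1}$. As every term $\Pr[\sigma\text{ survives}]$ is nonnegative, this already yields the lower bound $\overline\chi \ge 2 - 2^{-n+1}$ for all $\ell$, so it remains to control the contribution $\Delta \coloneqq 2^{-n}\sum_{\sigma\notin\mc S_{\mathrm{imm}}}(1-F_\sigma/\binom{n}{2})^\ell$ of the ``excited'' (non-immortal) configurations.

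For the dense regime $\ell\ge\tfrac{11}{4}n\log n$, I would bound $\Delta$ by grouping configurations according to their profile $(c_0,c_1,c_2,c_3)$, with $c_0$ the number of $\xx$ sites, so that $\Delta = 2^{-n}\sum_{\vec c}\binom{n}{c_0,c_1,c_2,c_3}(1-F_{\vec c}/\binom{n}{2})^\ell$ over profiles with at least two positive colors. Writing $F_{\vec c} = \tfrac12\big((c_1+c_2+c_3)^2 - \sum_a c_a^2\big)$ and using $(1-x)^\ell \le e^{-x\ell}$, the quadratic coupling $\sum_a c_a^2$ can be decoupled by a Hubbard–Stratonovich (Gaussian-integral) transform, one integral per color, which is the origin of the $(2\pi)^{3/2}$ appearing in the prefactor $p(n)$. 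The resulting factorized sum I would evaluate by Laplace/saddle-point together with Stirling's approximation of the multinomial coefficient: balancing the entropic gain of the multinomial against the energetic penalty $e^{-\lambda F_{\vec c}}$ with $\lambda\approx 2\ell/n^2$ localizes the sum near profiles dominated by a single color and gives $\Delta \le p(n)\,2^{-2\ell/n}$ with $p(n)=n^{11/2}/(2\pi)^{3/2}$. The threshold $\ell\ge\tfrac{11}{4}n\log n$ is precisely where this estimate becomes controlled, matching the connectivity/percolation scaling of $G$.

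For the sparse regime $\ell\in O(n)$, the graph $G$ is subcritical and has $k(G)\ge n-\ell = \Omega(n)$ connected components. Assigning each component a single uniform value in $\mc S$ always satisfies the survival constraint, so at least $4^{k(G)}$ configurations survive; taking expectations and using convexity of $4^x$ gives $\sum_\sigma\Pr[\sigma\text{ survives}]\ge 4^{\,\mb E[k(G)]}\ge 4^{\,n-\ell}$, whence $\overline\chi \ge 2^{\,n-2\ell}-1$, which is exponentially large, i.e. $\overline\chi\in\Omega(2^n)$ once $\ell$ is a sufficiently small multiple of $n$. I expect the main obstacle to be the dense-regime estimate: carrying out the saddle-point analysis of the constrained multinomial sum rigorously---justifying the Gaussian decoupling, identifying the dominant profile, and controlling the subleading terms sharply enough to pin down both the exponent $2\ell/n$ and the explicit prefactor $p(n)$.
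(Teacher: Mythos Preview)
Your reduction to the survival problem in the statistical-mechanics model is correct and identical to the paper's starting point: one writes $\overline\chi+1 = 2^{-n}\sum_{S\in\mc S^{\otimes n}}(1-\pqw(S)/\binom{n}{2})^\ell$, and the immortal strings contribute exactly $2-2^{-n+1}$.

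For the dense regime your Hubbard--Stratonovich/saddle-point route is genuinely different from, and more elaborate than, the paper's. The paper uses no integral transforms: it upper-bounds the excited sum by $|W_n^*|\sim n^3$ times the maximal term, bounds each of the three binomials in $|\mc S_n^{(k,l,m)}|$ by Stirling, and runs a short case analysis on whether each of $k,l,m$ is $\Theta(n)$ or $o(n)$ to locate the worst profile at $(k,l,m)=(n/2,1,1)$, which gives the exponent $2\ell/n$ and the threshold $\ell\ge\tfrac{11}{4}n\log n$ directly. In particular the $(2\pi)^{3/2}$ in $p(n)$ comes from the three Stirling prefactors, not from Gaussian integrals; your decoupling would carry $\lambda$-dependent normalizations and hits a sign issue on the $(c_1{+}c_2{+}c_3)^2$ term, so reproducing the stated $p(n)$ along that route is not obvious.

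For the sparse regime there is a genuine gap. The deterministic bound $k(G)\ge n-\ell$ yields $\overline\chi\ge 2^{n-2\ell}-1$, which diverges only when $\ell<n/2$, not for arbitrary $\ell\in O(n)$ as the statement requires. The paper argues instead from the exact profile sum: for any constant $C$ with $\ell=Cn$ one can exhibit a profile---e.g.\ $k=n/2$, $l=\delta n$, $m=0$ with $\delta$ small enough---for which the entropy of the multinomial coefficient beats the decay $(1-N(k,l,m)/\binom{n}{2})^\ell$, so that this single term already grows exponentially in $n$. Your random-graph viewpoint can likely be repaired (e.g.\ by placing $\xx$ on a small vertex cover of $G$ and colouring the remaining independent set freely), but the component bound alone does not cover the full range.
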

\begin{proof}
For $Q \in \mc S$, define the $Q$-Hamming weight of a string $S \in \mc S$ as 
\begin{align}
    n_Q(S) = |\{i: S_i = Q\}|. 
\end{align}
We now observe that the action of a circuit layer on the states $\ket S$ depends only on the number of pairs $S_i\neq S_j$ such that $S_i, S_j \in \mc S_\notx$. 
Let us call a string $S$ with the property of having $k$ of such pairs a string with \emph{$PQ$-weight $k$}, written as $\pqw(S) = k$. 
We have that 
\begin{multline}
    \pqw(S) \equiv N(n_\ix(S),n_\sx(S),n_\px(S))\\
    \coloneqq  n_\ix(S)n_\sx(S) + n_\ix(S)n_\px(S) + n_\sx(S)n_\px(S).
\end{multline}
The probability that a state $S \in \mc S\nc$ is annihilated in a single circuit layer is then given by $\pqw(S)/\binom n 2$ since there are $\binom n 2 $ pairs in total. 

Let us write the random circuit in layer $i$ with a random degree-$2$ circuit in a random location as $D_i$. 
Let us also define the sets 
\begin{align}
    W_n & \coloneqq \{ (k,l,m) \in [n]^{\times 3}: k + l + m \le n\}\\
    W_n^* & \coloneqq W_n \setminus  \{ (k,l,m): \\& \quad k = l = 0 \vee k = m = 0 \vee l = m = 0\}\nonumber\\
    \mc S_n^{(k,l,m)} &\coloneqq \{ S \in \mc S\nc: \\
    &\quad n_\ix(S) = k, n_\sx(S) = l, n_\px(S) = m\}\nonumber
\end{align}
We now write 
\begin{align}
    \overline \chi +1 & = \frac 1{2^n} \sum_{S \in \mc S} \mb E_{\{D_i\}}  \tr[\left(\prod_{i=1}^\ell D_i\right)S \left(\prod_{i=1}^\ell D_i\right) \Xx]\\
    & = \frac 1 {2^n} \sum_{S \in \mc S} \left ( 1- \frac{\pqw(S)}{\binom n 2}\right)^\ell\\
& = \frac 1 {2^n} \sum_{(k,l,m) \in W_n  } |\mc S_n^{(k,l,m)}|  \left ( 1- \frac{N(k,l,m)}{\binom n 2}\right)^\ell,
\end{align}
and observe that the strings with generalized Hamming weight in $W_n^*$ contribute exactly the ideal score of $3 - 2^{-n + 1} $ to the XEB score so that 
\begin{align}
\label{eq:chi scaling sparse}
    \overline \chi&= 2 - \frac 2 {2^n} +  \frac 1 {2^n} \sum_{(k,l,m) \in W_n^*  } |\mc S_n^{(k,l,m)}|  \left ( 1- \frac{N(k,l,m)}{\binom n 2}\right)^\ell.
\end{align}
Let us now bound the terms in \cref{eq:chi scaling sparse}. To this end, we observe that 
\begin{align}
|W_n^*| &=  n^3- 3n\\
|\mc S_n^{(k,l,m)}| &=  \binom n {k} \binom {n - k} {l} \binom{n - k - l}{m} .
\end{align}
Using Stirling's formula for $k = p n$ we find the exact scaling of the binomial coefficients to be 
\begin{align}
   \binom n {pn} \in \left(\frac 1 {2 \pi n p(1-p)}\right)^{\frac 12} 2^{n H(p)}, 
\end{align}
where $H(p) = - p \log p - (1-p) \log(1-p)$ is the binary entropy function ($\log$ has base $2$). 
We thus find that 
\begin{align}
\label{eq:sparse anticoncentration bound}
    &\overline \chi - 2 + 2^{-n + 1}\leq  \frac{n^{7/2}} {(2\pi)^{3/2}} \times\\
    &\times \max_{(k,l,m)\in W_n^*} 2^{-\left[n + \frac {\ell}{\binom n 2 } N(k,l,m) - n\left[ H(\frac k n) + H(\frac l n) + H(\frac m n) \right] \right]}. \nonumber
\end{align} 

Now, since $2^{-x}$ is monotonously decreasing in $x$ we want to find the minimum exponent
\begin{multline}
\label{eq:minimum exponent sparse anticoncentration}
    \min_{(k,l,m)\in W_n^*} \bigg\{ n + \frac {\ell}{\binom n 2 } N(k,l,m) \\- n\left[ H\left(\frac k n\right) + H\left(\frac l n\right) + H\left(\frac m n\right) \right]\bigg\}.
\end{multline}

To show the claim, observe that the entropy function $H(p) \le 1$ and attains this value at $p = 1/2$. This implies that $n [H(\epsilon) + H(\delta) + H(\theta)]\leq 3 n $. 
More generally, let $k/n = \epsilon, l/n = \delta,m/n = \theta$ for arbitrary constants $\epsilon, \delta, \theta \in (0,1)$.
For this choice, $ 2 \ell N(k,l,m)/(n(n-1)) \geq 2 \ell (\epsilon \delta + \epsilon \theta + \delta \theta)$. 
For sufficiently large $\ell$ we therefore have $ 2 \ell N( \epsilon,  \delta, \theta) \geq n[ H(\epsilon) + H(\delta) + H(\theta)]$  for any choice of $\epsilon, \delta, \theta \in \Theta(1)$. 
One might think $ \ell = Cn$ for a large enough constant $C$ is sufficient, but observe that for any constant $C$, we can find constants $\epsilon, \delta$ such that $2 C \epsilon \delta < H(\epsilon) + H(\delta) $. 
Thus, we need to choose $\ell/n = f(n)$ to follow an arbitrary monotonously increasing function $f(n)$ that diverges with $n$. 
Otherwise, the corresponding terms in \cref{eq:chi scaling sparse} will diverge exponentially, showing the claim \eqref{eq:diverging}.

The same argument holds if only two of $k,l,m$ are chosen proportional to $n$, and the third one sublinear. 

Let $k/n \leq  1/f(n)$ for an arbitrary montonously increasing function $f$ that diverges with $n$. 
We have 
\begin{align}
    \label{eq:bound H(k/n)}
    H(k/n) &\leq \frac 1{f(n)} \log f(n) - \Big[1- \frac1 {f(n)} \Big]\log \Big[1-\frac 1{f(n)}\Big] \nonumber \\
    &= -\log \Big[1-\frac 1{f(n)}\Big] + \frac{1}{f(n)}\log[f(n)-1] \\
    & \leq \frac 1 {f(n)}\log[f(n)].\nonumber
\end{align}

Consider $k,l,m \in o(n)$.
In this case, by the bound \eqref{eq:bound H(k/n)} the entropy-terms decrease as $n H(k/n) \leq n o(1)$, which is asymptotically dominated by the linearly growing term. 

Now, consider $k = \Theta(n), l = m = o(n)$---the only remaining case.
Let us first fix a choice $k = n/2, l=m=1$.
In this case $H(k/n) = 1$, $H(l/n) = H(m/n) \leq \log(n)/n$ and $N(k,l,m) = n + 1$ so that by the bound \eqref{eq:bound H(k/n)} the right hand side of \cref{eq:sparse anticoncentration bound} is at most 
\begin{align}
   p(n)2^{- 2 \ell/n }. 
\end{align} 
This bound decays for $\ell > (11/4) \cdot n \log n $. 

We can vary the parameters around the bound by increasing or decreasing $k$, and increasing $l,m$. 
Let us fix $k=n/2$ and increase $l,m > 1$. 
In this case, $n (H(k/n) + H(l/n) + H(m/n)) \leq n + (l + m)\log (n)  $ and $N(k,l,m) = (l + m ) n /2 + l m $, in which case for $d > 11/4$ the bound \eqref{eq:sparse anticoncentration bound} decays. 
Next, let us decrease $k/n \leq  1/2 - \epsilon$ for constant $\epsilon > 0$. Since the entropy function is symmetric about $1/2$ and $N(k,l,m)$ increases in $k$, this is the worst case. 
Now, $H(k/n) \leq 1 - \epsilon^2$, and hence $n - n H(k/n) \geq n\epsilon^2$ which dominates all other terms. 
\end{proof}

\subsection{Sparse degree-$D$ IQP circuits}
\label{sub:sparse_degree_d_iqp_circuits}

\cref{lem:ac} provides upper and lower bounds for sparse random degree-$2$ IQP circuits.

\begin{corollary}[Anticoncentration of sparse degree-$D$ circuits]
\label{cor:degree-d anticoncentration}
Random IQP circuits with $\ell$ uniformly random degree-$2$ gates and any additional diagonal gates satisfy 
\begin{align}
 \overline \chi \in 2 - 2^{-n+1} + p(n) 2^{-2 \ell/n}, 
\end{align}
where $p(n) = n^{11/2}/(2\pi)^{3/2}$. 
\end{corollary}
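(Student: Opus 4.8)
The plan is to derive the corollary directly from \cref{lem:ac} by making precise the statement that, since all IQP gates commute, appending any diagonal circuit to the sparse degree-$2$ part can only \emph{decrease} the ideal XEB. First I would fix the sparse degree-$2$ random circuit $C_2$ with Boolean phase polynomial $f_2$ and write the additional diagonal gates as a phase $\ee^{\ii\pi g}$ with $g:\bin^n\to\Rb$ (Boolean for genuine $\mathrm{C}^k\mathrm{Z}$ gates, but the argument tolerates arbitrary phases), so that the full circuit outputs the phase state $C\ket{+^n}=2^{-n/2}\sum_y(-1)^{f_2(y)}\ee^{\ii\pi g(y)}\ket y$. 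Expanding the ideal XEB $\overline\chi+1=2^n\,\mb E_{C_2}\sum_x p_C(x)^2$ in the computational basis and performing the sum over $x$, which enforces $y_1\oplus y_2\oplus y_3\oplus y_4=0$, yields
\begin{align}
\label{eq:cor_plan}
  \overline\chi+1=2^{-2n}\!\!\sum_{\substack{y_1,\dots,y_4\\ \oplus_i y_i=0}}\!\! \ee^{\ii\pi(g(y_1)-g(y_2)+g(y_3)-g(y_4))}\,W(y_1,y_2,y_3,y_4),
\end{align}
where $W\coloneqq\mb E_{C_2}\big[(-1)^{\sum_i f_2(y_i)}\big]$ is exactly the degree-$2$ average appearing in the proof of \cref{lem:ac}.

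The crux is that $W\ge 0$. For the sparse model each tuple $(y_1,\dots,y_4)$ with $\oplus_i y_i=0$ determines a unique string $S\in\mc S^{\otimes n}$ by reading off, site by site, whether the four bits realize the pattern $\xx$, $\ix$, $\sx$ or $\px$, and $W$ then equals its survival probability $(1-\pqw(S)/\binom n2)^\ell\ge 0$. Because the phase factor in \eqref{eq:cor_plan} has unit modulus and $\overline\chi$ is real, I can bound
\begin{align}
  \overline\chi+1\le 2^{-2n}\!\!\sum_{\oplus_i y_i=0}\!\! W(y_1,\dots,y_4)=\overline\chi_{\mathrm{sIQP}}(\ell)+1,
\end{align}
which is precisely the XEB of the sparse degree-$2$ circuit stripped of the extra gates. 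Invoking the upper bound established in the proof of \cref{lem:ac} (i.e.\ \eqref{eq:sparse anticoncentration bound}, valid for every $\ell$) then gives $\overline\chi\le 2-2^{-n+1}+p(n)2^{-2\ell/n}$, and averaging over a possibly random choice of $g$ preserves the bound since it holds instance by instance.

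In the language of the statistical model of \cref{ssec:stat mech}, this is the statement that the immortal states $\mc S_{\mathrm{imm}}$—which carry the leading $2-2^{-n+1}$ contribution—are invariant under any diagonal gate, because for Boolean $g$ the phase in \eqref{eq:cor_plan} cancels on the global $(\id)$, $(\mb S)$ and $(\mb P)$ constraints that define $\{\xx,\ix\}^{\otimes n}$, $\{\xx,\sx\}^{\otimes n}$ and $\{\xx,\px\}^{\otimes n}$, whereas the additional gates can only attach nontrivial signs to, or push amplitude out of, the remaining, already exponentially suppressed, states. The main obstacle is to argue cleanly that $W\ge 0$ rather than merely $|W|\le 1$: this non-negativity is special to the ideal single-copy XEB, whose degree-$2$ second moment is a genuine non-negative mixture over $\mc S^{\otimes n}$, and it is exactly what converts the informal slogan ``commuting gates cannot help'' into a rigorous one-sided inequality. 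A secondary point to dispatch is that the extra gates may pull $\overline\chi$ slightly below $2-2^{-n+1}$; this is harmless, since the same estimate controls the deviation from $2-2^{-n+1}$ in absolute value by $p(n)2^{-2\ell/n}$.
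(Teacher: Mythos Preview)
Your proposal is correct and follows essentially the same route as the paper. The paper's proof is a one-liner: group the random degree-$2$ gates first, then note that the remaining diagonal gates ``can only decrease the XEB since $\tr[\Xx \proj{xyzw}] > 0$ for any $x,y,z,w \in \bin$ satisfying $x + y + z+w = 0$ and gates cannot generate new states.'' Your expansion in the computational basis and the inequality $\Re(e^{i\pi(\cdots)})\,W \le W$ is the same reasoning unpacked---the paper's ``gates cannot generate new states'' is precisely your observation that in the stat-mech picture states only survive or die under degree-$2$ averaging, so $W = (1-\pqw(S)/\binom n2)^\ell \ge 0$, and your explicit identification of this non-negativity as the crux is a useful clarification of what the paper leaves somewhat implicit. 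Your additional remark about the two-sided bound for Boolean $g$ (via invariance of the immortal contribution) goes slightly beyond what the paper states, but is correct for the same reason.
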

\begin{proof}
    The corollary follows immediately from the proof of \cref{lem:ac}, by grouping all $\ell$ random degree-$2$ gates to the beginning of the circuit.
    The remaining gates can only decrease the XEB since $\tr[\mb X_x \proj{xyzw}] > 0$ for any $x,y,z,w \in \bin$ satisfying $x + y + z+w = 0$ and gates cannot generate new states. 
\end{proof}

\subsection{Coupled block-random IQP circuits}
\begin{figure}
    \centering
    \includegraphics[width=\linewidth]{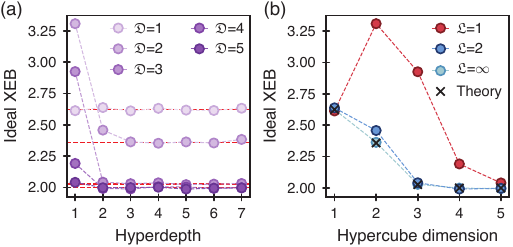}
    \caption{\textbf{Ideal XEB values of random hIQP circuits without in-block CNOTs.} 
    For each data point, we average ${\sim}\,10^6$ random degree-$2$ hIQP circuits.
    (a) Convergence of the ideal XEB score of hIQP circuits without in-block CNOT gates to the asymptotic value (dashed lines) $\overline \chi_{\mathrm{hIQP}}(\hdim)$ given in \cref{eq:xeb score parallel cnot hypercube} as a function of hyperdepth for various $\hdim$ (circles).
    (b) Convergence of the ideal XEB score of hIQP circuits to the uniform value as a function of hypercube dimension $\hdim$ for various hyperdepths $\hdepth$ (circles). }
    \label{fig:xeb_no_inblock}
\end{figure}
Let us now consider the block-random model of IQP in which we compose uniformly random degree-$2$ IQP circuits in blocks of $m$ qubits with fixed entangling gates between the blocks. 
In the hIQP circuits, considered in the main text, there are $m=3$ qubits per each of $2^D$ blocks, where $D$ is the hypercube dimension. 
The blocks are then entangled by CNOT gates with random direction according to the hypercube connectivity. 
Crucially, while the direction of the CNOTs on every edge of the hypercube is random, $b=3$ CNOT gates are applied \emph{in parallel} on every edge, that is, all three gates have the same direction, and pair up with the same qubit in every circuit layer. 
More precisely, the block-random-IQP model is defined by a block size $b$, a connectivity graph $G = (V,E)$ with $b \coloneqq |V|$ vertices, and an edge-coloring $C \subset \mc P(E)$, where $\mc P(E)$ is the power set of $E$.
A \emph{block-random IQP circuit with hyperdepth $\hdepth$ on $G$} then proceeds as follows 
\begin{enumerate}
    \item 
        Prepare $\ket {+^{mb}}$. 
    \item In every hyperlayer $l = 1, \ldots, \hdepth$, for every color $c \in C$: \vspace{-1ex}
    \begin{enumerate}
        \item[i.]
            apply a uniformly random degree-$2$ IQP circuit in every block. 
        \item[ii] 
            apply (potentially randomized) transversal two-qubit gates between the blocks connected by the edges in $c$, 
     \end{enumerate} 
    \item
        apply a uniformly random degree-$2$ IQP circuit in every block. 
    \item 
        Measure in the $X$-basis.
\end{enumerate}

In the hIQP circuits we also apply in-block uniformly random CNOT gates, but for the sake of completeness we will compute the properties of the block-random model first before turning to hIQP. 
We will consider different types of transversal two-qubit gates between blocks. 
The first type is what we call \emph{randomized parallel CNOT} gates. 
A randomized parallel CNOT gate applied on an edge of the block-graph is given by $(\cnot^{\otimes m})^d (\notc^{\otimes m})^{1 - d}$ depending on a random choice of direction $d \in \{0,1\}$, where $\notc = \mb S \, \cnot\,  \mb S$. 
A \emph{parallel randomized CNOT} gate, on the other hand, is specified by $d \in \{0,1\}^n$ and given by $\prod_{i=1}^n \cnot^{d_i} \notc^{1 - d_i}$. 
A \emph{parallel randomized SWAP} gate, is given by $\prod_{i=1}^m \mb S^{d_i}$.

The following \lcnamecref{thm:block iqp xeb} gives the ideal XEB score for such circuits in the limit of a large number of circuit layers. 

\begin{theorem}
\label{thm:block iqp xeb}
    Let $G$ be a connected graph on $b$ vertices. 
    The average XEB of block-random degree-$2$ IQP circuits acting on $b$ $m$-qubit blocks connected by $G$ and hyperdepth $\hdepth$ satisfies 
    \begin{multline}
    \label{eq:xeb score parallel cnot hypercube}
        \overline \chi \xrightarrow{\hdepth \rightarrow \infty} \overline \chi_{\mathrm{hIQP}}(\hdim) \\\coloneqq 2 - 2^{-mb +1} + \frac {(2^m -1 )(4^{b} - 3 \cdot 2^b  + 2 )} {2^{mb}} 
    \end{multline}
    if the two-qubit gates are randomized parallel CNOT gates, and 
    \begin{align}
    \label{eq:xeb score random cnot hypercube}
        \overline \chi  \xrightarrow{\hdepth \rightarrow \infty} 2 - 2^{-mb +1} + \frac{3 (4^{b} - 3 \cdot 2^b  + 2 ) }{2^{mb}} 
    \end{align}
    if the two-qubit gates are parallel randomized CNOT or parallel randomized SWAP gates. 
    
    In either case, if a uniform CNOT circuit is applied to every block before the inter-block gates, we have 
    \begin{align}
    \label{eq:xeb score random cnot}
        \overline \chi  \xrightarrow{\hdepth \rightarrow \infty} 2 - 2^{-mb +1}
    \end{align}
\end{theorem}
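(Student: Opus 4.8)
The plan is to evaluate the second-moment operator $M_2$ in the statistical-mechanics model of \cref{app:stat mech mapping} in the limit $\hdepth\to\infty$ and to show that the only states surviving to the end are the globally polarized (immortal) states $\mc S_{\mathrm{imm}}=\{\xx,\ix\}^{\otimes mb}\cup\{\xx,\sx\}^{\otimes mb}\cup\{\xx,\px\}^{\otimes mb}$, of which there are $3\cdot 2^{mb}-2$; this reproduces the uniform value (\cref{cor:bremner anticoncentration}). Because $\langle\Xx|Q\rangle=1$ for every $Q\in\mc S$ (\cref{eq:x trace}), the quantity $\overline\chi+1=2^{mb}\tr[\Xx^{\otimes mb}M_2]$ equals $2^{mb}$ times the total amplitude of $M_2$ summed over all basis states $S\in\mc S^{\otimes mb}$, so it suffices to track how the uniform initial amplitude $4^{-mb}$ per state flows under the circuit layers.

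First I would set up the bookkeeping. Every in-block degree-$2$ projection only annihilates states (\cref{lem:iqp projection}), whereas the in-block and inter-block CNOT layers act as bijections of $\mc S^{\otimes mb}$ that map $\mc S_{\mathrm{imm}}$ onto itself: on a color-$Q$ block the CNOTs realize $\mathrm{GL}(m,\mb F_2)$ acting on the labels $\xx\mapsto 0,\,Q\mapsto 1$, using that $\varepsilon_Q$ swaps $\xx\leftrightarrow Q$ while $\varepsilon_{\xx}=\mathrm{id}$ (\cref{lem:cnot update}), and an inter-block CNOT between two color-$Q$ blocks acts as an XOR on the labels and stays color-$Q$. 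Consequently no amplitude ever flows into $\mc S_{\mathrm{imm}}$ from its complement, the immortal amplitude is conserved, and---once the complement is shown to decay---the surviving amplitude equals exactly the initial immortal amount $4^{-mb}(3\cdot2^{mb}-2)$, giving $\overline\chi\to 3-2^{-mb+1}-1=2-2^{-mb+1}$.

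The crux is to show that every block-polarized state outside $\mc S_{\mathrm{imm}}$ decays as $\hdepth\to\infty$. Such a state carries at least two distinct nontrivial colors across its blocks. Whenever an edge of $G$ joins blocks of distinct colors $Q_A\neq Q_B$ with patterns $a,b$, the inter-block CNOT sends each qubit pair with $a_i=Q_A,\,b_i=Q_B$ to the third color $\varepsilon_{Q_A}(Q_B)=Q_C$, producing a multi-color target block that the next in-block projection annihilates---\emph{unless} $\supp(a)=\supp(b)$, the single alignment that keeps the block monochromatic. This is precisely the frozen-alignment loophole responsible for the extra terms in \cref{eq:xeb score parallel cnot hypercube,eq:xeb score random cnot hypercube}, and it is closed by the in-block uniform CNOT: averaging over $\mathrm{GL}(m,\mb F_2)$ randomizes $a$ and $b$ independently over nonzero patterns, so $\Pr[\supp(a)=\supp(b)]=1/(2^m-1)<1$ for $m\ge2$, and the clash creates a fatal multi-color block with probability bounded away from zero. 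When the two colors are instead separated by trivial ($\xx$) blocks, I would first use the copying rule $(\alpha,\xx)\mapsto(\alpha,\alpha)$ to spread color along the connected graph until an adjacent distinct-color pair appears. Either way, each configuration in the complement acquires a per-hyperlayer survival probability bounded strictly below $1$, hence geometric decay.

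The main obstacle is making this last step uniform: I must lower-bound the per-hyperlayer death probability simultaneously over all non-immortal configurations and all connected graph geometries, including the number of layers needed to propagate color through chains of trivial blocks of length up to $\mathrm{diam}(G)$, and confirm that the in-block $\mathrm{GL}(m,\mb F_2)$ average together with the random inter-block directions genuinely decorrelates $\supp(a)$ and $\supp(b)$ at every clash. The remaining ingredients---the invariance and no-inflow bookkeeping, and the final count---are routine given \cref{lem:iqp projection,lem:cnot update}. The SWAP inter-block case is analogous, with the per-qubit random swap playing the color-mixing role of the CNOT clash.
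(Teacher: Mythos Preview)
Your proposal is essentially a proof of the third claim \cref{eq:xeb score random cnot} only. For that case your approach matches the paper's: immortal states are preserved, no amplitude flows into $\mc S_{\mathrm{imm}}$ from outside, and the in-block uniform CNOT breaks the ``frozen-alignment'' configurations so that every non-immortal block-polarized state is killed with probability bounded away from zero per hyperlayer. The informal uniformity issue you flag is also present in the paper's argument, which simply asserts that at infinite depth the required gate sequence eventually occurs.

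The genuine gap is that you do not establish \cref{eq:xeb score parallel cnot hypercube} or \cref{eq:xeb score random cnot hypercube}. You correctly notice that without in-block CNOTs the alignment $\supp(a)=\supp(b)$ lets a pair of differently colored blocks survive, but you stop there. What the paper does---and what is needed---is to identify the full set of additional surviving states and count them. For randomized parallel CNOTs (all $m$ CNOTs on an edge pointing the same way), the extra survivors are exactly the states in which every block has the \emph{same} (or trivial) $\xx$-type $t_\xx\subset[m]$, with arbitrary per-block color; the point you miss is that even if the target block changes color from $Q_B$ to $Q_C=\varepsilon_{Q_A}(Q_B)$, it remains monochromatic with the same $\xx$-type, hence survives the next in-block projection. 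Counting these $\mc S_{t_\xx}$ over the $2^m-1$ nontrivial $\xx$-types, subtracting the overlap with $\mc S_{\mathrm{imm}}$, gives the $(2^m-1)(4^b-3\cdot 2^b+2)$ numerator. For parallel randomized CNOTs (independent direction on each of the $m$ qubits), the paper shows that only the singleton $\xx$-types survive, which replaces the factor $2^m-1$ by a constant and yields \cref{eq:xeb score random cnot hypercube}. Your proposal neither carries out this identification, nor the inclusion--exclusion count, nor explains why the two inter-block gate models give different prefactors. You should also show that \emph{nothing else} survives in these two cases, which the paper does via the $\xx$-type propagation argument.
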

We show the convergence to the asymptotic value for the case of parallel randomized CNOT gates in \cref{fig:xeb_no_inblock}. 

\begin{proof}
To show the theorem we proceed in two steps. First, we show that the limits are lower bounds. Second, we show that they are attained in the limit of high depth.     
Our key tool will again be the statistical-mechanics mapping described above. 

Let us first consider the case of randomized parallel CNOT gates, and consider two strings $P,Q \in \mc S^{m}$ (we omit the tensor product in the following, understanding $sp$ to be the string composed of $s, p \in \mc S$).
Let $U = \cnot^n$. 
Then it follows from \cref{lem:cnot update} that under the action of $U$ the string $PQ \mapsto PS$, where $S_i = Q_i$ iff $P_i = \xx$ and $S_i \neq Q_i$ if $P_i \in \mc S_\notx$.  
Recall also that a string $P$ is annihilated by a random degree-$2$ circuit if there are at least two indices $i,j$ such that $P_i \neq P_j \in \mc S_\notx$, and remains if all $P_i \in \{ \xx, Q\}$ for a $Q \in \mc S_\notx$. 

Let the \emph{$\xx$-type of a state} $S \in \mc S^m$ be the locations $T_\xx(S) \coloneqq \{i \in [m]: S_i = \xx\}$ at which it has an $\xx$ state.
Let the \emph{$s$-type of a state} $S \in \{\xx,\ix\}^m \cup \{\xx,\sx\}^m \cup  \{\xx,\px\}^m$ be the non-\xx-string $T_s(S) \in \{\ix, \sx, \px\}$. 
We observe that the state $S = S^1 S^2$ of two neighbouring blocks coupled by a randomized parallel CNOT gate will survive iff $T_s(S^1) = T_s(S^2)$ or $T_\xx(S^1) = T_\xx(S^2)$. 
Hence, the only configurations $S = S^1 \cdots S^b$ of block-states $S^i\in \mc S^n$ which are guaranteed to survive arbitrarily many circuit rounds are given by 
\begin{enumerate}
    \item states which are immortal under global random IQP circuits, that is, states $S \in \mc S_{\text{imm}} \coloneqq \{\xx, \ix\}^{mb} \cup\{\xx, \sx\}^{mb} \cup\{\xx, \px\}^{mb}$ which have the same $s$-type in every block, and

    \item states $S \in \mc S_{t_\xx}\coloneqq \{ Q: T_\xx(Q^i) \in \{t_\xx , [m]\}\,  \forall i \in [b]\}$ with the same or trivial $\xx$-type on all blocks for some $t_\xx \subset[m]$.
\end{enumerate}
Let us now count those strings. We will do so by removing the \emph{trivial string} ($\xx^{m b}$) in all calculations and adding it back at the end. 
We have  
\begin{align}
|\mc S_{\text{imm}} \setminus \{\xx^{mb}\}| & = 3 \cdot (2^{m b} -1)  \\
|\mc S_{t_\xx} \setminus \{\xx^{mb}\}| & = 4^b - 1  \\
\{t_\xx \subset [m]\} \setminus \{ [m]\} & = 2^m -1 \\
\bigg(\mc S_{\text{imm}} \cap \bigcup_{t_\xx \subset [m]} (\mc S_{t_\xx} \setminus \{\xx^{mb}\}) \bigg)  &= 3 (2^b -1 )(2^m -1). 
\end{align}
This implies that the number of surviving strings is given by at least 
\begin{multline} 
3 \cdot (2^{m b} -1)  +  (4^b - 1) (2^m -1) - 3 (2^b -1 )(2^m -1) +1 \\
= (4^{b} - 3 \cdot 2^b  + 2 )(2^m -1 ) + 3 \cdot 2^{m b } - 2. 
\end{multline}

Now we show that all strings which are not contained in $\mc S_{\text{surv}} = \mc S_{\text{imm}} \cup \bigcup_{t_\xx \subset [m]} \mc S_{t_\xx}$ will die under the circuit evolution.
To see this, we take a string $Q \in \mc S_{\text{die}} = \mc S^{md} \setminus \mc S_{\text{surv}}$. 
By definition, there will be two substrings $Q^i, Q^j \in \mc S^n$ which have a different $\xx$-type, and contain two elements $Q^i_k\neq  Q^j_l \in \mc S_\notx$ such that $k \in T_\xx(Q^i), k \notin T_\xx(Q^j)$, and vice versa for $l$. 
We need to show that this state of affairs cannot survive. 
Wlog.\ let $Q^i_k = \ix, Q^j_l = \sx$.
First, we show that the circuit evolution of such a string will never leave $\mc S_{\text{die}}$.
Consider a gate between $Q^i$ and a neighbouring substring $Q^{o}$ with a different $\xx$-type. 
If $Q^i_k$ is the target of a $\cnot$-gate, which is controlled by $\xx$, it is unchanged. 
If it is the control of a $\cnot$-gate, it flips its target to $\ix$, changing the $\xx$-type of the target substring to a type with increased cardinality, and likewise for $Q^j_l$. 
At infinite depth, there will be a sequence of CNOT-gates in a circuit layer such that at then end of the sequence there will be a block with substring containing both~$\ix$ and~$\sx$. 
Then, the string will die. 

To show \cref{eq:xeb score random cnot hypercube}, we observe that strings with the same \emph{singleton $\xx$-types}, i.e., $\xx$-type with cardinality $|t_\xx| =  m-1$, are invariant under the circuit evolution, while all other strings will die by an analogous argument.

To show \cref{eq:xeb score random cnot}, we observe that a uniform CNOT circuit is a uniform permutation on the set of \xx-types. 
Hence, for every string in which there are two blocks with distinct $s$-type, there will be a sequence of in-block CNOT gates and parallel CNOT gates such that at the end of the sequence there will be to neighbouring blocks $i,j$ whose substrings $S^i,S^j \in \mc S^{b}$ have both different $\xx$-type and different $s$-type. Then the string will die. 
The only surviving strings are therefore immortal strings, and we have 
$|\mc S_{\text{imm}}| = 3 \cdot 2^{mb} -2$. 
\end{proof}

\section{Noisy IQP circuit dynamics}\label{app:xebproof}

In this section, we study the XEB, fidelity and purity in noisy IQP circuits whose noiseless version is composed of unitary CPSP channels.  
The goal of this study is to identify conditions under which the XEB can be used as an estimator of fidelity.  

We first consider the case of general noise models at low-noise rates and present results under increasing levels of assumptions about the noise and the random circuit ensemble.  
We then specialize to locally random degree-2 IQP circuits with SWAP gates and mid-circuit iid Pauli noise in order to understand the behaviour of the transversal circuits in color codes.  
Here, we can use the statistical mechanics mapping derived in \cref{app:statmech}, in order to  compute upper and lower bounds on the XEB.  
At very low noise rates, we find that XEB is a good fidelity proxy in noisy random IQP circuits, while at high noise rates XEB ceases to approximate fidelity, similar to the behavior observed recently in Haar-random circuits \cite{gao_limitations_2024,Morvan.2023,Ware.2023}.

\subsection{General noise at low error rates}
\label{app:low error rates}

 To begin, we consider a general IQP circuit $D^{(\ell)} = D_d D_{\ell-1} \cdots D_1$ composed of $\ell$ layers of unitary CPSP channels with corresponding conjugate action $\mc D_i =  D_i \cdot  D_i^\dagger$ and $\mc D^{(\ell)} =  D^{(\ell)} \cdot  (D^{(\ell)})^\dagger$.  We model the noisy circuit in terms of the noiseless circuit and a sequence of quantum instruments  $\{ \mathcal{E}_{i\mu_i} \}_{\mu_i}$ that implements a larger quantum instrument with elements
\begin{equation}
\mathcal{D}_\mu^{(\ell)} = \mathcal{E}_{\ell\mu_\ell } \circ \mathcal{D}_\ell \circ \cdots \circ \mathcal{E}_{1\mu_1} \circ \mathcal{D}_1 \circ \mathcal{E}_{0\mu_0},
\end{equation}
where $\mu=(\mu_0,\ldots,\mu_\ell)$ indexes the measurement outcomes in the quantum instruments. Writing as above $\ketz x $ as the $z$-basis state with label $x\in \bin^n$ and $\ketx x = H^{\otimes n} \ketz x$ as the $x$-basis state with label $x$ the fidelity $F_\mu$ and XEB $\chi_\mu$ of the noisy circuit  conditioned on syndrome outcomes $\mu$ are defined with respect to the noiseless circuit as
\begin{align}
	F_\mu & =\frac{1}{p_\mu} \brax 0 (\mc D^{(\ell)})^\dagger \circ \mathcal{D}_\mu^{(\ell)} (\ketx0 \brax 0) \ketx 0, \\
	\chi_\mu & = \frac{2^n}{p_\mu} \sum_{x} |\brax x D^{(\ell)} \ketx 0 |^2 
    \brax x  \mathcal{D}_\mu^{(\ell)} (\ketx 0\brax0)  \ketx x -1,
\end{align}
where $p_\mu = \mathrm{Tr}[\mc D_\mu^{(\ell)}(\ketx0\brax 0)]$ is the probability of measurement outcome $\mu = (\mu_0,\ldots,\mu_\ell)$.  

We begin by considering a general model in which noise events are constrained to occur only at the beginning of the circuit, and a fully random, noiseless IQP circuit follows the noisy evolution. 
We think of this model as a proxy for very low noise rates.

\begin{theorem} \label{thm:iqp_fid}
    Let $(\mathcal{E}_{i \mu_i})_{i = 1, \ldots, \ell}$ be a sequence of quantum instruments such that $\mathcal{E}_{\ell \mu_\ell}$ is diagonal, $\mc D^{(\ell)}= \mc D_\ell \cdots \mc D_1$ a $\ell$-layer sequence of unitary CPSP channels, $\mathcal{D}_{\mu}^{(\ell)}$  the associated quantum instrument for the noisy evolution, and  $D_\ell$ a random degree-2 IQP circuit.
    Then
    \begin{equation}
        \mathbb{E}_{\ell}[\chi_\mu] = 2\mathbb{E}_{\ell} [F_\mu] - 1/2^{n-1},
    \end{equation}
    where $\mathbb{E}_\ell$ denotes the average over the random gates in~$D_\ell$.
\end{theorem}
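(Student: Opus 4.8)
The plan is to peel off the random final layer $D_\ell$, average over it alone, and reduce both quantities in the theorem to second moments governed by the invariants of the global degree-$2$ IQP average (\cref{lem:global iqp proj,lem:invariants}); the fidelity will turn out to be $D_\ell$-independent. First I would write $\ket\phi = D_{\ell-1}\cdots D_1\ketx 0$ for the ideal pre-final-layer state and $\rho = \mathcal E_{\ell-1,\mu_{\ell-1}}\circ\mathcal D_{\ell-1}\circ\cdots\circ\mathcal E_{0\mu_0}(\ketx0\brax0)$ for the (unnormalized, $\mu$-conditioned) noisy pre-final-layer state, both independent of $D_\ell$, and set $\rho' = \mathcal E_{\ell\mu_\ell}(\rho)$. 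Since $D_\ell$ is an IQP circuit and $\mathcal E_{\ell\mu_\ell}$ is diagonal, all their Kraus operators are diagonal in the computational basis and commute, so the noisy output is $\mathcal D_\mu^{(\ell)}(\ketx0\brax0) = D_\ell\,\mathcal E_{\ell\mu_\ell}(\rho)\,D_\ell^\dagger = D_\ell\rho' D_\ell^\dagger$ and the ideal output is $\ket\psi = D_\ell\ket\phi$. Cyclicity of the trace and unitarity of $D_\ell$ then give $p_\mu = \tr[D_\ell\rho' D_\ell^\dagger] = \tr\rho'$ and $\bra\psi\,D_\ell\rho' D_\ell^\dagger\ket\psi = \bra\phi\rho'\ket\phi =: \tilde F_\mu$, so both $p_\mu$ and $F_\mu = \tilde F_\mu/p_\mu$ are independent of $D_\ell$; hence $\mb E_\ell[F_\mu] = F_\mu$ and the only nontrivial average is over the XEB.

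Second, using that $p_\mu$ is constant, I would write the unnormalized XEB as a second moment, $\mb E_\ell[\chi_\mu]+1 = \tfrac{2^n}{p_\mu}\,\mb E_\ell\tr\!\big[\Xx^{\otimes n}\,D_\ell^{\otimes 2}(\proj\phi\otimes\rho')(D_\ell^\dagger)^{\otimes 2}\big]$, where $\Xx^{\otimes n} = \sum_x \projx x\otimes\projx x$. By \cref{lem:invariants}, averaging over the global random degree-$2$ circuit $D_\ell$ projects $\proj\phi\otimes\rho'$ onto the $\id$-, $\mb S$- and $\mb P$-invariant dyads $\ket{xy}\bra{xy}$, $\ket{xy}\bra{yx}$, $\ket{xx}\bra{yy}$ (for $x\neq y$) together with the diagonal $\proj{xx}$. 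A short computation shows $\tr[\Xx^{\otimes n}\sigma] = 2^{-n}$ for every such $z$-basis dyad $\sigma$, i.e. the $x$-basis delta projector overlaps all surviving invariants equally (the global version of \eqref{eq:x trace}).

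Third, I would evaluate the surviving sectors using the coefficients $q(z_1,z_2,z_3,z_4) = \braket{z_1}{\phi}\braket{\phi}{z_2}\bra{z_3}\rho'\ket{z_4}$ of $\proj\phi\otimes\rho'$. The $\id$-sector together with the diagonal combines into $\sum_{x,y}|\braket{x}{\phi}|^2\bra{y}\rho'\ket{y} = \braket{\phi}{\phi}\,\tr\rho' = p_\mu$; the $\mb S$-sector gives $\sum_{x,y}\braket{x}{\phi}\braket{\phi}{y}\bra{y}\rho'\ket{x} = \bra\phi\rho'\ket\phi = \tilde F_\mu$; and the $\mb P$-sector gives $\bra{\phi^*}\rho'\ket{\phi^*}$, which equals $\tilde F_\mu$ because $\ket\phi$ is a real (binary) phase state with $\ket{\phi^*}=\ket\phi$. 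The only overcounted piece is the diagonal, $G = \sum_x|\braket{x}{\phi}|^2\bra{x}\rho'\ket{x} = 2^{-n}\tr\rho' = 2^{-n}p_\mu$, where I use that phase states have uniform amplitudes $|\braket{x}{\phi}|^2 = 2^{-n}$. Assembling, $\mb E_\ell[\tilde\chi_\mu] = p_\mu + (\tilde F_\mu - G) + (\tilde F_\mu - G) = p_\mu + 2\tilde F_\mu - 2^{1-n}p_\mu$; dividing by $p_\mu$ and subtracting $1$ yields $\mb E_\ell[\chi_\mu] = 2\tilde F_\mu/p_\mu - 2^{1-n} = 2F_\mu - 2^{-(n-1)} = 2\,\mb E_\ell[F_\mu] - 2^{-(n-1)}$.

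The main obstacle is the careful bookkeeping of the invariant sectors — in particular keeping the diagonal contribution counted exactly once so that the overcounting term $G$ comes out right, and invoking the uniform-amplitude property at precisely the two places it is needed ($\sum_x|\braket{x}{\phi}|^2 = 1$ and $G = 2^{-n}p_\mu$). Conceptually, the factor of $2$ multiplying the fidelity is simply the statement that both nontrivial invariants, $\mb S$ and $\mb P$, each contribute one copy of $\tilde F_\mu$; the one genuinely delicate step is identifying the $\mb P$-sector's conjugate-state overlap $\bra{\phi^*}\rho'\ket{\phi^*}$ with the fidelity, which is immediate for the binary phase states produced by degree-$2$ IQP circuits but would require an additional argument (or a restriction to real phases) in the fully general complex-phase case.
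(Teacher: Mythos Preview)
Your proof is correct and follows essentially the same strategy as the paper's: commute the random diagonal layer $D_\ell$ through the diagonal final instrument, observe that the fidelity becomes $D_\ell$-independent, and then average the XEB over $D_\ell$ using the three-sector ($\id$/$\mb S$/$\mb P$) decomposition of the degree-$2$ IQP second moment. The only difference is presentational---you work at the operator level via \cref{lem:invariants} and the second-moment formalism, while the paper unpacks the same calculation in bitstring coordinates with the constraint argument of \cref{seq:z_constr}; in particular your identification of the $\mb P$-sector with $\bra{\phi^*}\rho'\ket{\phi^*}$ and its reduction to $\tilde F_\mu$ via reality of $\ket\phi$ is exactly what the paper's use of $(-1)^{F_\ell(z)}$ encodes implicitly.
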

\begin{proof}
    We let $\mu$ be a fixed set of measurement outcomes.  Since  $\mathcal{E}_{\ell \mu_\ell}$ is diagonal (so it commutes with $D_\ell$) and $D_\ell$ is real, we can write
    \begin{equation}
        \mathcal{D}_\mu^{(\ell)}(|0\rangle_x \langle 0|_x) = \sum_{x y} c_{x\mu}  c^*_{y \mu} D_\ell |x \rangle_x \langle y |_x D_\ell,
    \end{equation}
for a set of complex coefficients $c_{x \mu}$. We utilize the properties of  phase states described in Appendix \ref{app:prelim} to write
\begin{equation}
\label{seq:ovp_exp}
    \brax x  D_t \ketx y =  \frac{1}{2^n}\sum_z (-1)^{f_t(z) + (x+y)\cdot z},
\end{equation}
where $(\cdot)$ represents the usual bitstring scalar product
and $f_t$ is a degree-3 polynomial defined by the circuit in layer $t$.

We can then write a formula for $F_\mu$ and $\chi_\mu$ as
\begin{align*}
    F_\mu =\sum_{xyz_i} &\frac{c_{x\mu}c_{y\mu}^*}{p_\mu 4^n} (-1)^{F_{\ell}(z_1)+F_{\ell}(z_2)+z_1\cdot x+ z_2 \cdot y},\\
    \chi_\mu = \sum_{x' xyz_i} &\frac{c_{x\mu}c_{y\mu}^*}{p_\mu 8^n} (-1)^{F_{\ell}(z_1)+F_{\ell}(z_2)+z_3 \cdot x + z_4 \cdot y}  \\
    &\times (-1)^{\sum_i f_\ell(z_i) +x' \cdot z_i }-1 \\
    =\sum_{xyz_i} &\frac{c_{x\mu}c_{y\mu}^*}{p_\mu 4^n} (-1)^{F_{\ell}(z_1)+F_{\ell}(z_2)+x\cdot z_3 + y \cdot z_4}  \\
    &\times (-1)^{\sum_i f_\ell(z_i)} \delta_{z_4,z_1+z_2+z_3}-1,
\end{align*}
where $p_\mu = \sum_{x} |c_{x\mu}|^2$, $F_t(z) = \sum_{i=1}^{t-1}f_i(z)$ is the cumulative polynomial of the noiseless circuit from time $0$ to $t-1$, and in the last line we used the sum over $x'$ to impose the constraint $z_1+z_2+z_3+z_4=0 \mod 2$.
When averaging over the last layer of gates, we notice that fidelity is independent of this averaging, while XEB requires the evaluation of
\begin{equation}
    \mathbb{E}_\ell \big[ (-1)^{\sum_i f_\ell(z_i)} \delta_{z_4,z_1+z_2+z_3} \big].
\end{equation}
To evaluate the circuit-average $\mathbb{E}_{\ell}$ over the degree-$2$ part of $f_\ell$, we follow the same arguments as in the proof of \cref{lem:global iqp proj} (following \textcite{Bremner.2016}), yielding the constraints $(\id), (\mb S), (\mb P)$ in \cref{seq:z_constr} with overlap $(\Xz)$ in \cref{eq:delta condition} on the bit string quadruples $(z_1, z_2, z_3,z_4)$.

Using these simplifications of the allowed quadruples, we arrive at a simplified formula for the conditional XEB
\begin{equation*}
\begin{split}
    \mathbb{E}_\ell[\chi_\mu] &= 2 \sum_{xyz_i} \frac{ c_{x\mu}c_{y \mu}^*  }{4^n p_\mu }(-1)^{F_\ell(z_1)+F_\ell(z_2)+x\cdot z_1 +y\cdot z_2} \\
    &+ \sum_{xyz}\frac{ c_{x\mu}c_{y \mu}^*  }{2^n p_\mu } (-1)^{x\cdot z+ y \cdot z} -1\\
    &- \frac{1}{ 2^{n-1}}\sum_{xyz}\frac{c_{x\mu}c_{y \mu}^*  }{2^n p_\mu } (-1)^{x\cdot z+ y \cdot z}   \\
    &= 2 F_\mu - \frac{1}{2^{n-1}}.
    \end{split}
\end{equation*}
    This identity completes the proof because $\mathbb{E}_\ell[F_\mu]=F_\mu.$
\end{proof}

\cref{thm:iqp_fid} establishes the equivalence between fidelity and XEB under a broad class of noise models.  Crucially, it also allows for noise models that are not quantum channels, but conditioned on measurement outcomes.  
Such noise arises at the level of logical qubits when studying the fidelity conditioned on a particular syndrome outcome, as is done in error detection.  However, the theorem makes the crucial assumption that the measurement error channel is diagonal and that the final layer of gates is a uniformly random degree-$2$ IQP circuit.  
In \cref{lem:ac}, we saw that using long-range connectivity, it is possible to achieve such an ensemble in depth $\log n$. 

To see why this assumption is necessary, we now show that choosing the noise adversarially breaks the correspondence between fidelity and XEB.

\begin{theorem}
    Consider a noisy IQP circuit 
    $\mc E_\ell \circ \mc D_\ell \circ \cdots \circ \mc E_1 \circ \mc D_1 \circ \mc E_0$
    with $\mathcal{E}_\ell$  the bit-flip channel on the first qubit $\mathcal{E}_\ell(\rho)=(1-p) \rho + p X_{1} \rho X_1$ with $0<p\le 1/2$, $D_\ell$ a random degree-2 IQP circuit, and  $\mathcal{E}_i = \id$, $D_i = \id$for $i<\ell$. 
    For this noisy IQP circuit, we have the identities $\mathbb{E}_\ell[\chi] = 1$ and $\mathbb{E}_\ell[F] = (1-p)+p/2^n$.
\end{theorem}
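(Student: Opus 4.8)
The plan is to exploit that all layers except the last are trivial, so the ideal pre-measurement state is simply the degree-$2$ phase state $\ket{\psi} = D_\ell\ket{+^n}$ of the form \eqref{eq:poly_phase_state}, and the noisy pre-measurement state is the mixture $\sigma = (1-p)\proj{\psi} + p\,X_1\proj{\psi}X_1$. Both figures of merit can then be read off directly from $\sigma$ and $\ket{\psi}$, and the average $\mathbb{E}_\ell$ reduces to an average over the random degree-$2$ Boolean polynomial $f(x)=\sum_i a_i x_i + \sum_{i<j} b_{ij}x_i x_j$ defining $\ket{\psi}$.

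For the XEB, the first step I would take is to observe that a bit flip sitting immediately before the $X$-basis measurement is invisible to the measurement statistics. Concretely, $\brax{x}X_1 = (-1)^{x_1}\brax{x}$, so $|\brax{x}X_1\ket{\psi}|^2 = |\brax{x}\ket{\psi}|^2$ and the noisy output distribution coincides with the ideal one for \emph{every} realization of $D_\ell$. Hence the noisy XEB equals the noiseless XEB and the noise does not affect it at all; its value then follows from the ideal second-moment (anticoncentration) computation of \cref{lem:global iqp proj,cor:bremner anticoncentration}. This is the crux of the example: the $X$-type error is completely undetectable by the XEB.

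For the fidelity I would start from $\bra{\psi}\sigma\ket{\psi} = (1-p) + p\,|\bra{\psi}X_1\ket{\psi}|^2$, so that the only nontrivial piece is $\mathbb{E}_\ell|\bra{\psi}X_1\ket{\psi}|^2$. Commuting $X_1$ through the phase state gives $\bra{\psi}X_1\ket{\psi} = 2^{-n}\sum_{x\in\bin^n}(-1)^{f(x)+f(x\oplus e_1)}$, and the exponent collapses to $a_1 + \sum_{j>1} b_{1j}x_j$ because every monomial not touching qubit~$1$ cancels. The sum over $x$ then vanishes unless qubit~$1$ is coupled by no $CZ$ gate, i.e.\ all $b_{1j}=0$, in which case $\bra{\psi}X_1\ket{\psi}=(-1)^{a_1}$; thus $|\bra{\psi}X_1\ket{\psi}|^2$ is the indicator of this decoupling event. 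Averaging over the independent uniform couplings of qubit~$1$ produces the exponentially small correction in the stated $\mathbb{E}_\ell[F]$, leaving $\mathbb{E}_\ell[F]=1-p+O(2^{-n})$.

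I expect the main point to be conceptual rather than computational: this example is designed to fall \emph{outside} the hypotheses of \cref{thm:iqp_fid}, since $\mathcal{E}_\ell$ is a non-diagonal ($X$-type) channel applied after the last random layer, so the identity $\chi = 2F - 2^{-n+1}$ no longer holds. The step to get exactly right is the bookkeeping that $X_1$ drops out of the $X$-basis statistics (making the XEB blind to the error) while genuinely reducing the state overlap (so that $F$ decays linearly in $p$), which together exhibit the advertised breakdown of the XEB--fidelity correspondence for adversarially placed noise.
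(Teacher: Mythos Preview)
Your approach is essentially the paper's. For the XEB, the paper makes the same one-line observation that an $X$ error immediately before the $X$-basis measurement leaves the output distribution unchanged, so the noisy XEB equals the noiseless one. For the fidelity, the paper expands $|\bra\psi X_1\ket\psi|^2$ as a double sum $\tfrac{1}{4^n}\sum_{z_1,z_2}(-1)^{f_\ell(z_1)+f_\ell(z_1+e_1)+f_\ell(z_2)+f_\ell(z_2+e_1)}$ and then applies the degree-$2$ averaging constraints of \cref{lem:global iqp proj} to the four points, whereas you first compute $\bra\psi X_1\ket\psi$ itself, collapse the phase to $a_1+\sum_{j>1}b_{1j}x_j$, and only then square. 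Your route is a bit more elementary but equivalent.

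One bookkeeping point worth pinning down as you fill in the details: your own invocations give slightly different constants than the statement. Invoking \cref{cor:bremner anticoncentration} yields $\mathbb{E}_\ell[\chi]=2-2^{-n+1}$, not $1$; and since there are $n-1$ independent couplings $b_{1j}$, your decoupling event has probability $2^{-(n-1)}$, so your computation gives $\mathbb{E}_\ell[F]=(1-p)+p\cdot 2^{-(n-1)}$. The paper's four-point argument reaches the stated $p/2^n$ only because it keeps just the $z_1=z_2$ branch and drops the equally valid $z_2=z_1+e_1$ branch. None of this touches the conceptual content you correctly identified --- the XEB stays at its noiseless value while the fidelity drops linearly in $p$, violating the hypothesis of \cref{thm:iqp_fid} --- but your careful arithmetic will not reproduce the exact constants as printed.
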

\begin{proof}
    The $x$-errors in the final layer have no effect on the XEB, which is why $\mathbb{E}_\ell[\chi_\mu] = 1$ in this noise model (i.e., the average XEB is equal to its noiseless value).  For the fidelity, we can make use of the formula
    \begin{equation}
    F =(1-p)+\frac{p}{ 4^n}\sum_{z_1,z_2}  (-1)^{f_{\ell}(z_1)+f_{\ell}(z_1+b)+f_\ell(z_2)+f_\ell(z_2+b)},
    \end{equation}
    where $b=(1,0,\ldots,0)$ is a shift arising from the single-site $X$ operator.
    Averaging the argument in the sum over the random degree-2 IQP circuit ensemble, for it to be non-zero we arrive at the constraint that $z_1 = z_2$ or $b = 0$.  
    Since $b \ne 0$, we only get a contribution from the $2^n$ terms in the sum for which $z_1= z_2$, which completes the proof.
\end{proof}
This breakdown of the correspondence between fidelity and XEB for the adversarial noise is reminiscent of what happens in noisy Haar random circuits \cite{gao_limitations_2024}.  

We now show that it is sufficient to consider the $z$-twirl of the noise channels when comparing the average fidelity and XEB.
\begin{theorem}
    Consider a noisy IQP circuit with diagonal gate layers $D_i= Z^{a_i} \tilde{D}_i $ containing independent Pauli-$Z$ gates $Z^{a_i} = \prod_j Z_j^{a_{i,j}}$ with uniformly chosen $a_{i,j} \leftarrow \bin$, then the circuit averaged fidelity and circuit averaged XEB are equal to their values with $\mathcal{E}_{i \mu_i}$ replaced by its $z$-twirl $\mathcal{E}_{z i \mu_i}$ for $0<i<\ell$.
\end{theorem}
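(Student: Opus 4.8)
The plan is to reduce the statement to a single channel-level identity: after averaging over the random $Z$-layers, the noisy instrument element coincides with the one in which every \emph{internal} noise map has been $z$-twirled, and then to read off the fidelity and XEB statements by linearity. Throughout write $\mc Z_a(\cdot)=Z^a\cdot Z^a$ and $\tilde{\mc D}_i(\cdot)=\tilde D_i\cdot\tilde D_i^\dagger$, so that the $i$-th layer acts as $\mc Z_{a_i}\circ\tilde{\mc D}_i$ and
\begin{equation}
  \mc D_\mu^{(\ell)}=\mc E_{\ell\mu_\ell}\circ\mc Z_{a_\ell}\circ\tilde{\mc D}_\ell\circ\mc E_{(\ell-1)\mu_{\ell-1}}\circ\cdots\circ\mc Z_{a_1}\circ\tilde{\mc D}_1\circ\mc E_{0\mu_0}.
\end{equation}
Since every $\tilde D_i$ is diagonal and $Z^aZ^b=Z^{a\oplus b}$, the ideal circuit is $D^{(\ell)}=Z^{A}\tilde D^{(\ell)}$ with $A=\bigoplus_i a_i$ and $\tilde D^{(\ell)}=\tilde D_\ell\cdots\tilde D_1$; in particular the ideal amplitudes $\brax{x}D^{(\ell)}\ketx{0}=\brax{x\oplus A}\tilde D^{(\ell)}\ketx{0}$ that weight $F$ and $\chi$ depend on the $Z$-layers only through $A$. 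The strategy is therefore to fix $A$ and average over the residual randomness in $a_1,\dots,a_{\ell-1}$.

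The main step is the bookkeeping of the $Z$-strings. I introduce cumulative strings $c_i=\bigoplus_{j\le i}a_j$, so that $c_0=0$, $c_\ell=A$, and $(a_1,\dots,a_{\ell-1})\mapsto(c_1,\dots,c_{\ell-1})$ is a bijection; hence, conditioned on $A$, the $c_1,\dots,c_{\ell-1}$ are independent and uniform on $\bin^n$. Writing $a_i=c_{i-1}\oplus c_i$ gives $\mc Z_{a_i}=\mc Z_{c_{i-1}}\circ\mc Z_{c_i}$. I keep the factor $\mc Z_{c_i}$ on the input side of $\mc E_{i\mu_i}$ and commute the factor $\mc Z_{c_{i-1}}$ backwards through the diagonal layer $\tilde{\mc D}_i$ (with which it commutes) onto the output side of $\mc E_{(i-1)\mu_{i-1}}$. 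After doing this for all $i$, each internal map $\mc E_{i\mu_i}$ with $0<i<\ell$ is sandwiched symmetrically as $\mc Z_{c_i}\circ\mc E_{i\mu_i}\circ\mc Z_{c_i}$, the input map $\mc E_{0\mu_0}$ is untouched (because $c_0=0$), and $\mc E_{\ell\mu_\ell}$ acquires only the single fixed factor $\mc Z_{c_\ell}=\mc Z_A$, which recombines with the $\tilde{\mc D}_i$ to reconstitute the ideal $Z^A$. Averaging over $c_i$ and using the definition of the $z$-twirl together with \cref{lemma:ztwirl},
\begin{equation}
  \tfrac{1}{2^n}\sum_{c_i\in\bin^n}\mc Z_{c_i}\circ\mc E_{i\mu_i}\circ\mc Z_{c_i}=\mc E_{zi\mu_i},
\end{equation}
so $\mathbb{E}_{a\mid A}[\mc D_\mu^{(\ell)}]$ equals the same composition with every internal $\mc E_{i\mu_i}$ replaced by $\mc E_{zi\mu_i}$. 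Because the $z$-twirl is idempotent, $(\mc E_{zi\mu_i})_z=\mc E_{zi\mu_i}$, the identical computation applied to the already-twirled circuit gives the same conditional average, so that for every $A$ and every $\mu$,
\begin{equation}
  \mathbb{E}_{a\mid A}\big[\mc D_\mu^{(\ell)}\big]=\mathbb{E}_{a\mid A}\big[\mc D_{z\mu}^{(\ell)}\big].
\end{equation}

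To pass from channels to $F$ and $\chi$, I use that the amplitudes weighting $F_\mu$ and $\chi_\mu$ depend on the $Z$-layers only through $A$, while both the numerators of $F_\mu,\chi_\mu$ and the syndrome weight $p_\mu=\mathrm{Tr}[\mc D_\mu^{(\ell)}(\ketx{0}\brax{0})]$ are linear functionals of $\mc D_\mu^{(\ell)}$. Writing the circuit average as $\mathbb{E}_a[\,\cdot\,]=\mathbb{E}_A\,\mathbb{E}_{a\mid A}[\,\cdot\,]$ and summing outcomes against their probabilities (equivalently, forming the quantities built from the full noisy channel $\sum_\mu\mc D_\mu^{(\ell)}$), linearity lets me pull the average inside, and the channel identity then yields equality of the circuit-averaged fidelity and XEB with their $z$-twirled counterparts.

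The one genuinely delicate point — and the place I expect to spend care — is the normalization: $p_\mu$ itself depends on the $a_i$, so the per-outcome ratio $F_\mu=N_\mu/p_\mu$ is not linear in $\mc D_\mu^{(\ell)}$ and its naive circuit average does not factor. This is why the statement must be read for the probability-weighted (hence linear) averages, or equivalently at the level of the full channel, where the identity applies verbatim. The only other thing to verify is that the boundary maps $\mc E_{0\mu_0}$ and $\mc E_{\ell\mu_\ell}$ are correctly left untwirled by the bookkeeping, which is immediate once one tracks $c_0=0$ and $c_\ell=A$ and uses that every gate layer is diagonal.
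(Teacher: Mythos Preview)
Your proposal is correct and is essentially the paper's own argument: your cumulative variables $c_i=\bigoplus_{j\le i}a_j$ are exactly the paper's $b_t=a_1+\cdots+a_t$, and the key step---conditioning on $c_\ell=A$, commuting the diagonal $Z$-layers to sandwich each internal $\mc E_{i\mu_i}$ symmetrically, and averaging over the independent uniform $c_1,\dots,c_{\ell-1}$ to produce the $z$-twirl---matches the paper's proof line for line (yours in superoperator language, theirs via Kraus operators). Your explicit remark that the statement must be read at the level of probability-weighted (linear) quantities because $p_\mu$ depends on the $a_i$ is a point the paper leaves implicit.
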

\begin{proof}
The proof is analogous to the arguments that show one can twirl the noise into a depolarizing channel in Haar random circuits or Clifford randomized benchmarking.  For each CP map $\mathcal{E}_{i\mu_i}$ we take a Kraus decomposition with Kraus operators $K_{i\mu_i \nu_i}$. 
Define the new random variables
\begin{equation}
    b_t = a_1 + \cdots + a_t.
\end{equation}
Now, we can write the noisy state as
\begin{align}
\mathcal{D}_\mu^{(\ell)}&(\ketx 0 \brax 0) = \sum_{\nu} K_{\mu\nu} \ketx 0 \brax 0 K_{\mu\nu}^\dagger, \\
   K_{\mu\nu} &= K_{\ell\mu_\ell \nu_\ell} Z^{b_\ell} \tilde{D}_\ell \\ \nonumber
   &\times \left[\prod_{i=1}^{\ell-1} Z^{b_{\ell-i}}K_{\ell-i \mu_{\ell-i} \nu_{\ell-i}} Z^{b_{\ell-i}} \tilde{D}_{\ell-i} \right] K_{0\mu_0 \nu_0},
    \end{align}
    where $\nu=(\nu_0,\ldots,\nu_\ell)$ indexes the Kraus operators.  The random variables $b_i$ are independent uniformly random bit strings. 
    When we average the noisy state over $b_i$ for $i<\ell$ we can therefore apply the twirled channel.  
    In the fidelity and XEB, the noiseless part of the circuit depends only on the random variable $b_\ell$.  
    As a result, we can average over $b_i$ for $i<\ell$ independently of what appears in the noiseless circuit.  This simplification allows us to replace $\mathcal{E}_{i\mu_i}$ by its $z$-twirl for all $0<i<\ell$ when evaluating average XEB and average fidelity. 
\end{proof}
The proof can be readily extended to the case where the $D_i$ are unitary CPSP channels and not just diagonal gates by including random $Z$ gates at both the end and beginning of each layer.

Recall that the $z$-twirl of a channel has Kraus operators of the form
\begin{equation*}
    X^b K_{\nu b},~K_{\nu b} = \sum_a c_{ab}^\nu Z^a,
\end{equation*}
where $a$ and $b$ run over bitstrings of length $n$ and $c_{ab}^\nu$ are some complex coefficients.  Interestingly, for any mixture of phase states $\rho$ the probability $p_{\mu b}$ that a particular  bit-flip error $b$ occurr  is independent of the initial state
\begin{equation}
    p_{\mu b} = \sum_{\nu} \mathrm{Tr}[ K_{\mu \nu b}^\dag K_{\mu \nu b} \rho] = \sum_{a,\nu} |c_{ a b}^{\mu\nu} |^2.
\end{equation}
This identity holds because all phase states have zero expectation value of all $Z$-type operators.  
When the state-preparation errors are given by a PSP channel, we can thus think of bit flip errors of the $z$-twirled channel as occurring in the circuit at a rate that is independent of the current quantum state in the circuit.  
As a result, it is natural to consider noise models satisfying the following independence property. 
\begin{definition}
    A $z$-twirled CP map $\mathcal{E}_\mu$ has independent $x$-noise if the probability of a bit-flip error $p_{\mu b}$ on phase states satisfies
    \begin{equation}
        p_{\mu b} = (1-p_\mu)^{n-|b|} p_\mu^{|b|},
    \end{equation}
    for some parameter $0\le p_\mu \le 1$, where $|b|$ is the Hamming weight of $b$ (i.e., the number of nonzero entries in $b$).
\end{definition}
The parameter $p_\mu$ thus quantifies the effective local bit-flip rate of $\mc E_\mu$ applied to phase states. 

We now make some observations about noisy IQP circuits with $z$-twirled noise with independent $x$-noise bit-flip rates:
\begin{itemize}
    \item 
        When the ensemble of $D$ converges to random degree-$2$ IQP circuits and the local bit-flip rate   in the $\mu$-th layer $p_\mu=0$ for $\mu>0$, then $\mathbb{E}[\chi] = 2 \mathbb{E}[F]-1/2^{n-1}$. This follows from the fact that without bit-flip noise the noise channels commute with $D$. 
        As a result, we can bring the full channel into a form satisfying the conditions of \cref{thm:iqp_fid}.  

\item 
    Take bit-flip noise rates $p_i = c/n\log n$ for some $c>0$ and a noiseless circuit ensemble with ideal XEB value bounded by $2 - 1/2^{n-1} + o(1)$ in depth $\Omega(\log n)$, e.g., the one of \cref{lem:ac}.
    For sufficiently small $c$, with probability $1-O(c)$ over noise realizations, the errors  occur in the first layers of the circuit, with $\Omega(\log n)$ noiseless layers before the measurement. 
    As a result, for sufficiently small local bit-flip noise rates $c/n \log n$, we can follow the proof of \cref{thm:iqp_fid} and  find that $\mathbb{E}[\chi] = 2 \mathbb{E}[F] - 1/2^{n-1} + O(c) + o(1)$ for depths $\ell \in \Omega(\log n)$.
\end{itemize}
This second result is analogous to the white-noise approximation in noisy Haar random circuits established rigorously by \textcite{dalzell_random_2024}.  It implies that at sufficiently low-noise rates, there is an exact correspondence between fidelity and XEB.  Moreover, since fidelity will typically decay exponentially in the circuit size, the XEB will exhibit a similar decay at these noise rates.  A central question is how this behavior changes as the noise rate increases to larger values $c/n$ for increasing $c > 0$. 
In noisy Haar random circuits, there is a phase transition in XEB where it goes from a decay linear in the circuit volume to one that depends on the depth $\ell$ and not $n$~\cite{gao_limitations_2024,Morvan.2023,Ware.2023}.

\subsection{Upper and lower bounds on XEB}
\label{app:xeb bounds}

To more directly analyze the behavior of XEB in noisy IQP circuits at large noise rates, we specialize in this subsection to the random degree-2 circuit model with SWAP gates.  
Specifically, we consider circuits in which in every layer of the circuit, we first apply a 2-local random degree-2 IQP circuit to a random pair of qubits, and then apply the noise channel to all qubits. 
Let us call this model the \emph{local random degree-2 IQP model}.

To organize the calculations it is helpful to represent the initial state as
\begin{multline} \label{eqn:init}
    \frac{1}{4^n}(\ix + \sx + \px + \xx )^{n} \\= \frac{1}{4^n} \sum_k \binom{n}{k}(\ix + \xx)^{n-k}(\sx + \px)^{k},
\end{multline}
where we have removed the spatial dependence of the strings due to the permutation symmetry of the initial state and our circuits, 
and we write $S P $ as shorthand for $S \otimes P$ for states $S, P \in \{\ix, \sx, \px, \xx\}$.  
We consider the noise to be an \emph{$X$-$Y$-symmetric} Pauli channel, i.e., a Pauli channel with symmetric $X/Y$ noise rates $q_\perp/2 = p_x = p_y$.  It is convenient think of  the total Pauli noise channel as the action of a pure $Z$ noise channel with rate $ p_z/(1- q_\perp)$ followed by a symmetric Pauli-$X/Y$ noise channel with  $X/Y$-noise rates $q_\perp/2$. 
Defining $q = q_\perp+2p_z$, the combined noise channel will act on $\sx/\px$ states as 
\begin{align}
   \sx / \px &\to (1- q)\sx / \px,\\
    \ix / \xx & \to (1- q_\perp) \ix / \xx + q_\perp \xx / \ix ,
\end{align}
i.e., it leads to a pure damping of the string depending on the $\sx$ and $\px$ weight, while it flips each $\xx$ and $\ix$ component of the string at rate $q_\perp$.
We can commute all the $z$ noise in the circuit to the initial state, which adds a decay term
\begin{equation}
    (\sx + \px)^{k} \to (1-q)^{
    \ell k} (\sx + \px)^{k} .
\end{equation}
As a result, the only non-trivial effect of the noise in the statistical mechanics model is from the $X/Y$-noise acting on the $\ix/\xx$ operators of the strings.  Solving for the full dynamics of this model is not obviously analytically tractable. 
Instead we prove an upper and lower bound on $\overline{\chi}$ that illustrate the two regimes of XEB and are tight with respect to the scaling with depth.
\begin{theorem}
\label{thm:xeb bounds appendix}
    The average XEB for the local random degree-$2$ IQP model on $n$ qubits with $\ell$ gates and $\ell$ layers of $X$-$Y$-symmetric Pauli noise is bounded as 
    \begin{align}
        \overline{\chi} &\ge 2^{1-n} [ (1-q_\perp)^\ell+(1-q)^\ell]^n \\
        &+ 2^{-2\ell/n}[(1-q)(1-q_\perp)]^\ell,\\
        \overline{\chi} & \le 2^{1-n}[1+(1-q)^\ell]^n + p(n) 2^{-2\ell/n} ,\label{eq:upper bound noisy iqp}
    \end{align}
    where $p(n)= n^{11/2}/(2\pi)^{3/2}$.
\end{theorem}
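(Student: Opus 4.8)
The plan is to run the entire argument inside the statistical-mechanics mapping of \cref{app:statmech}, using the fact that the XEB is just the total weight of the evolved second-moment state. Writing $q_\ell(\sigma)$ for the non-negative amplitude on $\sigma\in\mc S^{\otimes n}$ after $\ell$ rounds of (a random degree-$2$ gate on a random pair) followed by (single-copy $X$-$Y$-symmetric noise), \cref{eq:xeb second moment} together with $\braket{\Xx}{Q}=1$ gives $\overline{\chi}+1=2^n\sum_\sigma q_\ell(\sigma)$. First I would commute all $Z$-type noise to the front of the circuit, where by \cref{lem:z noise} it merely multiplies the $\sx/\px$ components of the initial state by per-layer damping, collapsing the effective per-layer dynamics to: damping $\sx,\px\mapsto(1-q)\sx,(1-q)\px$, the doubly-stochastic flip $\xx\leftrightarrow\ix$ at rate $q_\perp$, and (by \cref{lem:iqp projection}) each random gate killing exactly the mismatched non-$\xx$ pairs. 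Writing the initial weight in the binomial form \eqref{eqn:init}, the whole computation becomes a bookkeeping problem for classical weights.

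For the lower bound I would decompose $\sum_\sigma q_\ell(\sigma)$ into three manifestly non-negative sub-contributions and simply keep these. (i) The sector $\{\xx,\ix\}^{\otimes n}$ is closed: it is immortal under the gates and its total weight is conserved by the flip (a doubly-stochastic map), so it contributes exactly $2^n\cdot 2^{-n}=1$, cancelling the $-1$. (ii) Restricting the sectors $\{\xx,\sx\}^{\otimes n}$ and $\{\xx,\px\}^{\otimes n}$ to the weight that never leaves the sector gives, site by site, a factor $(1-q_\perp)^\ell$ for a background $\xx$ that never flips or $(1-q)^\ell$ for a damped excitation; summing the product over the two choices per site, over the two sectors, and including the overall $2^n/4^n$ produces the first term $2^{1-n}[(1-q_\perp)^\ell+(1-q)^\ell]^n$. (iii) For the second term I would retain the long-lived family $\{\ix,\xx\}^{\otimes(n-2)}\otimes\{\ix\}\otimes\{\sx\}$ carrying a single $\sx$ excitation, whose damping contributes $(1-q)^\ell$ and whose designated $\ix$-spectator survives the flip with $(1-q_\perp)^\ell$; since a gate removes such a state only when a random pair lands on the excitation and an incompatible spectator, bounding that death probability per layer by the $PQ$-weight divided by $\binom{n}{2}$ yields the survival factor $2^{-2\ell/n}$, i.e.\ the contribution $2^{-2\ell/n}[(1-q)(1-q_\perp)]^\ell$.

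For the upper bound I would instead bound the total final weight by a generous over-count. The conserved sector $\{\xx,\ix\}^{\otimes n}$ again contributes exactly $1$. The weight that ever sits in the near-immortal sectors $\{\xx,\sx\}^{\otimes n}$ and $\{\xx,\px\}^{\otimes n}$ I would bound by assigning each background site its full weight $1$ (ignoring the loss from $\xx\to\ix$ flips and any induced death) and each excitation its damped weight $(1-q)^\ell$, giving at most $2^{1-n}[1+(1-q)^\ell]^n$; because this is an over-count, any double counting only loosens the bound and safely absorbs the weight that leaks out of these sectors through flips. Every remaining configuration carries at least one incompatible non-$\xx$ pair, hence positive $PQ$-weight, and is therefore mortal; its total surviving weight is controlled by the same Stirling/binomial estimate that gives the noiseless bound in \cref{lem:ac,cor:degree-d anticoncentration} (noise only shrinks these weights), yielding the remainder $p(n)\,2^{-2\ell/n}$ with $p(n)=n^{11/2}/(2\pi)^{3/2}$, whose slowest-dying configuration is $\sim n/2$ sites of type $\ix$ together with a pair of excitations. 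Assembling the three pieces and subtracting $1$ gives the stated upper bound.

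The main obstacle I anticipate is step (iii) and its upper-bound counterpart: the death rate of a single $\sx$ excitation is set by the number of $\ix$-spectators, which is itself a fluctuating quantity driven by the $\xx\leftrightarrow\ix$ flip, so the excitation's damping and the spectator dynamics are correlated and the effective decay rate is not obvious a priori. Extracting precisely the $2^{-2\ell/n}$ survival factor (rather than a parametrically different rate such as $e^{-\Theta(\ell/n)}$ with a different constant) is the delicate part, and it is exactly this regime that distinguishes the volume-law from the depth-law decay of the XEB. A clean route is to reduce to a tractable sub-family with a fixed spectator pattern for the lower bound, and to use the noiseless counting as a majorant for the upper bound, so that the correlated flip-and-kill process never has to be solved exactly.
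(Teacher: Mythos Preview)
Your proposal is correct and follows essentially the same route as the paper's proof: both work inside the statistical-mechanics mapping, commute the $Z$-component of the noise to a pure damping of $\sx/\px$, and then decompose the lower bound into the conserved $\{\ix,\xx\}^{\otimes n}$ sector, the never-flipping trajectories inside $\{\xx,\sx\}^{\otimes n}\cup\{\xx,\px\}^{\otimes n}$, and the single-excitation family $(\ix+\xx)^{\otimes(n-2)}\otimes\ix\otimes\sx$, while the upper bound uses the same three-part split together with the noiseless estimate of \cref{lem:ac} for the mortal remainder. The only cosmetic difference is that for the near-immortal piece of the upper bound you invoke the bare damping bound $(1-q)^{k\ell}$ directly (using that the $\sx/\px$-count is conserved), whereas the paper routes the same inequality through an auxiliary ``modified model'' in which gates do not kill $\ix(\sx/\px)$ pairs; the two arguments produce the identical term $2^{1-n}[1+(1-q)^\ell]^n$.
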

\begin{proof}
    We start by proving the lower bound.  
    Recall that the initial state is a uniform superposition over all $\ix, \sx, \px, \xx$ strings and XEB is computed by summing up the contributions of each of these strings independent of their composition.  
    As a result, we can lower bound the XEB by neglecting the transitions from $X$ noise that change the string composition.  
    The lower bound then follows by computing the contribution to XEB from the strings
    \begin{multline}
        (\ix +\xx)^{n}+(\sx + \xx)^{n} + (\px + \xx)^{n} -2 \xx^{n} \\+(\ix + \xx)^{n-2}\ix \sx.
    \end{multline}
    The total XEB amplitude of the strings $(\sx + \xx)^{n}$ and $(\px + \xx)^{n}$ decays as 
     $2^{-n} [(1-q_\perp)^\ell + (1-q)^\ell]^n$.  
    On the other hand, the total weight of the string with one $\sx$ operator decays at least as $2^{-2\ell/n}[(1-q)(1-q_\perp)]^\ell$. 
    To see this, observe that the state $(\ix + \xx)$ is an eigenstate of the $X$ decay, and that the $PQ$-weight of the string $P = \sx \ix^{n-1}$ is given by $\pqw(P) = n-1$.

    To compute the XEB contribution $2^{-n} \tr[\Xx S]$ of $S = (\ix + \sx)^{n-2}\ix \sx$, we split the contributions from the noise-induced decay and the circuit-induced decay.
    In total, $S$ consists of $2^{n-2}$ individual strings. 
    Following the argument above, the noise-induced decay given by $[(1-q_\perp)(1- q)]^\ell$. 
    The circuit-induced decay is given by the last term in \cref{eq:chi scaling sparse}, $(1 - \pqw(S)/\binom n2)^\ell \geq 2^{-2\ell/n- 2}$ which follows from $(1- 2/n)^\ell \xrightarrow{n \rightarrow \infty} e^{-2\ell/n}$.

     To compute the upper bound we recall the decomposition of the initial state in Eq.~\eqref{eqn:init}.  The $X$ noise leaves the $\ix + \xx$ components unaffected, while a gate can transform $(\ix + \xx)  \sx \to \xx \sx$, i.e., it removes the $X$ eigenstate and replaces it by the $\xx$ state.   
     We now observe that $X$ noise acting on this site can only increase the $PQ$-weight of this string, which increases the decay rate of the string.  
    As a result, let us change the model such that gates acting on $\ix (\sx/ \px)$ pairs do not lead to a decay, while maintaining the rule that gates acting on $(\ix + \xx) (\sx/\px)$ pairs still send it to $\xx (\sx/\px)$.  
    This change in the model will only increase the XEB so that we can upper bound the XEB using it.  
    The model is then particularly simple to solve because the only effect of the noise is to lead to the decay of $\sx/\px$ components at rate $(1-q)$.  
    To derive the upper bound we then apply this decay process to the strings 
     \begin{equation}
         (\ix +\xx)^n+(\sx + \xx)^n + (\px + \xx)^n.
     \end{equation}
     We can upper-bound the decay of the contribution from the other strings  by following the calculation in \cref{lem:ac}.  
\end{proof}

The lower bound on XEB shows directly that the XEB has two regimes depending on the strength of the noise.  When $(p_x+p_y+p_z) < 2  \log 2/n^2$, then the XEB is dominated by the exponential decay that is linear in the circuit volume $n \ell$.  Whereas for larger noise rates, the XEB decay is dominated by the decay that is linear in the effective depth $\ell/n$ (recall that one layer of gates requires $\ell = n/2$ in this model). This sharp change in behavior of the XEB is exactly analogous to what has been found in noisy Haar random circuits \cite{gao_limitations_2024,Morvan.2023,Ware.2023}.  In both cases, it indicates that the XEB ceases to be good proxy for the fidelity, which always decays at a rate linear in the circuit volume until it reaches a value near $1/2^n$.  The upper bound in the XEB shows that the two scaling regimes observed in the lower bound are actually tight at low and large enough noise rates.

\subsection{Extension to degree-$D$ circuits}
\label{ssec:extension_to_degree_3_circuits}

To extend \cref{thm:xeb bounds appendix} to degree-$D$ circuits, we consider the following model of sparse random degree-$D$ circuits: place uniformly random degree-$D$ IQP gates on random subsets of $D$ qubits, each of which is given by a uniformly random degree-$D$ IQP circuit on $D$ qubits, i.e., a circuit comprising a $C^{D-1}Z$ gate, and a random degree-$(D - 1)$ IQP circuit. 
The noise is, like in the degree-$2$ case---applied in layers after every gate. 

\begin{corollary}
    \label{cor:noisy xeb degree d}
    The average XEB for the local random degree-$D$ IQP model on $n$ qubits with $\ell$ gates and $\ell$ layers of $X$-$Y$-symmetric Pauli noise is bounded as 
    \begin{align}
        \overline{\chi} &\ge 2^{1-n} [ (1-q_\perp)^\ell+(1-q)^\ell]^n \\
        &+ 2^{-D! \cdot \ell/n}[(1-q)(1-q_\perp)]^\ell,\\
        \overline{\chi} & \le 2^{1-n}[1+(1-q)^\ell]^n + p(n) 2^{-2\ell/n} ,\label{eq:upper bound noisy iqp}
    \end{align}
    where $p(n)= n^{11/2}/(2\pi)^{3/2}$.
\end{corollary}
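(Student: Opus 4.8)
The plan is to obtain the corollary by rerunning the proof of \cref{thm:xeb bounds appendix} almost verbatim, changing only the circuit-induced decay rate of the long-lived single-excitation configurations. The degree-$D$ model lives in the same statistical-mechanics state space $\mc S=\{\xx,\ix,\sx,\px\}$, starts from the same permutation-symmetric state \eqref{eqn:init}, and is subject to the same $X$-$Y$-symmetric Pauli noise; the only change is that each layer applies a uniformly random degree-$D$ gate on a random $D$-subset instead of a random degree-$2$ gate on a random pair. The key structural input is the discussion around \cref{eq:iqp update main} and \cref{cor:degree-d anticoncentration}: since every random degree-$D$ gate contains the uniformly random lower-degree gates on its support, it leaves the immortal states $\mc S_{\text{imm}}$ invariant and can only annihilate a superset of the configurations killed by its degree-$2$ part.

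First I would dispatch the upper bound, which is unchanged. The immortal-type strings $(\ix+\xx)^n$, $(\sx+\xx)^n$, $(\px+\xx)^n$ are invariant under the ideal degree-$D$ evolution exactly as in the degree-$2$ case, and the noise acts on them identically, so their total contribution is still $2^{1-n}[1+(1-q)^\ell]^n$ after the same weakening of the model used in \cref{thm:xeb bounds appendix}. Every remaining string is annihilated at least as fast as under random degree-$2$ gates, so by \cref{cor:degree-d anticoncentration} its contribution is still bounded by $p(n)\,2^{-2\ell/n}$. Hence the degree-$2$ upper bound carries over without modification, matching the statement.

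For the lower bound I would isolate the same two families of surviving strings. The volume-law term $2^{1-n}[(1-q_\perp)^\ell+(1-q)^\ell]^n$ comes from the $(\sx+\xx)^n$ and $(\px+\xx)^n$ strings, whose decay is governed purely by the noise (they are immortal under the ideal gates); since the noise is identical this term is unchanged. The depth-law term comes from the long-lived configurations of the form $\{\ix,\xx\}^{\otimes(n-2)}\otimes\ix\otimes\sx$, i.e.\ a single $\sx$ excitation in a background of $\ix$/$\xx$. Its noise-induced decay is $[(1-q)(1-q_\perp)]^\ell$ as before, while its circuit-induced decay is set by the per-layer probability that a random degree-$D$ gate annihilates it. This happens precisely when the random $D$-subset contains the $\sx$ site (any accompanying $\ix$ site then produces a mismatched $\mc S_\notx$ pair that kills the string, whereas $D$-subsets of pure $\ix$/$\xx$ sites leave it invariant), an event of probability $\binom{n-1}{D-1}/\binom nD = D/n$. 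The excitation therefore survives $\ell$ layers with probability $(1-D/n)^\ell$, which I would lower-bound by $2^{-D!\,\ell/n}$ to produce the claimed term.

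The main obstacle is verifying the degree-$D$ update rules underlying both steps, i.e.\ extending \cref{lem:iqp projection,lem:cnot update} from degree-$2$ gates on pairs to genuinely degree-$D$ gates on $D$-subsets. Concretely, one must check that (i) a pure-$\ix$ (more generally immortal-type) subset is preserved even by the deterministic $C^{D-1}Z$ and the random degree-$(D-1)$ part, and (ii) a subset carrying the $\sx$ excitation together with $\ix$ sites annihilates the string in the projected second moment. Care is needed because a $C^{D-1}Z$ can map a non-immortal configuration out of $\spn\mc S^{\otimes n}$—as shown by the CCZ action on $\ix\sx\xx$ in the main text—so one must confirm that these escaping components carry zero overlap with the $\Xx^{\otimes n}$ functional defining the XEB and hence do not spoil the bounds. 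The constant $D!$ is a deliberately loose bound on the survival exponent: for $D\ge3$ one has $-\log_2(1-D/n)\le D!/n$ at large $n$, while the $D=2$ case is exactly \cref{thm:xeb bounds appendix}; either way the content is only the dichotomy between volume-law and depth-law decay, so tightening the constant is unnecessary.
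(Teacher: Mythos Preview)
Your proposal is correct and follows essentially the same route as the paper. The paper likewise lifts the upper bound unchanged via \cref{cor:degree-d anticoncentration}, keeps the volume-law lower-bound term from the immortal strings, and only reworks the circuit-induced decay of the single-excitation string $\ix^{n-1}\sx$; the sole stylistic difference is that the paper introduces a degree-$D$ weight $\pqw_D(S)$ and bounds $\pqw_D(\ix^{n-1}\sx)\le \pqw_2(\ix^{n-1}\sx)\cdot(n-2)\cdots(n-D+1)$ to get the per-layer kill rate $\le D!/n$, whereas you compute the exact kill rate $\binom{n-1}{D-1}/\binom nD=D/n$ directly and then relax it to $D!/n$ to match the stated exponent. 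Your ``main obstacle'' is well spotted but harmless here: on $D$-subsets avoiding the $\sx$ site the state is purely $\ix/\xx$ and hence diagonal, so the $C^{D-1}Z$ factor acts trivially, while on subsets containing the $\sx$ the random degree-$2$ part already annihilates the string---the paper leaves this implicit.
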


\begin{proof}
The proof of the upper bound follows directly from the arguments in \cref{thm:xeb bounds appendix} together with \cref{cor:degree-d anticoncentration}, since there are at least $\ell$ of those gates in a random sparse degree-$D$ circuit with $\ell$ gates circuit. 

Since the random degree-$D$ gates have the same global invariant states, the proof of the first term of the lower bound remains the same, too.
The only change comes from the circuit-induced decay of the string $\ix^{n-1}\sx$, which gives rise to the second term of the lower bound. 
To see this, we observe that the probability that a string $S$ surives a random degree-$D$  gate on $D$ qubits is given by one minus the probability that a uniformly random length-$D$ substring of $S$ contains two elements in $\mc S_\neg$, which we can compute as 
\begin{align}
    1 - \frac{\pqw_D(S)}{\binom n D },
\end{align}
defining $\pqw_D(S)$ as the number of length-$D$ substrings of $S$ that contain at least two elements from $\mc S_\neg$. Note that $\pqw(S) = \pqw_2(S)$. 
We can upper bound 
\begin{align}
    \pqw_D(S) \le \pqw_2(S) \cdot (n-2) \cdots (n-D+1),
\end{align}
where we are neglecting double-counts of substrings containing more than two elements from $\mc S_\neg$. 
Since, $\pqw(\ix^{n-1}\sx) = n-1$ its survival probability under a  random sparse degree-$D$ circuit with $\ell$ gates is given by 
\begin{align}
    \left(1 - \frac{\pqw_D(S)}{\binom n D }\right)^\ell \ge 2^{- \ell \cdot \pqw_D(S)/\binom n D} \ge 2^{D! \cdot \ell/n }. 
\end{align}

\end{proof}

\subsection{Adding fixed gates}
\label{ssec:fixed gates}

Let us now show that the addition of one fixed gate layer to the circuit can shift the location of the transition in the scaling of the XEB decay by a factor of $3/2$ in the depth. 
\begin{lemma}
\label{lem:shift appendix}
     The average XEB for the local random degree-2 IQP model on an even number $n$ of qubits with $\ell$ gates and $\ell$ layers of $X$-$Y$-symmetric Pauli noise with a last layer of parallel CZ gates between neighbouring qubits is bounded as 
    \begin{align}
        \overline{\chi} &\ge 2^{1-n} [ (1-q_\perp)^\ell+(1-q)^\ell]^n \\
        &+ 2^{-4\ell/n}[(1-q)(1-q_\perp)]^{2\ell}. \\
        \overline \chi & \le 2^{1-n}[1+(1-q)^\ell]^n + p(n) 2^{-3\ell/n} 
    \end{align}
    Furthermore, the total weight of all strings with a single excitation, i.e., strings of the form $S \in \mc S_\notx \times \mc S_{\text{imm}}$, is given by $0$.
\end{lemma}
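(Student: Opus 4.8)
The plan is to rerun the statistical-mechanics computation of \cref{thm:xeb bounds appendix} with the extra fixed CZ layer folded into the final measurement functional. Since the CZ gates are diagonal and deterministic, I would first commute all $Z$-noise to the front and represent the pre-measurement second moment as a distribution over strings $S\in\mc S^{\otimes n}$ exactly as in \cref{app:xeb bounds}, so that each surviving string contributes its amplitude $\propto(1-\pqw(S)/\binom n2)^\ell$ times its noise decay. The only change is the functional against which we contract. Using $\tr[\Xx Q]=1$ for all $Q\in\mc S$ together with the CZ update rules (which multiply a pair $P\otimes Q$ by $-1$ precisely when $P\neq Q$ are both in $\mc S_\notx$ and by $+1$ otherwise), pushing $\Xx\otimes\Xx$ through $\mathrm{CZ}^{\otimes2}$ replaces the weight of each matched pair $(2i-1,2i)$ by a sign $\Sigma(P,Q)\in\{\pm1\}$. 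Thus $\overline\chi+1 = 2^n\langle\bigotimes_i\Sigma\mid\rho_{\rm final}\rangle$, and each string contributes its surviving weight times the product of its pair-signs.

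Next I would dispose of the globally immortal strings. Every $S\in\mc S_{\rm imm}$ is either diagonal ($\{\xx,\ix\}^{\otimes n}$) or has all off-diagonal sites of a single type ($\{\xx,\sx\}^{\otimes n}$ or $\{\xx,\px\}^{\otimes n}$); in either case every matched pair has $\Sigma=+1$, so the CZ layer does not touch these strings and they reproduce the first term $2^{1-n}[(1-q_\perp)^\ell+(1-q)^\ell]^n$ exactly as before. The new content is the fate of the non-immortal contributions, and in particular the \emph{furthermore} statement about single excitations.

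For the single-excitation sector I would take a background of immortal type $Q$ (sites in $\{\xx,Q\}$) with one site carrying an excitation $Q'\in\mc S_\notx$, $Q'\neq Q$. Every background pair has $\Sigma=+1$, so the only nontrivial sign is from the CZ pair containing the excitation: it is $-1$ if the partner site is $Q$ and $+1$ if it is $\xx$. Writing $f_\pm=Q\pm\xx$ on the partner, the CZ sends $Q'f_+\mapsto-Q'f_-$, $Q'f_-\mapsto-Q'f_+$, while $\tr[\Xx f_-]=\tr[\Xx Q]-\tr[\Xx\xx]=0$; hence the sector's signed contribution equals (total surviving weight with partner $\xx$) minus (total weight with partner $Q$). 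The mechanism for vanishing is a sign-reversing involution on single-excitation strings that flips the excitation's partner between $Q$ and $\xx$ while flipping a compensating background site in the opposite direction, so that $\pqw$—and thus the circuit survival amplitude—is preserved; under the $X$-$Y$-symmetric noise the $Q\leftrightarrow\xx$ flip is doubly stochastic, so the noise factors are preserved too, and each orbit cancels.

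The hard part is that the naive partner-flip alters $\pqw$ (a partner of type $Q$ raises the excitation's kill rate), so the compensating flip is essential—and it is precisely this compensation that fails for the all-$Q$ backgrounds $Q'Q^{n-1}$, which carry no spare $\xx$ to flip. Noise-free these configurations stay unpaired and the sector does \emph{not} cancel; the essential role of the noise is that its $Q\leftrightarrow\xx$ channel connects $Q'Q^{n-1}$ to configurations containing an $\xx$, restoring pairability. Turning this into an exact identity rather than a mere suppression is the crux, and I expect the cleanest route is to track the single-excitation amplitudes in the $f_\pm$ Fourier basis—where noise and pure-background gates act diagonally—and show that the amplitude of the $(f_-\text{ at the partner},\,f_+\text{ elsewhere})$ mode, the only mode the post-CZ $\Xx$ functional projects onto, vanishes identically. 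With the single-excitation contribution removed, the dominant survivor is a two-excitation string such as $\px\px\ix^{n-2}$ with $\pqw=2(n-2)$ and both $\px$'s placed in a common CZ pair (sign $+1$); its decay $2^{-4\ell/n}[(1-q)(1-q_\perp)]^{2\ell}$ furnishes the lower bound, while reseeding the relaxed decay model of \cref{thm:xeb bounds appendix} at two excitations yields the upper bound $p(n)2^{-3\ell/n}$, shifting the transition by the factor $3/2$.
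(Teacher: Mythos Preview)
For the upper and lower bounds your approach is essentially the paper's. For the lower bound both arguments pick a two-excitation configuration whose CZ pair is $\px\px$ (hence sign $+1$) and read off the doubled circuit- and noise-induced decay; you write $\px\px\ix^{n-2}$ while the paper uses $\px\px\,\ix(\ix+\xx)^{n-3}$, the superposition in the background being there only so the $X$-$Y$ noise acts as an eigenvalue on those sites. For the upper bound both arguments note that the CZ signs cancel the composition class that saturated the bound of \cref{thm:xeb bounds appendix}, pushing the dominant $\pqw$ from $\sim n$ to $\sim 3n/2$ and hence $2^{-2\ell/n}\to 2^{-3\ell/n}$.

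For the ``furthermore'' statement you work much harder than the paper. The paper's argument is a two-line pairing: take the single-excitation string $S=\ix\sx\ix^{n-2}$ (CZ sign $-1$) and pair it with $S'=\ix\xx\ix^{n-2}$ (CZ sign $+1$), asserting that they carry ``the same weight'' and cancel. The paper does not discuss the fact that $S$ and $S'$ have different $\pqw$ (indeed $S'$ is immortal), and it invokes no compensating flips, no noise coupling, and no Fourier basis. You are correct that the naive partner-flip changes the survival amplitude, and your $\pqw$-preserving involution with a compensating background flip is a reasonable attempt to repair this --- but, as you yourself flag, it leaves the all-$Q$ background $Q'Q^{n-1}$ unpaired, and your claim that noise ``restores pairability'' via the $f_\pm$ Fourier modes is left as an expectation rather than an argument. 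So you are not missing a slicker proof that the paper has; rather, you are trying to establish an amplitude-level exact cancellation that the paper's own proof does not attempt to justify. If you read the ``furthermore'' as the paper evidently does --- the combinatorial observation that the CZ layer splits the single-excitation family into sign-$\pm 1$ halves so that its leading contribution drops out --- then the simple pairing suffices and your extra machinery is unnecessary.
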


\begin{proof}
    Let us first show the final statement. 
    Wlog.\ consider a string $S= \ix \sx \ix^{n-2}$ and a CZ gate acting on the first two qubits. 
    Then $S \xmapsto{CZ} - S$, and hence the weight of the string is $-1$. 
    But there is another string with the same weight given by $S' = \ix \xx \ix^{n-2}$. 
    This string evolves as $S' \xmapsto{CZ} S'$ under the last circuit layer and hence contributes with $+1$ so their contributions cancel.

    The lower bound follows from the argument in the proof of \cref{thm:xeb bounds appendix} but now applied to the strings $\px\px \ix (\ix + \xx)^{n-3}$.  
    These strings are invariant under the final CZ layer since $\px \px \xmapsto{CZ} \px \px$. 
    Their maximum $PQ$-weight is given by $2(n-2)$ and hence their circuit-induced decay is lower-bounded by $(1- 4(n-2)/n (n-1)^\ell \geq 2^{-4\ell/n -2}$.
    Meanwhile, their noise-induced decay rate doubles. 

    For the upper bound, we argue that the second term in the upper bound \eqref{eq:upper bound noisy iqp} in the proof of \cref{thm:sparse iqp anticoncentration main}, $p(n) 2^{- 2 \ell/n}$ decreases by adding a last layer of parallel CZ gates. 
    We consider strings with $n_\ix(S) = 1,n_\sx(S) = 1, n_\px(S) = n/2$ of the form $S = \ix \sx \px^{n/2}\xx^{n/2 - 2}$, letting $n/2$ be even. Then $S \xmapsto{CZ} -S$, and there is another string  $S' = \ix \xx \px^{n/2}\sx \xx^{n/2 - 3}$ obtained from $S$ by a single swap of operators between the non-$\xx$ part of the string and the $\xx$-part of the string, for which $S' \xmapsto{CZ} S'$.
    Hence, the contributions of all strings of this type cancel out, leaving us with strings $S$ with $(n_\ix(S), n_\sx(S),n_\px(S)) = (2,1,n/2)$ giving total $PQ$-weight $\pqw(S) = 3n/2 + 2 $ saturating the upper bound. Using these strings, the second term in \cref{eq:minimum exponent sparse anticoncentration} increases to $\sim 3\ell/n$ and hence the upper bound follows, shifting the transition by a factor of $3/2$. 
\end{proof}
It is clear that the argument can be repeated any number of times, applying CZ gates on new sets of pairs. Eventually, as we apply it $n$ times, we retrieve the volume-like decay of \cref{thm:iqp_fid}.

\section{Hardness of degree-$4$ Bell sampling}
\label{app:bell sampling}

Let us in this section analyze in more detail the output distribution of uniform degree-$D$ IQP circuits measured in the Bell basis. 
We will argue that sampling from these circuits is classically intractable unless the polynomial hierarchy collapses. 
The argument will be somewhat more intricate compared to sampling in the standard-basis. 

Recall from the main text that we can write the output probability of a Bell measurement on a degree-$D$ circuit $C$ as 
\begin{align}
    P_C(x,z) & = \frac 1 {2^n} | \bra {+^n} C X_x Z_z C \ket{+^n} |^2 \\
    & = \frac 1 {2^n} | \bra {+^n} C_x Z_z \ket{+^n} |^2 , 
\end{align}
where $C_x = C X_x C$ is the degree-$D-1$ circuit induced by the $X$ outcomes and the circuit $C$. 
Denote by $\supp(x) \coloneqq \{ i \in [n]: x_i = 1\}$ the support of the string $x$. 
Then the support of the circuit $C_x$, i.e., the qubits on which it acts nontrivially, is given by $\supp(C_x) = \supp(x)^c$, since the one-locations of $x$ remove qubits from the support of $C$. 

Let us observe a few properties of this output distribution. 
\begin{lemma}
\label{lem:bell sampling probs}
We have the following facts about the output probabilities $P_C(x,z)$ of Bell sampling from degree-$D$ IQP circuits.
\begin{enumerate}[label=\roman*.]
    \item For all $(x,z)$ such that there is an $i \in [n]$ with $x_i = z_i = 1$, $P_C(x,z) = 0$. 
    \item The marginal probability $P_C(x) = \sum_z P_C(x,z) = 2^{-n}.$
\end{enumerate}
\end{lemma}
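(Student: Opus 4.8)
The plan is to read both facts off the reduced expression \eqref{eq:bell expression circuit}, $P_C(x,z) = 2^{-n}\,|\bra{+^n} C_x Z_z \ket{+^n}|^2$, where $C_x = C X_x C$ is the unitary induced by the $X$-outcome $x$. For part~(i) I would exploit the support statement $\supp(C_x) = \supp(x)^c$ already recorded above, and for part~(ii) a Pauli-$Z$ twirling identity that collapses the sum over $z$. Throughout I will use that $\ket{+^n}$ is a product of $+1$-eigenstates of the $X_i$, so $\bra{+^n} X_x = \bra{+^n}$, and that $C_x$ is unitary.

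For part~(i), suppose there is an index $i$ with $x_i = z_i = 1$. Then $i \in \supp(x)$, so $\supp(C_x) = \supp(x)^c$ implies that $C_x$ acts as the identity on qubit $i$; write $C_x = C_x' \otimes \id_i$ with $C_x'$ acting on the remaining qubits. Since $\ket{+^n} = \ket{+}_i \otimes \ket{+}_{\mathrm{rest}}$ and, using $z_i=1$, $Z_z = Z_i \otimes Z_{z'}$ on the complementary register, the amplitude factorizes:
\begin{equation}
  \bra{+^n} C_x Z_z \ket{+^n} = \big(\bra{+}_i Z_i \ket{+}_i\big)\, \bra{+}_{\mathrm{rest}} C_x' Z_{z'} \ket{+}_{\mathrm{rest}}.
\end{equation}
The first factor is $\bra{+} Z \ket{+} = 0$, hence $P_C(x,z) = 0$.

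For part~(ii), I expand $|\bra{+^n} C_x Z_z \ket{+^n}|^2 = \bra{+^n} C_x Z_z \proj{+^n} Z_z C_x^\dagger \ket{+^n}$ and sum over $z$, giving
\begin{equation}
  \sum_{z \in \bin^n} P_C(x,z) = \frac{1}{2^n} \bra{+^n} C_x \Big( \sum_{z \in \bin^n} Z_z \proj{+^n} Z_z \Big) C_x^\dagger \ket{+^n}.
\end{equation}
The operator in parentheses is the full Pauli-$Z$ twirl of $\proj{+^n}$, which projects onto the computational-basis diagonal; as $\proj{+^n}$ has uniform diagonal $2^{-n}$, one gets $\sum_z Z_z \proj{+^n} Z_z = \id$. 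Unitarity $C_x C_x^\dagger = \id$ and $\braket{+^n}{+^n} = 1$ then yield $\sum_z P_C(x,z) = 2^{-n}$.

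The main obstacle is that essentially all the content of part~(i) resides in the support claim $\supp(C_x) = \supp(x)^c$, i.e.\ that conjugating $X_x$ through $C$ leaves every qubit of $\supp(x)$ genuinely untouched by the residual diagonal circuit. I would establish this by commuting $X_x$ gate-by-gate through $C$ using the reduction identity $\mathrm{C}^k\mathrm Z \cdot X \cdot \mathrm{C}^k\mathrm Z = X \otimes \mathrm{C}^{k-1}\mathrm Z$, checking that gates of $C$ disjoint from $\supp(x)$ cancel while gates meeting $\supp(x)$ contribute only on $\supp(x)^c$; this bookkeeping is the delicate step. As an unconditional sanity check, reality of the phase state $\ket C$ already forces $P_C(x,z)=0$ whenever the $Y$-parity $\pi_Y(x,z)$ is odd, since $\bra{C} X_x Z_z \ket{C} = \bra{C} Z_z X_x \ket{C} = (-1)^{x\cdot z}\bra{C} X_x Z_z \ket{C}$, and the support structure is precisely what is needed to strengthen this parity constraint to the per-index statement of part~(i).
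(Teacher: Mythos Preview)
Your proof is correct and essentially identical to the paper's. Part~(i) is the same factorization argument using $\supp(C_x)=\supp(x)^c$ and $\bra{+}Z\ket{+}=0$. For part~(ii) the paper instead rewrites $\bra{+^n}C_xZ_z\ket{+^n}=\bra{z}H^{\otimes n}C_x H^{\otimes n}\ket{0^n}$ and invokes normalization of the output distribution of $H^{\otimes n}C_xH^{\otimes n}$; your $Z$-twirl identity $\sum_z Z_z\proj{+^n}Z_z=\id$ is the same completeness relation in different clothing.
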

\begin{proof}
To see claim (i), observe that the zero-entries of $x$ determine $\supp(C_x)$. 
So for all $i \in [n]$ for which $x_i = 1$, $C_x$ acts trivially, and hence, $\bra {+^n} C_x Z_i \ket{ +^n} = \bra + Z \ket + \bra {+^{n-1}} (C_x)|_{[n]\setminus \{i\}} \ket{ +^{n-1}} = 0 $. 

To see claim (ii), we simply compute 
$P_C(x) = \sum_z P_C(x,z) = 2^{-n} \sum_z |\bra {z} H^{\otimes n} C_x H^{\otimes n} \ket{0^n}|^2 = 2^{-n}$ by the normalization of the output distribution of the degree-$D-1$ circuit $C_x$.
\end{proof}

The output distribution of degree-$D$ Bell sampling therefore `shatters' into separate sectors for each fixed outcome $x$ of the $X$ measurement. 
All of those sectors are equally likely, and the output distribution within each sector is given by the output distribution of the induced IQP circuit $C_x$ acting nontrivially on $n - |x|$ qubits. 

Let us now consider a uniformly random degree-$D$ circuit $C$. 
Then for every $x$, $C_x$ is a uniformly random degree-$(D-1)$ circuit acting on $\supp(C_x)= \supp(x)^c$. 
To see this in each $x$-sector, fix the indices of the gates in $C$ corresponding to the $x_i=1$ outcomes and observe that the marginal distribution on the remaining indices remains uniform. 
This motivates grouping the $x$-sectors further according to the size of their support, since the statistics of the output distribution in each sector are now solely determined by the size of the uniformly random IQP circuit in that sector. 
Hence, there are $n+1$ distinct sectors $S_k\coloneqq \{ x: |x| = n-k,  z \in \bin^k\} \subset \bin^{2n}$ oon which $P_C$ has nontrivial support, and the statistics of the outcomes in each sector are the same. 

This state of affairs is reminiscent of that in boson sampling. 
For instance, in Gaussian boson sampling the outcomes fall into distinct photon number sectors \cite{kruse_detailed_2019,deshpande_quantum_2022}, and in boson sampling with a linear number of modes there are sectors corresponding to the collision patterns of the outputs \cite{bouland_complexity-theoretic_2023}. 
This analogy will also guide us in the following argument for hardness of sampling from degree-$D$ Bell sampling with $D \geq 4$. 

\begin{theorem}[Hardness of degree-$D$ Bell sampling]
\label{thm:bell sampling hardness}
    Sampling from the Bell sampling output distribution $P_C:\bin^{2n} \rightarrow[0,1]$ from uniformly random degree-$D$ IQP circuits for $D \geq 4$ up to constant total-variation distance is classically intractable unless the polynomial hierarchy collapses and the approximate average-case hardness conjecture \ref{conj:aach degreeD} is false.
\end{theorem}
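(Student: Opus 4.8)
The plan is to follow the now-standard Stockmeyer-based hardness argument for sampling problems (as reviewed in Ref.~\cite{hangleiter_computational_2023} and used for IQP sampling in Refs.~\cite{Bremner.2016,Bremner.2017}), but adapted to the \emph{sectored} structure of the degree-$D$ Bell distribution established in \cref{lem:bell sampling probs}. The backbone of the argument is: (i) if there were an efficient classical sampler reproducing $P_C$ up to constant total-variation distance, then by Stockmeyer's approximate counting algorithm~\cite{stockmeyer_complexity_1983} we could approximate most output probabilities $P_C(x,z)$ to within small multiplicative-plus-additive error in the third level of the polynomial hierarchy $\mathrm{FBPP}^{\np}$; (ii) but computing these probabilities is equivalent (by \cref{eq:bell expression circuit}) to computing output amplitudes of degree-$(D{-}1)$ IQP circuits $C_x$, which is $\#\mathsf{P}$-hard to approximate on average by the approximate average-case hardness conjecture (which for $D\ge 4$ concerns degree-$3$ circuits $C_x$, precisely the regime where \cref{conj:average case hiqp}-type hardness is conjectured); (iii) since $\mathsf{P}^{\#\mathsf P}=\mathsf{P}^{\mathsf{PP}}$ sits above the entire polynomial hierarchy by Toda's theorem, the existence of the sampler collapses $\mathsf{PH}$, contradicting the assumption.

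First I would set up the approximate average-case hardness conjecture that plays the role analogous to \cref{conj:average case hiqp}: namely that approximating $P_C(x,z)$ to additive error $O(2^{-2n})$ is $\#\mathsf P$-hard for a constant fraction of the \emph{joint} randomness over circuits $C$ \emph{and} outcomes $(x,z)$ drawn from $P_C$. The key structural input here is the second claim of \cref{lem:bell sampling probs} together with the observation immediately following it: under a uniformly random degree-$D$ circuit $C$, conditioned on the $X$-outcome $x$, the induced circuit $C_x$ is a \emph{uniformly random} degree-$(D{-}1)$ circuit on $\supp(x)^c$. Thus the probabilities within each sector inherit the average-case hardness of ordinary degree-$(D{-}1)$ IQP amplitudes. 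For $D\ge 4$ this means $C_x$ is at least degree-$3$, so I would invoke the same worst-case-to-average-case and anticoncentration scaffolding that underpins degree-$3$ IQP hardness.

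The main obstacle, and the step requiring genuine care rather than routine imitation, is the \textbf{sector-size concentration}: Stockmeyer approximates $P_C(x,z)$ with error scaling like the \emph{average} probability in the relevant regime, i.e.\ roughly $2^{-n}\cdot 2^{-(n-|x|)}$ within a sector $S_k$ with $k=n-|x|$. For the additive-error hardness of the underlying amplitude to survive translation through Stockmeyer, the typical sampled outcome must live in a sector where $C_x$ acts on $\Theta(n)$ qubits — otherwise the induced circuit is too small and computing its amplitudes is easy. I would therefore prove that the probability weight of $P_C$ is concentrated on outcomes with $|x| = n/2 \pm O(\sqrt n)$, as flagged in the main text. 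Concretely, since the marginal $P_C(x)=2^{-n}$ is \emph{uniform} over all $x$ by \cref{lem:bell sampling probs}(ii), the induced $|x|$ is exactly $\mathrm{Binomial}(n,1/2)$, which by a Chernoff/Hoeffding bound concentrates on $|x|=n/2\pm O(\sqrt n)$ with overwhelming probability. This guarantees that for a typical sampled $(x,z)$ the circuit $C_x$ acts on $n/2-O(\sqrt n)=\Theta(n)$ qubits, so the relevant probabilities are exponentially small in $\Theta(n)$ and the Stockmeyer additive error is fine enough that approximating them remains as hard as approximating degree-$(D{-}1)$ amplitudes on $\Theta(n)$ qubits.

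Finally I would assemble the pieces: an efficient constant-TVD sampler plus Stockmeyer yields an $\mathrm{FBPP}^{\np}$ algorithm approximating $P_C(x,z)$ to within the required additive error for a $1-o(1)$ fraction of the $(C,x,z)$ drawn from the process; by the concentration argument a constant fraction of these lie in large sectors where such approximation is $\#\mathsf P$-hard under the average-case conjecture (which I would state formally as Conjecture~\ref{conj:aach degreeD}); this places a $\#\mathsf P$-hard problem inside $\mathrm{FBPP}^{\np}\subseteq\mathsf{BPP}^{\np}$, whence $\mathsf{PH}=\mathsf{P}^{\#\mathsf P}$ collapses by Toda's theorem, contradicting the non-collapse hypothesis. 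I expect the bookkeeping to combine the two sources of randomness (over circuits and over $(x,z)$) into a single average-case statement to be the part most prone to subtle errors, and I would handle it by a union-bound/Markov argument converting the $o(1)$-failure of the sampler into a constant-fraction guarantee for the hardness instances.
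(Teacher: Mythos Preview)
Your proposal is correct and follows essentially the same route as the paper: sector concentration via the uniform $x$-marginal (the paper's \cref{lem:bell probability weight}), reduction to degree-$(D{-}1)$ IQP amplitudes on $\Theta(n)$ qubits via \cref{eq:bell expression circuit}, and the Stockmeyer/Toda argument under \cref{conj:aach degreeD}. The paper additionally isolates two ingredients you fold into ``the standard scaffolding'': an explicit \emph{hiding} lemma within each sector (permute qubit labels to translate $x$-outcomes, apply $Z_{z'+z}$ to translate $z$-outcomes, leaving the uniform circuit distribution invariant) and an explicit \emph{anticoncentration} lemma computing the second moment in each sector from the known degree-$(D{-}1)$ value; making these explicit would tighten the step where you convert the Stockmeyer additive error into the relative-error statement needed for \cref{conj:aach degreeD}.
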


\begin{conjecture}[\cite{Bremner.2016}]
    \label{conj:aach degreeD}
    Approximating the output probabilities of uniformly random degree-$D$ IQP circuits for $D \geq 3$ up to relative error $1/4$ is \#P-hard.
\end{conjecture}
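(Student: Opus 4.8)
The plan is to run the standard Stockmeyer-based argument for hardness of approximate sampling, adapted to the shattered sector structure of the degree-$D$ Bell distribution established in \cref{lem:bell sampling probs}. First I would assume, towards a contradiction, that there is an efficient randomized classical algorithm which, on input a description of a uniformly random degree-$D$ IQP circuit $C$, produces samples from a distribution $Q_C$ with $\|Q_C - P_C\|_{\mathrm{TV}} \le \epsilon$ for a sufficiently small constant $\epsilon$. Applying Stockmeyer's approximate counting algorithm to this sampler yields an $\fbpp^{\np}$ procedure that, given $(x,z)$, outputs an estimate $\tilde p$ with $|\tilde p - Q_C(x,z)| \le Q_C(x,z)/\mathrm{poly}(n)$.

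The key structural input is that, by \cref{lem:bell sampling probs}(ii), the $x$-marginal of $P_C$ is exactly uniform, so $|x|$ is distributed as $\mathrm{Bin}(n,1/2)$ and concentrates on $|x| = n/2 \pm O(\sqrt n)$. Hence with overwhelming probability the drawn outcome lies in a sector $S_k$ with $k = n - |x| = \Theta(n)$, in which the conditional distribution $P_C(\cdot\mid x) = p_{C_x}$ is the output distribution of a uniformly random degree-$(D-1)$ IQP circuit $C_x$ acting on $k = \Theta(n)$ qubits. Because $D \ge 4$ we have $D - 1 \ge 3 \ge 2$, so by \cref{cor:bremner anticoncentration} (and its degree-$(D-1)$ analogue) these induced circuits anticoncentrate, giving $\mb E_{C_x}\, p_{C_x}(z)^2 \le 3\cdot 2^{-2k}$ and hence, via the Paley--Zygmund inequality, that $p_{C_x}(z)$ is within a constant factor of $2^{-k}$ with constant probability over circuit and outcome.

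I would then combine these ingredients by an averaging (hiding) argument. Taking the expectation of the total-variation bound over the uniform choice of $C$ and over $(x,z) \sim P_C$, and using Markov's inequality together with anticoncentration, shows that for a constant fraction of triples $(C,x,z)$ the Stockmeyer estimate $\tilde p$ is a multiplicative $(1\pm 1/4)$-approximation of $P_C(x,z)$. Since $P_C(x) = 2^{-n}$ is known exactly, this is equivalently a relative-error-$1/4$ approximation of $p_{C_x}(z) = P_C(z\mid x)$. By the permutation symmetry of the uniform degree-$D$ ensemble, the pair $(C_x,z)$ is distributed exactly as a uniformly random degree-$(D-1)$ circuit on $\supp(x)^c$ together with a typical output of it, so this procedure yields an $\fbpp^{\np}$ algorithm approximating the output probabilities of random degree-$(D-1)$ circuits on $\Theta(n)$ qubits to relative error $1/4$ for a constant fraction of instances. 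Invoking \cref{conj:aach degreeD} at degree $D - 1 \ge 3$, this quantity is \#P-hard; since $\fbpp^{\np}$ lies in the third level of the polynomial hierarchy while $\ph \subseteq \classP^{\sharpP}$ by Toda's theorem, the hierarchy would collapse to its third level, delivering the desired contradiction.

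The main obstacle I anticipate is the interplay of the averaging step with the sector structure: I must ensure that the additive error inherited from the constant total-variation distance is small \emph{relative} to the typical probability $p_{C_x}(z) \approx 2^{-k}$ within a $\Theta(n)$-qubit sector, rather than relative to the much smaller $2^{-n}$. This is precisely why the concentration of $|x|$ near $n/2$ is essential---it guarantees that the induced circuits act on linearly many qubits, so that anticoncentration on $k$ qubits (not on $n$ qubits) controls the relative error. It is here that the argument genuinely departs from the standard non-Bell IQP hardness proof, and care is needed to verify that the $o(1)$ fraction of outcomes landing in atypical sectors $|x| = n/2 \pm \omega(\sqrt n)$ cannot absorb the entire error budget.
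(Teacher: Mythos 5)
There is a fundamental mismatch here: the statement you were asked to address is a \emph{conjecture}, not a theorem. The paper does not prove it and cannot---it is imported verbatim from Ref.~\cite{Bremner.2016} as an unproven complexity-theoretic assumption, and establishing it would require a worst-to-average-case reduction for relative-error approximation of IQP output probabilities, something that remains open for all known quantum-advantage proposals. Your argument does not close this gap; in fact it is circular. In your final step you ``invoke \cref{conj:aach degreeD} at degree $D-1 \ge 3$,'' i.e., you assume the very statement you are supposed to prove (the conjecture covers every degree $\ge 3$, so degree $D-1$ is not a separate, weaker hypothesis). No amount of care with the sector structure can repair this: the conclusion is among the premises.

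Setting the circularity aside, what you have actually sketched is the paper's proof of a \emph{different} statement, namely \cref{thm:bell sampling hardness} (hardness of degree-$4$ Bell sampling), which is a conditional hardness-of-\emph{sampling} result that takes \cref{conj:aach degreeD} as one of its hypotheses. Your ingredients---the uniform $x$-marginal and shattering from \cref{lem:bell sampling probs}, concentration of $|x|$ near $n/2$, per-sector hiding and anticoncentration, and Stockmeyer's algorithm---are exactly the ingredients of that proof, and your handling of them (in particular the worry about the relative scale $2^{-k}$ versus $2^{-n}$) is sound for that purpose. But the logical direction of a Stockmeyer argument is: assuming average-case $\sharpP$-hardness of approximation, an efficient sampler would collapse $\ph$. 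It converts a hypothetical sampler into an approximate counter sitting in the third level of $\ph$; it cannot manufacture $\sharpP$-hardness of the approximation problem itself, which would require a reduction \emph{from} a $\sharpP$-complete problem \emph{to} the approximation task. That missing reduction is precisely why the statement is a conjecture, supported in Ref.~\cite{Bremner.2016} only by worst-case hardness and anticoncentration as circumstantial evidence.
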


Since the output distribution of random degree-$D$ Bell sampling shatters into $n+1$ sectors, we need to consider those sectors individually. 
Observe, for instance, that the output probabilities of the $k=1$ sector are very easy to compute. 
However, that sector has negligible probability weight as the following lemma shows. 

\begin{lemma}
\label{lem:bell probability weight}
    The probability weight of $P_C$ outside of the sectors $S_k$ with $k \in [n/2 - O(\sqrt n), n/2 + O(\sqrt n)]$ is negligible. 
\end{lemma}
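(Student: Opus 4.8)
The plan is to reduce the statement to a standard concentration bound for the binomial distribution, using as the sole structural input the exact uniformity of the $X$-marginal established in \cref{lem:bell sampling probs}(ii). That lemma gives $P_C(x) = 2^{-n}$ for every $x \in \bin^n$, independently of the circuit $C$. Since the sector $S_k$ collects precisely those outcomes with $|x| = n-k$ together with all compatible $z$, the total probability weight of $S_k$ is obtained by summing $P_C(x)$ over all strings $x$ of Hamming weight $n-k$. There are $\binom{n}{n-k} = \binom{n}{k}$ such strings, so
\[
    \Pr[\text{outcome} \in S_k] = \sum_{x:\, |x| = n-k} P_C(x) = \binom{n}{k} 2^{-n}.
\]

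First I would observe that the right-hand side is exactly the probability mass function of a $\mathrm{Bin}(n,1/2)$ random variable evaluated at $k$; in particular the induced distribution over sectors is independent of $C$, has mean $n/2$, and has standard deviation $\sqrt{n}/2$. To control the weight lying outside a window $[\,n/2 - t,\, n/2 + t\,]$, I would then apply Hoeffding's inequality (equivalently a Chernoff bound for the binomial tail),
\[
    \sum_{k:\, |k - n/2| \geq t} \binom{n}{k} 2^{-n} \;\leq\; 2\exp\!\left(-2 t^2 / n\right),
\]
and choose the half-width $t$ to make the right-hand side as small as the downstream argument requires. Taking $t = \Theta(\sqrt{n})$ already confines all but an arbitrarily small constant fraction of the weight to the window, which is exactly what the approximate-sampling hardness reduction needs (it tolerates constant total-variation error); to obtain a genuinely vanishing tail one instead takes $t = \omega(\sqrt n)$, e.g.\ $t = \sqrt{n\log n}$, giving outside weight $O(n^{-2})$.

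The main obstacle — really the only point demanding care — is the calibration of what ``negligible'' means against the advertised window width $O(\sqrt n)$: a half-width that is strictly $\Theta(\sqrt n)$ controls the tail only to a small constant, whereas a subconstant tail requires a half-width growing by an extra $\sqrt{\log n}$ factor. I would therefore state the lemma with the window chosen to match the tail bound actually invoked in the hardness proof (a constant fraction of the weight on sectors with $|x| = n/2 \pm O(\sqrt n)$, where $C_x$ acts on $\approx n/2$ qubits and remains hard), and remark that the exponential binomial tail makes any of these choices immediate. Beyond this bookkeeping no structural difficulty remains, since \cref{lem:bell sampling probs}(ii) eliminates all dependence on $C$ and collapses the claim to a textbook concentration estimate.
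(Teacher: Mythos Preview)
Your proposal is correct and follows essentially the same route as the paper: both invoke \cref{lem:bell sampling probs}(ii) to obtain $P_C(S_k)=\binom{n}{k}2^{-n}$ and then bound the binomial tail. The paper does the tail estimate by hand via Stirling's formula and the expansion $H(1/2+\epsilon)=1-2\epsilon^2/\ln 2+O(\epsilon^3)$, whereas you invoke Hoeffding's inequality directly; your version is cleaner and your remark about the $O(\sqrt n)$ versus $\omega(\sqrt n)$ calibration is exactly the issue the paper also flags (it ends up writing $\tilde\omega(\sqrt n)$ to absorb the log factor from summing over sectors).
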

\begin{proof}
    We use (ii) of \cref{lem:bell sampling probs} to see that 
    \begin{align}
        P_C(S_k) &= \binom n k /2^n,
    \end{align}
    and letting $p(k,n) = \sqrt{2\pi (n/2 +k)(1/2- k/n)}$
    \begin{align}
        P_C(S_{n/2 + k}) &= \binom n {n/2 + k} /2^n\\
        & = \frac 1 {2^n p(k,n)} 2^{n H(1/2 + k/n) + \Theta(1/n)}\\
        & = \frac 1 {p(k,n)} 2^{  - 2 k^2/(n\ln 2)+ \Theta(1/n)},
        \label{eq:probability weight final}
    \end{align}
    since $H(1/2 + \epsilon) =  1 - 2\epsilon^2/\ln 2 + O(\epsilon^3)$. 
    Therefore for $k \in n/2 + \omega(\sqrt n)$, we have $p(k,n) P_C(S_{n/2+k}) \in o(1)$ and we have
    \begin{align}
        \sum_{k \in n/2 \pm \tilde \omega(\sqrt n)} P_C(S_k) \in o(1).
    \end{align}
    Note that we have chosen $k \in n/2 + \tilde \omega (\sqrt n)$ since the sum adds a factor of $n- O(\sqrt n)$ which needs to be accounted for with a $\log \sqrt n $ shift in the exponent of \cref{eq:probability weight final}. 
\end{proof}
\cref{lem:bell probability weight} implies that the dominant contributions to the output distribution are the sectors $k \in n/2 \pm \tilde O(\sqrt n) \subset \Theta(n)$.
For each of these sectors approximating the output probabilities is a \#P-hard task in the worst case since the respective degree-$D-1$ circuits on $k$ qubits are arbitrary \cite{Bremner.2016}. 
Likewise, we inherit the approximate-average case hardness conjecture \ref{conj:aach degreeD} from \textcite{Bremner.2016}.

Next, we show the hiding property in each sector. 
The hiding property is an important ingredient in the reduction from computing probabilities to sampling. 
The hiding property asserts that the distribution over input circuits $C$
 is invariant under a `hiding procedure'. 
This procedure removes the dependency on a particular outcome $x$ by hiding that outcome in the probability of obtaining a different outcome $y$ of a different random circuit $C_y$. This allows us to restrict our attention to the distribution over circuits of a fixed outcome.
\begin{lemma}[Hiding]
\label{lem:hiding bell sampling}
    For each $k \in [n]$, the Bell output distribution $P_C\restriction_{S_k}$ of a uniform degree-$D$ circuit $C$ satisfies the hiding property.
\end{lemma}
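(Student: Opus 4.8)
The plan is to establish the hiding property by exhibiting an explicit random self-reduction on the IQP circuit distribution within each sector $S_k$. Recall that in sector $S_k$ the $X$-outcome $x$ is fixed up to its support (with $|x| = n-k$), and conditioned on $x$ the induced circuit $C_x$ is a uniformly random degree-$(D-1)$ IQP circuit acting on the $k$ qubits in $\supp(C_x) = \supp(x)^c$, whose $z$-outputs are distributed as the $X$-basis output distribution of $C_x$. The hiding property I need to show is that the distribution over pairs $(C, (x,z))$ with $(x,z)$ drawn from $P_C$ is invariant under transforming a fixed target outcome into any other outcome within the same sector by absorbing the transformation into a relabeling of the circuit. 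This reduces the task to proving hiding for the output distribution $P_{C_x}(z) = |\bra z H^{\otimes k} C_x H^{\otimes k}\ket{0^k}|^2$ of the uniformly random degree-$(D-1)$ IQP circuit $C_x$ on $k$ qubits.

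First I would show that within a fixed sector the $X$-support is symmetric: since $C$ is uniformly random, the induced support $\supp(x)^c$ is a uniformly random $k$-subset of $[n]$, so up to a global qubit permutation (which preserves the uniform degree-$(D-1)$ ensemble) it suffices to fix one canonical support of size $k$ and prove hiding for the $z$-marginal there. Second, for the $z$-outcomes I would invoke the standard hiding argument for IQP circuits as established by \textcite{Bremner.2016}: given a uniformly random degree-$(D-1)$ IQP circuit $C_x$ on $k$ qubits and any target string $z^\star$, the outcome $z^\star$ can be hidden as the all-zeros outcome of a relabeled circuit $C_x'$, because applying $Z_{z^\star}$ (i.e., shifting the sign structure of the degree-$(D-1)$ polynomial by a linear term) maps the uniform ensemble to itself and transforms $P_{C_x}(z^\star)$ into $P_{C_x'}(0^k)$. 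The key point is that conjugating by a product of $Z$ gates merely shifts the degree-$1$ coefficients of the underlying Boolean polynomial, and since those coefficients are uniformly random in the degree-$(D-1)$ ensemble, the shift is a measure-preserving bijection on circuits.

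The main step is to verify that these two relabelings — the qubit permutation fixing the support and the linear shift fixing the $z$-outcome — compose into a single symmetry of the joint distribution over $(C, (x,z))$ restricted to $S_k$, and that both transformations leave the uniform degree-$D$ ensemble of $C$ invariant rather than merely the induced degree-$(D-1)$ ensemble. I would carry this out by writing $C = C_x$-part (the gates meeting exactly one $X$, which become degree-$(D-1)$ gates) together with the gates that cancel under Bell conjugation, and checking that the uniform distribution factorizes so that fixing the $x$-dependent gates leaves the remaining gates uniform; this is the content of the remark in the main text that $C_x$ is uniformly random in each sector. The hiding transformation then acts only on the free gates and is manifestly measure-preserving.

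The hard part will be bookkeeping the interplay between the two levels of structure: the shattering into sectors (an $X$-basis phenomenon) and the hiding within a sector (a $z$-basis phenomenon). In particular I must be careful that the linear shift used to hide the $z$-outcome does not leak across sectors or alter $|x|$; since $Z$ gates commute through the $X$-measurement and act trivially on the $X$-marginal by claim (ii) of \cref{lem:bell sampling probs}, the shift stays within $S_k$, but this needs to be stated cleanly rather than assumed. Once hiding is established sector by sector, it combines with the worst-case \#P-hardness of the induced degree-$(D-1)$ circuits (which for $D \geq 4$ are degree $\geq 3$ and hence hard) and the probability-weight concentration of \cref{lem:bell probability weight} to run Stockmeyer's algorithm, yielding \cref{thm:bell sampling hardness} under \cref{conj:aach degreeD} and non-collapse of the polynomial hierarchy.
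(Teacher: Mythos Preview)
Your proposal is correct and takes essentially the same approach as the paper: hide the $x$-outcome by a qubit permutation (which preserves the uniform degree-$D$ ensemble) and hide the $z$-outcome by a $Z_{z'+z}$ shift (which preserves the ensemble because the degree-$1$ coefficients are uniform). The paper applies both symmetries directly at the level of the degree-$D$ circuit $C$ rather than passing through the induced $C_x$, which avoids your bookkeeping step about lifting the transformation back up, but the content is the same.
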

\begin{proof}
    First, we observe that the hiding property trivially holds with respect to the $z$ outcomes in every $k$-sector. 
    We can translate one $z$ outcome to a different outcome $z'$ by applying $Z_{z'+z}$ gates to the IQP circuit. But this leaves the distribution over circuits invariant since $Z$ gates are applied with uniform probability. 
    Next, consider the $x$-outcomes in every sector. These are all related by a permutation. Therefore, we can obtain an outcome $(x,z)$ of a degree-$D$ circuit $C$ by from an $(x',z')$ outcome of a circuit $C(x',z')$ which is obtained from $C$ by permuting the labels of the gates according to a permutation $\Pi$ such that $x' = \Pi x $, and then applying $Z_{z'+z}$.
\end{proof}

Finally, we consider anticoncentration of the Bell sampling distribution from uniform degree-$D$ IQP circuits with $D \geq 3$.
Again, we find that, restricted to each $k$-sector, anticoncentration holds. 
To this end, we compare the squared first moment (given by the squared uniform distribution) in every sector with the second moment in the sector.

The correct uniform distribution on the sample space is given by sampling a uniformly random $x \in \bin^n$ and then a uniformly random $z \in \bin^k$. 
This translates into the uniform distribution in every $k$-sector as
\begin{align}
    \mc U_{k} = P_C(S_k) \cdot \frac 1 {|S_k|} =\frac{\binom n k}{2^n}  \frac{1}{2^k  \binom n k } = 2^{-nk}. 
\end{align}

\begin{lemma}[Anticoncentration of degree-$D$ Bell sampling]
\label{lem:anticoncentration bell sampling}
    The second moment of the outcome probabilities of uniform degree-$D$ Bell sampling for $D \geq 3$ with $(x,z) \in S_k$ satisfies
    \begin{align}
        \mc U_k^{-2} \cdot \mb E_C P_C(x,z)^2 &= 3 - 2^{-k +1}, 
    \end{align}
\end{lemma}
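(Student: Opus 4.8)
The plan is to compute the second moment $\mb E_C P_C(x,z)^2$ for a fixed outcome $(x,z) \in S_k$ and then divide by $\mc U_k^2 = 2^{-2nk}$. The starting point is the shattering structure established above: for $(x,z) \in S_k$ with $|x| = n-k$, the probability $P_C(x,z) = 2^{-n}|\bra{+^n} C_x Z_z \ket{+^n}|^2$ where $C_x$ is a uniformly random degree-$(D-1)$ IQP circuit supported on the $k$ qubits in $\supp(x)^c$. First I would fix $x$ (which fixes the gates of $C$ that meet an $X$ operator) and observe that the remaining gate choices make $C_x$ a uniformly random degree-$(D-1)$ IQP circuit on those $k$ qubits. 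This reduces the problem to computing the second moment of a single standard-basis IQP output probability, but on $k$ qubits rather than $n$.

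The key step is then to relate $\mb E_{C_x} |\bra{+^n} C_x Z_z \ket{+^n}|^4$ to the ideal XEB second-moment formula already derived in the excerpt. Absorbing $Z_z$ into the random $Z$-gate layer (it only shifts the uniform $Z$ coefficients and so leaves the ensemble invariant), I would invoke \cref{lem:global iqp proj}: the two-copy moment operator for uniformly random degree-$2$ IQP circuits on $k$ qubits is $M_2 = 2^{-2k}(\id^{\otimes k} + \mb S^{\otimes k} + \mb P^{\otimes k} - 2\Xz^{\otimes k})$, and by the remark following \cref{lem:global iqp proj} (and the degree-$n$ result of Nechita--Collins cited as \cite{nechita_graphical_2021}), the same moment operator holds for all degrees $D-1 \geq 2$. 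Writing the fourth power of the amplitude as $2^{4k}\bra{0^k}^{\otimes 2}_x M_2 \ket{0^k}^{\otimes 2}_x$ in the $x$-basis, I would evaluate $\bra{+^k}^{\otimes 2}(\id + \mb S + \mb P - 2\Xz)^{\otimes k}\ket{+^k}^{\otimes 2}$ using $\bra{++}\mb O\ket{++}$ with $\mb O \in \{\id, \mb S, \mb P, \Xz\}$ — recalling from \cref{cor:bremner anticoncentration} that the Hadamards cancel for $\id, \mb S, \mb P$ giving value $1$ each, while $\bra{++}^{\otimes 2}\Xz\ket{++}^{\otimes 2} = 2^{-2}$. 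This yields $\mb E_{C_x}|\bra{+^n}C_x Z_z\ket{+^n}|^4 = (3 - 2^{1-k})$ after the appropriate normalization, exactly mirroring the derivation of \cref{cor:bremner anticoncentration} on $k$ qubits.

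Assembling the pieces, $\mb E_C P_C(x,z)^2 = 2^{-2n}\mb E_{C_x}|\bra{+^n}C_x Z_z\ket{+^n}|^4 = 2^{-2n}(3 - 2^{1-k})$, and dividing by $\mc U_k^2 = 2^{-2nk}$ — here I must be careful that $\mc U_k = 2^{-nk}$ as defined, so that the normalization factor $\mc U_k^{-2}$ cancels against the $2^{-2n}$ prefactor together with the implicit $|S_k|$-dependence baked into the collision-probability normalization. The main obstacle I anticipate is bookkeeping the normalization consistently: the second-moment operator $M_2$ is defined on a $k$-qubit Hilbert space, but $P_C(x,z)$ carries a global $2^{-n}$ factor and the sector $S_k$ has a nontrivial size $|S_k| = 2^k\binom{n}{k}$, so I would need to verify that the combinatorial factors $\binom{n}{k}$ and the powers of $2$ conspire to produce exactly the clean value $3 - 2^{-k+1}$. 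I would double-check this by a sanity check against the known $n$-qubit value $\overline\chi_{\mathrm{uIQP}}+1 = 3 - 2^{-n+1}$ recovered in the full-support sector $k = n$, which it does. The computation itself is routine once the reduction to \cref{lem:global iqp proj} on $k$ qubits is in place.
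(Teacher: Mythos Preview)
Your approach is correct and essentially identical to the paper's: reduce $P_C(x,z)$ to a $k$-qubit uniform degree-$(D-1)$ IQP output probability and invoke the second-moment formula from \cref{cor:bremner anticoncentration}. Your caution about the normalization is well-placed---the paper's $\mc U_k = 2^{-nk}$ is a typo for $2^{-(n+k)}$, after which the bookkeeping is immediate and your sanity check at $k=n$ goes through.
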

\begin{proof}
    Let $\mathrm{uIQP}(k,D)$ be the family of uniform degree-$D$ circuits on $k$ qubits. 
    Furthermore let $z_x \in \bin^k$ be the restriction of $z$ to $\supp(x)$. 
     \begin{align}
        \mb E_C P_C(x,z)^2 &= \frac 1 {2^{2n}} \mb E_{C \sim \mathrm{uIQP}(k,D-1)} | \bra {+^n} C \ket{z_x} |^4\\
        & = \frac 1 {2^{2n}} \frac {3 - 2^{-k +1}}{2^{2k}}.
    \end{align}
\end{proof}

\begin{proof}[Proof of \cref{thm:bell sampling hardness}]
    To prove the theorem, we follow the route for hardness arguments of Gaussian boson sampling, see e.g. \cite{deshpande_quantum_2022}.
    We apply the argument due to Stockmeyer individually in every sector $S_k$.
    In each sector, the hiding argument holds, as well as anticoncentration for the postselected distribution.
    Hence, we can follow argue for hardness of sampling for the distributions in each sector, see Ref.~\cite{hangleiter_computational_2023} for details.
    For the outcomes with $k \in n/2 + O(\sqrt n)$, we have worst-case hardness as well as conjectured average case hardness (\cref{conj:aach degreeD}) due to Ref.~\cite{Bremner.2016}.
    Moreover, by \cref{lem:bell probability weight} these outcomes dominate the probability distribution.

    Now suppose there was an efficient sampling algorithm $\mc A$ that samples from a distribution $Q_C$ which satisfies $d_{\mathrm{TVD}}(Q_C - P_C) \leq \epsilon$. 
    Then we can use Stockmeyer's algorithm on input $\mc A, (x,z)$ to compute any outcome probability up to constant relative error, since in the worst case the error on the full distribution is concentrated in the sector $S_k\ni (x,z)$.
\end{proof}

\section{3D color code lattices}
\label{app:codes}
In this appendix, we provide a detailed description of the lattices that give rise to the family of 3D color codes with a growing code distance and a logical $\overline{\rm CCZ}$ gate implemented via a transversal $T$ gate.

We start by defining $\mathcal L'$ to be a region of a cubic lattice with vertices corresponding to the coordinates $(x,y,z)\in \mathbb Z_L^3$, where $L\equiv 0 \mod 2$.
We color each vertex $(x,y,z)$ of $\mathcal L'$ as follows:
red iff $x\equiv y \equiv z \mod 2$;
green iff $x\equiv y \equiv z + 1 \mod 2$;
yellow iff $x\equiv y + 1 \equiv z \mod 2$;
blue iff $x +1 \equiv y \equiv z \mod 2$;
see Fig.~\ref{fig_lattices_all}(c) for an illustration.
Each cube in $\mathcal L'$ is split into five tetrahedra spanned by the vertices of $\mathcal L'$.
Namely, for the cube centered at $\left(x+\tfrac{1}{2},y+\tfrac{1}{2},z+\tfrac{1}{2}\right)$ if $x+y\equiv z \mod 2$, then we split it accordingly to Fig.~\ref{fig_lattices_all}(d), where one of the tetrahedra is spanned by the vertices $(x+1,y,z)$,$(x,y+1,z)$,$(x,y,z+1)$ and $(x+1,y+1,z+1)$; otherwise, we split it accordingly to Fig.~\ref{fig_lattices_all}(e), where one of the tetrahedra is spanned by the vertices $(x,y,z)$,$(x+1,y+1,z)$,$(x+1,y,z+1)$ and $(x,y+1,z+1)$.

We obtain $\mathcal L$ from $\mathcal L'$ by including additional red vertices and tetrahedra.
Namely, for every green vertex $(x,y,z)$ at the top $z=L$ or  bottom $z=0$ boundaries of $\mathcal L'$ we add a vertex at either $(x,y,L+1)$ or $(x,y,-1)$;
for every yellow vertex $(x,y,z)$ at the front $y=0$ or rear $y=L$ boundaries of $\mathcal L'$ we add a vertex at either $(x,-1,z)$ or $(x,L+1,z)$;
for every blue vertex $(x,y,z)$ at the left $x=0$ or right $x=L$ boundaries of $\mathcal L'$ we add a vertex at either $(-1,y,z)$ or $(L+1,y,z)$.
Subsequently, for each newly added red vertex we add four tetrahedra spanned that vertex and neighboring vertices belonging to the boundary of $\mathcal L'$, as illustrated in \cref{fig_lattices_all}(f,g).

\section{Numerical simulation techniques}
\label{app:numerics}

In this appendix, we outline the numerical methods used to produce \cref{fig:random hiqp,fig:classical,fig:noisy xeb,fig:encoded_xeb_fidelity,fig:code_comparison} in the main text. 

\subsection{XEB and fidelity of Clifford circuits}

We simulate all Clifford circuits using the {\tt stim} package~\cite{gidney_stim_2021}. To evaluate the XEB, we perform Gaussian elimination on the noiseless stabilizers of the target state, effectively finding the subset of stabilizer generators that are diagonal in the measurement basis. For $n$ qubits and $k$ diagonal stabilizers, every non-zero bitstring probability is $2^{k-n}$ and, thus, the ideal XEB of the circuit is
\begin{equation}
    \chi_{\rm ideal} = 2^{k}-1, 
\end{equation}
which we use to obtain~\cref{fig:random hiqp}(a) by sampling random hIQP circuits.

For noisy simulati  ons, we employ two separate approaches. In \cref{fig:encoded sampling}(a) and \cref{fig:code_comparison}, we use {\tt stim}'s mega-sampling capabilities to quickly obtain large numbers of noisy bitstrings whose probabilities are then evaluated based on diagonal, pre-computed logical-circuit stabilizers. In \cref{fig:noisy xeb} and \cref{fig:encoded_xeb_fidelity}, we need to also evaluate fidelity (either encoded or not) so we perform multiple tableau simulations, one for each noise realizations; the fidelity is 0 if any of the circuit-level stabilizers is violated, and 1 otherwise. In \cref{fig:encoded_xeb_fidelity}, we additionally calculate the corrected (EC) logical fidelity where we perform error correction in the $Z$ basis (directly in the tableau) before evaluating the fidelity; we use a simple lookup-table decoder constructed by enumerating up to weight-2 physical errors.

Since the numerical studies performed here are based on sampling various circuits and noise realizations, the accessible system sizes and noise rates are severely limited due to the uncertainty associated with a finite sample size. A more sophisticated numerical study based on statistical-mechanics models would enable, e.g., detailed studies of the phase transition in the XEB-to-fidelity ratio.
\subsection{TVD of IQP-circuit connectivity}
A degree-2 IQP circuit is described by the polynomial function
\begin{equation}
    f(x) = \sum_{ij} b_{ij} x_i x_j + \sum_j a_i x_i,
\end{equation}
as discussed in \cref{eq:poly_phase_state}. Thus, we can capture the circuit connectivity via an $n{\times}n$ matrix whose off-diagonal and diagonal terms are given by the $b_{ij}$ and $a_i$ variables, respectively. We look at averaged eigenvalue distribution of such adjacency matrices to compare various circuit ensembles and calculate the total variation distance (TVD) of the histograms as the figure of merit.

For a uniformly random IQP circuit, $b_{ij}$ and $a_i$  are drawn from a random distribution with uniform probability $1{/}2$, whereas for the hIQP circuits they are sampled as described in \cref{fig:iqp circuits}(a). The CNOT gate layers are incorporated through a frame transformation on the adjacency matrix $A$,
\begin{equation}
    A\to M^\T A \,M,
\end{equation}
which is then added modulo 2. The CNOT matrix $M$ is constructed based on the CNOT gate action that transforms bitstrings as
\begin{equation}
    {\rm CNOT}(x_1,x_2) = \begin{pmatrix}
         1 & 0\\ 1 & 1
    \end{pmatrix}\begin{pmatrix}
        x_1 \\ x_2
    \end{pmatrix},
\end{equation}
which is embedded for many CNOTs in parallel in the larger $M$ matrix.

In \cref{fig:random hiqp}(b), we average over $10^6$ uniform and hIQP circuit realizations for each data point.

\subsection{Estimating classical hardness}

In \cref{fig:classical}, we compute the treewidth and minimal vertex cover of the effective IQP hypergraphs realized by random hyperdepth-$\hdepth$ hIQP circuits for various hypercube dimensions $\hdim$. 
To do this, for each pair $(\hdepth,\hdim) \in [2] \times [5]$ we sample $100$ random hIQP circuits and compute the effective IQP hypergraph using the relation \eqref{eq:iqp+cnot = iqp}.
The result is a hypergraph on $3 \cdot 2^\hdim$ vertices. 

To compute the treewidth of this hypergraph, we translate the hypergraph to a simple graph $G$ by adding a $3$-clique for every degree-$3$ hyperedge. 
We then use the \texttt{treewidth\_min\_fill\_in} heuristic implemented in the python \texttt{networkx} package to compute the treewidth \cite{networkx_developers_networkx_2024}. 

To compute the minimum vertex cover of the degree-$3$ part of the hypergraph, we write the optimization problem as an integer linear program ILP
\begin{align}
    &\min_{x \in \bin^n} \sum_{i=1}^{n} x_i\\
    \text{subject to }& x_{e_1} + x_{e_2} + x_{e_3} \geq 1  \, \, \forall e \in E_3\nonumber
\end{align}
where $E_3 \subset [n]^{\times 3}$ denotes the set of degree-$3$ hyperedges of the IQP graph. 
We then use the \texttt{Gurobi Optimizer}, a state-of-the-art solver to solve this ILP \cite{gurobi}.


\end{document}